\def\jacm{0}
\def\draft{1}
\def\merge{1}
  \providecommand\BibTeX{{%
    \normalfont B\kern-0.5em{\scshape i\kern-0.25em b}\kern-0.8em\TeX}}}
\newcommand{\mnote}[1]{\ifnum\draft=1 {\color{red} \textbf{Madhu's note:} #1}\fi}
\newcommand{\cnote}[1]{\ifnum\draft=1 {\color{brown} \textbf{Chi-Ning's note:} #1}\fi}
\newcommand{\snote}[1]{\ifnum\draft=1 {\color{orange} \textbf{Sasha's note:} #1}\fi}
\newcommand{\vnote}[1]{\ifnum\draft=1 {\color{green!40!black} \textbf{Santhoshini's note:} #1}\fi}
\newcommand{\One}{\mathds{1}}
\newcommand{\GOOD }{\textsf{GOOD}}
\renewcommand{\epsilon}{\varepsilon}
\newcommand{\eps}{\varepsilon}
\newtcolorbox{reduction}[2][]
{
  breakable,
  colframe = gray!50,
  colback  = gray!10,
  coltitle = gray!10!black,
  before skip = 10pt,
  after skip = 10pt,
  title    = \textbf{#2},
  #1,
}
\newtcolorbox{examplebox}[2][]
{
  breakable,
  colframe = gray!50,
  colback  = gray!10,
  coltitle = gray!10!black,
  before skip = 10pt,
  after skip = 10pt,
  title    = \textbf{#2},
  #1,
}
\newcounter{algsubstate}
\renewcommand{\thealgsubstate}{\alph{algsubstate}}
\newenvironment{algsubstates}
  {\setcounter{algsubstate}{0}%
   \renewcommand{\State}{%
     \stepcounter{algsubstate}%
     \Statex {\footnotesize\thealgsubstate:}\space}}
  {}
\algnewcommand\algorithmicinput{\textbf{Input:}}
\algnewcommand\Input{\item[\algorithmicinput]}
\algnewcommand\algorithmicoutput{\textbf{Output:}}
\algnewcommand\Output{\item[\algorithmicoutput]}
\algnewcommand\algorithmicgoal{\textbf{Goal:}}
\algnewcommand\Goal{\item[\algorithmicgoal]}
\numberwithin{equation}{section}
\declaretheoremstyle[bodyfont=\it,qed=\qedsymbol]{noproofstyle}
\declaretheorem[name=Observation,numbered=no]{observation*}
\declaretheorem[numberlike=equation]{theorem}
\declaretheorem[name=Theorem,numbered=no]{theorem*}
\declaretheorem[numberlike=equation]{lemma}
\declaretheorem[name=Lemma,numbered=no]{lemma*}
\declaretheorem[numberlike=equation]{corollary}
\declaretheorem[name=Corollary,numbered=no]{corollary*}
\declaretheorem[numberlike=equation]{proposition}
\declaretheorem[name=Proposition,numbered=no]{proposition*}
\declaretheorem[numberlike=equation]{claim}
\declaretheorem[name=Claim,numbered=no]{claim*}
\declaretheorem[name=Conjecture,numbered=no]{conjecture*}
\declaretheorem[name=Question,numbered=no]{question*}
\declaretheoremstyle[bodyfont=\it]{defstyle} 
\declaretheorem[numberlike=equation,style=defstyle]{definition}
\declaretheorem[unnumbered,name=Definition,style=defstyle]{definition*}
\declaretheorem[unnumbered,name=Notation=defstyle]{notation*}
\declaretheorem[unnumbered,name=Construction,style=defstyle]{construction*}
\declaretheoremstyle[]{rmkstyle} 
\declaretheorem[numberlike=equation,style=rmkstyle]{remark}
\declaretheorem[unnumbered,name=Example,style=rmkstyle]{example*}
\newcommand{\opt}{\mathop{\mathrm{opt}}}
\newcommand{\gbmaxf}{{(\gamma,\beta)\textrm{-}\maxf}}
\newcommand{\C}{\mathcal{C}}
\newcommand{\Exp}{\mathop{\mathbb{E}}}
\newcommand{\cP}{\mathcal{P}}
\newcommand{\cA}{\mathcal{A}}
\newcommand{\cB}{\mathcal{B}}
\newcommand{\cD}{\mathcal{D}}
\newcommand{\cF}{\mathcal{F}}
\newcommand{\cI}{\mathcal{I}}
\newcommand{\poly}{\mathsf{poly}}
\newcommand{\cY}{\mathcal{Y}}
\newcommand{\cN}{\mathcal{N}}
\newcommand{\N}{\mathbb{N}}
\newcommand{\F}{\mathbb{F}}
\newcommand{\R}{\mathbb{R}}
\newcommand{\Z}{\mathbb{Z}}
\newcommand{\bern}{\mathsf{Bern}}
\newcommand{\unif}{\mathsf{Unif}}
\newcommand{\maxf}{\textsf{Max-CSP}(f)}
\newcommand{\maxF}{\textsf{Max-CSP}(\cF)}
\newcommand{\gbmaxF}{{(\gamma,\beta)\textrm{-}\maxF}}
\newcommand{\bias}{\textsf{bias}}
\newcommand{\val}{\textsf{val}}
\newcommand{\sign}{\textsf{sign}}
\newcommand{\ALG}{\mathbf{ALG}}
\newcommand{\supp}{\textsf{supp}}
\newcommand{\yes}{\textbf{YES}}
\newcommand{\no}{\textbf{NO}}
\newcommand{\mdcut}{\textsf{Max-DICUT}}
\newcommand{\mcut}{\textsf{Max-CUT}}
\newcommand{\sgyf}{S_\gamma^Y(\cF)}
\newcommand{\sbnf}{S_\beta^N(\cF)}
\newcommand{\kgyf}{K_\gamma^Y(\cF)}
\newcommand{\kbnf}{K_\beta^N(\cF)}
\newcommand{\Disj}{\textsf{Disj}}
\newcommand{\GHD}{\textsf{GHD}}
\newcommand{\kand}{k\textsf{AND}}
\newcommand{\maxkand}{\textsf{Max-}\kand}
\newcommand{\Th}{\mathsf{Th}}
\newcommand{\maxthreeand}{\textsf{Max-}3\textsf{AND}}
\newcommand{\maxtwoand}{\textsf{Max-}2\textsf{AND}}
\newcommand{\veca}{\mathbf{a}}
\newcommand{\vecb}{\mathbf{b}}
\newcommand{\vecc}{\mathbf{c}}
\newcommand{\vecd}{\mathbf{d}}
\newcommand{\vecj}{\mathbf{j}}
\newcommand{\vecJ}{\mathbf{J}}
\newcommand{\vecs}{\mathbf{s}}
\newcommand{\vecu}{\mathbf{u}}
\newcommand{\vecv}{\mathbf{v}}
\newcommand{\vecw}{\mathbf{w}}
\newcommand{\vecx}{\mathbf{x}}
\newcommand{\vecy}{\mathbf{y}}
\newcommand{\vecz}{\mathbf{z}}
\newcommand{\vecsigma}{\boldsymbol{\sigma}}
\newcommand{\veclambda}{\boldsymbol{\lambda}}
\newcommand{\vecmu}{\boldsymbol{\mu}}
\newcommand{\vecphi}{\boldsymbol{\phi}}
\newcommand{\veczero}{\mathbf{0}}
\newcommand{\gd}{\mathcal{Q}}
\newcommand{\vecA}{\mathbf{A}}
\newcommand{\cfk}{\cF(\gd)}
\newcommand{\vecnu}{\boldsymbol{\nu}}
\newcommand{\pstrm}{\text{pad-stream}}
\newcommand{\SD}{\textsf{SD}}
\newcommand{\psSD}{\textsf{padded-streaming-SD}}
\newcommand{\strm}{\textrm{stream}}
\newcommand{\simul}{\textrm{simul}}
\newcommand{\COMP}{\textsf{SKETCH}}
\newcommand{\COMB}{\textsf{COMB}}
\newcommand{\cC}{\mathcal{C}}
\newcommand{\canon}{\mathbb{I}}
\newcommand{\mnote}[1]{\ifnum\draft=1 {\color{red} \textbf{Madhu's note:} #1}\fi}
\newcommand{\cnote}[1]{\ifnum\draft=1 {\color{brown} \textbf{Chi-Ning's note:} #1}\fi}
\newcommand{\snote}[1]{\ifnum\draft=1 {\color{orange} \textbf{Sasha's note:} #1}\fi}
\newcommand{\vnote}[1]{\ifnum\draft=1 {\color{green!40!black} \textbf{Santhoshini's note:} #1}\fi}
\newcommand{\One}{\mathds{1}}
\newcommand{\GOOD }{\textsf{GOOD}}
\renewcommand{\epsilon}{\varepsilon}
\newcommand{\eps}{\varepsilon}
\newtcolorbox{reduction}[2][]
{
  breakable,
  colframe = gray!50,
  colback  = gray!10,
  coltitle = gray!10!black,
  before skip = 10pt,
  after skip = 10pt,
  title    = \textbf{#2},
  #1,
}
\newtcolorbox{examplebox}[2][]
{
  breakable,
  colframe = gray!50,
  colback  = gray!10,
  coltitle = gray!10!black,
  before skip = 10pt,
  after skip = 10pt,
  title    = \textbf{#2},
  #1,
}
\newcounter{algsubstate}
\renewcommand{\thealgsubstate}{\alph{algsubstate}}
\algnewcommand\algorithmicinput{\textbf{Input:}}
\algnewcommand\Input{\item[\algorithmicinput]}
\algnewcommand\algorithmicoutput{\textbf{Output:}}
\algnewcommand\Output{\item[\algorithmicoutput]}
\algnewcommand\algorithmicgoal{\textbf{Goal:}}
\algnewcommand\Goal{\item[\algorithmicgoal]}
\begin{document}

\newcommand{\abstracttext}{
A constraint satisfaction problem (CSP), $\maxF$, is specified by a finite set of constraints $\cF \subseteq \{[q]^k \to \{0,1\}\}$ for positive integers $q$ and $k$. An instance of the problem on $n$ variables is given by $m$ applications of constraints from $\cF$ to subsequences of the $n$ variables, and the goal is to find an assignment to the variables that satisfies the maximum number of constraints. In the $(\gamma,\beta)$-approximation version of the problem for parameters $0 \leq \beta < \gamma \leq 1$, the goal is to distinguish instances where at least $\gamma$ fraction of the constraints can be satisfied from instances where at most $\beta$ fraction of the constraints can be satisfied. 

In this work, we consider the approximability of this problem in the context of sketching algorithms and give a dichotomy result. Specifically,  for every family $\cF$ and every $\beta < \gamma$,  we show that either a linear sketching algorithm solves the problem in polylogarithmic space, or the problem is not solvable by any sketching algorithm in $o(\sqrt{n})$ space. In particular, we give non-trivial approximation algorithms using polylogarithmic space for infinitely many constraint satisfaction problems.

We also extend previously known lower bounds for general streaming algorithms
to a wide variety of problems, and in particular the case of $q=k=2$, where we get a dichotomy, and the case  when the satisfying assignments of the constraints of $\cF$ support a distribution on $[q]^k$ with uniform marginals.

Prior to this work, other than sporadic examples, the only systematic classes of CSPs that were analyzed considered the setting of Boolean variables $q=2$, binary constraints $k=2$, singleton families $|\cF|=1$ and only considered the setting where constraints are placed on literals rather than variables.

Our positive results show wide applicability of bias-based algorithms used previously by \cite{GVV17} and \cite{CGV20}, which we extend to include richer norm estimation algorithms, by giving a systematic way to discover biases. Our negative results combine the Fourier analytic methods of \cite{KKS}, which we extend to a wider class of CSPs, with a rich collection of reductions among communication complexity problems that lie at the heart of the negative results. 
In particular, previous works used Fourier analysis over the Boolean cube to initiate their results and the results seemed particularly tailored to functions on Boolean literals (i.e., with negations). Our techniques surprisingly allow us to get to general $q$-ary CSPs without negations by appealing to the same Fourier analytic starting point over Boolean hypercubes.}

\ifnum\jacm=0

\title{Sketching approximability of all finite CSPs\ifnum\merge=1\footnote{This paper subsumes~\cite{CGSV21} which in turn replaced the withdrawn paper \cite{CGSV20}.}\fi}
\author{
Chi-Ning Chou\thanks{School of Engineering and Applied Sciences, Harvard University, Cambridge, Massachusetts, USA. Supported by NSF awards CCF 1565264 and CNS 1618026. Email: \texttt{chiningchou@g.harvard.edu}.}
\and Alexander Golovnev\thanks{Department of Computer Science, Georgetown University. Email: \texttt{alexgolovnev@gmail.com}.}
\and Madhu Sudan\thanks{School of Engineering and Applied Sciences, Harvard University, Cambridge, Massachusetts, USA. Supported in part by a Simons 
Investigator Award and NSF Awards CCF 1715187 and CCF 2152413. Email: \texttt{madhu@cs.harvard.edu}.}
\and Santhoshini Velusamy\thanks{Toyota Technological Institute, Chicago, Illinois, USA. Supported in part by a Google PhD fellowship, a Simons 
Investigator Award to Madhu Sudan, and NSF Awards CCF 1715187 and CCF 2152413. Email: \texttt{santhoshini@ttic.edu}.}
}

\date{}
\sloppy
\maketitle

\begin{abstract}

\abstracttext

\end{abstract}

\newpage

\tableofcontents

\newpage

\else 

\title{Sketching approximability of all finite CSPs}

\author{Chi-Ning Chou}
\authornote{Supported by NSF awards CCF 1565264 and CNS 1618026.}
\email{chiningchou@g.harvard.edu}
\affiliation{%
  \institution{School of Engineering and Applied Sciences, Harvard University}
  \city{Cambridge}
  \state{Massachusetts}
  \country{USA}
}

\author{Alexander Golovnev}
\email{alexgolovnev@gmail.com}
\affiliation{%
  \institution{Department of Computer Science, Georgetown University}
  \country{USA}
}

\author{Madhu Sudan}
\authornote{Supported in part by a Simons 
Investigator Award and NSF Awards CCF 1715187 and CCF 2152413.}
\email{madhu@cs.harvard.edu}
\affiliation{%
\institution{School of Engineering and Applied Sciences, Harvard University}
  \city{Cambridge}
  \state{Massachusetts}
  \country{USA}
}

\author{Santhoshini Velusamy}
\authornote{Supported in part by a Google Ph.D. Fellowship, a Simons 
Investigator Award to Madhu Sudan, and NSF Awards CCF 1715187 and CCF 2152413.}
\email{santhoshini@ttic.edu}
\affiliation{%
\institution{Toyota Technological Institute}
  \city{Chicago}
  \state{Illinois}
  \country{USA}
}



\begin{abstract}

\abstracttext

\end{abstract}

\begin{CCSXML}
<ccs2012>
<concept>
<concept_id>10003752.10003809.10010055.10010057</concept_id>
<concept_desc>Theory of computation~Sketching and sampling</concept_desc>
<concept_significance>500</concept_significance>
</concept>
<concept>
<concept_id>10003752.10003777.10003780</concept_id>
<concept_desc>Theory of computation~Communication complexity</concept_desc>
<concept_significance>500</concept_significance>
</concept>
<concept>
<concept_id>10003752.10003809.10003636</concept_id>
<concept_desc>Theory of computation~Approximation algorithms analysis</concept_desc>
<concept_significance>500</concept_significance>
</concept>
</ccs2012>
\end{CCSXML}

\ccsdesc[500]{Theory of computation~Sketching and sampling}
\ccsdesc[500]{Theory of computation~Communication complexity}
\ccsdesc[500]{Theory of computation~Approximation algorithms analysis}

\keywords{streaming algorithms, communication lower bound, inapproximability, constraint satisfaction problem}

\maketitle

\fi 




\section{Introduction}\label{sec:intro}

In this paper we give a complete characterization of the approximability of constraint satisfaction problems (CSPs) by sketching algorithms. We describe the exact class of problems below, and give a brief history of previous work before giving our results.

\subsection{CSPs}

For positive integers $q$ and $k$, a $q$-ary {\em constraint satisfaction problem} (CSP) is given by a (finite) set of constraints 
$\cF \subseteq \{f:[q]^k \to \{0,1\}\}$. A constraint $C$ on $x_1,\ldots,x_n$ is given by a pair $(f,\vecj)$, with $f \in \cF$ and $\vecj = (j_1,\ldots,j_k) \in [n]^k$ where the coordinates of $\vecj$ are all distinct.\footnote{To allow repeated variables in a constraint, note that one can turn $\cF$ into $\cF'$ by introducing new functions corresponding to all the possible replications of variables of functions in $\cF$.} An assignment $\vecb \in [q]^n$ satisfies $C = (f,\vecj)$ if $f(b_{j_1},\ldots,b_{j_k})=1$. 
To every finite set $\cF$, we associate a maximization problem $\maxF$ that is defined as follows: An instance $\Psi$ of $\maxF$ consists of $m$ constraints $C_1,\ldots,C_m$ applied to $n$ variables $x_1,x_2,\dots,x_n$ along with $m$ non-negative integer weights $w_1,\ldots,w_m$. The value of an assignment $\vecb \in [q]^{n}$ on an instance $\Psi = (C_1,\ldots,C_m; w_1,\ldots,w_m)$, denoted $\val_\Psi(\vecb)$, is the fraction of weight of constraints satisfied by $\vecb$. The goal of the {\em exact} problem is to compute the maximum, over all assignments, of the value of the assignment on the input instance, i.e., to compute, given $\Psi$, the quantity $\val_\Psi = \max_{\vecb \in [q]^{n}}\{\val_\Psi(\vecb)\}$. 

In this work we consider the approximation version of $\maxF$, which we study in terms of the ``gapped promise problems''. Specifically given $0 \leq \beta < \gamma \leq 1$, the $(\gamma,\beta)$-approximation version of $\maxF$, abbreviated $(\gamma,\beta)\textrm{-}\maxF$, is the task of distinguishing between instances from $\Gamma = \{\Psi | \opt(\Psi) \geq \gamma\}$ and instances from $B = \{\Psi | \opt(\Psi) \leq \beta\}$. It is well-known that this distinguishability problem is a refinement of the usual study of approximation which usually studies the ratio of $\gamma/\beta$ for tractable versions of $\gbmaxF$. See \cref{prop:approx-equivalence} for a formal statement in the context of streaming approximability of $\maxF$ problems.

\subsection{Streaming algorithms}

We study the complexity of $(\gamma,\beta)$-$\maxF$ in the setting of randomized streaming algorithms. Here, an instance $\Psi = (C_1,\ldots,C_m)$ is presented as a stream $\sigma_1,\sigma_2,\ldots,\sigma_m$ with $\sigma_i = (f(i),\vecj(i))$ representing the $i$th constraint.
We study the space required to solve the $(\gamma,\beta)$-approximation version of $\maxF$. Specifically we consider algorithms that are allowed to use internal randomness and $s$~bits of space. The algorithms output a single bit at the end. They are said to solve the $(\gamma,\beta)$-approximation problem correctly if they output the correct answer with probability at least $2/3$ (i.e., they err with probability at most $1/3$).

A sketching algorithm is a special class of a streaming algorithm, where the algorithm's output is determined by a small sketch it produces of the input stream, and the sketch itself has the property that the sketch of the concatenation of two streams can be computed from the sketches of the two component streams. (See \cref{def:sketching alg} for a formal definition.) 

For over a decade now, there has been active research on designing streaming and sketching algorithms for combinatorial optimization problems in various settings. See for example, 
\begin{itemize}
    \item \cite{GKK12,AKL16,AKLY16,AKL17,GVV17,KKSV17,KK19,SPass21,SPass22,SPass22a,SPass22b,SPass22C,SpAss22d,SPass22e,SPass23,SPass23a} for results in the \emph{single-pass} setting where the algorithm is allowed only a single pass through the stream,
    \item \cite{MPass20,assadi2020multi,MPass21,AN21,MPass21a,MPass22,MPass22b,MPass23,Mpass23a} for results on \emph{multi-pass} streaming algorithms which are allowed a constant number of passes through the stream, and
    \item \cite{Rand14,KKS,Rand21,Rand21a,Rand22,Rand23} for results in the \emph{random-ordering} setting where the input is randomly shuffled in the stream.
\end{itemize}

We primarily focus on single-pass streaming algorithms and our main dividing line is between algorithms that work with space $\poly(\log n)$, versus algorithms that require space at least $n^\epsilon$ for some $\epsilon > 0$. In informal usage we refer to a streaming problem as ``easy'' if it can be solved with polylogarithmic space (the former setting) and ``hard'' if it requires polynomial space for sketching algorithms. 
We note that all the positive results (algorithms) given in this paper are linear sketching algorithms which are more restrictive than general sketching algorithms.
We also note that many of our lower bounds work against general streaming algorithms and we elaborate on this in \cref{ssec:results}.

\subsection{Past work}

To our knowledge, streaming algorithms for CSPs have not been investigated extensively. Here we cover the few results we are aware of, all of which consider only the Boolean ($q=2$) setting. 
On the positive side, it may be surprising that there exists any non-trivial algorithm at all. (Briefly, we say that an algorithm that outputs a constant value independent of the input is ``trivial''.)

It turns out that there do exist some non-trivial approximation algorithms for Boolean CSPs.  This was established by the work of Guruswami, Velingker, and Velusamy~\cite{GVV17} who, in our notation, gave an algorithm for the $(\gamma,2\gamma/5-\epsilon)$-approximation version of $\textsf{Max-2AND}$, for every $\gamma \in [0,1]$ ($\maxtwoand$ is the $\maxF$ problem corresponding to $\cF = \{f_{c,d} | c,d \in \{0,1\}\}$ where $f_{c,d}(a,b) = 1$ if $a =c$ and $b = d$ and $f_{c,d}(a,b) = 0$ otherwise). A central ingredient in their algorithm is the ability of streaming algorithms to approximate the $\ell_1$ norm of a vector in the turnstile setting, which allows them to estimate the ``bias'' of $n$ variables (how often they occur positively in constraints, as opposed to negatively). Subsequently, the work of Chou, Golovnev, and Velusamy~\cite{CGV20} further established the utility of such algorithms, which we refer to as bias-based algorithms, by giving optimal algorithms for all Boolean CSPs on $2$ variables. In particular they give a better (optimal!) analysis of bias-based algorithms for $\textsf{Max-2AND}$, and show that $\textsf{Max-2SAT}$ also has an optimal algorithm based on bias. 

On the negative side, the problem that has been explored the most is $\textsf{Max-CUT}$, or in our language $\textsf{Max-2XOR}$, which corresponds to $\cF = \{f\}$ and $f(x,y) = x \oplus y$.
Kapralov, Khanna, and Sudan~\cite{KKS} showed that $\textsf{Max-2XOR}$ does not have a $(1,1/2 + \epsilon)$-approximation algorithm using $o(\sqrt{n})$-space, for any $\epsilon > 0$.  This was subsequently improved upon by Kapralov, Khanna, Sudan, and Velingker~\cite{KKSV17}, and Kapralov and Krachun~\cite{KK19}. The final paper~\cite{KK19} completely resolves $\textsf{Max-CUT}$ showing that $(1,1/2 + \epsilon)$-approximation for these problems requires $\Omega(n)$ space. Turning to other problems, the work by \cite{GVV17} notices that the $(1,1/2+\epsilon)$-inapproximability of $\textsf{Max-2XOR}$ immediately yields $(1,1/2+\epsilon)$-inapproximability of $\maxtwoand$ as well. In \cite{CGV20} more sophisticated reductions are used to improve the inapproximability result for $\maxtwoand$ to a $(\gamma,4\gamma/9 + \epsilon)$-inapproximability for some positive $\gamma$, which turns out to be the optimal ratio by their algorithm and analysis. As noted earlier their work gives algorithms for $\maxF$ for all $\cF \subseteq \{f:\{0,1\}^2 \to \{0,1\}\}$,\footnote{Note that when $q=2$ we switch to using $\{0,1\}$ or $\{-1,1\}$ as the domain (as opposed to $\{1,2\}$) depending on convenience.} which are optimal if $\cF$ is closed under literals (i.e., if $f(x,y) \in \cF$ then so are the functions $f(\neg x, y)$ and $f(\neg x, \neg y)$).

\subsection{Results}\label{ssec:results}

Our main theorem is a decidable dichotomy theorem for $(\gamma,\beta)$-$\maxF$ with sketching algorithms.

\begin{theorem}[Succinct version]\label{thm:informal}
For every $q,k\in\N$, $0 \leq \beta < \gamma \leq 1$ and $\cF \subseteq \{f:[q]^k \to \{0,1\}\}$, one of the following two conditions holds: Either $(\gamma,\beta)$-$\maxF$ can be solved with $O(\log^3 n)$ space by linear sketches, or for every $\epsilon > 0$, every sketching algorithm for $(\gamma-\epsilon,\beta+\epsilon)$-$\maxF$ requires $\Omega(\sqrt{n})$-space. Furthermore there is a polynomial space algorithm that decides which of the two conditions holds, given $\gamma,\beta$ and $\cF$. 
\end{theorem}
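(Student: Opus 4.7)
The plan is to follow a convex-geometry framework that associates to the triple $(\cF, \gamma, \beta)$ two convex sets $K^Y, K^N \subseteq \R^{kq}$ whose relative position dictates both sides of the dichotomy. I would first fix canonical variables $\{x_{i,\sigma}\}_{i \in [k], \sigma \in [q]}$, arranged as a $k \times q$ matrix, and consider only ``canonical'' instances whose constraints always place a variable from row $i$ in the $i$th slot. Such an instance induces a distribution on $[q]^k$ (which column of each row is picked), and I record its one-wise marginals as a vector in $\R^{kq}$. Then $K^Y$ is the set of one-wise marginals of canonical distributions under which the planted assignment $x_{i,\sigma} \mapsto \sigma$ satisfies at least $\gamma$ fraction of constraints in expectation; $K^N$ is the set of one-wise marginals of canonical distributions under which no ``column-symmetric probabilistic assignment'' --- one that sets $x_{i,\sigma}$ independently from a distribution $\cP_\sigma$ depending only on $\sigma$ --- achieves value greater than $\beta$ in expectation.

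For the positive side, when $K^Y \cap K^N = \emptyset$ I would invoke a separating hyperplane $\vecmu^\top \vecy \geq \tau$ on $K^Y$ versus $\vecmu^\top \vecy < \tau$ on $K^N$. Given any instance $\Psi$ on $n$ variables, this hyperplane yields an $n \times q$ bias matrix $B(\Psi)$, where $B_{j,\sigma}$ aggregates the separator's coefficients over the occurrences of $x_j$ in each slot. The key step is to show that the norm $\|B(\Psi)\|_{1,\infty} = \sum_{j=1}^n \max_{\sigma \in [q]} |B_{j,\sigma}|$ separates \yes\ from \no\ instances up to a small additive loss tied to the margin of the hyperplane. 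The upper bound on values follows because every assignment $\vecb \in [q]^n$ corresponds to a $\{0,1\}$-valued matrix with one nonzero per row, whose inner product with $B(\Psi)$ is dominated by $\|B(\Psi)\|_{1,\infty}$; the \yes\ side comes from exhibiting an assignment aligning with the planted direction. I would then invoke polylogarithmic-space linear sketches for $\|\cdot\|_{1,\infty}$-type matrix norms, generalizing the $\ell_1$ estimators of \cite{Indyk, KNW10}, to complete the algorithm.

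For the negative side, when $K^Y \cap K^N \neq \emptyset$ I would realize a common marginal by concrete distributions $\cD_Y \in K^Y$ and $\cD_N \in K^N$ that agree on one-wise marginals, and reduce from a ``Signal Detection'' (SD) communication problem defined by $(\cD_Y, \cD_N)$. In SD, Alice holds $\vecx \in [q]^n$, Bob holds $T$ random $k$-tuples of coordinates together with, for each tuple, a weak signal indicating whether Alice's restriction matches a planted string drawn i.i.d.\ from $\cD_Y$ or $\cD_N$; Bob must decide which distribution was planted. Planted $\maxF$ instances reduce to SD, and their values concentrate near $\gamma$ and $\beta$ respectively, yielding the sketching lower bound. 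To lower bound SD itself I would introduce a Boolean ``advice-RMD'' variant, prove its $\Omega(\sqrt{n})$ one-way communication lower bound by extending the Fourier-analytic argument of \cite{GKKRW, KKS} to handle the advice, and then lift this Boolean hardness to $q$-ary SD via a combinatorial decomposition of a general product set $[q_1] \times \cdots \times [q_k]$. Finally, to upgrade one-way communication hardness into sketching hardness I would pass through the multi-player simultaneous model and apply \cite{li2014turnstile, ai2016new}.

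The main obstacle I anticipate is this last Boolean-to-$q$-ary lift: the Fourier analysis of previous work is inherently about $\{-1,1\}^k$ and exploits negations, whereas our hard instances must live in $[q]^k$ without negations. The advice in RMD and the product-set decomposition are precisely the ingredients designed to bridge that gap, and getting the combinatorics to match the Fourier hypotheses is the technically delicate part. For the decidability claim, the question ``is $K^Y \cap K^N$ nonempty?'' reduces to feasibility of a system of polynomial inequalities over the reals whose description has size polynomial in $|\cF|, q^k$, and in the bit-length of $\gamma, \beta$; such a feasibility problem is decidable in polynomial space via the decision procedure for the first-order theory of the reals, yielding the promised polynomial-space decider.
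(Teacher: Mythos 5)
Your proposal is correct and follows essentially the same route as the paper: the same $k\times q$ convex-set framework with planted and column-symmetric assignments, the separating-hyperplane bias matrix with an $\ell_{1,\infty}$ sketch, the SD/Advice-RMD Fourier argument lifted to $[q]^k$ via product-set decompositions, the pass through the simultaneous multi-player game, and the quantified-reals decision procedure for decidability. The only cosmetic difference is that for lower bounds against sketching algorithms the paper does not need the turnstile-to-linear-sketch machinery of Li--Nguyen--Woodruff and Ai--Hu--Li--Woodruff: the compression/combination structure of a sketch directly yields a simultaneous protocol.
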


\cref{thm:informal} combines the more detailed \cref{thm:main-detailed-dynamic} with the polynomial space decidability coming from \cref{thm:pspace-dec}.

The first order message of the theorem statement is that the known non-trivial approximation algorithms for streaming CSPs (i.e., the algorithms for $\textsf{Max-2AND}$ and $\textsf{Max-2SAT}$ from \cite{CGSV20}) can {\em potentially} be extended to infinitely many problems.  To confirm this potential, one needs to be able to identify an infinite subclass of CSPs for which the decidability condition for non-trivial $(\gamma, \beta)$ pairs can be analytically shown to be ``solvable in polylog space''. While we do not find such explicit families in this paper, subsequent work has succeeded in getting such an analysis~\cite{BHPSV21,CGSSV-mon}. We elaborate further on this in \cref{ssec:subseq} but note that the subsequent work~\cite{BHPSV21} shows that 
 $\textsf{Max-kAND}$ (the generalization of $\textsf{Max-2AND}$ to $k$ literals) for every $k \in \N$ has non-trivial approximation algorithms thereby confirming this potential! We believe this in itself may be a surprising result to some given that the bias-based algorithms and their analysis did appear tailored to the structure of $\textsf{Max-2AND}$ and $\textsf{Max-2SAT}$.

The next main message is that when the class of algorithms we use cannot be used to solve a $(\gamma,\beta)$-approximation problem then there is an inherent hurdle and no sketching based algorithm can work. Indeed in many cases our results rule out completely general streaming algorithms, though we do not get a dichotomy for general streaming. 

Finally we highlight some of the descriptive strengths of the class of problems captured by \cref{thm:informal} above, we note that previous works 
could only handle the special case 
where (1) $\cF$ contains a single function $f$, (2) $q = 2$, (3) Constraints are placed on ``literals'' rather than variables and (4) They only capture a single parameter approximation problem not the more refined two parameter (``gapped'') version considered in this work. The difference in expressivity due to conditions (1)-(3) is significant: To capture a problem such as \textsf{Max-3SAT} one needs to go beyond restriction (1) to allow different constraints for clauses of length $1$, $2$, and $3$. This is a quantitatively significant restriction in that the approximability in this case is ``smaller'' than that of $\maxf$ for any of the constituent functions. So hard instances do involve a mix of constraints! The lack of expressiveness induced by the second restriction of Boolean variables is perhaps more obvious. Natural examples of CSPs that require larger alphabets are \textsf{Max-$q$-Coloring} and \textsf{Unique Games}.
Next we turn to restriction (3) --- the inability to capture CSP problems over variables. This restriction prevents previous works from capturing some very basic problems including \textsf{Max-CUT} and \textsf{Max-DICUT}. Furthermore, the notion of ``literals'' is natural only in the setting of Boolean variables --- so overcoming this restriction seems crucial to eliminating the restriction of Booleanity of the variables. Notice that while for families with a single function $\cF = \{f\}$, going from constraints on literals to constraints on variables does not lead to greater expressivity, once we study $\maxF$ for all sets $\cF$, the study does get formally richer. Finally the two parameter versions allow us to also understand the approximability of satisfiable and nearly-satisfiable instances of \textsf{Max-CSP}, a quest that is quite common in the literature. (See for instance the works on robust satisfiability~\cite{DalmauK13,BartoK12,KunOTYZ12}.) 

In particular \cref{thm:informal} allows us also to capture the extreme case of hard problems where no ``non-trivial'' algorithms exist. Such problems are usually referred to as approximation-resistant problems. In the study of Boolean CSPs, with constraints placed on literals, ``non-triviality'' is defined as ``beating a random assignment'' and approximation resistance in the setting of polynomial time algorithms is a well-studied topic~\cite{Has01,guruswami2011beating,AustrinMossel}. Extending the definition to the setting where constraints are placed on variables rather than literals, requires some thought. We propose a definition in this paper (see \cref{def:approx-res}) which uses the notion that algorithms outputting a constant value are trivial, and a problem is approximation resistant if beating this trivial algorithm is hard. Specifically, $\cF$ is said to be approximation resistant if for every $\beta < \gamma$ either $\gbmaxF$ is solved by a ``constant function'' or it requires $n^{\Omega(1)}$ space. We then show how \cref{thm:informal} (or its more detailed version \cref{thm:main-detailed-dynamic}) leads to a characterization of approximation-resistance in the streaming setting as well. (See \cref{cor:approx-res}.)

As mentioned earlier, the results above (and in particular the negative results) apply only to sketching algorithms for streaming CSPs. For general streaming algorithm, we get some partial results. To describe our next result, we define the notion of a function supporting a one-wise independent distribution.
We say that $f$ supports {\em one-wise independence} if there exists a distribution $\cD$ supported on $f^{-1}(1)$ whose marginals are uniform on $[q]$.
We say that $\cF$ supports one-wise independence if every $f \in \cF$ supports one-wise independence.

\begin{theorem}[Informal]\label{thm:one-wise-informal}
If $\cF\subseteq \{f:[q]^k \to \{0,1\}\}$ supports one-wise independence then it is approximation-resistant in the streaming setting.
\end{theorem}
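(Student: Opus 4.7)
The plan is to apply the convex-set framework underlying \cref{thm:informal}: I will exhibit, for every $\rho_{\min}(\cF) < \beta < \gamma < 1$, two distributions $\cD_Y^\alpha$ and $\cD_N^\alpha$ on constraint-types whose one-wise marginals agree and whose YES/NO values straddle $\gamma$ and $\beta$. Hardness then follows by invoking the paper's streaming reduction from the $T$-player Signal Detection problem, which (as flagged in the overview) applies precisely when $\cD_N$ is uniform on $[q]^k$.

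For each $f \in \cF$ fix a one-wise independent distribution $\cD_f$ on $f^{-1}(1)$. For a mixing distribution $\alpha$ on $\cF$ (to be chosen), define $\cD_Y^\alpha$ by sampling $f \sim \alpha$ and then $a \sim \cD_f$, and define $\cD_N^\alpha$ by sampling $f \sim \alpha$ and then $a \sim \unif([q]^k)$, with each constraint placed on the canonical variables $(x_{1,a_1}, \ldots, x_{k,a_k})$ in the $k \times q$ matrix of the framework. By one-wise independence on the YES side and by uniformity on the NO side, $\Pr[a_i = \sigma] = 1/q$ in both cases, so $\cD_Y^\alpha$ and $\cD_N^\alpha$ share the same uniform marginal vector $\vecnu^\alpha \in \R^{kq}$. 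The planted assignment $x_{i,\sigma} = \sigma$ satisfies every $\cD_Y^\alpha$-constraint, so $\vecnu^\alpha \in K^Y_1 \subseteq K^Y_\gamma$ for every $\gamma \leq 1$; and plugging a column-symmetric probabilistic assignment $(\cP_1, \ldots, \cP_q)$ into a $\cD_N^\alpha$-constraint and using the uniformity of $a$ shows that the column-symmetric maximum equals $\max_\mu \sum_f \alpha_f \Exp_{z \sim \mu^k}[f(z)]$, attained at $\mu = \tfrac{1}{q}\sum_\sigma \cP_\sigma$.

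Optimising $\alpha$, von Neumann minimax applied to this bilinear game in $(\alpha, \mu)$ gives
\[
\min_\alpha \max_\mu \sum_f \alpha_f \Exp_{z \sim \mu^k}[f(z)] \;=\; \max_\mu \min_f \Exp_{z \sim \mu^k}[f(z)],
\]
while averaging the value of an i.i.d.\ $\mu$-assignment over any adversarial instance of $\maxF$ gives $\rho_{\min}(\cF) \geq \max_\mu \min_f \Exp_{z \sim \mu^k}[f(z)]$. Hence for every $\beta > \rho_{\min}(\cF)$ there is some $\alpha = \alpha(\beta)$ making $\vecnu^\alpha \in K^N_\beta$, so $\vecnu^\alpha \in K^Y_\gamma \cap K^N_\beta$ for every $\gamma \leq 1$. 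Feeding $(\cD_Y^{\alpha(\beta)}, \cD_N^{\alpha(\beta)})$ into the paper's streaming reduction---which applies because $\cD_N^{\alpha(\beta)}$ is uniform on $[q]^k$ conditional on $f$---yields an $\Omega(\sqrt{n})$ space lower bound for $(\gamma-\eps, \beta+\eps)\text{-}\maxF$ in the insertion-only streaming setting, establishing approximation resistance.

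The hardest step will be verifying that the OPT of a random NO instance concentrates around the column-symmetric bound (rather than being inflated by assignments that correlate with the hidden planted labelling) and adapting the $T$-player Signal Detection reduction from a single pair $(\cD_Y, \cD_N)$ on $[q]^k$ to the $\alpha$-weighted mixture over $f \in \cF$; both are expected to be routine extensions of the machinery built for the proof of \cref{thm:informal}, with concentration handled by Chernoff plus a union bound over assignments once $m = \Omega(n)$.
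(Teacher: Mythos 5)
Your construction is essentially the paper's: the pair $\cD_Y$ (sample $f$ from a mixture, then a one-wise independent point of $f^{-1}(1)$) versus $\cD_N$ (same function-mixture times $\unif([q]^k)$), the observation that both have uniform marginals and hence form a padded one-wise pair (with $\tau=0$), the reduction of the column-symmetric NO-value to $\max_{\mu}\sum_f \alpha_f\Exp_{z\sim\mu^k}[f(z)]$ via $\mu=\frac1q\sum_\sigma\cP_\sigma$, and the appeal to the streaming lower bound for such pairs. Also, the two items you flag as the "hardest steps" are not gaps at all: \cref{thm:main-negative} (via \cref{lem:csp value,lem:reduce to streaming}) is already stated for families $\cF$ and distributions on $\cF\times[q]^k$, and membership in $S^N_\beta$ (the column-symmetric bound) is exactly the hypothesis that makes the NO-side concentration over all assignments go through, so you may use it as a black box just as the paper does.

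The one genuine gap is the choice of the mixing weights $\alpha$. You invoke "von Neumann minimax for this bilinear game", but the game is not bilinear: the payoff $\sum_f\alpha_f\Exp_{z\sim\mu^k}[f(z)]$ is a degree-$k$ polynomial in $\mu$ and is in general not (quasi-)concave in $\mu$, so neither von Neumann nor Sion applies, and the identity $\min_\alpha\max_\mu\sum_f\alpha_f\Exp_{z\sim\mu^k}[f(z)]=\max_\mu\min_f\Exp_{z\sim\mu^k}[f(z)]$ can genuinely fail for games of this shape (e.g.\ for $q=k=2$ and the two functions $\mathbf{1}[x=y=1]$, $\mathbf{1}[x=y=2]$ the left side is $1/2$ while the right side is $1/4$; that family does not support one-wise independence, but it shows the theorem you cite does not give the identity as stated, so for one-wise supporting $\cF$ you would owe a separate argument). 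Fortunately you do not need the swap: the quantity $\min_\alpha\max_\mu\sum_f\alpha_f\Exp_{z\sim\mu^k}[f(z)]$ is by definition $\rho(\cF)$, so simply take $\alpha$ to be (near-)minimizing and use $\rho(\cF)\le\rho_{\min}(\cF)$ — the easy direction of \cref{prop:rho-prop}, proved by the same i.i.d.-assignment averaging you already use — to get the NO-value below any $\beta>\rho_{\min}(\cF)$. (Alternatively, your minimax step can be repaired by letting the $\mu$-player mix over product distributions, which makes the game bilinear; the max-min side is then still at most $\rho_{\min}(\cF)$ by randomized i.i.d.\ assignments.) With that replacement your argument coincides with the paper's proof of \cref{thm:one-wise}.
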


\cref{thm:one-wise-informal} is formalized as \cref{thm:one-wise} in \cref{sec:lb-detail-insert}.
We also give theorems capturing hardness in the streaming setting beyond the 1-wise independent case. Stating the full theorem requires more notions (see \cref{sec:lb-detail-insert}), but as a consequence we get the following extension of theorem of \cite{CGV20}.

\begin{theorem}\label{thm:main-intro-k=q=2} Let $q = k = 2$. Then, for every family $\cF\subseteq \{f:[q]^2 \to \{0,1\}\}$, and for every $0 \leq \beta < \gamma \leq 1$, at least one of the following always holds:
\begin{enumerate}
    \item $(\gamma,\beta)$-$\maxF$ has an $O(\log^3 n)$-space linear sketching algorithm.
    \item For every $\epsilon > 0$, every streaming algorithm that solves $(\gamma-\epsilon,\beta+\epsilon)$-$\maxF$ requires $\Omega(\sqrt{n})$ space. If $\gamma = 1$, then $(1,\beta+\epsilon)$-$\maxF$ requires $\Omega(\sqrt{n})$ space. 
\end{enumerate}
Furthermore, for every $\ell\in\N$, there is an algorithm using space $\poly(\ell)$ that decides which of the two conditions holds  given the truth-tables of functions in  $\cF$, and $\gamma$ and $\beta$ as $\ell$-bit rationals.
\end{theorem}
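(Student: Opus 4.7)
The plan is to specialize the main dichotomy framework \cref{thm:main-detailed-dynamic} to $q=k=2$ and then strengthen the lower-bound half from sketching to arbitrary streaming via Boolean-specific Fourier analysis. The convex sets $K^Y_\gamma(\cF), K^N_\beta(\cF)$ live in $\R^{kq}=\R^4$. When they are disjoint, the $O(\log^3 n)$-space linear sketching algorithm of \cref{thm:main-detailed-dynamic} immediately yields alternative~(1); the $\poly(\ell)$-space decision procedure is inherited from \cref{thm:pspace-dec}, since in fixed dimension $4$ the separation test reduces to a bounded-precision linear program.

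When $K^Y \cap K^N \neq \emptyset$, I would extract a pair $(\cD_Y, \cD_N)$ of distributions on $\{0,1\}^2$ with identical singleton marginals from a common point in the intersection, and then apply a $T$-player one-way communication reduction in the style of \cite{KKS,KK19}: an $s$-space streaming algorithm for $(\gamma-\epsilon,\beta+\epsilon)\textrm{-}\maxF$ becomes an $s$-bit-per-message $T$-player protocol for the $\sSD(\cD_Y, \cD_N)$ Signal Detection problem, where each player holds a random subinstance and together the players must distinguish $\cD_Y^{\otimes m}$ from $\cD_N^{\otimes m}$ masked by an unknown planted assignment. An $\Omega(\sqrt{n})$ communication lower bound then translates to the desired streaming bound against arbitrary (not just linear) algorithms.

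The crux is the communication lower bound itself. The step that makes $q=k=2$ tractable is that on $\{0,1\}^2$ the subspace of distributions orthogonal to the singleton marginals is one-dimensional, spanned by the parity character $\chi(x,y)=(-1)^{x+y}$. Thus any $(\cD_Y, \cD_N)$ with matching marginals satisfies $\cD_Y - \cD_N \propto \chi$, and distinguishing them in $\sSD$ reduces, up to constant factors, to detecting a small parity bias across $m$ random sub-instances. This is precisely the setting of the Fourier-analytic lower bounds of \cite{KKS,KKSV17,KK19}, whose arguments apply against general (randomized, non-linear) streaming algorithms. The $\gamma=1$ clause additionally requires $\cD_Y$ to be supported on satisfying tuples, which is automatic from the definition of $K^Y_1(\cF)$, so the same reduction produces \textsc{YES} instances with $\val_\Psi=1$.

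The main obstacle I anticipate is lifting the Fourier analysis beyond the ``$\cD_N$ uniform'' special case already handled by \cref{thm:one-wise-informal}. The $T$-player padding step must be chosen so that the padding constraints contribute no extra parity signal, lest they overwhelm the $\chi$-difference between $\cD_Y$ and $\cD_N$; the natural remedy is to pad with constraints drawn from the shared marginal product distribution, which is $\chi$-neutral by construction. Once this careful accounting is in place, the parity-detection lower bound goes through verbatim and yields both the general $\beta<\gamma$ statement and the sharpening at $\gamma=1$.
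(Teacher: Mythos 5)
Your positive half and the decidability clause are fine and match the paper (specialize \cref{thm:main-detailed-dynamic} and \cref{thm:pspace-dec}; note only that the emptiness test for $K^Y_\gamma\cap K^N_\beta$ is not a linear program, since $K^N_\beta$ is cut out by infinitely many constraints indexed by the distributions $(\cP_\sigma)_\sigma$ — it is decided via the quantified theory of reals, but that is indeed inherited from \cref{thm:pspace-dec}). Your observation that for $q=k=2$ the kernel of the marginal map is one-dimensional and spanned by the parity vector $(1,-1,-1,1)$ is also exactly the structural fact the paper exploits.

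The gap is in how you use that fact for the lower bound. You claim that distinguishing any matching-marginal pair $(\cD_Y,\cD_N)$ reduces ``verbatim'' to the uniform-marginal parity-detection setting of \cite{KKS,KKSV17,KK19}, with the only caveat being that the padding be ``$\chi$-neutral.'' That misidentifies the obstruction. Two independent steps require \emph{one-wise independence} (uniform marginals) of the distributions being distinguished, not merely matching marginals: (i) the Fourier/hypercontractivity argument bounds the distance of Bob's posterior to a fixed product distribution, and this only works when the level-one Fourier coefficients of the masking distribution vanish; (ii) the hybrid reduction from an $s$-space streaming algorithm to the two-player communication problem (\cref{lem:reduce to streaming}) requires that the block Bob injects be distributed independently of the planted $\vecx^*$, which forces comparison with $\cD_f\times\unif([q]^k)$ and hence, by the triangle inequality, again requires uniform marginals (\cref{rem:uniform-ssd}). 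Padding with the product of the shared marginals fixes neither problem: the non-padding part of $\cD_Y,\cD_N$ still has non-uniform marginals, so the residual is not one-wise independent and the NO distribution is still correlated with $\vecx^*$. What actually makes $q=k=2$ work — and is the key lemma the paper proves for this theorem — is \cref{prop:k=2padded}: any matching-marginal pair on $[2]^2$ decomposes as $\cD_Y=\tau\cD_0+(1-\tau)\cD'_Y$, $\cD_N=\tau\cD_0+(1-\tau)\cD'_N$ with a \emph{common} component $\cD_0$ absorbing all the marginal non-uniformity and with $\cD'_Y=(1/2,0,0,1/2)$, $\cD'_N=(0,1/2,1/2,0)$ one-wise independent; the hardness then comes from \cref{thm:main-negative} applied to this padded one-wise pair, where the $\cD_0$ part is prepended as a prefix Alice can generate herself (padded-streaming-SD) and the Fourier lower bound is run only on the uniform-marginal pieces. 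Without this decomposition (or an argument replacing it), your reduction does not go through, and the statement you would be able to prove is essentially the $\cF$-supports-one-wise-independence case (\cref{thm:one-wise}), not the full $q=k=2$ dichotomy.
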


\cref{thm:main-intro-k=q=2} is proved in \cref{sec:lb-detail-insert}. 
\cite{CGV20} study the setting where constraints are applied to literals, $\cF$ contains a single function and get a tight characterization of the approximability of $\maxF$\footnote{By approximability of $\maxF$ we refer to the quantity $\inf_\beta \sup_\gamma\{\{\beta/\gamma\}\}$ over polylog space solvable $(\gamma,\beta)$-$\maxF$ problems.}. 
 
Our work extends theirs by allowing constraints to be applied only to variables, and by allowing families of constraint functions, and by determining the complexity of every $\gbmaxF$ (and not just studying the optimal ratio of $\beta/\gamma$).  

For the sake of completeness we also give a simple characterization of the $\maxF$ problems that are solvable \emph{exactly} in polylogarithmic space. 

\begin{theorem}[Succinct version]\label{thm:informal2}
For every $q,k\in\N$ and $\cF \subseteq \{f:[q]^k \to \{0,1\}\}$, the $\maxF$ problem is solvable exactly in deterministic logarithmic space if and only if there is a constant $\sigma \in [q]$ such that every satisfiable function in $\cF$ is satisfied by the all $\sigma$-assignment. All remaining families $\cF$ require $\Omega(n)$ space to solve exactly. 
\end{theorem}

The proof of this theorem is by elementary reductions from standard communication complexity problems and is included in \cref{sec:exact}.

\paragraph{This version:} This version of the paper subsumes the works \cite{CGSV20,CGSV21,CGSV21-general-arxiv}. The paper~\cite{CGSV20}, now withdrawn, claimed a restriction of \cref{thm:informal} in the streaming setting, but that version had an error and the status of Theorem 1.1 in \cite{CGSV20} is currently open. \cite{CGSV21} proves the results of this paper for the special cases of $\cF = \{f\}$, $q=2$ and constraints being applied to literals rather than variables. \cite{CGSV21-general-arxiv}~essentially contains the same results as this paper, but builds upon \cite{CGSV21}. The conference version of \cite{CGSV21-general-arxiv} appeared in the proceedings of FOCS~2021~\cite{CGSV21-general-conference}. This paper combines \cite{CGSV21} and \cite{CGSV21-general-arxiv}.

\subsection{Contrast with dichotomies in the polynomial time setting}

The literature on polynomial-time dichotomies of $\maxf$ problems is vast. One broad family of results here \cite{Schaefer,Bulatov,Zhuk} 
considers the exact satisfiability problems (corresponding to distinguishing between instances from $\{\Psi | \opt(\Psi) = 1\}$ and instances from $\{\Psi | \opt(\Psi) < 1\}$).
Another family of results~\cite{raghavendra2008optimal,AustrinMossel,KhotTW} considers the approximation versions of $\maxf$ and gets ``near dichotomies'' along the lines of this paper --- i.e., they either show that the $(\gamma,\beta)$-approximation is easy (in polynomial time), or for every $\epsilon > 0$ the  $(\gamma - \epsilon,\beta+\epsilon)$-approximation version is hard (in some appropriate sense). Our work resembles the latter series of works both in terms of the nature of results obtained, the kinds of characterizations used to describe the ``easy'' and ``hard'' classes, and also in the proof approaches (though of course the sketching setting is much easier to analyze, allowing for simpler proofs overall and unconditional results). We summarize their results giving comparisons to our theorem and then describe a principal contrast. 

In a seminal work, Raghavendra~\cite{raghavendra2008optimal} gave a characterization of the polynomial time approximability of the $\maxf$ problems based on the unique games conjecture~\cite{Kho02}. Our~\cref{thm:informal} is analogous to his theorem.
A characterization of approximation resistant functions is given by Khot, Tulsiani and Worah~\cite{KhotTW}. Our~\cref{thm:one-wise-informal} is analogous to this. Austrin and Mossel~\cite{AustrinMossel} show that all functions supporting a pairwise independent distribution are approximation-resistant. Our~\cref{thm:one-wise} is analogous to this theorem. 

While our results run in parallel to the work on polynomial time approximability our characterizations are not immediately comparable. Indeed there are some significant differences which we highlight below. Of course there is the obvious difference that our negative results are unconditional (and not predicated on a complexity theoretic assumption like the unique games conjecture or \textsf{P}$\ne$\textsf{NP}). But more significantly our characterization is a bit more ``explicit'' than those of \cite{raghavendra2008optimal} and \cite{KhotTW}. 
In particular the former only shows decidability of the problem which takes $\epsilon$ as an input (in addition to $\gamma,\beta$ and $f$) and distinguishes  $(\gamma,\beta)$-approximable problems from $(\gamma-\epsilon,\beta+\epsilon)$-inapproximable problems. The running time of their decision procedure grows with $1/\epsilon$. In contrast our distinguishability is sharper and separates $(\gamma,\beta)$-approximability from ``$\forall \epsilon > 0$, $(\gamma-\epsilon,\beta+\epsilon)$-inapproximability'' --- so our algorithm does not require $\epsilon$ as an input --- it merely takes $\gamma,\beta$ and $f$ as input.
Indeed this difference is key to the understanding of approximation resistance. Due to the stronger form of our main theorem (\cref{thm:informal}), our characterization of streaming-approximation-resistance is explicit (decidable in \textsf{PSPACE}), whereas a decidable characterization of approximation-resistance in the polynomial time setting seems to be still open. 

Our characterizations also seem to differ from the previous versions in terms of the features being exploited to distinguish the two classes. This leads to some strange gaps in our knowledge. For instance, it would be natural to suspect that (conditional) inapproximability in the polynomial time setting should also lead to (unconditional) inapproximability in the streaming setting. But we do not have a formal theorem proving this. (Of course, if this were false, it would be a breakthrough result giving a quasi-polynomial time (even polylog space) algorithm for the unique games!)

\subsection{Overview of our analysis}
At the heart of our characterization is a family of linear sketching algorithms for $\maxF$. We will describe this family soon, but the main idea of our proof is that if no algorithm in this family solves $(\gamma,\beta)$-$\maxF$, then we can extract a pair of instances, roughly a family of $\gamma$-satisfiable ``yes'' instances and a family of at most $\beta$-satisfiable ``no'' instances, that certify this inability. We then show how this pair of instances can be exploited as gadgets in a negative result. Up to this part, our approach resembles that in \cite{raghavendra2008optimal} (though of course, all the steps are quite different). The main difference is that we are able to use the structure of the algorithm and the lower bound construction to show that we can afford to consider only instances on $k$ variables. (This step involves a non-trivial choice of definitions that we elaborate on shortly.) 
This bound on the number of variables allows us to get a ``decidable'' separation between approximable and inapproximable problems. Specifically, we show that the distinction between the approximable setting and the inapproximable one can be expressed by a quantified formula over the reals with a constant number of quantifiers over $2^k$ variables and equations --- a problem that is known to be solvable in \textsf{PSPACE}. We give more details below. To simplify the discussion we consider a singleton function family $\cF = \{f\}$. Extending to multiple functions is not much harder (though as stressed by the \textsf{Max-3SAT} example, this is not trivial either). We start by giving some intuition into our framework before actually describing the framework. We remark that while this intuition may be helpful, it is not necessary for any of our proofs.

\paragraph{Intuition.} 
Our starting point is the belief that streaming algorithms working with polylogarithmic space can essentially extract the ``bias profile'' of an instance, while algorithms with much more (specifically $o(\sqrt{n})$) space can not do much more. Here, by bias profile of an instance $\Phi$ on $n$ variables we mean the $n \times k$ matrix $B = B(\Phi)$ with $B_{i,j}$ representing the fraction of constraints of $\Phi$ that have $x_i$ as the $j$th variable. If our belief were to be true then the only obstacle to deciding $\gbmaxf$ in $o(\sqrt{n})$ space would be two instances $\Phi_Y$ and $\Phi_N$ on the same set of variables with $\val(\Phi_Y) \geq \gamma$ and $\val(\Phi_N) \leq \beta$ while the instances have the same bias profile, i.e. $B(\Phi_Y) = B(\Phi_N)$. 

To convert our belief into a proof of \cref{thm:informal}, we need to do three things: (1) Given $\gamma,\beta$ and $f$, show that the existence of such a pair of instances $\Phi_Y$ and $\Phi_N$ can be decided (in finite time); (2) Show that if no pair of such instances exist then $\gbmaxf$ can be decided by a polylogarithmic space sketching algorithm; and (3) If such a pair of instances exist then no $o(\sqrt{n})$ space sketching algorithm can solve $\gbmaxf$.

While step (3) ends up taking most of the technical work in this paper, it is also perhaps the most believable. Roughly hard instances of arbitrary length can be extracted from $\Phi_Y$ and $\Phi_N$ by doing ``random lifts'', i.e., creating many copies of each variable in $\Phi_Y$ and applying constraints randomly among these copies according to $\Phi_Y$ or $\Phi_N$ roughly preserves the values; and the fact that the bias profiles match can be converted into a hardness result for sketching algorithms using communication complexity based arguments. We expand on this more below.

The less believable steps (in our estimate) are steps (1) and (2) and it turns out that understanding the challenge in (1) better leads to a solution to both steps. The challenge behind (1) is of course the fact that a priori the number of variables in $\Phi_Y$ or $\Phi_N$ can not be bounded and so there is no finite upper bound on the time it would take to decide their existence. The key to resolving this is the fact (that we will argue below) that the information contained in $\Phi_Y$ and $\Phi_N$ can be compressed into smaller instances on $kq$ variables.

To establish this, let us suppose (without loss of generality) that $\Phi_Y$ and $\Phi_N$ are instances on $n \times q$ variables $\{X_{i,\sigma}\}_{i \in [n], \sigma \in [q]}$. Further suppose the assignment that establishes $\val(\Phi_Y) \geq \gamma$ is the assignment $a_{i,\sigma} = \sigma$. For permutations $\pi_1,\ldots,\pi_q:[n]\to [n]$, let $\Phi_Y^{\pi_1,\ldots,\pi_q}$ be a copy of $\Phi_Y$ with variables renamed to $\{X_{\pi_\sigma(i),\sigma}\}$. Similarly define $\Phi_N^{\pi_1,\ldots,\pi_q}$. Note that renaming the variables preserves the values and the bias profiles still match, and furthermore the assignment that yields a value of $\gamma$ to $\Phi_Y^{\pi_1,\ldots,\pi_q}$ is still $a_{i,\sigma} = \sigma$. Thus if we now consider the instances $\widetilde{\Phi_Y}$ obtained by concatenating all the constraints of $\Phi_Y^{\pi_1,\ldots,\pi_q}$ over all choices of 
${\pi_1,\ldots,\pi_q}$, and similarly define $\widetilde{\Phi_N}$, then the resulting instances still have matching bias profiles and they still satisfy  $\val(\widetilde{\Phi_Y}) \geq \gamma$ and $\val(\widetilde{\Phi_Y}) \leq \beta$. The gain with all these transformations is that $\widetilde{\Phi_Y}$ and $\widetilde{\Phi_N}$ are very symmetric instances with only $q$ equivalence classes of variables (as opposed to $n$ general variables). And a random constraint just picks a uniform variable from an equivalence class, conditioned on picking a variable from that class, in any given position. (Recall that by our assumption, every constraint is applied on $k$ \emph{distinct} variables.) Thus the instances $\widetilde{\Phi_Y}$ and $\widetilde{\Phi_N}$ are effectively given by a distribution supported on $[q]^k$ where the probability of $(\sigma_1,\ldots,\sigma_k)$ measures the frequency of constraints on $k$-tuples of variables of the form $(X_{*,\sigma_1},\ldots,X_{*,\sigma_k})$. 

Thus the instances revealing the gap between $\gamma$ and $\beta$ are finitely specified (or at least are distributions over a finite space), but it is still unclear how to search for (specifications of) such instances of value at least $\gamma$ or at most $\beta$. To address this challenge one may try to reduce 
the entire instance $\widetilde{\Phi_Y}$ into an ``equivalent'' instance on just $q$ variables, (by replacing all variables $X_{i,\sigma}$ for $i \in [n]$ with a single variable $Z_i$) but this may result in constraints where all variables are not distinct. To exclude this possibility we replace the collection of variables $X_{i,\sigma}$ with $k$ variables $Z_{\ell,\sigma}$ for $\ell \in [k]$; and now compress $\widetilde{\Phi_Y}$ by replacing all occurrences of $X_{i,\sigma}$ as the $\ell$th variable in a constraint, with $Z_{\ell,\sigma}$. This leads to a compressed instance $\Phi'_Y$ on just $kq$ variables. We can do a similar reduction with $\widetilde{\Phi_N}$ to get an instance $\Phi'_N$. These resulting instances also have matching bias profiles. The reduction in the variables ensures $\val(\Phi'_Y) \geq \gamma$ since the assignment $Z_{\ell,\sigma} = \sigma$ still satisfies a $\gamma$ fraction of the constraints. However, it is no longer true that $\val(\Phi'_N)\leq \beta$. This is so since the assignment to a variable $Y_{i,\sigma}$ might depend on $i$ which was not a possibility considered when bounding $\val(\widetilde{\Phi_N})$. What we would like at this stage is a succinct way to capture the fact that if we try to reverse engineer $\widetilde{\Phi_N}$ from $\Phi'_N$ then we would have $\val(\widetilde{\Phi_N})\leq \beta$. It turns out one succinct way to capture this is to consider only those distributions on assignments to the variables $Z_{\ell,\sigma}$ that are independent across variables and furthermore the distributions of $Z_{\ell,\sigma}$ and $Z_{\ell',\sigma}$ are identical. If we require that $\Phi'_N$ has value at most $\beta$ in expectation over all such distributions of assignments to its variables, then we effectively capture the constraint $\val(\widetilde{\Phi_N})\leq \beta$. 


Thus the search for instances $\Phi_Y$ and $\Phi_N$ can be reduced to a search for instances $\Phi'_Y$ and $\Phi'_N$ on just $kq$ variables whose bias profiles must match and whose values satisfy some constraints. Since the marginals of distributions supported on $[q]^k$ are captured by vectors in $[0,1]^{kq} \subseteq \R^{kq}$
we get that the space of marginals of all yes instances (of the special type we care about) is given by a subset of points in $\R^{kq}$, which we denote $\kgyf$. Similarly, the space of the marginals of the no instances is also a subset of $\R^{kq}$ denoted $\kbnf$. It turns out these sets are bounded, closed, and convex and actually described by some polynomial conditions. Thus solving step (1) reduces to the task of determining if $\kgyf$ and $\kbnf$ intersect. And when they do not intersect, the separating hyperplane gives us a clue on how to solve the problem from step (2), i.e., how to solve $\gbmaxF$ with polylogarithmic space.

To show that this framework works, we need to explain what our algorithms are, why they lead to these special instances when they fail, and how to use the failure of the algorithms (or equivalently the intersection of $\kgyf$ and $\kbnf$) to get the hardness of $\gbmaxF$. We attempt to explain this below.

\paragraph{Bias-based algorithms.} 
The class of algorithms we use are what we call ``bias-based algorithms,'' which extend algorithms used for \textsf{Max-DICUT} and other problems in \cite{GVV17,CGV20}. Roughly, these algorithms work by inspecting constraints one at a time and (linearly) updating the ``preference/bias'' of variables involved in the constraint for a given assignment. This update depends on the location of the variable within the constraint (and if there are multiple functions in the family, also on the function itself). Thus implicitly these algorithms maintain an $n$-dimensional bias vector and at the end use some property of this vector to estimate a lower bound on the value of the instance. If this property is computable efficiently in the turnstile streaming model, then this leads to a space-efficient streaming algorithm. 

The key questions for us are: (1) How to update the bias? and (2) What property of the vector yields a lower bound. When dealing with specific functions as in previous papers, there are some natural candidates for bias and the most natural one turns out to be both useful and computable efficiently using $\ell_1$ norm estimators. For the property, one has to devise a ``rounding scheme'' that takes the bias vector and uses it to create an assignment that achieves a large value (or value related to the property being estimated). 

In our case, obviously ``inspection'' of natural candidates will not work for item (1) --- we have infinitely many problems to inspect. But it turns out that the convex set framework, somewhat surprisingly, completely solves both parts (1) and (2) for us. If $\kgyf$ and $\kbnf$ do not intersect then there is a linear separator in $\R^{kq}$ separating the two sets and the coefficients of this separator are interpretable as giving $kq$ ``biases'' --- for $i\in[k]$ and $\sigma\in [k]$ the $(i,\sigma)$-th coefficient can be viewed as the bias/preference of the $i$-th variable in a constraint for taking the  assignment $\sigma \in [q]$. This gives us an $n \times q$ bias matrix at the end that captures all the biases of variables from the whole instance. Turning to (2), a natural property to consider at this stage is the one-infinity norm of this matrix (i.e., the $\ell_1$ norm of the $n$-dimensional vector whose coordinates are the $\ell_\infty$ norms of the rows of the bias matrix). Informally, this corresponds to each variable acting independently according to its bias. It turns out this norm is one of many that is known to be computable with small space in the turnstile streaming setting and in particular we use a result of Andoni, Krauthgamer, and Onak~\cite{AKO} to compute this. Finally, we need a relationship between this property and a lower bound on the value, and once again the fact that the bias came from a separating hyperplane (and the exact definition of the sets in the convex set framework) allows us to distinguish instances with value at least $\gamma$ from instances of value at most $\beta$. (Note that these constants are already baked into our sets and hence the separating hyperplane.) We remark that we do not give an explicit rounding procedure for our approximation algorithm, though one can probably be extracted from the definitions of the convex sets and analyses of the correctness of our algorithms.

\paragraph{Lower bounds.}
Finally, we turn to the lower bounds. Once again we restrict our overview to the setting of $|\cF| = 1$ for simplicity. Both our lower bounds for sketching algorithms and for general streaming algorithms have a common starting point. Recall we are given that there are two distributions $\cD_Y$ and $\cD_N$ on constraints that have the same one-wise marginals and these can be viewed as distributions on $[q]^k$.
 
For every pair of such distributions $\cD_Y$ and $\cD_N$ in $[q]^k$ we define a two player communication problem we call $(\cD_Y,\cD_N)$-signal detection (SD). (So in effect these are infinitely many different communication problems, roughly corresponding to the infinitely many different $\maxF$ problems we wish to analyze.) We show that if $\cD_Y$ and $\cD_N$ have the same marginals, then the communication problem requires $\Omega(\sqrt{n})$ communication. We give further details below, but now explain the path from this communication lower bound to the streaming lower bounds.
To get these lower bounds, we convert our SD lower bound into lower bounds on some $T$-players games, for all large constants $T$. Instances of the $T$-player games immediately correspond to instances of $\maxF$ and furthermore the properties of the sets $\kgyf$ and $\kbnf$ translate into the value of these $\maxF$ instances. 

Turning to the $T$-player games: 
In the lower bound for sketching algorithms, we first convert the SD lower bound into a lower bound on a $T$-player {\em simultaneous} communication game.
This conversion is relatively standard in the streaming literature \cite{Kapralov13,Konrad15,HRVZ15,AKLY16}: reduce the two-player communication game to the $T$-player communication game by letting Bob play the role of one of the players and Alice play the role of the remaining $T-1$ players.
By turning a sketching algorithm into a protocol for the communication game we can get a space $\sqrt{n}$ lower bound for every $(\gamma,\beta)$-$\maxF$ against any sketching algorithms, whenever the corresponding $K^Y$ and $K^N$ intersect. (See~\cref{thm:main-negative-dynamic}.) For the hardness result in the streaming setting, the lower bound on the simultaneous communication problem no longer suffices. So here we craft our own reduction to a $T$-player {\em one-way} communication problem which reduces in turn to $(\gamma,\beta)$-$\maxF$ in the streaming setting. (This step follows the same path as \cite{KKS,CGV20}.) Unfortunately, this step works only in some restricted cases (for instance if $\cD_N$ is the uniform distribution on $[q]^k$) and this yields our lower bound (\cref{thm:one-wise}) in the streaming setting. 

We now turn to our family of communication problems (SD), which is a distributional one-way communication problem. In the $(\cD_Y,\cD_N)$-\textsf{SD} problem with length parameter $n$, Alice gets a random string $\vecx^* \in [q]^n$ and Bob gets a hypermatching $\vecJ = (\vecj(1),\ldots,\vecj(m))$ with $m=\alpha n$ edges (where $\alpha > 0$ is a constant of our choice independent of $n$). In other words $\vecj(i)$ is a sequence of $k$ distinct elements of $[n]$ and furthermore $\vecj(i)$ and $\vecj(i')$ are disjoint for every $i\ne i' \in [m]$. In addition, Bob also gets $m$ bits $\vecz = (z(1),\ldots,z(m))$, where 
$z(i)$ is obtained by sampling $\vecb(i) \sim \cD_Y$ in the \yes\ case (and $\vecb(i) \sim \cD_N$ in the \no\ case) independently for $i \in [m]$ and letting $z(i) = 1$ iff $\vecx^*|_{\vecj(i)} = \vecb(i)$. The goal of the communication problem is for Alice to send a message to Bob that allows Bob to guess whether this is a \yes\ instance or a \no\ instance. The minimum length (over all protocols solving SD) of Alice's message is the complexity of the $(\cD_Y,\cD_N)$-\textsf{SD}. It is straightforward from the definition to get a $O_{\cD_Y,\cD_N,\alpha}(1)$-bit communication protocol achieving constant advantage if $\cD_Y$ and $\cD_N$ do not have the same marginals. Our lower bound shows that whenever the marginals match, the communication is at least $\Omega(\sqrt{n})$. (It is again straightforward to show distributions with matching marginals where $O(\sqrt{n})$ bits of communication suffice to distinguish the two cases.)

Before giving some details on our lower bound proof of the SD problem, we briefly give some context to the problem itself.
We note that our communication game is different from those in previous works: Specifically the problem studied in \cite{GKKRW,KKS} is called the \textit{Boolean Hidden Matching (BHM)} problem from~\cite{GKKRW} and the works \cite{KKSV17,KK19} study a variant called the \textit{Implicit Hidden Partition} problem. While these problems are similar, they are less expressive than our formulation, and specifically do not seem to capture all $\maxF$ problems. We note that the BHM problem is essentially well suited only for the setting $k=q=2$. In particular, the definition and analysis of BHM relies on the Fourier analysis over $\F_q$. Increasing $k$ leads to several possible extensions that seem more naturally suited to CSPs on literals rather than variables. And increasing $q$ leads to further complications since we do not have a natural field to work with. Thus the choice of SD is made carefully to allow both expressibility (we need to capture all $\maxF$s) and the ability to prove lower bounds. 

Turning to our lower bound, it comes in two major steps. In the first step we resort to a different communication problem that we call the ``Randomized Mask Detection Problem with advice'' (\textsf{Advice-RMD}). In this problem, defined only for $q=2$, Alice and Bob are given more information than in SD. Specifically Alice is given as ``advice'' a partition of $[n]$ into $k$ parts with the promise that the $\ell$-th variable in every constraint is from the $\ell$-th part for every $\ell \in [k]$. And Bob is given the vectors $(\vecz(1),\ldots,\vecz(m))$ where $\vecz(i) = \vecx^*|_{\vecj(i)} \oplus \vecb(i)$ for $i \in [m]$. This problem is closest both in definition and analyzability to the previous problems. Indeed we are able to extend previous Fourier-analytic lower bounds, in the special case where the marginals of $\cD_Y$ and $\cD_N$ over $\{-1,1\}$ are {\em uniform}, to give an $\Omega(\sqrt{n})$ lower bound on the communication complexity of this problem. (See ~\cref{thm:rmd}.) This immediately yields a hardness of the SD problem when $\cD_Y$ and $\cD_N$ are distributions over $\{-1,1\}^k$ with uniform marginals, but we need more.

To extend the lower bound to all $q$ and to non-uniform marginals, we use more combinatorial methods. Specifically we show that we can move $\cD_Y$ to $\cD_N$ in a series of steps $\cD_Y = \cD_1,\ldots,\cD_L = \cD_N$ where for every $i$, the difference between $\cD_i$ and $\cD_{i+1}$ is ``captured'' (in a sense we do not elaborate here) by two distributions with uniform marginals over $\{a,b\}^k$ for some $a,b \in [q]$. We refer to each of these $L$ steps as a ``polarization step''. Showing that $L$, the number of polarization steps, is finite leads to an interesting problem we solve in \cref{ssec:finite}. (The bound depends on $q$ and $k$, but not $\cD_Y,\cD_N, \alpha$ or $n$. We remark that any dependence on the first three would have been fine for our application.) Finally we show that the lower bound on the \textsf{Advice-RMD} mentioned above, in the Boolean uniform marginal setting, suffices to show that the $(\cD_i,\cD_{i+1})$-\textsf{SD} problem also requires $\Omega(\sqrt{n})$ communication. (See \cref{thm:kpart,thm:polarization step hardness}.) By a triangle inequality it follows that $(\cD_Y,\cD_N)$-\textsf{SD} requires $\Omega(\sqrt{n})$ communication. (See \cref{thm:communication lb matching moments}). 

\subsection{Subsequent results}\label{ssec:subseq}


\newcommand{\mon}{\mathsf{monarchy}}

Subsequent to the first announcement of this work several followup results have extended and strengthened the results of this paper. We report on some of these below.

\paragraph{Explicit Families of Easy and Hard Problems.} One of the main drawbacks of our result in \cref{thm:informal} is that the decision criterion is not completely explicit. This is of course natural given the richness of the class of problems, but it is still natural to ask are there some clean families of problems that can be shown to be non-trivially approximable, or not, by further analyzing the tractability condition. Two subsequent works have addressed this question for infinite classes of problems and we report on these below. 

One class of works by the authors with Shahrasbi~\cite{CGSSV-mon} explores the ``monarchy'' and ``weak monarchy'' predicates. The monarchy predicate is the function $f_{\mon}: \{-1,1\}^k \to \{0,1\}$ given by $f(x_1,\ldots,x_k) = \sign((k-2)x_1 + \sum_{i=2}^k x_i)$. In other words $f_{\mon}(\vecx) = 1$ if $x_1 = 1$ and at least one other $x_i$ is $1$, or if $x_2 = \cdots = x_k = 1$. The monarchy family $\cF_{\mon}$ is given by applying the monarchy predicate to literals, i.e., $\cF_{\mon} = \{f_{\mon}^{\vecb} | \vecb \in \{-1,1\}^k\}$ where $f_{\mon}^{\vecb}(\vecx) = f_{\mon}(\vecx \odot \vecb)$. The  monarchy CSP ($\maxF_{\mon}$) is known to be approximable in the polynomial time setting for every $k$~\cite{Potechin}. In contrast, their work \cite{CGSSV-mon} shows that for $k\geq 5$, the monarchy CSP  is approximation-resistant in the sketching setting. This is of particular interest since this is a family that is not one-wise independent but remains approximation-resistant in the sketching setting. The approximation resistance of this class for general streaming algorithms remains open. \cite{CGSSV-mon} also explores weak monarchy CSPs, i.e., CSPs on functions of the form $f_{k,j}(\vecx) = \sign(j x_1 + \sum_{i=2}^k x_i)$ applied to literals. They show that for every $j$ for all sufficiently large $k$ the weak monarchy CSP based on $f_{k,j}(\vecx)$ is non-trivially approximable in the sketching setting.

Another work deriving explicit bounds for infinite families is due to Boyland, Hwang, Prasad, Singer, and Velusamy \cite{BHPSV21}. They derive the exact form of the optimal sketching approximation ratios for several symmetric Boolean CSPs including $\maxkand$ and $\Th_k^{k-1}$ (the ``weight-at-least-$(k-1)$'' threshold function on $k$ variables). In both cases they show that there are non-trivial approximation algorithms thus establishing infinitely many problems for which the exact approximation ratio can be determined using (and further analyzing) our framework. (As an example they show that the approximation ratio for $\maxkand$ is exactly $2^{-(k-1)} (1-k^{-2})^{(k-1)/2}$ for odd $k \geq 3$ for sketching algorithms.) 
Their work further analyzes our streaming lower bound in \cref{thm:main-negative} and shows that for the threshold function $\Th^3_4$, our streaming and sketching lower bounds match. (This is analogous to our result for $\mdcut$ in \cref{Example:Max-DICUT}.)

\paragraph{$o(n)$-space algorithms.} In a work of the authors with Velingker~\cite{CGS+21}, the space lower bound in \cref{thm:one-wise} is improved to $\Omega(n)$ for a subclass of function families that support one-wise independence. In particular, they show that the subclass they consider is approximation resistant with respect to $o(n)$-space streaming algorithms. We do not describe the exact subclass here but mention that it suffices for them to get an ``approximate'' classification of all approximation problems, Namely for every given $\gamma$, $\beta$ and $\cF$ over a $q$-ary alphabet they show that either $\gbmaxF$ is trivial or $(\gamma/q,\beta)$-$\maxF$ requires $\Omega(n)$ space to solve. Their work suggests some inherent barriers in extending the full classification of the problems considered in the current paper to $o(n)$-space algorithms.
This was later confirmed in a work of Saxena, Singer, Sudan, and Velusamy \cite{SSSV23b} where they give an $\tilde{O}(\sqrt{n})$ space algorithm for $\mdcut$ that beats the best $o(\sqrt{n})$ space algorithm. Singer \cite{Singer23} partially extends this result to obtain an $O(n^{1-1/k})$ space algorithm for $\maxkand$ that beats the optimal $o(\sqrt{n})$ space algorithm on ``bounded-degree'' instances.

\paragraph{Random-ordering streaming setting} While Kapralov, Khanna, and Sudan~\cite{KKS} show that $\mcut$ is inapproximable by $o(\sqrt{n})$ space streaming algorithms even in the random-ordering setting, Saxena, Singer, Sudan, and Velusamy~\cite{SSSV23a} give an $O(\log n)$ space streaming algorithm in this setting that beats the optimal $o(\sqrt{n})$ space algorithm for $\mdcut$ in the adversarial-ordering setting. Singer \cite{Singer23} extends this result to obtain $O(\log n)$ space random-order streaming algorithms that beat the best $o(\sqrt{n})$ space adversarial-order algorithms for $\maxkand$, for all $k$!

\paragraph{Multi-pass streaming setting}
The random-order streaming algorithms in \cite{SSSV23a, Singer23} can be trivially extended to obtain $O(\log n)$ space two-pass adversarial-order streaming algorithms with the same approximation ratio.  A recent result due to Kol, Saxena, Paramonov, and Yu \cite{Mpass23a} gives a complete characterization for the exact computability of every Boolean $\maxf$ in the  multi-pass streaming setting and subsumes our \cref{thm:informal2} for this family. In particular, for every Boolean predicate $f$, they give an $\tilde{O}(n^{\deg(f)})$ space single-pass streaming algorithm that solves $\maxf$ exactly, where $\deg(f)$ is the degree of $f$ when viewed as multilinear polynomial, and show that any \emph{constant}-pass streaming algorithm requires at least $\Omega(n^{\deg(f)})$ space.

\paragraph{Variations of CSPs.} It turns out that our work on CSPs also is helpful in analyzing some variations of CSPs. In particular Singer, Sudan, and Velusamy~\cite{SSV21} consider the space of ``ordering CSPs'' where the challenge is to find an ordering of $n$ variables that satisfy some specified ordering constraints. An example is the Maximum Acyclic Subraph (MAS) problem where the goal is to find an ordering of $n$ variables $x_1,\ldots,x_n$ that, given many constraints of the form $x_i < x_j$, satisfies as many constraints as possible. Prior to the work of \cite{SSV21} no problem (including MAS) was tightly analyzed. \cite{SSV21} show that no ordering CSP has a non-trivial streaming algorithm with $o(\sqrt{n})$ space. Their work crucially relies on the framework from this paper and uses the approximation resistance of some CSPs considered in this paper. (See \cref{ssec:examples} for further details.) Since the problems needed in their work fall within the subclass of problems considered in \cite{CGS+21}, their streaming lower bound actually improves to $\Omega(n)$-space.

\subsection{Structure of rest of the paper}
\cref{sec:preliminaries} contains some of the preliminary background used in the rest of the paper. In \cref{sec:results}, we describe our results in detail. In particular, we build our convex set framework and give  an explicit criterion to distinguish the easy and hard $\maxF$ problems. We also describe sufficient conditions for the hardness of some streaming problems in the streaming setting. 
In \cref{sec:alg}, we describe and analyze our algorithm that yields our easiness result. In \cref{sec:lower-bound}, we define the ``Signal Detection'' problem and show how the communication complexity of this problem leads to the streaming space lower bounds claimed in \cref{sec:results}. 
In \cref{sec:kpart}, we introduce and analyze the \textsf{Advice-RMD} problem. 
In \cref{sec:polar} we prove our general lower bound for SD assuming that a single polarization step is hard. In \cref{sec:spl} we complete this remaining step by using the \textsf{Advice-RMD} lower bound to show hardness of a single polarization step, thus concluding our main lower bound.
Finally, in \cref{sec:exact} we give the dichotomy for the exact computability of $\maxF$. 

\section{Preliminaries}\label{sec:preliminaries}

In this section we introduce notations, definitions and some standard tools that will be used in the rest of this paper. Specifically we define constraint satisfactions problems and some promise problems related to their approximation (\cref{ssec:def-apx-csp}). Then we formally describe the streaming and sketching models of computation along with some variants and background material (\cref{ssec:def-stream}). In \cref{ssec:def-apx-equiv} we explain the folklore relationship between the promise problems defined in \cref{ssec:def-apx-csp} with the standard single parameter version of approximations, in the context of streaming algorithms. \cref{ssec:def-prob} has some basic notions from probability and some tools we will use. \cref{ssec:def-fourier} recalls notions from Fourier analysis and mentions the tools used from this area. Finally, \cref{ssec:def-reals} defines notions and results from the quantified theory of reals. We start with some notation.

We let $\N$ denote the set of positive integers. 
We let $[n]$ denote the set $\{1,\ldots,n\}$. 
For a finite set $\Omega$, let $\Delta(\Omega)$ denote the space of all probability distributions over $\Omega$, i.e., 
$$\Delta(\Omega) = \left\{\cD:\Omega \to \R^{\geq 0}\ |\ \sum_{\omega \in \Omega} \cD(\omega) = 1\right\}.$$
We view $\Delta(\Omega)$ as being contained in $\R^{|\Omega|}$. We use $X\sim\cD$ to denote a random variable drawn from the distribution $\cD$.
By default, a Boolean variable in this paper takes value in $\{-1,1\}$.
For every $p\in[0,1]$, $\textsf{Bern}(p)$ denotes the Bernoulli distribution that takes value $1$ with probability~$p$ and takes value $-1$ with probability $1-p$.

We will follow the convention that $n$ denotes the number of variables in CSPs, 
 $m$ denotes the number of constraints, and $k$ denotes the arity of the CSP.  

For variables of a vector form, we write them in boldface, \textit{e.g.,} $\vecx\in[q]^n$, and its $i$-th entry is written without boldface, \textit{e.g.,} $x_i$. For variable being a vector of vectors, we write it, for example, as $\vecb=(\vecb(1),\vecb(2),\dots,\vecb(m))$ where $\vecb(i)\in[q]^k$. The $j$-th entry of the $i$-th vector of $\vecb$ is then written as $\vecb(i)_j$. Let $\vecx$ and $\vecy$ be two vectors of the same length, $\vecx\odot\vecy$ denotes the entry-wise product of them.

\subsection{Approximate Constraint Satisfaction}\label{ssec:def-apx-csp}

$\maxF$ is specified by a family of constraints $\cF$, where each constraint function $f\in \cF$ is such that $f:[q]^k\rightarrow \{0,1\}$, for a fixed positive integer $k$. Given $n$ variables $x_1,x_2,\dots,x_n$, an application of the constraint function $f$ to these variables, which we term simply a {\em constraint}, is given by a $k$-tuple $\vecj = (j_1,\ldots,j_k) \in [n]^k$  where the $j_i$'s are distinct and  represent the application of the constraint function $f$ to the variables $x_{j_1},\ldots,x_{j_k}$. We use $\C_{\cF,n}$ to denote the set of all constraints of $\maxF$ on $n$ variables. (Note that $\C_{\cF,n}$ is a finite set.) 
Specifically an assignment $\vecb\in [q]^n$ satisfies a constraint given by $(f,\vecj)$ if $f(b_{j_1},\ldots,b_{j_k})=1$.

An instance $\Psi$ of $\maxF$ consists of $m$ constraints $C_1,\ldots,C_m$ with non-negative weights $w_1,\ldots,w_m$ where $C_i=(f_i,\vecj(i)) \in \C_{\cF,n}$ and $w_i \in \R$ for each $i\in[m]$. For an assignment $\vecb \in [q]^{n}$, the value $\val_\Psi(\vecb)$ of $\vecb$ on $\Psi$ is the fraction of weight of constraints satisfied by $\vecb$, i.e., $\val_\Psi(\vecb)=\tfrac{1}{W}\sum_{i\in[m]}w_i \cdot f_i(\vecb|_{\vecj(i)})$, where $W = \sum_{i=1}^m w_i$. The optimal value of $\Psi$ is defined as $\val_\Psi = \max_{\vecb \in [q]^{n}}\{\val_\Psi(\vecb)\}$.  The approximation version of $\maxF$ is defined as follows.

Throughout this paper we will only consider the case of $\maxF$ instances with integer weights bounded by a polynomial in $n$.

\begin{definition}[$(\gamma,\beta)$-$\maxF$]
Let $\cF$ be a constraint family and $0\leq\beta<\gamma\leq1$. For each $m\in\N$, let $\Gamma_m=\{\Psi=(C_1,\dots,C_m;w_1,\ldots,w_m)\, |\, \val_\Psi\geq\gamma\}$ and $B_m=\{\Psi=(C_1,\dots,C_m;w_1,\ldots,w_m)\, |\, \val_\Psi\leq\beta\}$.

The task of $(\gamma,\beta)$-$\maxF$ is to distinguish between instances from $\Gamma=\cup_{m\leq\textsf{poly}(n)}\Gamma_m$ and instances from $B=\cup_{m\leq\textsf{poly}(n)}B_m$. Specifically we desire algorithms that output $1$ w.p. at least $2/3$ on inputs from $\Gamma$ and output $1$ w.p. at most $1/3$ on inputs from $B$. 
\end{definition}

\subsection{Streaming and Sketching Algorithms}\label{ssec:def-stream}

\newcommand{\wtS}{\widetilde{S}}

We now define streaming and sketching algorithms in the context of $\maxF$. Note that the input to both algorithms are sequences of weighted constraints. Rather than explicitly including the weight we will simply allow the sequence to repeat constraints (not necessarily successively). The implied weight of a constraint will thus be the number of times it is repeated. (Note that we only consider integer polynomially bounded weights. Thus this representation only blows up the input by a polynomial factor.) A stream is thus an element of $(\C_{\cF,n})^*$ and we use $\lambda$ to denote the empty stream. 

\begin{definition}[Streaming algorithm]\label{def:streaming alg}
A deterministic space $s$ streaming algorithm $\ALG$ for $\maxF$ on $n$ variables is given by a (state-evolution) function $S: \{0,1\}^s \times \C_{\cF,n} \to \{0,1\}^s$ and a (output) function $v:\{0,1\}^s \to [0,1]$.
Let  $\wtS: (\C_{\cF,n})^*\to \{0,1\}^s$ given by $\wtS(\lambda) = 0^s$ and
$\wtS(\sigma_1,\ldots,\sigma_m) = S(\wtS(\sigma_1,\ldots,\sigma_{m-1}),\sigma_m)$ denote the iterated state-evolution map.
Then the output of $\ALG$ on input $\sigma = (\sigma_1,\ldots,\sigma_m)$ is $v(\wtS(\sigma))$.

In a {\em uniform randomized} space $s$ streaming algorithm the evolution map is given by $S:\{0,1\}^s \times \C_{\cF,n} \times \{0,1\}^r \to \{0,1\}^s$ for some $r \leq s$ and its iterate evolution map is a random variable given by $\wtS(\sigma_1,\ldots,\sigma_m) = S(\wtS(\sigma_1,\ldots,\sigma_{m-1}),\sigma_m,R_m)$ where $R_m \sim \textsf{Unif}(\{0,1\}^r)$ is independent of $\sigma_1,\ldots,\sigma_m$ and $R_1,\ldots,R_{m-1}$. 

A {\em non-uniform randomized} space $s$ streaming algorithm is simply a distribution on deterministic space $s$ streaming algorithms. 
\end{definition}

We note that non-uniform randomized algorithms can simulate uniform ones but may be much stronger since they allow algorithms to ``remember'' all previous random coins without being charged for the memory. All our upper bounds are in the uniform randomized model. Our lower bounds are in the non-uniform randomized model (and use this extra power in the reductions). 

Sketching algorithms are a special class of streaming algorithms that have been widely used in both upper bounds and lower bounds. For the definition of sketching algorithms below, we adopt Definition~5.21 in~\cite{C20}.

\begin{definition}[Sketching algorithms]\label{def:sketching alg}
A deterministic space $s$ streaming algorithm $\ALG=(S,v)$ is a sketching algorithm if there exists a compression function $\COMP:(\C_{\cF,n})^* \to \{0,1\}^s$ and a combination function $\COMB:\{0,1\}^s\times\{0,1\}^s \to \{0,1\}^s$ such that the following hold:
\begin{itemize} 
\item $S(z,C) = \COMB(z,\COMP(C))$ for every $z \in \{0,1\}^s$ and $C \in \C_{\cF,n}$.
\item For every pair of streams $\sigma,\tau\in (\C_{\cF,n})^*$, we have
\[
\COMB(\COMP(\sigma),\COMP(\tau))=\COMP(\sigma\circ\tau)
\]
where $\sigma\circ\tau$ represents the concatenation of the streams $\sigma$ and $\tau$. 

A uniform randomized sketching algorithm is similarly defined with $\COMB:\{0,1\}^s\times\{0,1\}^s\times\{0,1\}^r \to \{0,1\}^s$ and $S(z,C,R) = \COMB(z,\COMP(C),R)$ for every $z,C,R$, where $r\leq s$.  A randomized algorithm $\ALG$ is a non-uniform  randomized sketching algorithm if it is a distribution over deterministic sketching algorithms.
\end{itemize}

\end{definition}

    We remark that there can be several variants to the streaming problem above involving the possibility of weighted constraints, deletion of constraints and the length of the input stream. 
    \begin{enumerate}
        \item Dynamic streams: In this setting constraints may be inserted, even multiple times, and later deleted. In this setting algorithms are required to be correct on the final instance, under the promise that constraints were deleted fewer times than they were inserted at all intermediate stages of the streaming process. The input stream can be unboundedly large in this setting even while maintaining polynomially bounded integer weights (e.g., by inserting and deleting the same constraint an arbitrary number of times). Thus, algorithms may have restrictions on the length of input streams or have complexity growing with the length of the stream. 

        All our lower bounds work in the insertion only setting. Our upper bounds work on dynamic streams provided they have length polynomial in $n$.
        
        \item Weighted instances: Variations of $\maxF$ allow constraints to have non-negative real weights. We do not explicitly consider this setting in this paper, but standard techniques (involving rounding weights to nearby rationals) allow algorithms for polynomially bounded integer weights to be extended to apply to this setting also. 
        \item Linear Sketching: An instance $\Psi$ of $\maxF$ can be viewed as a vector in $\R^{\C_{\cF,n}}$ with the $C$th coordinate representing the weight of the constraint $C$ in $\Psi$. A linear sketching algorithm is one whose state is a linear function of this representation of the instance. Note that in this representation, the stream can be viewed as a sequence of linear updates. Thus if the state is a linear function, the updates to the state can be computed knowing only the previous state and the update to $\Psi$ thus leading to a natural streaming algorithm. Furthermore it can be seen that this streaming algorithm also satisfies the notion of sketchability. 

        The space complexity of such a sketching algorithm deserves special mention. The space requirement of linear sketching is the space needed to represent $t$ real numbers, where $t$ is the rank of the linear map used to sketch the inputs. When the weights are integers bounded by a polynomial in $n$, this can be used to show that the real numbers arising in the sketch can be represented by $O(\log n)$ bit rationals and so this translates to a small space sketch. This possibility goes away if the input is not polynomially bounded.

        All our algorithms are linear sketching algorithms as defined above.
        
    \end{enumerate}

\begin{remark}
    We note that \cite{LNW14,AHLW16} have shown that algorithms that work on dynamic streams are also linear sketching algorithms. Thus the assertion above that our algorithms are linear sketching algorithm (Item~3) seems redundant in view of the claim that they work in the dynamic setting (Item~1). However the results in \cite{LNW14,AHLW16} only apply to the case where the input streams are superpolynomially long (even requiring doubly exponential length). This is even necessary as proved by \cite{KP20}. Our results, on the other hand, only hold for polynomial length streams.  Thus in our setting, dynamic streams and linear sketching are not equivalent. 
\end{remark}

\subsubsection{Relation to single parameter approximability}\label{ssec:def-apx-equiv}

The traditional study of approximation algorithms typically focuses on a single parameter problem. Specifically, for $\alpha \in [0,1]$, $\maxF$ is said to be {\em $\alpha$-approximable} in space $s$ in the streaming setting if there is a space $s$ algorithm that on input a stream representing instance $\Psi$ of $\maxF$ outputs a number in $[\alpha \cdot \val_\Psi,\val_\Psi]$.
The connection between this single parameter approximability and the gapped problems we study is folklore. For the sake of completeness we describe the algorithmic implication below. 

\newcommand{\easy}{\textsc{Easy}}
\newcommand{\hard}{\textsc{Hard}}

\begin{proposition}\label{prop:approx-equivalence} 
Given $\cF \subseteq \{f:[q]^k \to \{0,1\}\}$, a space complexity measure $s : \N \to \N$, and sets $\easy,\hard \subseteq [0,1]\times [0,1]$ such that
for every $(\gamma,\beta) \in \easy$, $\gbmaxF$ is solvable in $s(n)$-space in the sketching model, and for every $(\gamma,\beta)\in\hard$, $\gbmaxF$ is not solvable in $s(n)$-space in the sketching model. Then for 
\[\alpha = \inf_{\beta \in [0,1]} \left \{ \sup_{\gamma\in(\beta,1] \rm{~s.t~} (\gamma,\beta)\in \easy} \{\beta/\gamma\}\right\},\]
and for every $\epsilon > 0$, there is an $(\alpha-\epsilon)$-approximation algorithm for $\maxF$ that uses $O_{k,q,\epsilon}(s(n))$ space in the sketching model. Conversely 
for 
\[\alpha = \inf_{\beta \in [0,1]} \left \{ \sup_{\gamma\in(\beta,1] \rm{~s.t~} (\gamma,\beta)\notin \hard} \{\beta/\gamma\}\right\},\]
and every $\eps>0$, every $(\alpha+\epsilon)$-approximation sketching algorithm for $\maxF$ requires $s(n)$ space.
\end{proposition}

\begin{proof}
The negative result is simple. We prove it in the contrapositive form by showing that if $\maxF$ has an $(\alpha+\epsilon)$-approximation algorithm using $s(n)$ space, then for every $(\gamma,\beta)$ with $\beta \leq \alpha \gamma$, $\gbmaxF$ is solvable in $s(n)$ space (and so $(\gamma,\beta) \notin \hard$). Suppose $\maxF$ has an $(\alpha+\epsilon)$ approximation algorithm $A$ using $s(n)$-space in the sketching model. Given $\gamma,\beta$ with $\beta/\gamma \geq \alpha$, we can use $A$  to solve the $(\gamma,\beta)$-$\maxF$ on input $\Psi$ as follows: Compute $A(\Psi)$ and output YES if $A(\Psi) \geq \beta$ and NO otherwise. Since $\beta \leq \alpha \gamma < (\alpha+\epsilon)\gamma$, it follows that if $\val(\Psi) \geq \gamma$ then $A(\Psi)$ will output some number greater that $\beta$ and our algorithm will output YES. If $\val(\Psi) \leq \beta$ then $A(\Psi)$ will output some number less than or equal to $\beta$ and our algorithm outputs NO. This yields the negative result.

For the positive result, we assume that $\easy$ is monotone in the following sense: If $(\gamma,\beta) \in \easy$ and $\beta' \leq \beta$ then $(\gamma,\beta') \in \easy$. (Note that we can assume this since an algorithm solving the  $(\gamma,\beta)$-$\maxF$ problem also solves the  $(\gamma,\beta')$-$\maxF$ problem.)
We also assume that every constraint in $\cF$ has at least one satisfying assignment.  (If not we can simply remove unsatisfiable constraints from $\cF$ and ignore them in the input stream.) Due to this assumption we have that a random assignment satisfies at least $\rho 
\triangleq q^{-k}$ fraction of the constraints. 
Let $\tau \triangleq \epsilon\cdot \rho/2$ and let 
$$A_\tau = \{(i\tau,j\tau) \in [0,1]^2 ~\vert~ i,j \in \mathbb{Z}^{\geq 0}, (i\tau,j\tau) \in \easy \}.$$
Thus for every $(\gamma,\delta) \in A_\tau$ there is a
$s(n)$-space algorithm  for $(\gamma,\beta)$-$\maxF$ with error probability $1/3$. By repeating this algorithm $O(\log(1/\tau))$ times and taking majority, we may assume the error probability is at most $1/(10\tau^2)$. We refer to this amplified algorithm as the $(\gamma,\beta)$-distinguisher below. In the following we consider the case where all $O(\tau^{-2})$ distinguishers output correct answers, which happens with probability at least $2/3$.

Our $O_{\tau}(s(n))$ space $(\alpha-\epsilon)$-approximation algorithm for $\maxF$ is the following: On input $\Psi$, run in parallel all the $(\gamma,\beta)$-distinguishers on $\Psi$, for every $(\gamma,\beta) \in A_\tau$. Let
\[
\beta_0 = \arg \max_{\beta} [\exists \gamma \text{ such that the }(\gamma,\beta)\text{-distinguisher outputs YES on }\Psi] \, .
\]
Output $\beta' = \max\{\rho,\beta_0\}$. 

We now prove that this is an $(\alpha-\epsilon)$-approximation algorithm. First note that by the correctness of the distinguisher we have $\beta' \leq \val_\Psi$. 
Let $\gamma_0$ be the smallest multiple of $\tau$ satisfying $\gamma_0 \geq (\beta_0 + \tau)/\alpha$.
By the definition of $\alpha$, we have that $(\gamma_0,\alpha\gamma_0)\in \easy$ and so by the monotonicity assumption on $\easy$ we have $(\gamma_0,\beta_0 + \tau) \in \easy$. So $(\gamma_0,\beta_0+\tau) \in A_\tau$ and so the $(\gamma_0,\beta_0+\tau)$-distinguisher must have output NO on $\Psi$ (by the maximality of $\beta_0$). By the correctness of this distinguisher we  conclude $\val_\Psi \leq \gamma_0 \leq (\beta_0 + \tau)/\alpha + \tau \leq (\beta'+\tau)/\alpha + \tau$. We now verify that $(\beta' + \tau)/\alpha + \tau \leq \beta'/(\alpha-\epsilon)$ and this gives us the desired approximation guarantee. We have 
\[(\beta' + \tau)/\alpha + \tau \leq (\beta'+2\tau)/\alpha \leq (\beta'/\alpha) \cdot (1 + 2\tau/\rho) = (\beta'/\alpha)(1+\epsilon) \leq (\beta'/(\alpha(1-\epsilon))),
\]
where the first inequality uses $\alpha \leq 1$, the second uses $\beta' \geq \rho$, the equality comes from the definition of $\tau$ and the final inequality uses $(1+\epsilon)(1- \epsilon) \leq 1$. This concludes the positive result.

\end{proof}

\subsection{Probabilistic notions and tools}\label{ssec:def-prob} 

We recall some standard notions from probability theory and mention some results we will use. 

\subsubsection{Total variation distance}
The total variation distance between probability distributions plays an important role in our analysis.

\begin{definition}[Total variation distance of discrete random variables]
Let $\Omega$ be a finite probability space and $X,Y$ be random variables with support $\Omega$. The total variation distance between $X$ and $Y$ is defined as follows.
\[
\|X-Y\|_{tvd} :=\frac{1}{2}
\sum_{\omega\in\Omega}\left|\Pr[X=\omega]-\Pr[Y=\omega]\right| \, .
\]
\end{definition}
We will use the triangle and data processing inequalities for the total variation distance.
\begin{proposition}[E.g.,{\cite[Claim~6.5]{KKS}}]\label{prop:tvd properties}
For random variables $X, Y$ and $W$:
\begin{itemize}
\item (Triangle inequality) $\|X-Y\|_{tvd}\geq\|X-W\|_{tvd}-\|Y-W\|_{tvd}$.
\item (Data processing inequality) If $W$ is independent of both $X$ and $Y$, and $f$ is a function, then  $\|f(X,W)-f(Y,W)\|_{tvd}\leq\|X-Y\|_{tvd}$.
\end{itemize}
\end{proposition}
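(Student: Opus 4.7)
The plan is to prove both parts directly from the definition of total variation distance, using the $\ell_1$ triangle inequality and the law of total probability respectively.

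For the triangle inequality, I would start by rearranging the inequality to the equivalent form $\|X-W\|_{tvd} \le \|X-Y\|_{tvd} + \|Y-W\|_{tvd}$. Fixing the common finite support $\Omega$, for each $\omega \in \Omega$ the scalar triangle inequality on $\R$ gives
\[
|\Pr[X=\omega]-\Pr[W=\omega]| \le |\Pr[X=\omega]-\Pr[Y=\omega]| + |\Pr[Y=\omega]-\Pr[W=\omega]|.
\]
Summing over $\omega$ and dividing by $2$ yields the desired inequality. This step is essentially one line and should not present any difficulty.

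For the data processing inequality, I would expand both quantities using the assumption that $W$ is independent of $X$ and of $Y$. Writing $g(z) = \Pr[f(X,W)=z] - \Pr[f(Y,W)=z]$ and conditioning on $W$ gives
\[
g(z) = \sum_{w} \Pr[W=w]\,\bigl(\Pr[f(X,w)=z] - \Pr[f(Y,w)=z]\bigr),
\]
and then expanding $\Pr[f(X,w)=z] = \sum_{x:\,f(x,w)=z}\Pr[X=x]$ (and likewise for $Y$) yields
\[
|g(z)| \le \sum_{w} \Pr[W=w] \sum_{x:\,f(x,w)=z} |\Pr[X=x]-\Pr[Y=x]|.
\]
Summing over $z$ collapses the inner sum (each pair $(x,w)$ contributes to exactly one $z$, namely $z=f(x,w)$), giving
\[
\sum_{z}|g(z)| \le \sum_{w}\Pr[W=w]\sum_{x}|\Pr[X=x]-\Pr[Y=x]| = 2\|X-Y\|_{tvd},
\]
since $\sum_w \Pr[W=w] = 1$. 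Dividing by $2$ gives the claim.

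Both parts are standard and the only real ``obstacle'' is being careful with the independence assumption in the second part: it is used precisely to factor the joint law $\Pr[(X,W)=(x,w)] = \Pr[X=x]\Pr[W=w]$ (and similarly for $Y$), which is what lets us pull out $\Pr[W=w]$ and sum it to one after the triangle inequality. No new notation or heavy machinery is needed.
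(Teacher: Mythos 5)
Your proof is correct. The paper itself gives no argument for this proposition — it is stated with a citation to \cite{KKS} — and your direct derivation from the definition (pointwise triangle inequality summed over $\Omega$ for the first item; conditioning on $W$, factoring the joint law via the assumed independence, and collapsing the sum over $z$ since each pair $(x,w)$ lands in exactly one fiber $f(x,w)=z$ for the second) is the standard, complete way to establish both items, with the independence hypothesis used exactly where it is needed.
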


\subsubsection{A concentration inequality}

We will use the following concentration inequality which is essentially an Azuma-Hoeffding style inequality for submartingales. The form we use is based on  \cite[Lemma~2.5]{KK19}, and allows for variables with different expectations. The analysis is a very slight modification of theirs.

\begin{lemma}\label{lem:our-azuma}
Let $X=\sum_{i\in[N]}X_i$ where $X_i$ are Bernoulli random variables such that for every $k\in[N]$, $\Exp[X_k \, |\, X_1,\dots,X_{k-1}]\leq p_k$ for some $p_k\in(0,1)$. Let $\mu=\sum_{k=1}^N p_k$. For every $\Delta>0$, we have:
\[
\Pr\left[X\geq\mu+\Delta\right]\leq\exp\left(-\frac{\Delta^2}{2\mu+2\Delta}\right) \, .
\]
\end{lemma}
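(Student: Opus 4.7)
The plan is to follow the standard exponential-moment (Chernoff/Bernstein) approach, essentially as in the proof of Lemma~2.5 of \cite{KK19}, with the observation that allowing distinct $p_k$ requires no substantive change because only the sum $\mu=\sum_k p_k$ enters the final bound.

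First, for a parameter $t>0$ to be chosen later, I would apply Markov's inequality to the random variable $e^{tX}$ to obtain
\[
\Pr[X \ge \mu+\Delta] \;\le\; e^{-t(\mu+\Delta)}\,\Exp\!\left[e^{tX}\right].
\]
To control $\Exp[e^{tX}]$, I would condition on the history $X_1,\dots,X_{k-1}$. Since $X_k\in\{0,1\}$, we have the identity $e^{tX_k}=1+(e^t-1)X_k$, hence
\[
\Exp\!\left[e^{tX_k}\,\Big|\,X_1,\dots,X_{k-1}\right]
\;=\;1+(e^t-1)\Exp[X_k\mid X_1,\dots,X_{k-1}]
\;\le\;1+(e^t-1)p_k
\;\le\;\exp\!\bigl((e^t-1)p_k\bigr).
\]
Iterating this conditioning (tower property) from $k=N$ down to $k=1$ yields
\[
\Exp\!\left[e^{tX}\right]\;\le\;\prod_{k=1}^N \exp\!\bigl((e^t-1)p_k\bigr)\;=\;\exp\!\bigl((e^t-1)\mu\bigr).
\]
This is the one place where the non-uniformity of the $p_k$'s matters, and it is handled transparently because only $\sum_k p_k=\mu$ appears.

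Next, I would optimize over $t$ by setting $t=\ln(1+\Delta/\mu)$, so that $(e^t-1)\mu=\Delta$. Plugging back gives the Bennett-type bound
\[
\Pr[X \ge \mu+\Delta] \;\le\; \exp\!\Bigl(-\mu\,h(\Delta/\mu)\Bigr),\qquad h(x)\;=\;(1+x)\ln(1+x)-x.
\]
Finally I would invoke the standard elementary inequality $h(x)\ge \frac{x^2}{2(1+x)}$ for $x\ge 0$ (proved by noting both sides vanish at $x=0$ and comparing derivatives, or directly from the Taylor remainder of $\ln$), which after multiplying by $\mu$ and substituting $x=\Delta/\mu$ yields $\mu\,h(\Delta/\mu)\ge \Delta^2/(2\mu+2\Delta)$. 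This gives the desired conclusion.

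There is essentially no technical obstacle here: the only subtlety is the conditional Bernoulli bound $\Exp[e^{tX_k}\mid \mathrm{past}]\le \exp((e^t-1)p_k)$, which is robust to variable conditional means; everything else mirrors the uniform-$p$ proof of \cite{KK19}. The cleanest exposition just inlines the four steps above.
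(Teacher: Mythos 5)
Your proposal is correct and takes essentially the same approach as the paper: an exponential-moment (Chernoff) argument, using the conditional bound $\Exp[e^{tX_k}\mid X_1,\dots,X_{k-1}]\le 1+(e^t-1)p_k$ and the tower property to get $\Exp[e^{tX}]\le\exp((e^t-1)\mu)$, followed by Markov's inequality. The only cosmetic difference is the closing optimization: you take the Bennett-optimal $t=\ln(1+\Delta/\mu)$ and then relax via $h(x)\ge \frac{x^2}{2(1+x)}$, whereas the paper takes $u=\ln(1+v)$ with $v=\Delta/(\mu+\Delta)$ and uses $e^{v-v^2/2}\le 1+v$; both choices yield exactly the stated bound.
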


\begin{proof}
Let $v = \Delta/(\mu+\Delta)$ and $u = \ln (1+v)$. We have
\[ \Exp[e^{uX}] = \Exp[\prod_{k=1}^N e^{uX_k}] \leq (1+p_N(e^u-1))\cdot \Exp[\prod_{k=1}^{N-1} e^{uX_k}] \leq \prod_{i=1}^N (1 +p_k(e^u-1))
= \prod_{i=1}^N (1 +p_kv) \leq e^{v\mu},
\] 
where the final inequality uses $1+x \leq e^x$ for every $x$ (and the definition of $\mu$). 
Applying Markov's inequality to the above, we have:
\[\Pr\left[X\geq\mu+\Delta\right]=\Pr\left[e^{uX}\geq e^{u(\mu+\Delta)}\right]\leq \Exp[e^{uX}]/ e^{u(\mu+\Delta)} \leq e^{v\mu-u\mu - u\Delta}.
\]
From the inequality $e^{v-v^2/2} \leq 1+v$ we infer $u \geq v-v^2/2$ and so the final expression above can be bounded as:
\[
\Pr\left[X\geq\mu+\Delta\right]\leq e^{v\mu-u\mu - u\Delta} \leq e^{\frac{v^2}2 (\mu+\Delta) - v \Delta} = e^{-\frac{\Delta^2}{2(\mu+\Delta)}},
\]
where the final equality comes from our choice of $v$.
\end{proof}

\subsection{Fourier analysis}\label{sec:fourier}\label{ssec:def-fourier} 
We will need the following basic notions from Fourier analysis over the Boolean hypercube (see, for instance,~\cite{o2014analysis}).
For a Boolean function $f: \{-1,1\}^k \to \R$ its Fourier coefficients are defined by $\widehat{f}(\vecv) = \Exp_{\veca\in\{-1,1\}^k}[f(\veca)\cdot(-1)^{\vecv^\top\veca}]$, where $\vecv\in\{0,1\}^k$. We need the following two important tools.

\begin{lemma}[Parseval's identity]\label{prop:parseval}
For every function $f\{-1,1\}^k\to\R$, 
\[
\|f\|_2^2=\frac{1}{2^k}\sum_{\veca\in\{-1,1\}^k}f(\veca)^2=\sum_{\vecv\in\{0,1\}^k}\widehat{f}(\vecv)^2 \, .
\]
\end{lemma}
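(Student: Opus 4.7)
The plan is to prove Parseval's identity by the standard route: exhibiting the characters $\chi_{\vecv}(\veca) := (-1)^{\vecv^\top \veca}$, for $\vecv \in \{0,1\}^k$, as an orthonormal basis of the space of real-valued functions on $\{-1,1\}^k$ under the inner product $\langle f, g \rangle := \Exp_{\veca \in \{-1,1\}^k}[f(\veca) g(\veca)]$, and then reading off the identity from the resulting Fourier expansion.

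First I would verify orthonormality: for $\vecv, \vecw \in \{0,1\}^k$,
\[
\langle \chi_\vecv, \chi_\vecw \rangle = \Exp_{\veca}\left[(-1)^{(\vecv \oplus \vecw)^\top \veca}\right] = \prod_{i=1}^k \Exp_{a_i \in \{-1,1\}}\left[(-1)^{(\vecv \oplus \vecw)_i \cdot a_i}\right],
\]
which factors coordinate-by-coordinate and evaluates to $1$ if $\vecv = \vecw$ (each factor is $1$) and $0$ otherwise (some factor averages $+1$ and $-1$ to $0$). This gives $\langle \chi_\vecv, \chi_\vecw \rangle = \mathbf{1}[\vecv = \vecw]$. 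Since the space of functions $\{-1,1\}^k \to \R$ has dimension $2^k$ and we have exhibited $2^k$ orthonormal vectors, $\{\chi_\vecv\}_{\vecv \in \{0,1\}^k}$ is an orthonormal basis.

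Next I would expand $f$ in this basis. Writing $f = \sum_{\vecv} c_\vecv \chi_\vecv$, the orthonormality gives $c_\vecv = \langle f, \chi_\vecv \rangle = \Exp_{\veca}[f(\veca) \cdot (-1)^{\vecv^\top \veca}] = \widehat{f}(\vecv)$, matching the definition of the Fourier coefficients in the excerpt. Then
\[
\|f\|_2^2 = \langle f, f \rangle = \Big\langle \sum_\vecv \widehat{f}(\vecv) \chi_\vecv,\ \sum_\vecw \widehat{f}(\vecw) \chi_\vecw \Big\rangle = \sum_{\vecv,\vecw} \widehat{f}(\vecv)\widehat{f}(\vecw) \langle \chi_\vecv, \chi_\vecw \rangle = \sum_\vecv \widehat{f}(\vecv)^2,
\]
where the first equality simply unfolds the definition $\|f\|_2^2 = \Exp_\veca[f(\veca)^2] = \frac{1}{2^k} \sum_{\veca} f(\veca)^2$.

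There is no real obstacle here—this is a textbook identity and every step is a direct computation. The one spot worth being careful about is the orthonormality calculation, which implicitly uses the fact that the bit-vectors $\vecv \in \{0,1\}^k$ index distinct characters on $\{-1,1\}^k$ (as opposed to, say, identifying $\vecv$ with a sign pattern), so that $\vecv \oplus \vecw = \mathbf{0}$ really is the only way the product can be identically $1$. Everything else follows from linearity of expectation and the orthonormal expansion.
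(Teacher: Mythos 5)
Your proof is correct and is the standard character-orthonormality argument; the paper itself gives no proof of this lemma, simply citing it as a textbook fact (from O'Donnell's book), and your argument is exactly that textbook proof. Nothing to add.
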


Note that for every distribution $f$ on $\{-1,1\}^k$, $\widehat{f}(0^k)=2^{-k}$. For the uniform distribution $U$ on $\{-1,1\}^k$, $\widehat{U}(\vecv)=0$ for every $\vecv\neq0^k$. Thus, by \cref{prop:parseval}, for any distribution $f$ on $\{-1,1\}^k$:
\begin{align}\label{eq:dist}
\|f-U\|_2^2=\sum_{\vecv\in\{0,1\}^k}\left(\widehat{f}(\vecv)-\widehat{U}(\vecv)\right)^2=\sum_{\vecv\in\{0,1\}^k\backslash\{0^k\}}\widehat{f}(\vecv)^2 \, .
\end{align}

Next, we will use the following consequence of hypercontractivity for Boolean functions as given in \cite[Lemma 6]{GKKRW} which in turns relies on a lemma from \cite{KKL88}. 
\begin{lemma}\label{lem:kkl}
Let $f:\{-1,1\}^n\rightarrow\{-1,0,1\}$ and $A=\{\veca\in\{-1,1\}^n\, |\, f(\veca)\neq0\}$. If $|A|\geq2^{n-c}$ for some $c\in\N$, then for every $\ell\in\{1,\dots,4c\}$, we have
\[
\frac{2^{2n}}{|A|^2}\sum_{\substack{\vecv\in\{0,1\}^n\\\|\vecv\|_1=\ell}}\widehat{f}(\vecv)^2\leq\left(\frac{4\sqrt{2}c}{\ell}\right)^\ell \, .
\]
\end{lemma}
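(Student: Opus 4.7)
The plan is to reduce this to the Bonami--Beckner hypercontractive inequality $\|T_\rho f\|_2 \leq \|f\|_{1+\rho^2}$ for $\rho \in [0,1]$, where $T_\rho$ is the noise operator on the Boolean cube. This is the one nontrivial ingredient; after invoking it, the rest is algebraic manipulation and a single-parameter optimization over $\rho$.

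First I would bound the left-hand side of the hypercontractive inequality from below by keeping only the level-$\ell$ contribution: since $\|T_\rho f\|_2^2 = \sum_{\vecv} \rho^{2\|\vecv\|_1}\, \widehat{f}(\vecv)^2$ by Parseval (\cref{prop:parseval}), dropping the terms with $\|\vecv\|_1 \neq \ell$ gives
\[
\rho^{2\ell} \sum_{\|\vecv\|_1 = \ell} \widehat{f}(\vecv)^2 \;\leq\; \|T_\rho f\|_2^2 \;\leq\; \|f\|_{1+\rho^2}^2.
\]
For the right-hand side, I would use that $f$ is $\{-1,0,1\}$-valued, so $|f|^p$ is the indicator of $A$ for every $p > 0$; hence $\|f\|_{1+\rho^2}^2 = (|A|/2^n)^{2/(1+\rho^2)}$. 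Rearranging and multiplying by $2^{2n}/|A|^2$ yields
\[
\frac{2^{2n}}{|A|^2} \sum_{\|\vecv\|_1 = \ell} \widehat{f}(\vecv)^2 \;\leq\; \rho^{-2\ell} \left(\frac{2^n}{|A|}\right)^{2\rho^2/(1+\rho^2)} \;\leq\; \rho^{-2\ell} \cdot 2^{\,2c\rho^2/(1+\rho^2)},
\]
where in the last step I would invoke the hypothesis $|A| \geq 2^{n-c}$, i.e., $2^n/|A| \leq 2^c$.

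The final step is the optimization over $\rho$: I would pick $\rho^2 = \ell/(4c)$, which lies in $(0,1]$ precisely because the lemma restricts to $\ell \leq 4c$. Then $\rho^{-2\ell} = (4c/\ell)^\ell$, and since $\rho^2/(1+\rho^2) \leq \rho^2$, the exponent satisfies $2c\rho^2/(1+\rho^2) \leq 2c\rho^2 = \ell/2$, so $2^{2c\rho^2/(1+\rho^2)} \leq 2^{\ell/2}$. Multiplying these two factors gives $(4c/\ell)^\ell \cdot 2^{\ell/2} = (4\sqrt{2}\,c/\ell)^\ell$, which is the stated bound. There is no real obstacle here once Bonami--Beckner is in hand; the only delicate point is that the choice $\rho^2 = \ell/(4c)$ must remain in the valid range $[0,1]$, and this is exactly what the restriction $\ell \in \{1,\ldots,4c\}$ ensures.
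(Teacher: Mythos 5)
Your proof is correct, and it is essentially the same argument the paper relies on: the paper does not prove \cref{lem:kkl} itself but imports it from \cite{GKKRW} (Lemma 6), whose proof (via the lemma of \cite{KKL88}) is exactly this hypercontractive bound for $\{-1,0,1\}$-valued functions followed by optimizing the noise parameter — their $\delta$ is your $\rho^2=\ell/(4c)$, yielding the same $\left(4\sqrt{2}c/\ell\right)^\ell$. The computation, including the use of $|A|\geq 2^{n-c}$ and the restriction $\ell\leq 4c$ to keep $\rho\leq 1$, checks out.
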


\subsection{Quantified theory of reals}\label{ssec:def-reals}

The decidability of several characterizations in this paper follows from the decidability of the ``quantified theory of the reals''. We describe the main problem and result here.

\begin{definition}[Quantified Polynomial Sentence]
    A quantified polynomial sentence over $K$ variables, $S$ polynomials of degree $D$ of quantifier width $w$ is given by (1) A Boolean formula $\Psi(Y_1,\ldots,Y_S)$ on $S$ Boolean variables;  (2) A set $\cP$ of $S$ polynomials $\cP=\{P_i(X_1,\ldots,X_K) \mid i \in [S]\}$, with each $P_i$ being a polynomial with real coefficients and of degree at most $D$ in $K$ variables; and (3) a partition $\Pi = (X_{[1]},\ldots,X_{[w]})$ of the set $\{X_1,\ldots,X_K\}$ and $w$ quantifiers $Q = (Q_1,\ldots,Q_w)$ with $Q_j \in \{\exists,\forall\}$ for every $j \in w$. The sentence $(\Psi,\cP,\Pi,Q)$ is defined to be TRUE if $Q_1 X_{[1]} Q_2 X_{[2]} \ldots Q_w X_{[w]} \Psi(Y_1(X_1,\ldots,X_K),\ldots, Y_S(X_1,\ldots,X_K))$ is true where $Y_i(X_1,\ldots,X_K) =$ TRUE if and only if $P_i(X_1,\ldots,X_S) \leq 0$.  
\end{definition}

Note that the syntax is rich enough to express conditions such as $P(X) \geq 0$ and $P(X) < 0$ by use of arithmetic negations ($-P(X) \leq 0$) and logical negations NOT$(P(X) \geq 0)$ where the logical negation is inserted into the Boolean formula $\Psi$. As an example the sentence ``Every positive number can be written as the square of a real number'' can be expressed as the sentence $\forall \alpha \exists \beta (-\alpha \geq 0) \vee ((\alpha - \beta^2) \geq 0) \vee (-(\alpha - \beta^2)) \geq 0$, which is a quantified sentence with 2 quantifiers, 2 variables parititioned into $\{\alpha\}$ and $\{\beta\}$ with quantifiers $Q_1 = \forall$ and $Q_2 = \exists$ and 3 polynomials of degree at most 2. This sentence happens to be TRUE. 

\begin{theorem}[\protect{\cite[Theorem 14.14, see also Remark 13.10]{BasuPR}}]\label{thm:bpr}
The truth of a quantified formula with $w$ quantifiers over $K$ variables and $S$ degree $D$ polynomial (potentially strict) inequalities can be decided in space $K^{O(w)} \log(SD)$ and time $(SD)^{K^{O(w)}}$. 
\end{theorem}

Specifically, Theorem 14.14 in \cite{BasuPR} asserts the time complexity above, and Remark 13.10 yields the space complexity.

\section{Results}\label{sec:results}

In this section we introduce our convex set framework that makes our classification of ``easy'' vs. ``hard'' sketching problems explicit. The sets are introduced in \cref{sec:main_notions}. We then state our main dichotomy theorem and also state its decidability in \cref{ssec:dichotomy}. Other results of this paper, including some strengthenings to the streaming setting, are stated in \cref{ssec:other-results}. We work out some example applications of the dichotomy theorem and strengthenings in \cref{ssec:examples}. Finally in \cref{ssec:short-proofs} we include proofs of all the simple results and corollaries of this section, leaving only the proofs of \cref{thm:main-detailed-dynamic}, \cref{thm:main-negative} and \cref{thm:exact} to later sections.

\subsection{The Convex Set Framework}\label{sec:main_notions}
The main objects that allow us to derive our characterization are the space of distributions on constraints that either allow a large number of constraints to be satisfied, or only a few constraints to be satisfied. To see where the distributions come from, note that
distributions of constraints over $n$ variables can naturally be identified with instances of weighted constraint satisfaction problem (where the weight associated with a constraint is simply its probability).

In this part we consider distributions of constraints over a set of $kq$ variables denoted $\vecx = (x_{i,\sigma}\, |\, i \in [k], \sigma \in [q])$. (We think of the variables as sitting in a $k \times q$ matrix with $i$ indexing the rows and $\sigma$ indexing the columns.) For 
$f \in \cF$ and $\veca \in [q]^k$, let 
$\cC(f,\veca)$ denote the constraint $f(x_{1,a_1},\ldots,x_{k,a_k})$. 
For an assignment  $\vecb = (b_{i,\sigma}\, |\, i \in [k], \sigma \in [q]) \in [q]^{kq}$ we use the notation $\cC(f,\veca)(\vecb)$ to denote the value $f(b_{1,a_1},\ldots,b_{k,a_k})$.
We let $\canon \in [q]^{kq}$ denote the assignment $\canon_{i,\sigma} = \sigma$. (In the following section we will use $\canon$ as our planted assignment.)

We now turn to defining the ``marginals'' of distributions. For $\cD \in \Delta(\cF \times [q]^k)$, we let $\vecmu(\cD) = (\mu_{f,i,\sigma})_{f\in\cF,i \in [k], \sigma\in [q]}$ be given by $\mu_{f,i,\sigma} = \Pr_{(g,\veca) \sim \cD} [ g = f \mbox{ and }a_i = \sigma]$. Thus the marginal $\vecmu(\cD)$ lies in $\R^{|\cF| \times qk}$.

We often reduce our considerations to families $\cF$ containing a single element. In such cases we simplify the notion of a distribution to $\cD \in \Delta([q]^k)$.
For $\cD \in \Delta([q]^k)$, we let $\vecmu(\cD) = (\mu_{i,\sigma})_{i \in [k], \sigma\in [q]}$ be given by $\mu_{i,\sigma} = \Pr_{\veca \sim \cD} [a_i = \sigma]$.

Next we introduce our family of distributions that capture our ``Yes'' and ``No'' instances. ``Yes'' instances are highly satisfied by our planted assignment, while ``No'' instances are not very satisfied by any ``column-symmetric'', independent, probabilistic assignment. The fact that we only consider distributions on $kq$ variables makes this a set in a finite-dimensional space. 

\begin{definition}[Space of YES/NO distributions]\label{def:sysn}
For $q,k \in \N$, $\gamma \in [0,1]$ and $\cF\subseteq\{f:[q]^k \to \{0,1\}\}$,  we let 
$$\sgyf = \left\{\cD \in \Delta(\cF \times [q]^k) ~\vert~ \Exp_{(f,\veca)\sim \cD} [\cC(f,\veca)(\canon)] \geq \gamma\right\}.$$
For $\beta \in [0,1]$ we let 
$$\sbnf = \left\{\cD \in \Delta(\cF \times [q]^k) ~\vert~ \forall (\cP_\sigma \in \Delta([q]))_{\sigma \in [q]}, \Exp_{(f,\veca)\sim \cD} \left[\Exp_{\vecb, b_{i,\sigma} \sim \cP_\sigma} [\cC(f,\veca)(\vecb)]\right] \leq \beta\right\}.$$
\end{definition}

By construction, for $\beta < \gamma$, the sets $\sgyf$ and $\sbnf$ are disjoint. (In particular for any $\cD \in \sgyf$, $\canon$ corresponds to a (deterministic!) column symmetric assignment that satisfies $\gamma > \beta$ fraction of constraints, so $\cD \not\in\sbnf$.) The key to the analysis of low-space sketching algorithms is that they only seem to be able to estimate the marginals of a distribution --- so we turn to exploring the marginals of the sets above. 

\begin{definition}[Marginals of Yes/NO Distributions]\label{def:marginals}
For $\gamma,\beta \in [0,1]$ and $\cF\subseteq\{f:[q]^k \to \{0,1\}\}$,  we let 
$$\kgyf = \{\vecmu(\cD) \in \R^{|\cF| kq} ~\vert~ \cD \in \sgyf \}
\mbox{ and }
\kbnf = \{\vecmu(\cD) \in \R^{|\cF|kq} ~\vert~ \cD \in \sbnf \}.$$
\end{definition}
See~\cref{ssec:examples} for some examples of the sets $\sgyf,\sbnf,\kgyf,\kbnf$.

\subsection{The dichotomy for sketching algorithms} \label{ssec:dichotomy}

The following theorem now formalizes the informal statement that low space sketching algorithms (see \autoref{def:sketching alg}) can only capture the marginals of distributions.

\begin{theorem}[Dichotomy for Sketching Algorithms]\label{thm:main-detailed-dynamic}
For every $q,k \in \N$, every family of functions $\cF \subseteq \{f:[q]^k \to \{0,1\}\}$ and for every $0\leq\beta<\gamma\leq1$, the following hold:
\begin{enumerate}
    \item If $\kgyf \cap \kbnf = \emptyset$, then $(\gamma,\beta)$-$\maxF$ admits a uniform randomized linear sketching algorithm 
    that uses $O(\log^3 n) $ space\footnote{In particular, the space complexity is $O(\log^3 n)$ bits, or $O(\log^2 n)$ cells where each cell is $O(\log n)$ bits long. Crucially while the constant in the $O(\cdot)$ depends on $k$, $\gamma$ and $\beta$, the exponent is a universal constant.} on instances on $n$ variables. \label{thmpart:positive-result-detailed}
    \item If $\kgyf \cap \kbnf \neq \emptyset$, then for every $\epsilon>0$, every (non-uniform randomized) sketching algorithm for the $(\gamma-\eps,\beta+\eps)$-$\maxF$ 
    requires $\Omega(\sqrt{n})$ space\footnote{Again, the constant hidden in the $\Omega$ notation depends on $k$, $\gamma$ and $\beta$.} on instances on $n$ variables. Furthermore, if $\gamma = 1$, then every sketching algorithm for $(1,\beta+\epsilon)$-$\maxF$ requires $\Omega(\sqrt{n})$ space. 
    \label{thmpart:negative-result-detailed}
 \end{enumerate}
\end{theorem}

We remark that Part~\ref{thmpart:positive-result-detailed} of \cref{thm:main-detailed-dynamic} is actually stronger and holds even for dynamic streams where constraints are added and deleted, provided the total length of the stream is polynomial in~$n$. 
\cref{thm:main-detailed-dynamic} is proved in two parts: \cref{thm:main-positive} proves \cref{thm:main-detailed-dynamic}, Part \ref{thmpart:positive-result-detailed} while \cref{thm:main-negative-dynamic} proves \cref{thm:main-detailed-dynamic}, Part \ref{thmpart:negative-result-detailed}.

We now complement \cref{thm:main-detailed-dynamic} by showing that the condition ``$\kgyf \cap \kbnf = \emptyset?$'' can be decided in polynomial space given $\gamma$ and $\beta$ as ratios of $\ell$-bit integers and members of $\cF$ as truth tables. (So the input is of size $O(\ell+ |\cF| \cdot q^k)$ and our algorithm needs space polynomial in this quantity.) 

\begin{theorem}\label{thm:pspace-dec}
For every $k,q \in \N$  $\cF\subseteq \{f:[q]^k\to \{0,1\}\}$, and $\ell$-bit rationals $\beta,\gamma \in [0,1]$ (i.e., $\beta$ and $\gamma$ are expressible as the ratio of two integers in $\{-2^\ell,\ldots,2^\ell\}$), the condition ``$\kgyf\cap\kbnf=\emptyset$?'' can be decided in space
$\poly(|\cF|,q^k,\ell)$ given truth tables of all elements of $\cF$ and $\gamma$ and $\beta$ as $\ell$-bit rationals.
\end{theorem}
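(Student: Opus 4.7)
}
The plan is to express the predicate ``$\kgyf \cap \kbnf \neq \emptyset$'' as a first-order sentence in the theory of the real closed field with a constant number of quantifier alternations, whose size is polynomial in $|\cF|$, $q^k$, and $\ell$, and then invoke a standard quantifier-elimination / decision procedure for this theory (e.g.\ Canny's \textsf{PSPACE} algorithm or Renegar's algorithm) to obtain the claimed space bound.

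Concretely, I would introduce real variables $\{y_{f,\veca}\}$ and $\{z_{f,\veca}\}$ indexed by $(f,\veca) \in \cF \times [q]^k$, representing the two candidate distributions $\cD_Y$ and $\cD_N$, and variables $\{p_{\sigma,\tau}\}_{\sigma,\tau \in [q]}$ representing the distributions $\cP_\sigma$. Non-negativity of these variables and the normalization conditions $\sum_{f,\veca} y_{f,\veca} = 1$, $\sum_{f,\veca} z_{f,\veca} = 1$, and $\sum_\tau p_{\sigma,\tau} = 1$ (for each $\sigma$) give linear atoms that encode membership in the probability simplices. The yes-condition $\cD_Y \in \sgyf$ is the linear inequality $\sum_{f,\veca} y_{f,\veca}\, f(a_1,\ldots,a_k) \geq \gamma$, because $\cC(f,\veca)(\canon) = f(a_1,\ldots,a_k)$. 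The marginal-matching condition $\vecmu(\cD_Y) = \vecmu(\cD_N)$ is a collection of $|\cF|\cdot k \cdot q$ linear equations $\sum_{\veca: a_i = \sigma} y_{f,\veca} = \sum_{\veca: a_i = \sigma} z_{f,\veca}$. Finally, the no-condition $\cD_N \in \sbnf$ becomes the universally quantified statement
\[
\forall\, p\,\Bigl[\text{($p$ is a valid collection of distributions)} \implies \sum_{f,\veca} z_{f,\veca} \sum_{\vecc \in [q]^k} f(\vecc)\, \prod_{i=1}^{k} p_{a_i, c_i} \leq \beta \Bigr],
\]
which is a polynomial inequality of degree $k+1$ in the $(z,p)$-variables. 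The whole sentence is the $\exists\forall$-statement that asserts the existence of $y,z$ for which all of the above hold; the inner universal quantifier can be moved to a single block, giving exactly two alternations.

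I would then tally the sizes: the total number of real variables is $O(|\cF|\,q^k + q^2)$; the number of polynomial atoms is $O(|\cF|(q^k + kq))$; the maximum degree is $k+1$; and the coefficients are either from $\{0,1\}$ (entries of the truth tables of $\cF$) or $\ell$-bit rationals (from $\gamma$ and $\beta$), hence of bit-length $O(\ell)$. Since the number of quantifier alternations is the constant $2$, Canny's theorem (or Renegar's algorithm) for the decision problem of the first-order theory of reals decides such a sentence in space polynomial in all of these parameters, i.e.\ in $\poly(|\cF|, q^k, \ell)$, which is the claimed bound.

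The main obstacle, I expect, is not the appeal to the theory of reals itself (that step is essentially black-box once the formula is written down) but rather in setting up the formula cleanly and confirming that the inner ``no'' condition really collapses into a single universal block of polynomial size: one must check that the $\forall \cP$ cannot be replaced by a finite vertex enumeration (it cannot, since the objective is a degree-$k$ polynomial in the $p$-variables and its maximum over $\Delta([q])^q$ need not be attained at a vertex), and then that the resulting polynomial inequality has degree and bit-length polynomial in the input. Once these are verified, compactness of the simplices makes the $\leq$/$\geq$ formulation of the yes- and no-conditions unproblematic, and the PSPACE bound follows directly.
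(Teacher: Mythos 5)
Your proposal is correct and matches the paper's own argument essentially step for step: the paper (in \cref{lem:convex}) writes down the same $\exists\,\cD_Y,\cD_N\,\forall\,(\cP_\sigma)_\sigma$ formula with the same variable count $O(|\cF|q^k+q^2)$, degree $k+1$, and two alternations, and then appeals to a \textsf{PSPACE} decision procedure for the quantified theory of the reals (\cite{BasuPR} rather than Canny/Renegar, a purely cosmetic difference). Your side observation that the inner $\forall\cP$ block cannot be replaced by vertex enumeration is consistent with the paper, which likewise keeps the universal quantifier.
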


We include a proof of \cref{thm:pspace-dec} in \cref{sec:dec_proof}.

\subsection{Other Results}\label{ssec:other-results}

\subsubsection{Approximation resistance of sketching algorithms}

We now turn to the notion of ``approximation resistant'' $\maxF$ problems. We start with a discussion where $\cF = \{f\}$. In the setting where constraints are applied to literals rather than variables, the notion of approximation resistance is used to refer to problems where it is hard to outperform a uniform random assignment. In other words if $\rho(f)$ is defined to be the probability that a random assignment satisfies $f$, then $\maxf$ is defined to be approximation resistant if $(1-\epsilon,\rho(f)+\epsilon)$-$\maxf$ is hard. In our setting however, where constraints are applied to variables, this notion is a bit more nuanced. Here it may be possible to construct functions where a random assignment does poorly and yet every instance has a much higher value.\footnote{Take for instance $f(x_1)=1$ iff $x_1 = 1$. The random assignment satisfies $f$ with probability $1/q$ while every instance is satisfiable!} In our setting, the correct notion is to simply consider the infimum value achieved over instances of $\maxf$. If this quantity is $\rho$ then it is trivial to get a $\rho$-approximation for $\maxf$ --- namely the algorithm that outputs the constant $\rho$ is always correct and gives a $\rho$-approximation. (Equivalently, $\gbmaxf$ can be decided by the algorithm that always outputs YES if $\beta < \rho$.) And if $(1-\epsilon,\rho(f)+\epsilon)$-$\maxf$ is hard for every $\epsilon > 0$ then we can say that $\maxf$ is approximation-resistant. 

The only catch with the above notion of approximation resistant is that $\rho$ may not be computable. To resolve this problem we introduce an alternate definition of this quantity $\rho$ and prove that it is equivalent and computable. We start with the definitions, generalized for all $\cF$. 

\begin{definition}[Approximation resistance for streaming/sketching algorithms]\label{def:approx-res}
For $\cF \subseteq \{f:[q]^k \to \{0,1\}\}$, we define  $$\rho_{\min} (\cF) = \lim\inf_{\Psi \textrm{ instance of }\maxF} \{\val_\Psi\}.$$
We say that $\maxF$ is {\em approximation-resistant} for streaming algorithms (resp. sketching algorithms) 
if for every $\epsilon>0$ there exists $\delta>0$ such that every streaming (resp. sketching) algorithm for $(1-\epsilon,\rho_{\min}(\cF)+\epsilon)$-$\maxF$ 
requires $\Omega(n^{\delta})$ space.
We also define
$$\rho(\cF) = \min_{\cD_{\cF} \in \Delta(\cF)} \left\{ \max_{\cD \in \Delta([q])} \left\{ \Exp_{f \sim \cD_{\cF},  \veca \sim \cD^k} [ f(\veca) ]         \right\} \right\}.$$
\end{definition}

The following proposition asserts the equivalence of $\rho_{\min}(\cF)$ and $\rho(\cF)$. 

\begin{proposition}\label{prop:rho-prop}
For every $q,k\in\N$, $\cF\subseteq \{f:[q]^k \to \{0,1\}\}$ we have $\rho_{\min}(\cF) = \rho(\cF)$.
\end{proposition}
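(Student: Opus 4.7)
The plan is to prove the two inequalities $\rho_{\min}(\cF) \geq \rho(\cF)$ and $\rho_{\min}(\cF) \leq \rho(\cF)$ separately, using a natural probabilistic correspondence between distributions on $\cF$ and (weighted) instances, together with the observation that the distinctness of variables within a constraint makes ``i.i.d.\ assignments'' exactly match ``i.i.d.\ coordinates within a constraint''.

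For the direction $\rho_{\min}(\cF) \geq \rho(\cF)$, I would start with an arbitrary instance $\Psi = (C_1,\ldots,C_m;w_1,\ldots,w_m)$ and let $\cD_\cF^\Psi \in \Delta(\cF)$ denote the weighted empirical distribution of the functions appearing in $\Psi$. The key calculation is that for any $\cD \in \Delta([q])$ and a random assignment $\vecb \in [q]^n$ with $b_j$ drawn i.i.d.\ from $\cD$, one has, for every constraint $(f,\vecj)$ with $\vecj$ of distinct coordinates, $\Exp_\vecb[f(b_{j_1},\dots,b_{j_k})] = \Exp_{\veca \sim \cD^k}[f(\veca)]$. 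Averaging over constraints gives $\Exp_\vecb[\val_\Psi(\vecb)] = \Exp_{f \sim \cD_\cF^\Psi,\veca \sim \cD^k}[f(\veca)]$, so $\val_\Psi \geq \max_\cD \Exp_{f\sim\cD_\cF^\Psi,\veca\sim\cD^k}[f(\veca)] \geq \rho(\cF)$, where the last step uses that $\rho(\cF)$ is the minimum over $\cD_\cF$ of exactly this max. Since this holds for every $\Psi$, we conclude $\rho_{\min}(\cF) \geq \rho(\cF)$.

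For the other direction $\rho_{\min}(\cF) \leq \rho(\cF)$, I would exhibit, for each $n$, instances whose value approaches $\rho(\cF)$. Letting $\cD_\cF^*$ attain the minimum in the definition of $\rho(\cF)$, construct a random instance $\Psi_n$ on $n$ variables with $m = m(n)$ unit-weight constraints, each drawn independently by sampling $f_i \sim \cD_\cF^*$ and $\vecj_i$ uniformly from length-$k$ sequences of distinct indices in $[n]$. For any fixed $\vecb$ with empirical distribution $\hat{\vecb}$, the expected fraction of satisfied constraints equals $\Exp_{f\sim\cD_\cF^*,\veca\sim\hat{\vecb}^k}[f(\veca)]$ up to an $O(k^2/n)$ error arising from sampling $k$ indices without vs.\ with replacement; by the definition of $\cD_\cF^*$, this is at most $\rho(\cF)+O(k^2/n)$. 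A Chernoff bound for each fixed $\vecb$, followed by a union bound over all $q^n$ assignments, yields $\val_{\Psi_n} \leq \rho(\cF)+\epsilon$ with high probability, provided $m = \Theta(n/\epsilon^2)$.

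The main technical step is the union bound in the second direction, since we must beat the $q^n$ entropy of possible assignments; independence of the $m$ sampled constraints makes $m$ linear in $n$ suffice, and the with-vs.-without-replacement correction is a standard negligible $O(k^2/n)$ term. Everything else is a direct application of the definitions.
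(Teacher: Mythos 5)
Your first inequality ($\rho_{\min}(\cF)\geq\rho(\cF)$) is the same argument as the paper's: view the weighted instance as an empirical distribution $\cD_\cF^\Psi$ on $\cF$, assign variables i.i.d.\ from the optimal $\cD\in\Delta([q])$, and use distinctness of the coordinates of $\vecj$ to identify the induced distribution on a constraint's inputs with $\cD^k$. The second inequality is where you diverge. The paper proves $\rho_{\min}(\cF)\leq\rho(\cF)+\epsilon$ with a \emph{deterministic} dense instance: taking $\cD_\cF$ attaining the minimum in $\rho(\cF)$, it places $\lfloor \cD_\cF(f)/\epsilon\rfloor$ copies of $(f,\vecj)$ for \emph{every} function $f\in\cF$ and every distinct $k$-tuple $\vecj$ (so $m=O(n^k/\epsilon)$ constraints), and then bounds $\val_\Psi(\vecnu)$ for each fixed $\vecnu$ by a total-variation comparison between the empirical constraint distribution and $\cD_\cF\times\cD^k$ (with $\cD$ the empirical distribution of $\vecnu$), incurring only the $\epsilon/2$ rounding loss and an $O(1/n)$ without-replacement correction; no concentration or union bound is needed. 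You instead sample $m=\Theta(n/\epsilon^2)$ i.i.d.\ constraints from $\cD_\cF\times\unif$ over distinct tuples and beat the $q^n$ assignments by Chernoff plus a union bound. Your route is correct: for fixed $\vecb$ the per-constraint satisfaction indicators are independent with mean at most $\rho(\cF)+O(k^2/n)$ (by the definition of $\cD_\cF$ as the minimizer, applied to the empirical distribution $\hat{\vecb}$), so $m\geq Cn\ln q/\epsilon^2$ suffices, and existence of a single low-value instance is all that is needed since $\rho_{\min}$ is an infimum. The trade-off is that the paper's construction is explicit and avoids probabilistic machinery entirely, while yours yields much sparser witnesses ($O(n/\epsilon^2)$ rather than $O(n^k/\epsilon)$ constraints) at the cost of the concentration-plus-union-bound step; both are complete proofs of the proposition.
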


\cref{prop:rho-prop} allows us to show that $\rho(\cF)$ is computable as asserted below. 

\begin{theorem}\label{thm:rho-min-compute}
There is an algorithm $A$ that, on input $\cF \subseteq \{[q]^k \to \{0,1\}\}$ presented as $|\cF|$ truth-tables and $\tau \in \R$ presented as an $\ell$-bit rational, answers the question ``Is $\rho_{\min}(\cF)\leq \tau$?'' in space $\poly(|\cF|,q^k,\ell)$. 
\end{theorem}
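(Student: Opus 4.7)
The plan is to reduce the problem to a decision in the quantified theory of the reals and then invoke \cref{thm:bpr}, mirroring the proof of \cref{thm:pspace-dec}. The crucial observation is \cref{prop:rho-prop}, which lets us replace $\rho_{\min}(\cF)$ (defined via an infimum over arbitrarily large instances, and thus not manifestly computable) by the min-max quantity $\rho(\cF)$, which is defined over distributions living in finite-dimensional simplices of dimension $|\cF|$ and $q$.

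Concretely, I would argue that the question ``Is $\rho(\cF) \leq \tau$?'' is equivalent to the existence of $\cD_{\cF} \in \Delta(\cF)$ such that for every $\cD \in \Delta([q])$ we have $\Exp_{f \sim \cD_{\cF},\, \veca \sim \cD^k}[f(\veca)] \leq \tau$. This is expressible as a quantified sentence over the reals with a single alternation ($\exists \forall$), with variables $\cD_{\cF}(f)$ for $f \in \cF$ and $\cD(\sigma)$ for $\sigma \in [q]$, and thus $|\cF| + q$ variables in total. The conditions that $\cD_\cF$ and $\cD$ are probability distributions are linear equalities/inequalities. The key inequality $\Exp_{f \sim \cD_\cF, \veca \sim \cD^k}[f(\veca)] \leq \tau$ expands, via the truth tables of the $f \in \cF$, into a polynomial in these variables of total degree at most $k+1$ (linear in the $\cD_\cF(f)$, degree $k$ in the $\cD(\sigma)$), with coefficients that are $\ell$-bit rationals once we move $\tau$ to the left-hand side. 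The total number of polynomial (in)equalities is $O(|\cF| + q)$.

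Having set up the formula, applying \cref{thm:bpr} with $K = O(|\cF| + q)$ variables and $w = 1$ alternation gives decision space $K^{O(1)} = \poly(|\cF|, q)$. A minor point is that the bitlength of the polynomials' coefficients depends on $|\cF|$, $q^k$ (because the coefficients of the expanded polynomial $\Exp_{f,\veca}[f(\veca)]$ come from summing up at most $q^k$ entries of truth tables), and $\ell$; this only adds a $\poly(|\cF|, q^k, \ell)$ overhead in the bit-complexity of the BPR algorithm, which is absorbed into the target space bound. Combining \cref{prop:rho-prop} with the above decision procedure for ``$\rho(\cF) \leq \tau$?'' yields a decision procedure for ``$\rho_{\min}(\cF) \leq \tau$?'' in space $\poly(|\cF|, q^k, \ell)$, as desired.

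I do not expect a significant obstacle: the only mildly delicate points are (i) carefully counting the variables and degrees to land within \cref{thm:bpr}'s budget, and (ii) handling the fact that the input $\tau$ is rational, which is standard (clear denominators to make all coefficients integers, at the cost of a $\poly(\ell)$ blow-up in bit-length). The heavy lifting has already been done: \cref{prop:rho-prop} reduces an infinitary statement to a finite-dimensional one, and \cref{thm:bpr} turns the finite-dimensional statement into a \textsf{PSPACE} computation.
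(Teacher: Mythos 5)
Your proposal is correct and follows essentially the same route as the paper: both invoke \cref{prop:rho-prop} to replace $\rho_{\min}(\cF)$ by the finite-dimensional min-max quantity $\rho(\cF)$, observe that the inner expectation is a degree-$(k+1)$ polynomial in the $|\cF|+q$ variables describing $\cD_\cF$ and $\cD$, and then decide the resulting $\exists\forall$ sentence via \cref{thm:bpr}. The paper's proof is just a terser statement of exactly this argument, so your additional bookkeeping (variable/degree counts, clearing denominators of $\tau$) is a faithful elaboration rather than a different approach.
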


\cref{thm:main-detailed-dynamic} immediately yields a decidable characterization of $\maxF$ problems that are approximation resistant with respect to sketching algorithms.
\begin{theorem}[Classification of sketching approximation resistance]\label{cor:approx-res}
For every $q,k\in \N$, for every family $\cF \subseteq \{f:[q]^k \to \{0,1\}\}$, $\maxF$ is approximation resistant with respect to sketching algorithms if and only if $K_1^Y(\cF)\cap K_{\rho(\cF)}^N(\cF) \ne \emptyset$. Furthermore, if $\maxF$ is approximation-resistant with respect to sketching algorithms, then for every $\epsilon > 0$ we have that $(1,\rho(\cF)+\epsilon)$-$\maxF$ requires $\Omega(\sqrt{n})$ space for non-uniform randomized sketching algorithms. 
If $\maxF$ is not approximation-resistant with respect to sketching algorithms, then there exists $\epsilon > 0$ such that $(1-\epsilon,\rho(\cF)+\epsilon)$-$\maxF$ can be solved in polylogarithmic space by a uniform randomized linear sketching algorithm. Finally, given the truth-table of the functions in $\cF$ there is an algorithm running in space $\poly(q^k |\cF|)$ that decides whether or not $\maxF$ is approximation-resistant with respect to sketching algorithms.
\end{theorem}

\cref{prop:rho-prop}, \cref{thm:rho-min-compute}, and \cref{cor:approx-res} are proved  in \cref{sec:ar_proof}.

\subsubsection{Lower bounds in the streaming setting}\label{sec:lb-detail-insert}

We now turn to some special classes of CSPs where we can prove lower bounds in the streaming setting as opposed to only ruling out sketching algorithms. To describe these classes we need some definitions. 

We start by defining the notion of a ``one-wise independent'' distribution $\cD \in \Delta(\cF \times [q]^k)$. (We note that this is somewhat related to, but definitely not the same as the notion of a family $\cF$ that {\em supports} one-wise independence which was defined informally in \cref{sec:intro}. We will recall that notion shortly.)
We also define a broader notion of a ``padded one-wise pair'' of distributions.

\begin{definition}[One-wise independence and Padded one-wise independence of Distributions]\label{def:ow-and-pow}
For $\cD \in \Delta(\cF \times [q]^k)$ we say that $\cD$ is {\em one-wise independent} (or has ``uniform marginals'') if its marginal $\vecmu(\cD) = (\mu_{f,i,\sigma})_{f\in\cF,i \in [k], \sigma\in [q]}$ satisfies $\mu_{f,i,\sigma} = \mu_{f,i,\sigma'}$ for every $f \in \cF$, $i \in [k]$ and $\sigma,\sigma' \in [q]$. (In other words for every $f_0 \in \cF$ and $i \in [k]$, the random variable $a_i$ obtained by sampling $(f,(a_1,\ldots,a_k)) \sim \cD$ conditioned on $f = f_0$ and projecting to $a_i$ is uniformly distributed over $[q]$.) 

We say that a pair of distributions $(\cD_1,\cD_2)$ form a {\em padded one-wise pair} if there exist $\cD_0,\cD'_1,\cD'_2$ and $\tau \in [0,1]$ such that for every $i \in \{1,2\}$ we have $\cD'_i$ is one-wise independent and $\cD_i = \tau \cD_0 + (1-\tau) \cD'_i$. 
\end{definition}

Our main lower bound in the streaming setting asserts that if $\sgyf\times \sbnf$ contains a padded one-wise pair $(\cD_Y,\cD_N)$ then $(\gamma,\beta)$-$\maxF$ requires $\Omega(\sqrt{n})$-space.

\begin{restatable}[Streaming lower bound]{theorem}{restatethmmainnegative}
\label{thm:main-negative}
For every $q,k \in \N$, every family of functions $\cF \subseteq \{f:[q]^k \to \{0,1\}\}$ and for every $0 \leq \beta<\gamma \leq 1$, 
if there exists a padded one-wise pair of distributions $\cD_Y \in \sgyf$ and $\cD_N \in \sbnf$ then, 
for every $\epsilon>0$, every non-uniform randomized streaming algorithm that solves the $(\gamma-\eps,\beta+\eps)$-$\maxF$ problem requires $\Omega(\sqrt{n})$ space.
 Furthermore, if $\gamma = 1$, then $(1,\beta+\epsilon)$-$\maxf$ requires $\Omega(\sqrt{n})$ space.
\end{restatable}

\cref{thm:main-negative} is proved in \cref{sec:lb insert}. As stated above the theorem is more complex to apply than, say, \cref{thm:main-detailed-dynamic}, owing to the fact that the condition for hardness depends on the entire distribution (and the sets $S^Y_\gamma$ and $S^N_\beta$) rather than just marginals (or the sets $K^Y_\gamma$ and $K^N_\beta$). However it can be used to derive some clean results, specifically \cref{thm:one-wise} and \cref{thm:main-intro-k=q=2}, that do depend only on the marginals. We state these below after defining a notion of a function family supporting one-wise independence.

\begin{definition}[(Weakly/Strongly) Supporting One-wise Independence]\label{def:ow-support}
We say that a function $f:[q]^k\to\{0,1\}$ supports {\em one-wise independence} if there exists a distribution $\cD$ supported on $f^{-1}(1)$ whose marginals are uniform on $[q]$. 
We say that a family $\cF$ {\em strongly supports one-wise independence} if every function $f \in \cF$ supports one-wise independence. 
We say that a family $\cF$ {\em weakly supports one-wise independence} if there exists $\cF' \subseteq \cF$ satisfying $\rho(\cF') = \rho(\cF)$ such that  every function $f \in \cF'$ supports one-wise independence. 
\end{definition}

\begin{theorem}\label{thm:one-wise}
For every $q,k\in\N$ and $\cF \subseteq \{f:[q]^k \to \{0,1\}\}$ such that $\cF$ weakly supports one-wise independence, $\maxF$ is approximation resistant with respect to 
streaming algorithms. In particular, for every $\epsilon > 0$, every non-uniform randomized streaming algorithm for $(1,\rho(\cF)+\epsilon)$-$\maxF$ requires $\Omega(\sqrt{n})$ space.
\end{theorem}

\begin{remark}
We note that \cref{thm:one-wise-informal} differs from \cref{thm:one-wise} in that \cref{thm:one-wise-informal} asserted hardness for $\cF$ that strongly supports one-wise independence whereas \cref{thm:one-wise} asserts hardness for $\cF$ that weakly supports one-wise independence. Thus \cref{thm:one-wise} is stronger and implies \cref{thm:one-wise-informal}.
\end{remark}

Finally we turn to \cref{thm:main-intro-k=q=2}. Below we assert a more detailed version of the theorem along the lines of \cref{thm:main-detailed-dynamic} in this case. 
\begin{theorem}\label{thm:main-detailed-k=q=2} For every family $\cF \subseteq \{f:[2]^2 \to \{0,1\}\}$, and for every $0 \leq \beta < \gamma \leq 1$, the following hold:
\begin{enumerate}
    \item If $K_\gamma^Y(\cF) \cap K_\beta^N(\cF) = \emptyset$, then $(\gamma,\beta)$-$\maxF$ admits a uniform randomized linear sketching algorithm that uses $O(\log^3 n) $ space. 
    \item If $K_\gamma^Y(\cF) \cap K_\beta^N(\cF) \neq \emptyset$, then for every $\epsilon>0$,  then $(\gamma-\eps,\beta+\eps)$-$\maxF$ in the streaming setting requires $\Omega(\sqrt{n})$ space\footnote{The constant hidden in the $\Omega$ notation may depend on $k$ and $\epsilon$.}. Furthermore, if $\gamma = 1$, then $(1,\beta+\epsilon)$-$\maxF$ in the streaming setting requires $\Omega(\sqrt{n})$ space for non-uniform randomized streaming algorithms. 
 \end{enumerate}
\end{theorem}

\cref{thm:main-detailed-k=q=2} clearly implies \cref{thm:main-intro-k=q=2}. We prove \cref{thm:one-wise} and \cref{thm:main-detailed-k=q=2} in \cref{sec:streaming_proof}. 

\subsubsection{Classification of exact computability}\label{ssec:exact}

Finally for the sake of completeness we show that all ``non-trivial'' CSPs are hard to solve exactly. ``Trivial'' families are those where all satisfiable constraints are satisfied by a constant assignment, as defined precisely below.
    
\begin{definition}[Constant satisfiable]
For $\sigma \in [q]$ and $\cF \subseteq \{f:[q]^k \to \{0,1\}\}$ we say that 
$\cF$ is $\sigma$-satisfiable if for every $f \in \cF\setminus\{\veczero\}$ we have that $f(\sigma^k) = 1$. 
We say $\cF$ is constant-satisfiable if there exists $\sigma \in [q]$ such that $\cF$ is $\sigma$-satisfiable.
\end{definition}

Our theorem below asserts that constant satisfiable families are the only ones that are solvable exactly. And for additive $\epsilon$ approximations to the maximum fraction of satisfiable constraints, they require space growing polynomially in $\epsilon^{-1}$.

\begin{restatable}{theorem}{exactthm}\label{thm:exact}
For every $q,k \in \N$, every family of functions $\cF \subseteq \{f:[q]^k \to \{0,1\}\}$ the following hold: 
\begin{enumerate}
    \item If $\cF$ is constant satisfiable, then there exists a deterministic linear sketching algorithm that uses $O(\log n)$ space and solves $\maxF$ exactly optimally.
    \item If $\cF$ is not constant satisfiable, then the following hold in the streaming setting: 
    \begin{enumerate}
        \item Every probabilistic algorithm solving $\maxF$ exactly requires $\Omega(n)$ space.
        \item For every $\epsilon=\eps(n)>0$, $(1,1-\epsilon)$-$\maxF$ requires $\Omega(\min\{n,\epsilon^{-1}\})$-space\footnote{\label{note1} The constant hidden in the $\Omega$ depends on $\cF$, but (obviously) not on $\epsilon$.} on sufficiently large inputs. 
        \item For $\rho_{\min}(\cF)$ defined in \cref{def:approx-res}, for every $\rho_{\min}(\cF) < \gamma < 1$ and every $\epsilon=\eps(n)>0$, $(\gamma,\gamma-\epsilon)$-$\maxF$ requires $\Omega(\min\{n,\epsilon^{-2}\})$-space\textsuperscript{\rm{\ref{note1}}} on sufficiently large inputs. 
    \end{enumerate}
 \end{enumerate}
\end{restatable}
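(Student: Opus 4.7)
The plan is as follows. For part (1), note that if $\cF$ is $\sigma$-satisfiable then the constant assignment $b_1=\cdots=b_n=\sigma$ satisfies every constraint whose function is non-zero, while no assignment can satisfy any constraint whose function is $\mathbf{0}$. Hence $\val_\Psi$ equals the ratio of the total weight of non-zero-function constraints to the total weight of all constraints, which a deterministic linear sketch computes with two $O(\log n)$-bit counters (one per weight sum), updating each in $O(1)$ time per stream element and outputting the ratio at the end.

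For part (2), I first extract a uniform gadget from the non-constant-satisfiability of $\cF$. For every $\sigma\in[q]$ the hypothesis supplies some $f_\sigma\in\cF\setminus\{\mathbf{0}\}$ with $f_\sigma(\sigma^k)=0$, and since $f_\sigma\ne\mathbf{0}$ a companion $\veca_\sigma\in[q]^k$ with $f_\sigma(\veca_\sigma)=1$. From these I build two constant-size sub-instances on the same $O(1)$ variables ("Alice's gadget'' and "Bob's gadget'') with the property that alone each is satisfiable, and placed together on the same variables they are jointly satisfiable in one configuration but in the conflicting configuration any assignment must violate at least one constraint (because every constant assignment is ruled out by some $f_\sigma$, and non-constant assignments on $k$ variables can be ruled out by adding an extra copy of an appropriate $(f_\sigma,\veca_\sigma)$ constraint). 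Padding with satisfiable "filler'' constraints on disjoint fresh variables then lets us realize, up to $O(1/n)$ precision, essentially any target value in $[\rho_{\min}(\cF),1]$.

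For parts (2a) and (2b), both bounds follow by reducing from set disjointness on $t$ elements, which has randomized (one-way) communication complexity $\Omega(t)$. Given Alice's set $S$ and Bob's set $T$, we partition $\Theta(n)$ variables into $t$ disjoint blocks; on block $i$, Alice emits her gadget iff $i\in S$ and Bob emits his gadget iff $i\in T$, with filler added to normalize the total weight. If $S\cap T=\emptyset$ the instance is fully satisfiable, so $\val_\Psi=1$; otherwise each element of $S\cap T$ forces the loss of a $\Theta(1/t)$ fraction of weight, giving $\val_\Psi\le 1-\Theta(|S\cap T|/t)$. Taking $t=n$ and distinguishing $\val_\Psi=1$ from $\val_\Psi<1$ yields the $\Omega(n)$ lower bound for exact computation in (2a); taking $t=\Theta(1/\epsilon)$ makes a single conflict visible at the $\Theta(\epsilon)$ scale and yields the $(1,1-\epsilon)$ gap problem, giving the $\Omega(\min\{n,\epsilon^{-1}\})$ lower bound of (2b).

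For part (2c), with $\gamma\in(\rho_{\min}(\cF),1)$, I reduce from the Gap Hamming Distance problem on $t$ bits with gap $\Delta\sqrt{t}$, which has randomized communication complexity $\Omega(t/\Delta^2)$. Using the padding construction above, together with the fact that $\gamma>\rho_{\min}(\cF)$ guarantees sub-instances of value within $O(1/n)$ of any prescribed target, one produces instances whose value is an affine function of the Hamming distance between Alice's and Bob's vectors, tuned so that the two sides of the GHD promise map to $\val_\Psi\ge \gamma$ and $\val_\Psi\le \gamma-\epsilon$ respectively. Choosing $t=\Theta(\epsilon^{-2})$ and $\Delta=\Theta(1)$ gives the $\Omega(\min\{n,\epsilon^{-2}\})$ lower bound. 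The main obstacle across all three reductions is the uniform construction of the gadget with the correct "unit cost'' for a single disagreement: one must cash in non-constant-satisfiability of $\cF$ to obtain an unsatisfiable configuration, use the companions $\veca_\sigma$ to certify satisfiability in the good case, and arrange for the loss from one conflict to be precisely $\Theta(1/t)$ rather than a problem-dependent constant, so that the scaling in $t$ (and hence in $\epsilon$) is tight.
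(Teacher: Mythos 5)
Your part (1) and the overall shape of part (2) — reductions from Disjointness for (2a)/(2b) and from Gap Hamming Distance, mixed with a low-value padding instance to center the value at $\gamma$, for (2c) — match the paper's proof. The genuine gap is in the one place where non-constant-satisfiability actually has to be cashed in: the construction of the unsatisfiable gadget. Your mechanism — rule out each constant assignment with a constraint $f_\sigma$ having $f_\sigma(\sigma^k)=0$, and rule out ``non-constant assignments on $k$ variables'' by adding extra copies of satisfiable constraints $(f_\sigma,\veca_\sigma)$ — does not work as stated. For example, take $q=k=2$ and $\cF=\{\mathrm{XOR}\}$ (Max-CUT): this family is not constant satisfiable, yet every instance on only $k=2$ variables is satisfied by any non-constant assignment, so no collection of constraints on those variables (with however many copies) is unsatisfiable; an unsatisfiable Max-CUT instance needs at least three variables (an odd cycle). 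The paper handles this by a pigeonhole argument on $kq$ variables: place every $f\in\cF\setminus\{\veczero\}$ on every $k$-tuple of distinct variables; any assignment to $kq$ variables has $k$ coordinates sharing some value $\sigma$, and the constraint $f_\sigma$ on those coordinates is violated. Some idea of this kind (blowing up to more than $k$ variables and covering all tuples) is missing from your proposal, and it is the crux of part (2).

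A secondary point in the same step: you assert that Alice's and Bob's gadgets are each individually satisfiable but jointly conflicting, and (for the GHD reduction) that a conflicting block loses a controlled ``unit'' amount. The paper gets both properties cleanly by passing to a \emph{minimal} unsatisfiable subformula $J$ (so every proper subset of its constraints is satisfiable), splitting it as $J=J_A\sqcup J_B$, and for (2c) padding $J$ to even size so that exactly $2d-1$ of the $2d$ constraints of a conflicting block are satisfiable, which makes the value an exact affine function of the Hamming distance. Without this (or an equivalent device) your claims that each gadget alone is satisfiable and that one conflict costs exactly $\Theta(1/t)$ are unsubstantiated. Once the gadget lemma is repaired along these lines, the rest of your reductions go through essentially as in the paper (your choice of $t=\Theta(\epsilon^{-2})$ GHD coordinates with gap $\Theta(\sqrt{t})$ is an equivalent parametrization of the paper's $n$ coordinates with gap $\Theta(n\epsilon)$).
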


\cref{thm:exact} is proved in \cref{sec:exact}.

\subsection{Some Examples}\label{ssec:examples}

We consider three basic examples of general $q$-CSP and illustrate how to apply \cref{thm:main-negative} to determine their approximability.

The first example is \textsf{Max-DICUT} described below.

\begin{examplebox}{Example 1 (\textsf{Max-DICUT}).}\label{Example:Max-DICUT}
Let $f(x,y):[2]^2\to\{0,1\}$ with $f(x,y)=1$ if and only if $x=2$ and $y=1$. Note that $\textsf{Max-DICUT}=\textsf{Max-CSP}(\{f\})$ with $q=k=2$. Observe that for every distribution $\cD\in\Delta([q]^k)$ with probability density vector $\vecphi(\cD)=(\phi_{22},\phi_{21},\phi_{12},\phi_{11})$, we have for every $0\leq\gamma,\beta\leq1$
\[
\sgyf = \{\cD \, |\, \phi_{21}\geq\gamma\}
\]
and
\[
\sbnf = \left\{\cD \, |\, \max_{p,q\in[0,1]} p(1-p)\cdot\phi_{22}+pq\cdot\phi_{21}+(1-q)(1-p)\cdot\phi_{12}+(1-q)q\cdot\phi_{11}\leq\beta\right\} \, .
\]
Also, note that the marginal vector $\vecmu(\cD)=(\mu_{22},\mu_{21},\mu_{12},\mu_{11})$ and $\vecphi(\cD)$ satisfy the following relations:
\[
\left\{\begin{array}{l}
\mu_{22} = \phi_{12} + \phi_{22}  \\
\mu_{21} = \phi_{11} + \phi_{21}  \\
\mu_{12} = \phi_{21} + \phi_{22}  \\
\mu_{11} = \phi_{11} + \phi_{12} \, .  
\end{array}
\right.
\]
Note that for every $\cD\in\Delta([q]^k)$, we have $\cD\in S^N_{1/4}$. In particular, the uniform distribution $\unif([2]^2)\in S^N_{1/4}$. Since the distribution given by the density vector $(\phi_{22}=0,\phi_{21}=1/2,\phi_{12}=1/2,\phi_{11}=0)$ also has uniform marginals and belongs to $S^Y_{1/2}$, we have that for every $\beta\geq1/4$, $K^Y_{1/2}\cap\kbnf\neq\emptyset$. So it suffices to focus on the case where $\gamma\geq1/2$.

Fix $\gamma\geq1/2$, we want to compute the minimum $\beta$ such that $\kgyf\cap\kbnf\neq\emptyset$. The kernel of the mapping from probability density $\vecphi$ to the marginal vector $\vecmu$ is spanned by $(1,-1,-1,1)$. Then simple calculations show that the minimum $\beta$ is achieved when $\vecmu=(1-\gamma,\gamma,\gamma,1-\gamma)$ with  $(0,\gamma,1-\gamma,0)\in\sgyf$ and $(1-\gamma,2\gamma-1,0,1-\gamma)\in\sbnf$. Specifically, 
\begin{align*}
\beta &= \max_{p,q\in[0,1]}(p(1-p)+q(1-q))\cdot(1-\gamma)+pq\cdot(2\gamma-1)\\
&=\max_{p,q\in[0,1]}\frac{(1-\gamma)^2}{3-4\gamma}-\frac{3-4\gamma}{2}\cdot\left(\left(p+\frac{1-\gamma}{4\gamma-3}\right)^2+\left(q+\frac{1-\gamma}{4\gamma-3}\right)^2\right)-\frac{(2\gamma-1)}{2}\cdot(p-q)^2 \, .
\end{align*}
When $\gamma\geq2/3$, the expression is maximized by $p=q=1$ and hence $\beta=2\gamma-1$. When $1/2\leq\gamma\leq2/3$, the expression is maximized by $p=q=(1-\gamma)/(3-4\gamma)$ and hence $\beta=(1-\gamma)^2/(3-4\gamma)$. 

We thus get that the set $H^{\cap} \triangleq \{(\gamma,\beta) \in [0,1]^2 | K^Y_\gamma \cap K^N_\beta \ne \emptyset\}$ (of {\em hard} problems) is given by (see also~\autoref{fig:max dcut Hcap}):
\begin{align*}
H^\cap = & ~~~~~~ \left[0,\frac12\right]\times \left[\frac14,1\right] \\
         & \cup ~~\left\{(\gamma,\beta) | \gamma \in \left[\frac12,\frac23\right], \beta \in \left[\frac{(1-\gamma)^2}{3-4\gamma},1\right]\right\} \\
         & \cup  ~~\left\{(\gamma,\beta) |\gamma\in \left[\frac23,1\right], \beta \in \left[2\gamma - 1,1\right]\right\}.
\end{align*}  
(We note that  \cite[Example 1]{CGSV21} gives exactly the same set as the hard set of $\maxtwoand$, which is a related but not identical result.)

\begin{figure}[H]
    \centering
    \includegraphics[width=5cm]{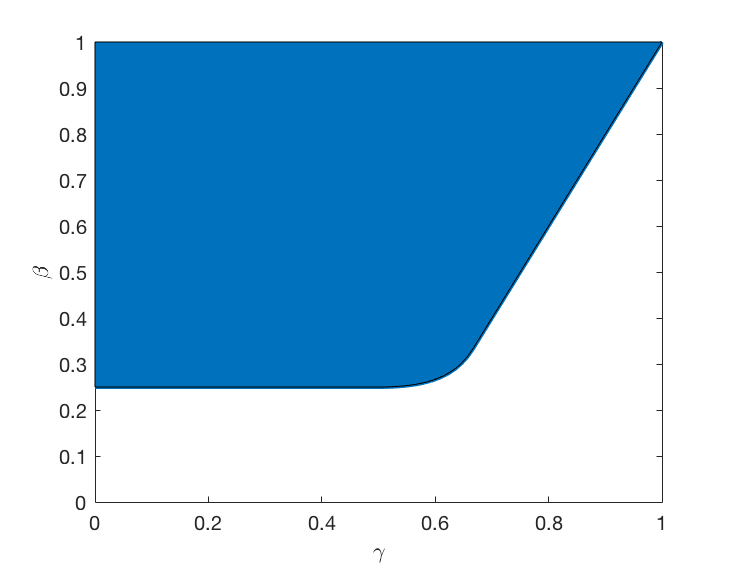}
    \caption{A plot of $H^\cap$.}
    \label{fig:max dcut Hcap}
\end{figure}

Finally, over $\gamma\in[2/3,1]$, $\beta/\gamma$ is minimized at $(\gamma,\beta)=(2/3,1/3)$ and $\beta/\gamma=1/2$; over $\gamma\in[1/2,2/3]$, $\beta/\gamma$ is minimized at $(\gamma,\beta)=(3/5,4/15)$ and $\beta/\gamma=4/9$, yielding $4/9$ as the approximability threshold.

Specifically, \cref{prop:k=2padded} gives us that any pair of distributions $\cD_Y, \cD_N \in \Delta(\cF\times [2]^2), \cD_Y\in S^Y_{3/5}, \cD_Y\in S^N_{4/15}$ witnessing $K^Y_{3/5}\cap K^Y_{4/15}\neq\emptyset$ forms a padded one-wise pair. Finally, \cref{thm:main-negative}, applied to the padded one-wise pair $(\cD^Y, \cD^N)$, implies that \textsf{Max-DICUT} cannot be approximated better with a factor $(4/9+\eps)$ in space $o(\sqrt{n})$ in the streaming setting, which is consistent with the findings in~\cite{CGV20} for the $\textsf{Max-DICUT}$ problem.

\end{examplebox}

\begin{examplebox}{Example 2 ($\textsf{Max-}q\textsf{UG}$).}
Let $k=2$ and $q\geq2$. Let $\cF=\{f:[q]^2\to\{0,1\}\, |\, f^{-1}(1)\text{ is a bijection}\}$. Note that $\textsf{Max-}q\textsf{UG}=\textsf{Max-CSP}(\cF)$. We claim that the quantity $\alpha=\inf_{\beta}\alpha(\beta)=1/q$ where $\alpha(\beta) = \sup_{\gamma | K_\gamma^Y \cap K_\beta^N = \emptyset} \{\beta/\gamma\}$. First, note that $\cD\in S^N_{1/q}$ for every $\cD$ and hence implies $\alpha\geq1/q$.

For simplicity we work with the alphabet $\Z_q =\{0,\ldots,q-1\}$ instead of $[q]$. For $\tau \in \Z_q$ let 
$f_\tau\in\cF$ be the constraint $f_\tau(x,y)=1$ if and only if $x-y = \tau \pmod q$. 
Let $\cD^Y$ be the uniform distribution over $\{(f_{\tau},\sigma,\sigma+\tau)\, |\, \sigma,\tau \in \Z_q\}$. 
Note that obviously we have $\cD^Y \in S^Y_1$. Now let $\cD^N$ be the uniform distribution over $\{f_\tau\, |\, \tau\in\Z_q\}\times\Z_q^2$. Note that for any assignment to two variables $x_{1,\sigma_1},x_{2,\sigma_2}$ the probability
over $\tau$ that it satisfies $f_\tau(x_{1,\sigma_1},x_{2,\sigma_2})$ is exactly $1/q$. If follows that any assignment to
$(x_{i,\sigma})_{i,\sigma}$ satisfies exactly $1/q$ fraction of the constraints in $\cD^N$ and so $\cD^N \in S_{1/q}^N$. 
Observe that the marginals of $\cD^Y$ and $\cD^N$ are the same, i.e., $\vecmu(\cD^Y)=\vecmu(\cD^N)=\vecmu(\unif(\{f_\tau\}\times\Z_q^2))$. This gives us $\vecmu(\unif(\{f_\tau\}\times[q]^2))\in K^Y_1\cap K^N_{1/q}$ so we have $\alpha(\beta)=\beta$ for $\beta \geq 1/q$. Minimizing this over $\beta$,~\cref{thm:main-negative}, applied to the one-wise independent distribution $\cD^Y$ and $\cD^N$, gives that the problem can not be approximated better than $1/q$ in space $o(\sqrt{n})$ in the streaming setting, which is consistent with the findings in~\cite{GT19} for the $\textsf{Max-}q\textsf{UG}$ problem.

\end{examplebox}

\begin{examplebox}{Example 3 ($\textsf{Max-}q\textsf{Col}$)}

Let $k=2$ and $q\geq2$. Let $\cF=\{f_{\ne}\}$ where $f_{\ne}:[q]^2\to\{0,1\}$ is given by $f_{\ne}(x,y) = 1 \Leftrightarrow x \ne y$. 
Note that $\textsf{Max-}q\textsf{Col}=\textsf{Max-CSP}(\cF)$. 
We claim that the quantity $\alpha=\inf_{\beta}\alpha(\beta)=1-1/q$ where $\alpha(\beta) = \sup_{\gamma | K_\gamma^Y \cap K_\beta^N = \emptyset} \{\beta/\gamma\}$. First, note that $\cD\in S^N_{1-1/q}$ for every $\cD$ and hence implies $\alpha\geq1-1/q$. We now show this is also the upper bound by exhibiting $\cD^Y$ and $\cD^N$.

Let $\cD^Y$ be the uniform distribution over $\{(f_{\ne},\sigma,\tau)\, |\, \sigma \ne \tau \in [q]\}$. 
Note that obviously we have $\cD^Y \in S^Y_1$. 
Now let $\cD^N$ be the uniform distribution over $\{f_{\ne}\}\times[q]^2$. This leads to  $\beta = \max_{\cP_\sigma} \{\Exp_{(f,a_1,a_2) \sim \cD^N} [\Exp_{x \sim \cP_{a_1}, y \sim \cP_{a_2}}[ f(x,y)]] \}$. The independence of $a_1$ and $a_2$ in $\cD^N$ allows us to simplify this to $\max_{\cP \in \Delta([q])} \{\Exp_{x,y \sim \cP} [f_{\ne}(x,y)] \}$ and the latter is easily seen to be at most $1- 1/q$. 
Thus we conclude $\cD^N \in S_{1-1/q}^N$. 
Since the marginals of $\cD^Y$ and $\cD^N$ are the same, i.e., $\vecmu(\cD^Y)=\vecmu(\cD^N)=\vecmu(\unif(\{f_{\ne}\}\times[2]\times[q]))$, this gives us $\vecmu(\unif(\{f_{\ne}\}\times[2]\times[q]))\in K^Y_1\cap K^N_{1/q}$ so we have $\alpha(\beta)=\beta$ for $\beta \geq 1-1/q$. Minimizing this over $\beta$,~\cref{thm:main-negative}, applied to the one-wise independent distribution $\cD^Y$ and $\cD^N$, gives that the problem can not be approximated better than $1-1/q$ in space $o(\sqrt{n})$ in the streaming setting.
\end{examplebox}

Another example along the same vein is analyzed in a subsequent work by Singer, Sudan and Velusamy~\cite{SSV21} who show that $(1-1/q,(1/2)(1-1/q))$-$\maxF$ is hard for $\cF = \{f_{<}\}$ where $f_{<}:[q]^2\to \{0,1\}$ is given by  $f_{<}(x,y) = 1$ if and only if $x < y$.
This analysis forms a critical step in their improved analysis of the Maximum Acyclic Subgraph Problem (which is not captured in our framework).

\subsection{Some proofs of theorems asserted in this section}\label{ssec:short-proofs}

In this subsection we prove all results asserted in \cref{ssec:dichotomy} and \cref{ssec:other-results}, with the exception of \cref{thm:main-detailed-dynamic}, \cref{thm:main-negative} and \cref{thm:exact}. 

\subsubsection{Decidability of the Classification}\label{sec:dec_proof}

We prove \cref{thm:pspace-dec} in this section. 
The following lemma states some basic properties of the sets $\sgyf, \sbnf, \kgyf$ and $\kbnf$ and uses them to express the condition ''$\kgyf \cap \kbnf = \emptyset?$'' in the quantified theory of reals. 

\begin{lemma}\label{lem:convex}
For every $k,q \in \N$ $\beta,\gamma \in [0,1]$ and $\cF\subseteq \{f:[q]^k\to \{0,1\}\}$, the sets $\sgyf$, $\sbnf$, $\kgyf$ and $\kbnf$ are bounded, closed, and convex. Furthermore, the condition $\kgyf \cap \kbnf = \emptyset$ can be expressed in the quantified theory of reals with $2$ quantifier alternations, $O(|\cF|q^k + q^2)$ variables, and polynomials of degree at most $k+1$. 
\end{lemma}

\begin{proof} 
We start by observing that $\Delta(\cF \times [q]^k)$ is a bounded convex polytope in $\R^{|\cF| \times [q]^k}$. Furthermore, viewing $\cD$ as a vector in $\R^{|\cF| \times [q]^k}$, for any given $\vecb \in [q]^k$ the quantity $\Exp_{(f,a) \sim \cD}[ C(f,\veca)(\vecb)]$ is linear in $\cD$. Thus $\sgyf$ is given by a single linear constraint on $\Delta(\cF \times [q]^k)$ making it a bounded convex polytope as well. 
$\sbnf$ is a bit more complex - in that there are infinitely many linear inequalities defining it (one for every distribution $(\cP_\sigma)_{\sigma\in[q]}$). Nevertheless this leaves $\sbnf$ bounded, closed (as infinite intersection of closed sets is closed), and convex (though it may no longer be a polytope). 
Finally since $\kgyf$ and $\kbnf$ are linear projections of $\sgyf$ and $\sbnf$ respectively, they retain the features of being bounded, closed and convex.

Finally to get an effective algorithm for intersection detection, we express the intersection condition in the quantified theory of the reals.
To get this, we note that $(\cP_\sigma)_{\sigma\in[q]}$ can be expressed by $q^2$ variables, specifically using variables $\cP_\sigma(\tau)$ for every $\sigma,\tau \in [q]$ where $\cP_\sigma(\tau)$ denotes the probability of $\tau$ in $\cP_\sigma$. In terms of these variables (which will eventually be quantified over) the condition $\Exp_{(f,\veca)\sim \cD} \left[\Exp_{\vecb, b_{i,\sigma} \sim \cP_\sigma} [\cC(f,\veca)(\vecb)]\right] \leq \beta$ is a multivariate polynomial inequality in $(\cP_\sigma)_\sigma$ and $\cD$. (Specifically we get a polynomial of total degree at most $k$ in $(\cP_\sigma)_\sigma$, and of total degree at most one in $\cD$.) This allows us to use the following quantified system to express the condition $\kgyf\cap\kbnf\ne\emptyset$:

\begin{align}
\exists &\cD_Y,\cD_N \in \R^{|\cF| \times q^k}, ~ \forall ((\cP_\sigma)_\sigma) \in \R^{q^2} \mbox{ s.t. } \nonumber\\
&\cD_Y, \cD_N, (\cP_\sigma)_\sigma, \forall \sigma\in[q] \mbox{ are distributions,}\label{eq:poly-dist}\\
& \forall f_0 \in \cF, \forall i \in [k], \tau\in[q] ~ \Pr_{(f,\veca) \sim \cD_Y} [f = f_0 \mbox{ and } a_i = \tau] = \Pr_{(f,\veca) \sim \cD_N} [f = f_0 \mbox{ and } a_i = \tau],\label{eq:poly-marginal} \\
& \Exp_{(f,\veca) \sim \cD_Y} [\cC(f,\veca)(\canon)] \geq \gamma, \label{eq:poly-gamma}\\
& \Exp_{(f,\veca)\sim \cD_N} \left[\Exp_{\vecb, b_{i,\sigma} \sim \cP_\sigma} [\cC(f,\veca)(\vecb)]\right] \leq \beta. \label{eq:poly-beta}
\end{align}

Note that \cref{eq:poly-dist,eq:poly-marginal,eq:poly-gamma} are just linear inequalities in the variables $\cD_Y,\cD_N$. 

As noticed above \cref{eq:poly-beta} is an inequality in the $\cP_\sigma$s and $\cD_N$, of total degree at most $k+1$.

We thus get that the intersection problem can be expressed in the quantified theory of the reals by an expression with two quantifier alternations, $2|\cF|q^k + q^2$ variables and $O(|\cF|q^k + q^2)$ polynomial inequalities, with polynomials of degree at most $k+1$. (Most of the inequalities are of the form $\cD_Y(\vecb) \geq 0$ or $\cD_N(\vecb) \geq 0$. We also have $O(|\cF|kq)$ equalities (saying probabilities must add to one and matching the marginals of $\cD_Y$ and $\cD_N$). Of the two remaining, \cref{eq:poly-gamma} is linear, only \cref{eq:poly-beta} is a higher-degree polynomial. 
\end{proof}

We are now ready to prove \cref{thm:pspace-dec}.

\begin{proof}[Proof of \cref{thm:pspace-dec}]
The quantified polynomial system given by \cref{lem:convex} yields parameters $K = O(|\cF|q^k + q^2)$ for the number of variables and $w = 2$ for the number of alternations. Applying \cref{thm:bpr} with these parameters yields the theorem. 
\end{proof}

\subsubsection{Approximation Resistance}\label{sec:ar_proof}

We start by proving \cref{prop:rho-prop} which asserts that $\rho(\cF) = \rho_{\min}(\cF)$. 

\begin{proof}[Proof of \cref{prop:rho-prop}]
We start by showing $\rho(\cF) \leq \rho_{\min}(\cF)$. 
Fix an instance $\Psi$ of $\maxF$ and let $\cD_{\cF}$ be the distribution on $\cF$ obtained by picking a random constraint of $\Psi$ and looking at the function (while ignoring the variables that the constraint is applied to). By the definition of $\rho(\cF)$, there exists a distribution $\cD \in \Delta([q])$ such that $\Exp_{f \sim \cD_{\cF},  \veca \sim \cD^k} [ f(\veca) ] \geq \rho(\cF)$. Now consider a random assignment to the variables of $\Psi$ where variable $x_j$ is assignment a value independently according to $\cD$. It can be verified that $\Exp_{\vecx}[\val_\Psi(\vecx)] \geq \rho(\cF)$ and so $\val_\Psi \geq \rho(\cF)$. We thus conclude that
$\rho(\cF) \leq \val_\Psi$ for all $\Psi$ and so $\rho(\cF) \leq \rho_{\min}(\cF)$.

We now turn to the other direction. We prove that for every $\epsilon > 0$ we have $\rho_{\min}(\cF) \leq \rho(\cF)+\epsilon$ and the inequality follows by taking limits. Let $\cD_\cF$ be the distribution achieving the minimum in the definition of $\rho(\cF)$. Given $\epsilon > 0$ let $n$ be a sufficiently large integer and let $m = O(n^k/\epsilon)$. Let $\Psi$ be the instance of $\maxF$ on $n$ variables with $m$ constraints chosen as follows: For every $\vecj \in [n]^k$ with distinct coordinates and every $f \in \cF$ we place $\lfloor \cD_{\cF}(f)/\epsilon \rfloor$ copies of the constraint $(f,\vecj)$.

We claim that the $\Psi$ generated above satisfies $\val_\Psi \leq \rho(\cF) + \epsilon/2 + O(1/n)$ and this suffices for the proposition. To see the claim, fix an assignment $\vecnu \in [q]^n$ and let $\cD\in\Delta([q])$ be the distribution induced by sampling $i \in [n]$ uniformly and outputting $\vecnu_i$. On the one hand we have from the definition of $\rho(\cF)$ that $\Exp_{f \sim \cD_\cF, \veca \sim \cD^k} \left[f(\veca) \right]  \leq \rho(\cF)$.
On the other hand we have that the distribution obtained by sampling a random constraint $(f,\vecj)$ of $\Psi$ and outputting $(f,\vecnu|_\vecj)$ is $\epsilon/2+O(1/n)$ close in total variation distance to sampling $f \sim \cD_\cF$ and $\veca \sim \cD^k$. (The $\epsilon/2$ gap comes from the rounding down of each constraint to an integral number, and the $O(1/n)$ gap comes from the fact that $\vecj$ is sampled from $[n]$ without replacement.)
We thus conclude that 
$$\val_\Psi(\vecnu) \leq \Exp_{f \sim \cD_\cF, \veca \sim \cD^k} \left[f(\veca) \right]+ \epsilon/2+O(1/n) \leq \rho(\cF) + \epsilon/2+O(1/n) \leq \rho(\cF) + \epsilon.$$
Since this holds for every $\vecnu$ we conclude that this upper bounds $\val_\Psi$ as well thus establishing the claim, and hence the proposition.
\end{proof}

Now we prove \cref{thm:rho-min-compute} which asserts that $\rho(\cF)$ and thus $\rho_{\min}(\cF)$ is computable.

\begin{proof}[Proof of \cref{thm:rho-min-compute}]

By \cref{prop:rho-prop} we have 
$$\rho_{\min}(\cF) = \rho(\cF) = \min_{\cD_\cF \in \Delta(\cF)} \left\{ \max_{\cD \in \Delta([q])} \left\{ \Exp_{f \sim \cD_\cF, \veca \sim \cD^k} \left[f(\veca) \right]  \right\} \right\}. $$

Viewing $\cD_\cF \in \R^{|\cF|}$ and $\cD \in \R^q$ and noticing that the inner expectation is a degree $k+1$ polynomial in $\cD_\cF$ and $\cD$ we get, again using \cref{thm:bpr}, that there is a space $\poly(|\cF|,q^k,\ell)$ algorithm answering the question ``Is $\rho_{\min}(\cF)\leq \tau$?''.
\end{proof}

Finally we prove \cref{cor:approx-res} which shows that the classification of approximation-resistant $\maxF$ problems is decidable. 

\begin{proof}[Proof of \cref{cor:approx-res}]
By \cref{thm:main-detailed-dynamic} we have that $\maxF$ is approximation-resistant if and only if $K_{1-\eps}^Y(\cF) \cap K_{\rho(\cF)+\eps}^N(\cF) \ne \emptyset$ for every small $\eps > 0$. Taking limits as $\eps\to 0$, this implies that $\maxF$ is approximation resistant if and only if $K_1^Y(\cF) \cap K_{\rho(\cF)}^N(\cF) \ne \emptyset$ . If $K_1^Y(\cF) \cap K_{\rho(\cF)}^N(\cF) = \emptyset$, then by the property that these sets are closed (see~\cref{lem:convex}), we have that there must exist $\eps > 0$ such that  $K_{1 - \eps}^Y(\cF) \cap K_{\rho(f)+\eps}^N(\cF) = \emptyset$. In turn this implies, again by \cref{thm:main-detailed-dynamic}, that the $(1-\eps,\rho(\cF)+\eps)$-approximation version of $\maxF$ can be solved by a streaming algorithm with $O(\log^3 n)$ space.

To get the decidability result, we combine the ingredients from the proof of \cref{thm:rho-min-compute,thm:pspace-dec}. (We can't use them as blackboxes since $\rho_{\min}(\cF)$ may not be rational.) We create a quantified system of polynomial inequalities using a new variable called $\rho$ and expressing the conditions $\rho = \rho(\cF)$ (with further variables for $\cD_\cF$ and $\cD$ as in the proof of \cref{thm:rho-min-compute}) and expressing the conditions $K^Y_1(\cF) \cap K^N_\rho(\cF) \ne \emptyset$ as in the proof of \cref{thm:pspace-dec}. The resulting expression is thus satisfiable if and only if $\cF$ is approximation resistant, and this satisfiability can be decided in polynomial space in the input length $q^k |\cF|$ by \cref{thm:bpr}.
\end{proof}

\subsubsection{Streaming Lower Bounds}\label{sec:streaming_proof}

We now prove \cref{thm:one-wise} (assuming \cref{thm:main-negative}), which asserts that families that support one-wise independence are approximation-resistant. 

\begin{proof}[Proof of \cref{thm:one-wise}]
Let $\cF' \subseteq \cF$ be a family satisfying $\rho(\cF') = \rho(\cF)$ such that every function $f \in \cF'$ supports one-wise independence. Let $\cD_{\cF} \in \Delta(\cF')$
minimize 
$\max_{\cD \in \Delta([q])} \left\{ \Exp_{f \sim \cD_{\cF},  \veca \sim \cD^k} [ f(\veca) ] \right\}$. For $f \in \cF'$ let $\cD_{\cF} \in \Delta([q]^k)$ be the distribution with uniform marginals supported on $f^{-1}(1)$. Now let $\cD_Y$ be the distribution where $(f,\veca) \sim \cD_Y$ is sampled by picking $f \in \cD_{\cF}$ (where $\cD_{\cF}$ is being viewed as an element of $\Delta(\cF)$) and then sampling $\veca \sim \cD_{\cF}$. Now let $\cD_N = \cD_{\cF} \times \textsf{Unif}([q]^k)$. Note that $\cD_Y$ and $\cD_N$ are one-wise independent distributions with $\vecmu(\cD_Y) = \vecmu(\cD_N)$. In particular this implies that $(\cD_Y,\cD_N)$ are a padded one-wise pair. We claim that $\cD_Y \in S_1^Y(\cF)$ and $\cD_N \in S_{\rho(\cF)}^N(\cF)$. The theorem then follows immediately from \cref{thm:main-negative}.

To see the claim, first note that by definition we have that $(f,\veca) \sim \cD_Y$ satisfies $\cC(f,\veca)(\canon) = f(\veca) = f_0(\veca)=1$ with probability $1$. Thus we have $\Exp_{(f,\veca)\sim \cD} [\cC(f,\veca)(\canon)] = 1$ and so $\cD_Y \in S_1^Y(\cF)$. 
Now consider $(f,\veca)\sim \cD_N$. To show $\cD_N \in S_{\rho(\cF)}^N(\cF)$ we need to show that for every family of distributions $(\cP_\sigma \in \Delta([q]))_{\sigma \in [q]}$, the following holds
$\Exp_{(f,\veca)\sim \cD} \left[\Exp_{\vecb, b_{i,\sigma} \sim \cP_\sigma} [\cC(f,\veca)(\vecb)]\right] \leq \rho(\cF)$. Now let $\cP$ be the distribution where $\tau \sim \cP$ is sampled by picking $\sigma \sim \textsf{Unif}([q])$ and then sampling $\tau \sim \cP_\sigma$. We have
\begin{align*}
\Exp_{(f,\veca)\sim \cD} \left[\Exp_{\vecb, b_{i,\sigma} \sim \cP_\sigma} [\cC(f,\veca)(\vecb)]\right]
& = \Exp_{f \sim \cD_{\cF}, \veca \sim \textsf{Unif}([q]^k)} \left[\Exp_{\vecb, b_{i,\sigma} \sim \cP_\sigma} [\cC(f,\veca)(\vecb)]\right]\\
& = \Exp_{f \sim \cD_{\cF}} \left[\Exp_{\veca\sim \cP^k} [f(\veca)]\right]
\\
& \leq \rho(\cF') \\
& = \rho(\cF) \,.
\end{align*}
This proves $\cD_N\in S_{\rho(\cF)}^N(\cF)$ and thus proves the theorem.
\end{proof}

Next we turn to proving \cref{thm:main-detailed-k=q=2}. To do so, we first prove the following simple proposition above distributions or pairs of Boolean variables.

\begin{proposition}\label{prop:k=2padded}
If $\cD_Y, \cD_N \in \Delta(\cF\times [2]^2)$ satisfy $\vecmu(\cD_Y) = \vecmu(\cD_N)$ then $(\cD_Y,\cD_N)$ form a padded one-wise pair.
\end{proposition}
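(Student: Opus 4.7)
The plan is to construct the padded one-wise decomposition explicitly, driven by a structural observation special to $q=k=2$. First, I would exploit the matching-marginals hypothesis slice-by-slice in $f$: since $\vecmu(\cD_Y) = \vecmu(\cD_N)$, for every $f \in \cF$ the signed measure $\Delta_f(\veca) := \cD_Y(f, \veca) - \cD_N(f, \veca)$ on $[2]^2$ has zero $1$-dimensional marginals in both coordinates. A short linear-algebra computation in $\R^{[2]^2} \cong \R^4$ shows that the kernel of the two marginal maps is $1$-dimensional, so $\Delta_f$ must be a scalar multiple of the ``checkerboard'' pattern $M$ with $M(1,1) = M(2,2) = +1$ and $M(1,2) = M(2,1) = -1$. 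Writing $\Delta_f = d_f \cdot M$ for some $d_f \in \R$ is the central structural step.

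Next, I would define three non-negative measures $A, B, C$ on $\cF \times [2]^2$ by setting
\[
B(f, (1,1)) = B(f, (2,2)) := \max(0, d_f), \quad B(f, (1,2)) = B(f, (2,1)) := \max(0, -d_f),
\]
and then $C := B - (\cD_Y - \cD_N)$, $A := \cD_Y - B$. A direct calculation gives $A = \cD_N - C$, and the non-negativity of all three follows from $\cD_Y, \cD_N \geq 0$ together with $\cD_Y(f, \veca) - \cD_N(f, \veca) = d_f M(\veca)$: for instance $\cD_Y(f, (1,1)) = \cD_N(f, (1,1)) + d_f \geq \max(0, d_f) = B(f, (1,1))$ since $\cD_N(f, (1,1)) \geq 0$ and $\geq -d_f$ when $d_f < 0$. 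By construction both $B$ and $C$ enjoy the diagonal symmetries $B(f, (1,1)) = B(f, (2,2))$ and $B(f, (1,2)) = B(f, (2,1))$, which for $q = k = 2$ is precisely one-wise independence once normalized.

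To finish, I would observe that $B$ and $C$ each have total mass $\sum_f 2|d_f|$, and bound this by $1$ using the trivial estimate $\|\cD_Y - \cD_N\|_1 \leq 2$ together with the identity $\|\cD_Y - \cD_N\|_1 = 4\sum_f |d_f|$. Setting $\tau := 1 - \sum_f 2|d_f| \in [0,1]$ and taking $\cD_0 := A/\tau$, $\cD'_Y := B/(1-\tau)$, $\cD'_N := C/(1-\tau)$ then yields the desired decomposition $\cD_i = \tau \cD_0 + (1-\tau)\cD'_i$ for $i \in \{Y, N\}$; the boundary cases $\tau \in \{0, 1\}$ (arising only when $\cD_Y, \cD_N$ are already one-wise independent, or when $\cD_Y = \cD_N$, respectively) are handled by substituting any fixed one-wise-independent distribution into the zero-weight slot. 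No step here is a real obstacle: the only nontrivial content is the $1$-dimensionality of the zero-marginal signed measures on $[2]^2$ spanned by $M$, after which the decomposition is forced and every verification reduces to a direct computation.
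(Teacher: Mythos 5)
Your proof is correct and is essentially the paper's own argument: both rest on the observation that a signed measure on $[2]^2$ with vanishing marginals is a multiple of the checkerboard, and both peel off one-wise pieces supported on the two diagonals $\{(1,1),(2,2)\}$ and $\{(1,2),(2,1)\}$, leaving the same common part $\cD_0$. The only differences are presentational — you work with unnormalized measures $A,B,C$ globally and normalize at the end, whereas the paper conditions on each $f$ and re-mixes with weights $P(f)$ — plus your explicit (and harmless) treatment of the degenerate cases $\tau\in\{0,1\}$.
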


\begin{proof}

For $g \in \cF$, let $P(g)$ denote the probability of sampling a constraint $(f,\vecj)\sim \cD_Y$ with function $f=g$ and let $P$ denote this distribution. Note that since $\vecmu(\cD_Y) = \vecmu(\cD_N)$, $\cD_N$ also samples $g$ with the same probability. Let $\cD_{Y|g}$ denote $\cD_Y$ conditioned on $f = g$. Similarly let $\cD_{N|g}$ denote $\cD_N$ conditioned on $f = g$. 

Now $\cD_{Y|g}$ and $\cD_{N|g}$ are distributions from $\Delta(\{g\} \times [2]^2)$ with matching marginals. We'll show that there exist $\cD_{0|g}$, $\cD'_{Y|g}$ and $\cD'_{N|g}$, and $\tau_g$ such that (1) $\cD_{Y|g} = \tau_g \cD_{0|g} + (1-\tau_g)\cD'_{Y|g}$, (2) $\cD_{N|g} = \tau_g \cD_{0|g} + (1-\tau_g)\cD'_{N|g}$ and (3) $\cD'_{Y|g}$ and $\cD'_{N|g}$ are one-wise independent. Let $\cD_{Y|g} = (p_{1,1},p_{1,2},p_{2,1},p_{2,2})$ where $p_{i,j}$ denotes the probability $\Pr_{(a,b) \sim \cD_{Y|g}}[a=i,b=j]$. If $\cD_{N|g}$ has matching marginals with $\cD_{Y|g}$ then there exists a $\delta_g \in [-1,1]$ such that  $\cD_{N|g} = (p_{1,1}-\delta_g,p_{1,2}+\delta_g,p_{2,1}+\delta_g,p_{2,2}-\delta_g)$. Assume without loss of generality that $\delta_g \geq 0$. Let
$\tau_g = 1-2\delta_g$,
$\cD_{0|g} = \frac1{1-2\delta_g}(p_{1,1}-\delta_g,p_{1,2},p_{2,1},p_{2,2}-\delta_g)$,
$\cD'_{Y|g}= (1/2,0,0,1/2)$ and $\cD'_{N|g}= (0,1/2,1/2,0)$. It can be verified that $\cD'_{Y|g}$ and $\cD'_{N|g}$ are one-wise independent, $\cD_{Y|g} = \tau_g\cD_{0|g} + (1-\tau_g)\cD'_{Y|g}$ and $\cD_{N|g} = \tau_g\cD_{0|g} + (1-\tau_g)\cD'_{N|g}$.

Now let $\tau = \Exp_{f \sim P}[\tau_f]$, and $\cD_0 \in \Delta(\cF\times [q]^k)$ be the distribution where $\veca=(f,\vecb) \in \{\cF\}\times [2]^2$ is sampled with probability $\frac{P(f)\cdot \tau_f \cdot \cD_{0|f}(\veca)}{\tau}$, where $\cD_{0|f}(\veca)$ is the probability of sampling $\veca$ from $\cD_{0|f}$. Note that this is a valid probability distribution as \[\sum_{f\in \cF}\sum_{\vecb\in [2]^2} \frac{P(f)\cdot \tau_f \cdot \cD_{0|f}(f,\vecb)}{\tau} = \sum_{f\in \cF} \frac{P(f)\cdot \tau_f}{\tau }\cdot \sum_{\vecb\in[2]^2} \cD_{0|f}((f,\vecb)) = 1 \, . \] Similarly define $\cD'_Y$ and $\cD'_N$ such that $\veca$ is sampled with probability $\frac{P(f)\cdot (1-\tau_f) \cdot \cD'_{Y|f}(\veca)}{1-\tau}$ and probability $\frac{P(f)\cdot (1-\tau_f) \cdot \cD'_{N|f}(\veca)}{1-\tau}$, respectively.
It can be verified that these choices satisfy (1)~$\cD_{Y} = \tau \cD_{0} + (1-\tau)\cD'_{Y}$, (2) $\cD_{N} = \tau \cD_{0} + (1-\tau)\cD'_{N}$ and (3) $\cD'_{Y}$ and $\cD'_{N}$ are one-wise independent. It follows that $\cD_Y$ and $\cD_N$ form a padded one-wise pair.
\end{proof}

Combining \cref{prop:k=2padded,thm:main-negative} we immediately get the following theorem, which in turn implies \cref{thm:main-intro-k=q=2}.

\begin{proof}[Proof of \cref{thm:main-detailed-k=q=2}]
Part (1) is simply the specialization of Part (1) of \cref{thm:main-detailed-dynamic} to the case $k=2$. For Part~(2), suppose $\vecmu \in K^Y_\gamma \cap K^N_\beta$. Let $\cD_Y \in S^Y_\gamma$ and $\cD_N \in S^N_\beta$ be distributions such that $\vecmu(\cD_Y) = \vecmu(\cD_N) = \vecmu$. Then by \cref{prop:k=2padded} we have that $\cD_Y$ and $\cD_N$ form a padded one-wise pair, and so \cref{thm:main-negative} can be applied to get Part (2). 
\end{proof}

\section{A Streaming Approximation Algorithm for \texorpdfstring{$\maxF$}{Max-CSP(F)}}\label{sec:alg}

In this section we give our main algorithmic result --- an $O(\log^3 n)$-space linear sketching streaming algorithm for $(\gamma,\beta)$-$\maxF$ if $K_\gamma^Y = K_\gamma^Y(\cF)$ and $K_\beta^N = K_\beta^N(\cF)$ are disjoint. (See \cref{def:marginals}.) 

The algorithm in fact works in the (general) dynamic setting where the input instance $\Psi= (C_1,\ldots,C_m; w_1,\ldots,w_m)$ is obtained by inserting and deleting (unweighted) constraints, possibly with repetitions and thus leading to a (integer) weighted instance. Formally, the instance $\Psi= (C_1,\ldots,C_m; w_1,\ldots,w_m)$ is presented as a stream $\sigma_1,\ldots,\sigma_\ell$ where $\sigma_t = (C'_t,w'_t)$ and $w'_t \in \{-1,1\}$ such that $w_i = \sum_{t \in [\ell] : C_i = C'_t} w'_t$. For the algorithmic result to hold, we require that $w_i$'s are non-negative at the end of the stream but the intermediate values can be arbitrary. Furthermore the algorithm requires that the length of the stream be polynomial in $n$ (or else there will be a logarithmic multiplicative factor in the length of the stream in the space usage). 

We now state our main theorem of this section which simply restates Part (1) of \cref{thm:main-detailed-dynamic}.

\begin{theorem}\label{thm:main-positive}
For every $q,k \in \N$, every family of functions $\cF \subseteq \{f:[q]^k \to \{0,1\}\}$ and for every $0\leq\beta<\gamma\leq1$
if $\kgyf \cap \kbnf = \emptyset$, then $(\gamma,\beta)$-$\maxF$ in the dynamic setting admits a probabilistic linear sketching streaming algorithm that uses $O(\log^3 n) $ space.
\end{theorem}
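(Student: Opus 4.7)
The plan is to convert the hypothesis $\kgyf \cap \kbnf = \emptyset$ into an explicit ``bias''-based linear sketch whose estimator can be computed by a turnstile $\ell_1(\ell_\infty)$ cascaded-norm algorithm. Since by \cref{lem:convex} both sets are bounded, closed and convex, the separating hyperplane theorem yields coefficients $\veclambda = (\lambda_{f,i,\sigma})_{f \in \cF,\, i \in [k],\, \sigma \in [q]} \in \R^{|\cF| \cdot kq}$, a threshold $\theta \in \R$, and a margin $\delta > 0$ with $\langle \veclambda, \vecmu\rangle \geq \theta + \delta$ for every $\vecmu \in \kgyf$ and $\langle \veclambda, \vecmu\rangle \leq \theta - \delta$ for every $\vecmu \in \kbnf$. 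Since $\veclambda,\theta,\delta$ depend only on $\cF,\gamma,\beta$, they can be precomputed. Given the dynamic stream describing $\Psi$, I would maintain (implicitly) the bias matrix $B^\Psi \in \R^{n \times q}$ defined by
\[
B^\Psi[j,\sigma] \;=\; \sum_{t,\,i:\,\vecj(t)_i = j} w_t \cdot \lambda_{f_t,\,i,\,\sigma}.
\]
Each update to the stream adds or subtracts a constant-size collection of entries $\lambda_{f,i,\sigma}$ to at most $k$ rows of $B^\Psi$, so $B^\Psi$ is a linear sketch of the stream with $O(\log n)$-bit entries (assuming the polynomial stream-length promise).

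The key algebraic identity underlying correctness is that for every assignment $\vecb \in [q]^n$,
\[
\tfrac{1}{W}\sum_{j=1}^n B^\Psi[j,b_j] \;=\; \langle \veclambda,\; \vecmu(\cD_{\Psi,\vecb})\rangle,
\]
where $W = \sum_t w_t$ and $\cD_{\Psi,\vecb} \in \Delta(\cF \times [q]^k)$ is the canonicalized distribution that samples a weighted-random constraint $(f_t,\vecj(t))$ of $\Psi$ and emits $(f_t,(b_{\vecj(t)_1},\dots,b_{\vecj(t)_k}))$. On the \yes{} side, if $\val_\Psi \geq \gamma$ is witnessed by $\vecb^*$, then $\cD_{\Psi,\vecb^*} \in \sgyf$ and so $\vecmu(\cD_{\Psi,\vecb^*}) \in \kgyf$, giving
\[
\max_{\vecb \in [q]^n} \sum_{j} B^\Psi[j,b_j] \;=\; \sum_{j=1}^n \max_{\sigma \in [q]} B^\Psi[j,\sigma] \;\geq\; W \cdot (\theta+\delta).
\]
On the \no{} side, if $\val_\Psi \leq \beta$, then for every $\vecb$ and every $(\cP_\sigma)_{\sigma \in [q]}$, the two-step randomized assignment ``draw $x_j \sim \cP_{b_j}$ independently'' has expected value at most $\val_\Psi \leq \beta$; this is precisely the defining condition of $\sbnf$, so $\vecmu(\cD_{\Psi,\vecb}) \in \kbnf$ for \emph{every} $\vecb$, forcing $\sum_{j}\max_{\sigma} B^\Psi[j,\sigma] \leq W\cdot(\theta-\delta)$. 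Hence any constant-factor approximation of $\sum_{j=1}^n \max_{\sigma} B^\Psi[j,\sigma]$ (and in particular a $(1+\Theta(\delta))$-approximation) separates the two cases.

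To finish I would invoke the cascaded $\ell_1(\ell_\infty)$-norm estimator of Andoni--Krauthgamer--Onak~\cite{AKO}, which on an integer matrix with entries in a polynomial range approximates $\sum_j \max_{\sigma} |B[j,\sigma]|$ to any constant factor in $O(\log^3 n)$ bits of space using linear sketches maintained under turnstile updates. After additively shifting $\veclambda$ by a constant per $(f,i)$-block so that every entry of $B^\Psi$ is nonnegative (which introduces only a deterministic sketch-maintainable additive correction, since the total number of constraint occurrences $W$ can be tracked by a counter), the expression $\sum_j \max_\sigma B^\Psi[j,\sigma]$ matches the AKO norm exactly, yielding the claimed $O(\log^3 n)$-space linear sketch. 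The step I expect to be the main conceptual obstacle is not the existence of the separating hyperplane nor the norm-estimation oracle, but the \no{}-side correctness argument: it uses crucially that the definition of $\sbnf$ in terms of \emph{column-symmetric} independent probabilistic assignments is exactly broad enough to match the quantity $\sum_j \max_\sigma B^\Psi[j,\sigma]$ that is efficiently estimable. A wider definition of $\sbnf$ (e.g.\ arbitrary per-variable distributions) would force us to estimate a stronger norm that is not tractable in polylogarithmic space, while a narrower one would fail to certify hardness in the matching lower-bound, so the choice of $\ell_1(\ell_\infty)$ is precisely the sweet spot that couples the algorithm to the convex-set framework.
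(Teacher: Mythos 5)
Your proposal is correct and follows essentially the same route as the paper: a strict separating hyperplane $\veclambda$ between $\kgyf$ and $\kbnf$, the $n\times q$ bias matrix updated linearly by the stream, the identity $\tfrac1W\sum_j B^\Psi[j,b_j]=\langle\veclambda,\vecmu(\cD_{\Psi,\vecb})\rangle$, the \yes/\no\ analysis via the planted-assignment distribution lying in $\sgyf$ and (using column symmetry plus distinctness of variables in a constraint) in $\sbnf$, and the Andoni--Krauthgamer--Onak $\ell_{1,\infty}$ sketch for the threshold test. The only cosmetic difference is the nonnegativity fix: the paper shifts $\veclambda$ by a single global constant, which keeps the thresholds instance-independent because $\langle\mathbf{1},\vecmu\rangle=k$ for every marginal vector, whereas your per-$(f,i)$-block shift would make the additive correction depend on the instance's function mix and thus require per-function weight counters rather than just $W$ --- a harmless but slightly clumsier variant.
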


We start with a brief overview of our algorithm. Roughly, given an instance $\Psi$ on $n$ variables with $m$ constraints, our streaming algorithm (implicitly) works with an $n \times q$ bias non-negative matrix $\bias$ whose $(i,\sigma)$th entry tries to capture how much the $i$th variable would like to be assigned the value $\sigma$ (according to our approximation heuristic). Note that any such matrix is too large for our algorithm, so the algorithm does not explicitly maintain this matrix. Our heuristic ensures that $\bias$ is updated linearly by every constraint and so the rich theory of norm-approximations of matrices under linear updates can be brought into play to compute any desired norm of this matrix. Given the intuition that $\bias_{i,\sigma}$ represents the preference of variable $i$ for value $\sigma$, a natural norm of interest to us is $\|\bias\|_{1,\infty} \triangleq \sum_{i=1}^n \{\max_{\sigma\in[q]} \{\bias_{i,\sigma}\}\}$. This norm, fortunately for us, is well-known to be computable using $O(q\log^3 n)$ bits of space~\cite{AKO} (assuming $\bias$ is updated linearly) and we use this algorithm as a black box. 

The question then turns to asking how $\bias$ should be defined. On input a stream $\sigma_1,\ldots,\sigma_\ell$ representing an instance $\Psi=(C_1,\ldots,C_m)$ with $\sigma_i = (C'_i = (\vecj(i),\vecb(i)),w'_i)$, how should $\bias$ be updated? Presumably the $i$-th update will only involve the rows $\vecj(i)_1,\ldots,\vecj(i)_k$ but how should these be updated and how should this update depend on the function $f_i$? Here is where the disjointness of $K^Y$ and $K^N$ comes into play. (We  suppress $\cF$ and $\gamma$ and $\beta$ in the notation of the sets $S^Y_\gamma$, $S^N_\beta$ and $K^Y_\gamma$ and $K^N_\beta$ in this overview.) We show that these sets are convex and closed, and so there is a hyperplane (with margin) separating the two sets. Let $\veclambda = (\lambda_{f,i,\sigma})_{f\in \cF,i \in [k],\sigma\in[q]}$ be the coefficients of this separating hyperplane and let $\tau_N < \tau_Y$ be thresholds such that $\langle\veclambda,\vecmu\rangle\geq \tau_Y$ for $\vecmu \in K^Y$ and $\langle\veclambda,\vecmu\rangle\leq \tau_N$ for $\vecmu \in K^N$. It turns out that the coefficients of $\veclambda$ give us exactly the right information to determine the update to the bias vector: Specifically given an element $\sigma_i$ of the stream with constraint $C'_i = (f_i,\vecj(i))$ and weight $w'_i$ and $\ell \in [k]$ and $\sigma \in [q]$, we add $\lambda_{f_i,\ell,\sigma}\cdot w'_i$ to $\bias_{\vecj(i)_\ell,\sigma}$. We are unable to provide intuition for why these updates work but the proof that the algorithm works is nevertheless quite short!

We now turn to describing our algorithm.
Recall by \cref{lem:convex} that the set $S^Y,S^N,K^Y,K^N$ are all convex and closed. This implies the existence of a separating hyperplane when $K^Y$ and $K^N$ do not intersect. We use a mild additional property to conclude that the coefficients of this hyperplane are non-negative, and we later use this crucially in the computation of the \bias~ of the instance.

\begin{proposition}\label{prop: StrongerHyperplaneSeparation}
Let $\beta,\gamma$ and $\cF$ be such that  $0\le \beta < \gamma \le 1$ and $\kgyf \cap \kbnf = \emptyset$. Then there exists a \emph{non-negative} vector $\veclambda = (\lambda_{f,i,\sigma})_{f\in\cF,i \in [k], \sigma\in [q]}$ and real numbers $\tau_Y > \tau_N$ such that 
\[
 \forall \vecmu \in \kgyf, ~~ \langle \veclambda,\vecmu \rangle \ge \tau_Y \text{~~and~~ } \forall \vecmu \in \kbnf, ~~ \langle \veclambda,\vecmu \rangle \le \tau_N \, .
\]
\end{proposition}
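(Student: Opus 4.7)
The plan is to first invoke a standard separation theorem to obtain a separating hyperplane with possibly signed coefficients, and then use the fact that every marginal vector has the same $\ell_1$ norm in order to shift the coefficients up to a non-negative vector without destroying the separation.

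Concretely, I would first appeal to \cref{lem:convex}, which asserts that $\kgyf$ and $\kbnf$ are bounded, closed, and convex subsets of $\R^{|\cF|kq}$. Since they are disjoint and compact, the strong hyperplane separation theorem for compact convex sets yields a vector $\veclambda' = (\lambda'_{f,i,\sigma})_{f\in\cF,i\in[k],\sigma\in[q]}$ and reals $\tau_Y' > \tau_N'$ such that $\langle \veclambda', \vecmu\rangle \ge \tau_Y'$ for all $\vecmu \in \kgyf$ and $\langle \veclambda', \vecmu\rangle \le \tau_N'$ for all $\vecmu \in \kbnf$. At this stage $\veclambda'$ may have negative entries.

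The key observation that lets us upgrade to a non-negative separator is the following invariant: for every $\cD \in \Delta(\cF \times [q]^k)$ and every $i \in [k]$, we have $\sum_{f\in\cF,\sigma\in[q]} \mu_{f,i,\sigma}(\cD) = \Pr_{(f,\veca)\sim\cD}[\text{anything}] = 1$. Summing over $i \in [k]$ gives $\sum_{f,i,\sigma}\mu_{f,i,\sigma}(\cD) = k$. Hence any marginal vector $\vecmu$ (coming from either $\sgyf$ or $\sbnf$) lies on the hyperplane $\sum_{f,i,\sigma} x_{f,i,\sigma} = k$.

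Now set $c := \max\{0, -\min_{f,i,\sigma} \lambda'_{f,i,\sigma}\}$, let $\veclambda$ be the vector with $\lambda_{f,i,\sigma} := \lambda'_{f,i,\sigma} + c \ge 0$, and define $\tau_Y := \tau_Y' + ck$ and $\tau_N := \tau_N' + ck$. For every marginal vector $\vecmu$ (in either set) the invariant gives $\langle \veclambda, \vecmu\rangle = \langle \veclambda', \vecmu\rangle + c\cdot\sum_{f,i,\sigma}\mu_{f,i,\sigma} = \langle \veclambda', \vecmu\rangle + ck$, so the separation is preserved: $\langle\veclambda,\vecmu\rangle \ge \tau_Y$ for $\vecmu\in\kgyf$, $\langle\veclambda,\vecmu\rangle\le\tau_N$ for $\vecmu\in\kbnf$, and $\tau_Y - \tau_N = \tau_Y' - \tau_N' > 0$. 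This gives the claimed non-negative separator, and the only mild subtlety — which is really the crux of the argument — is noticing the uniform $\ell_1$-normalization of marginal vectors that allows the coordinate-wise shift to act as a harmless global offset.
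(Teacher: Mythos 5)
Your proposal is correct and follows essentially the same route as the paper: a standard separating-hyperplane argument via the convexity/compactness from \cref{lem:convex}, followed by the observation that every marginal vector satisfies $\langle \vecmu, \mathbf{1}\rangle = k$, so a uniform coordinate shift (your $c$, the paper's $|\lambda'_{\min}|$) makes the separator non-negative while translating both thresholds by $ck$. The only cosmetic difference is that your shift $c=\max\{0,-\min\lambda'_{f,i,\sigma}\}$ avoids an unnecessary translation when the separator is already non-negative, which the paper's choice does not bother with.
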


\begin{proof}
The existence of a separating hyperplane follows from standard convexity (see, e.g., \cite[Exercise 2.22]{boyd2004convex}). For us this implies there exists $\veclambda' \in \R^{|\cF| \times kq}$ and $\tau'_N < \tau'_Y$ such that 
\[
\forall \vecmu \in \kgyf, ~~ \langle \veclambda',\vecmu \rangle \ge \tau'_Y \text{~~and~~ } \forall \vecmu \in \kbnf, ~~ \langle \veclambda',\vecmu \rangle \le \tau'_N \, .
\]
But $\veclambda'$ is not necessarily a positive vector. To remedy this we use the fact that $\kgyf\cup\kbnf$ is contained in a hyperplane whose coefficients are themselves positive. In particular we note that for every $\cD \in \Delta(\cF\times [q]^k)$ we have $\langle\mu(\cD),\mathbf{1}\rangle = k$ where $\mathbf{1} \in \R^{|\cF| \times kq}$ is the all ones vector, as verified below:
\[\langle\mu(\cD),\mathbf{1}\rangle = \sum_{f \in \cF, i \in [k], \sigma\in [q]} \mu_{f,i,\sigma} = \sum_{i \in [k]} \left(\sum_{f \in \cF, \sigma\in [q]} \mu_{f,i,\sigma}\right) = \sum_{i\in[k]} 1 = k. \]

Let $\lambda'_{\min} = \min_{f,t,\sigma } \lambda'_{f,t,\sigma}$. Now let $\veclambda$, $\tau_Y$ and $\tau_N$ be given by:
$$\lambda_{f,t,\sigma} = \lambda'_{f,t,\sigma}+|\lambda'_{\min}|\, , ~~ \tau_Y = \tau'_Y+ k\cdot|\lambda'_{\min}| \mbox{~~and~~} \tau_N=\tau'_N+ k\cdot|\lambda'_{\min}|\, .$$
Observe that $\veclambda$ is a non-negative vector and $\tau_Y>\tau_N$. We also have:
\[
\forall \vecmu \in \kgyf, ~~ \langle \veclambda,\vecmu \rangle =  \langle \veclambda',\vecmu \rangle + |\lambda'_{\mathrm{min}}| \ge \langle \mathbf{1},\vecmu \rangle \geq \tau'_Y + k|\lambda'_{\min}| = \tau_Y \]
as desired. Similarly also get 
$\forall \vecmu \in \kbnf, ~~ \langle \veclambda,\vecmu \rangle \le \tau_N$, concluding the proof.
\end{proof}

To use the vector $\veclambda$ given by \cref{prop: StrongerHyperplaneSeparation} we introduce the notion of the bias matrix and the bias of a $\maxF$ instance $\Psi$. 

\begin{definition}[Bias (matrix)]\label{def:bias-matrix}
For a non-negative vector $\veclambda = (\lambda_{f,i,\sigma})_{f\in\cF,i \in [k], \sigma\in [q]} \in \R^{|\cF|kq}$, and instance $\Psi = (C_1,\ldots,C_m;w_1,\ldots,w_m)$ of $\maxF$ where $C_i = (f_i,\vecj(i))$, where $f_i\in \cF$ and $\vecj(i)\in [n]^k$, we let the {\em $\veclambda$-bias matrix} of $\Psi$, denoted $\bias_{\veclambda}(\Psi)$, be the matrix in $\R^{n\times q}$ given by 
\[
\bias_{\veclambda}(\Psi)_{\ell,\sigma} = \frac{1}W \cdot \sum_{i \in [m], t \in [k] : \vecj(i)_t = \ell} \lambda_{f_i,t,\sigma} \cdot w_i \, ,
\]
for $\ell \in [n]$ and $\sigma \in [q]$, where $W=\sum_{i\in [m]} w_i$. The $\veclambda$-bias of $\Psi$, denoted $B_{\veclambda}(\Psi)$, is defined as $B_{\veclambda}(\Psi) = \sum_{\ell=1}^n \max_{\sigma\in [q]} \bias_{\veclambda}(\Psi)_{\ell,\sigma}$. 

\end{definition}

Key to our algorithm for approximating $\maxF$ is the following algorithm to compute the $\ell_{1,\infty}$ norm of a matrix. Recall that for a matrix $M \in \R^{a \times b}$ the $\ell_{1,\infty}$ norm is the quantity $\|M\|_{1,\infty} = \sum_{i \in [a]} \{\max_{j \in {[b]}} \{|M_{ij}|\}\}$. 

\begin{theorem}[Implied by {\protect{\cite[Theorem 4.5]{AKO}}}]\label{thm:ell1infty}
There exists a constant $c>0$ such that the $\ell_{1,\infty}$ norm of an $n \times q$ matrix $M$ can be estimated by a linear sketch to within a multiplicative error of $(1+\epsilon)$ in the turnstile streaming model with $O(\epsilon^{-c} \cdot q \cdot  \log^2 n)$ words (or with $O(\epsilon^{-c} \cdot q \cdot  \log^3 n)$ bits).
\end{theorem}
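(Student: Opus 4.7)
The plan is to derive this bound by reducing the $\ell_{1,\infty}$ norm to a cascaded $\ell_1(\ell_p)$ norm for a suitable large value of $p$, and then invoking the AKO cascaded norm sketch directly. Recall that for a matrix $M \in \R^{n \times q}$ and $p \geq 1$, the cascaded norm is $\|M\|_{1,p} = \sum_{i \in [n]} \bigl(\sum_{j \in [q]} |M_{ij}|^p\bigr)^{1/p}$, and AKO's Theorem 4.5 gives a turnstile linear sketch of size $\mathrm{poly}(\epsilon^{-1},\log n)$ words (with a mild polynomial factor in $p$ and $q$) that approximates $\|M\|_{1,p}$ to within a $(1+\epsilon)$ multiplicative factor.

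The first step is the standard norm-comparison inequality: for every row vector $v \in \R^q$ and every $p \geq 1$, one has $\|v\|_\infty \leq \|v\|_p \leq q^{1/p}\,\|v\|_\infty$. Choosing $p = \Theta(\epsilon^{-1}\log q)$ makes $q^{1/p} \leq 1 + \epsilon$, so $\|v\|_p = (1 \pm \epsilon)\|v\|_\infty$ on each row. Summing over rows yields $\|M\|_{1,p} = (1 \pm \epsilon)\|M\|_{1,\infty}$, so a $(1+\epsilon/3)$-approximation to $\|M\|_{1,p}$ with this choice of $p$ translates to a $(1+\epsilon)$-approximation to $\|M\|_{1,\infty}$.

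The second step is to instantiate the AKO sketch for $\|M\|_{1,p}$ with $p = \Theta(\epsilon^{-1}\log q)$. Since the sketch is linear in the entries of $M$, it handles turnstile updates; composing it with the reduction above gives a linear sketch for $\|M\|_{1,\infty}$ within the stated accuracy. The space bound is obtained by plugging our choice of $p$ into AKO's dependence and collecting the $\epsilon$ factors into a single $\epsilon^{-c}$ term for some universal constant $c$, yielding $O(\epsilon^{-c}\,q\,\log^2 n)$ words or equivalently $O(\epsilon^{-c}\,q\,\log^3 n)$ bits in the streaming model.

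The only delicate point, which I expect to be the main technical check rather than a genuine obstacle, is verifying that AKO's parameter dependence on $p$ and on the cascaded structure behaves as claimed when $p$ is polylogarithmic in $q$; in particular one needs to confirm that the blow-up in the sketch size remains polynomial in $\epsilon^{-1}$ and linear (or at worst polynomial) in $q$, so that the overall bound absorbs into $\epsilon^{-c} q \log^3 n$. Given that AKO's analysis is polynomial in all these parameters, this is essentially a matter of bookkeeping.
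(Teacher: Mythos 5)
The paper does not reprove this statement at all: it is obtained by instantiating AKO's Theorem~4.5 directly. That theorem estimates $\ell_1(X)$ --- the sum over rows of an arbitrary norm $X$ --- given \emph{any} linear sketch for $X$, with space (size of the $X$-sketch) times a factor $\alpha$ times $O(\log n)$; taking $X=\ell_\infty^q$ and the \emph{identity map} as the $X$-sketch ($q$ words, exact), together with AKO's Lemma~4.7(c) which gives $\alpha=O(\log n)$ for this instantiation, immediately yields $O(q\log^2 n)$ words. Your route is genuinely different: you first approximate $\ell_\infty$ by $\ell_p$ with $p=\Theta(\eps^{-1}\log q)$ (the norm-comparison step $\|v\|_\infty\le\|v\|_p\le q^{1/p}\|v\|_\infty$ and the resulting $(1\pm\eps)$ distortion of $\|M\|_{1,\infty}$ are fine), and then invoke a cascaded $\ell_1(\ell_p)$ sketch. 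But this detour buys nothing over the direct instantiation and introduces exactly the dependence you admit you have not verified.

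That unverified step is a genuine gap relative to the stated theorem. Your closing claim that a blow-up ``linear (or at worst polynomial) in $q$'' absorbs into $\eps^{-c}q\log^3 n$ is false as written: the theorem asserts a bound \emph{linear} in $q$, so a $q^2$-type bound would not prove it (it would suffice for the paper's application, where $q$ is a constant, but not for the statement being proved). And the worry is not hypothetical: for $p>2$ any linear sketch of $\ell_p^q$ already has size on the order of $q^{1-2/p}\approx q$ for your choice of $p$, and AKO's outer precision-sampling layer multiplies the inner sketch size by further norm-dependent factors (their $\alpha$); unless you verify that $\alpha$ for $\ell_p$ at $p=\Theta(\eps^{-1}\log q)$ is polylogarithmic rather than itself of order $q^{1-2/p}$, the product can come out superlinear in $q$. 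The $\eps$- and $p$-dependence of high-$p$ moment sketches also needs tracking, since $p$ enters the guarantees and not just constant factors. All of this is avoided by the intended one-line derivation: apply Theorem~4.5 with $X=\ell_\infty^q$ sketched exactly by the identity, for which the only overhead is the $O(\log n)$ factor $\alpha$ and the $O(\log n)$ outer repetition, giving $O(\eps^{-c}q\log^2 n)$ words.
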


We note that Theorem 4.5 in \cite{AKO} is much more general. \cref{thm:ell1infty} is the special case corresponding to $X = \ell_{\infty}$ and $E_X$ being simply the identity function. $\alpha(\cdots)$ in this case turns out to be $O(\log n)$ leading to the bounds above \cite{Andoni:personal}.

Note that there is a slight distinction between the definitions of $B_{\veclambda}(\Psi)$ and $\|\bias_{\veclambda}(\Psi)\|_{1,\infty}$, but these quantities are equal since $\bias_{\veclambda}$ is a non-negative matrix (which in turn follows from the fact that $\veclambda$ is non-negative). We thus get the following corollary.

\begin{corollary}\label{cor:bias-alg}
There exists a constant $c$ such that for every $k,q,\cF$ and $\epsilon > 0$, there exists a linear sketching streaming algorithm running in space $O(\epsilon^{-c} \cdot \log^3 n)$ that on input a stream $\sigma_1,\ldots,\sigma_\ell$ representing a $\maxF$ instance $\Psi = (C_1,\ldots,C_m;w_1,\ldots,w_m)$ on $n$ variables, outputs a $(1\pm\epsilon)$ approximation to $B_{\veclambda}(\Psi)$. 
\end{corollary}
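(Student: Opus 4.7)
The plan is to invoke \cref{thm:ell1infty} as a black box on an appropriately scaled, linearly maintained version of the bias matrix. Specifically, I would track the \emph{unnormalized} bias matrix $\widetilde{B}(\Psi) \in \R^{n \times q}$ defined by $\widetilde{B}(\Psi)_{\ell,\sigma} = W \cdot \bias_\veclambda(\Psi)_{\ell,\sigma}$. The key observation is that this matrix is updated linearly by the stream: a stream token $\sigma_t = (C'_t, w'_t)$ with $C'_t = (f_t, \vecj(t))$ corresponds to adding $w'_t \cdot \lambda_{f_t, \tau, \sigma}$ to entry $(\vecj(t)_\tau, \sigma)$ for each $\tau \in [k]$ and $\sigma \in [q]$, i.e., a bounded number of turnstile updates per token. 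Consequently an $\ell_{1,\infty}$ sketch of $\widetilde{B}(\Psi)$ can be maintained via the algorithm of \cref{thm:ell1infty} throughout the stream using $O(\epsilon^{-c} \cdot q \cdot \log^3 n)$ bits.

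In parallel, I would keep a counter for the scalar $W = \sum_{t} w'_t$, which takes $O(\log n)$ bits since the stream has length $\poly(n)$ and $W \geq 0$ at the end. At the end of the stream, since the final constraint weights $w_i$ are non-negative \emph{and} the separating hyperplane vector $\veclambda$ produced by \cref{prop: StrongerHyperplaneSeparation} is entrywise non-negative, every entry of $\bias_\veclambda(\Psi)$ is non-negative. Therefore
\[
B_\veclambda(\Psi) \;=\; \sum_{\ell=1}^n \max_{\sigma\in [q]} |\bias_\veclambda(\Psi)_{\ell,\sigma}| \;=\; \|\bias_\veclambda(\Psi)\|_{1,\infty} \;=\; \tfrac{1}{W}\|\widetilde{B}(\Psi)\|_{1,\infty}.
\]
So I simply query the sketch to obtain a $(1\pm\epsilon)$-approximation of $\|\widetilde{B}(\Psi)\|_{1,\infty}$, divide by $W$, and output the result, which is then a $(1\pm\epsilon)$-approximation of $B_\veclambda(\Psi)$.

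Two minor technicalities deserve a remark. First, the coefficients $\lambda_{f,t,\sigma}$ are real; however, since $\cF$, $k$, $q$, $\gamma$, $\beta$ are constants independent of $n$, they can be replaced by $O(\log n)$-bit rational approximations that are multiplicatively within $(1\pm\epsilon/2)$ of the true values, which composes with the $(1\pm\epsilon/2)$ approximation from the sketch by a triangle inequality and absorbs into the constant $c$. Second, after discretization the turnstile updates have integer magnitude $\poly(n)$, so individual cells in the sketch of \cref{thm:ell1infty} (which are $O(\log n)$-bit words) comfortably accommodate them.

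The only real conceptual content of the proof is the non-negativity observation that lets us equate $B_\veclambda(\Psi)$ with the $\ell_{1,\infty}$-norm of the final bias matrix; this is precisely what makes off-the-shelf norm estimation of \cite{AKO} applicable. Beyond that, the argument is a routine reduction from a $\maxF$ stream to a turnstile matrix-entry stream, so there is no substantive obstacle.
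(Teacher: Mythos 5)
Your proposal is correct and follows essentially the same route as the paper: the paper's proof of \cref{cor:bias-alg} is precisely the observation that $\veclambda\ge 0$ (from \cref{prop: StrongerHyperplaneSeparation}) and non-negative final weights make $B_{\veclambda}(\Psi)=\|\bias_{\veclambda}(\Psi)\|_{1,\infty}$, that the bias matrix is maintained by linear (turnstile) updates, and then an invocation of \cref{thm:ell1infty}. Your additional bookkeeping (tracking $W$ separately and discretizing $\veclambda$) fills in routine details the paper leaves implicit and does not change the argument.
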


We are now ready to describe our algorithm for $(\gamma,\beta)$-$\maxF$. 

\begin{algorithm}[H]
	\caption{A streaming algorithm for $(\gamma,\beta)$-$\maxF$}
	\label{alg:main alg}
\begin{algorithmic}[1]
		\Input A stream $\sigma_1,\ldots,\sigma_\ell$ representing an instance $\Psi$ of $\maxF$. 
		    \State Let $\veclambda \in \R^{|\cF|kq},\tau_N$ and $\tau_Y$ be as given by~\cref{prop: StrongerHyperplaneSeparation} separating $K_\gamma^Y(f)$ and $K_\beta^N(f)$, so $\veclambda$ is non-negative and $\tau_N < \tau_Y$. 
			\State Let $\epsilon = \frac{\tau_Y-\tau_N}{2(\tau_Y+\tau_N)}$.
			\State Using \cref{cor:bias-alg} compute a $(1\pm\epsilon)$ approximation $\tilde{B}$ to $B_{\veclambda}(\Psi)$, i.e., 
			$$(1-\epsilon)B_{\veclambda}(\Psi) \leq \tilde{B} \leq (1+\epsilon)B_{\veclambda}(\Psi) \mbox{ with probability at least $2/3$}.$$ 
			\If{$\tilde{B} \leq \tau_N (1+\epsilon)$}
				\Statex {\bf Output:} NO.
			\Else
				\Statex {\bf Output:} YES.
			\EndIf
	\end{algorithmic}
\end{algorithm}

Given \cref{cor:bias-alg} it follows that the algorithm above uses space $O(\log^3 n)$ on instances on $n$ variables. In what follows we prove that the algorithm correctly solves $(\gamma,\beta)-\maxF$.

\subsection{Analysis of the correctness of \autoref{alg:main alg}}

\begin{lemma}\label{lem:correctness_algorithm}
\cref{alg:main alg} correctly solves $(\gamma,\beta)$-$\maxF$, if $K_\gamma^Y(\cF)$ and $K_\beta^N(\cF)$ are disjoint.  
Specifically, for every $\Psi$, let $\tau_Y,\tau_N,\epsilon,\veclambda,\tilde{B}$ be as given in~\cref{alg:main alg}, we have:
\begin{eqnarray*}
\val_\Psi \geq \gamma & \Rightarrow & B_{\veclambda}(\Psi) \geq \tau_Y \mbox{ and } \tilde{B} > \tau_N(1+\epsilon) \, , \\
\mbox{ and } \val_\Psi \leq \beta & \Rightarrow & B_{\veclambda}(\Psi) \leq \tau_N \mbox{ and } \tilde{B} \leq \tau_N(1+\epsilon) \, ,
\end{eqnarray*}
provided $(1-\epsilon)B_\lambda(\Psi) \leq \tilde{B} \leq (1+\epsilon)B_\lambda(\Psi)$.
\end{lemma}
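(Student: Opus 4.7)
The proof splits cleanly into three tasks: establish $B_\veclambda(\Psi) \geq \tau_Y$ in the YES case, establish $B_\veclambda(\Psi) \leq \tau_N$ in the NO case, and then verify that the threshold $\tau_N(1+\epsilon)$ correctly separates the approximate values $\tilde{B}$ in the two regimes.

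For the YES case, given a satisfying assignment $\vecb^* \in [q]^n$ with $\val_\Psi(\vecb^*) \geq \gamma$, the plan is to build the distribution $\cD \in \Delta(\cF \times [q]^k)$ that samples a constraint $(f_i, \vecj(i))$ of $\Psi$ with probability $w_i/W$ and outputs $(f_i, \vecb^*|_{\vecj(i)})$. Under $\cD$, the planted assignment $\canon$ evaluates each constraint exactly the way $\vecb^*$ does on $\Psi$, so $\Exp_{(f,\veca)\sim \cD}[\cC(f,\veca)(\canon)] = \val_\Psi(\vecb^*) \geq \gamma$, placing $\cD \in \sgyf$ and hence $\vecmu(\cD) \in \kgyf$; \cref{prop: StrongerHyperplaneSeparation} then gives $\langle \veclambda, \vecmu(\cD) \rangle \geq \tau_Y$. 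Because $\veclambda$ is non-negative, I can lower-bound the max defining the bias by choosing $\sigma = \vecb^*_\ell$ at each coordinate:
\[B_\veclambda(\Psi) \;\geq\; \sum_\ell \bias_\veclambda(\Psi)_{\ell,\vecb^*_\ell} \;=\; \frac{1}{W}\sum_i w_i \sum_t \lambda_{f_i,\, t,\, \vecb^*_{\vecj(i)_t}} \;=\; \langle \veclambda, \vecmu(\cD)\rangle \;\geq\; \tau_Y,\]
where the middle equality is a direct swap of summation order (regrouping summands by constraint rather than by variable).

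For the NO case, the plan is to run the argument in reverse and use the biases to \emph{define} the candidate distribution. Setting $\sigma^*_\ell = \arg\max_\sigma \bias_\veclambda(\Psi)_{\ell,\sigma}$, I let $\cD$ sample $(f_i, \vecj(i))$ of $\Psi$ and output $(f_i, \sigma^*_{\vecj(i)_1}, \ldots, \sigma^*_{\vecj(i)_k})$. The same rearrangement now yields $\langle \veclambda, \vecmu(\cD)\rangle = \sum_\ell \bias_\veclambda(\Psi)_{\ell,\sigma^*_\ell} = B_\veclambda(\Psi)$, so the remaining work is to certify $\cD \in \sbnf$. For any $(\cP_\sigma)_{\sigma\in[q]}$, the quantity $\Exp_{(f,\veca)\sim \cD}[\Exp_{\vecb,\, b_{i,\sigma}\sim \cP_\sigma}[\cC(f,\veca)(\vecb)]]$ equals the expected $\val_\Psi$ of the random assignment that independently sets $x_\ell \sim \cP_{\sigma^*_\ell}$; since the coordinates of $\vecj(i)$ are distinct, the $k$ samples appearing in any one constraint are mutually independent, so the two expectations genuinely coincide. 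The latter is at most $\val_\Psi \leq \beta$, hence $\cD \in \sbnf$, hence $\vecmu(\cD) \in \kbnf$, and therefore $B_\veclambda(\Psi) \leq \tau_N$ by \cref{prop: StrongerHyperplaneSeparation}.

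The last step is algebraic. In the NO case $\tilde{B} \leq (1+\epsilon)B_\veclambda(\Psi) \leq (1+\epsilon)\tau_N$ and the algorithm outputs NO, while in the YES case $\tilde{B} \geq (1-\epsilon)\tau_Y$; the choice $\epsilon = (\tau_Y-\tau_N)/(2(\tau_Y+\tau_N))$ is exactly half of the breakeven value $(\tau_Y-\tau_N)/(\tau_Y+\tau_N)$ at which $(1-\epsilon)\tau_Y = (1+\epsilon)\tau_N$, so strictly $(1-\epsilon)\tau_Y > (1+\epsilon)\tau_N$ and the algorithm outputs YES. I do not anticipate a serious obstacle: the only conceptual subtlety is that the same \emph{non-negativity} of $\veclambda$ guaranteed by \cref{prop: StrongerHyperplaneSeparation} plays a distinct role in each direction (it lets me substitute any specific $\sigma$ for the max on the YES side, and it makes the coordinate-wise maximizer $\sigma^*$ a legitimate column-symmetric witness on the NO side), and the mild edge case $\tau_N \leq 0$ can be handled up front by replacing $\tau_N$ with $\max(\tau_N,0)$, which is permissible since $B_\veclambda(\Psi) \geq 0$ always.
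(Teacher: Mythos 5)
Your proposal is correct and follows essentially the same route as the paper: you associate to each assignment the induced distribution on $\cF\times[q]^k$, identify $B_{\veclambda}(\Psi)$ with $\langle\veclambda,\vecmu(\cD)\rangle$ for the corresponding distribution (the paper's \cref{lem:bias_comparsion}), place the YES-witness distribution in $\sgyf$ and the argmax-assignment distribution in $\sbnf$ via exactly the same column-symmetric/distinct-variables translation (\cref{lem:optimum_YES,lem:optimum_NO}), and finish with the same threshold algebra for $\epsilon$. One cosmetic remark: the lower bound $B_{\veclambda}(\Psi)\geq\sum_{\ell}\bias_{\veclambda}(\Psi)_{\ell,b^*_\ell}$ does not actually need non-negativity of $\veclambda$ (the max dominates any fixed choice regardless of sign); that hypothesis is really used to equate $B_{\veclambda}(\Psi)$ with the $\ell_{1,\infty}$ norm estimated by the sketch.
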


In the rest of this section, we will prove~\cref{lem:correctness_algorithm}. 
The key to our analysis is a distribution $\cD(\Psi^\vecb) \in \Delta(\cF \times [q]^k)$ that we associate with every instance $\Psi$ and assignment $\vecb\in [q]^{n}$ to the variables of $\Psi$. 
If $\Psi$ is $\gamma$-satisfied by assignment $\vecb$, we prove that $\vecmu(\cD(\Psi^\vecb)) \in K_\gamma^Y(\cF)$. On the other hand, if $\Psi$ is not $\beta$-satisfiable by any assignment, we prove that for every $\vecb$, $\vecmu(\cD(\Psi^\vecb)) \in K_\beta^N(\cF)$. Finally we also show that the bias $B_{\veclambda}(\Psi)$ relates to $\veclambda(\cD(\Psi^\vecb)) \triangleq \langle\vecmu(\cD(\Psi^\vecb), \veclambda \rangle$, where the latter quantity is exactly what needs to be computed (by \cref{prop: StrongerHyperplaneSeparation}) to distinguish the membership of $\vecmu(\cD(\Psi^\vecb))$ in $K_\gamma^Y(\cF)$ versus the membership in $K_\beta^N(\cF)$.

The key step is the definition of these distributions that allows the remaining steps (esp. \cref{lem:optimum_NO}) to be extended, which we present now.

Given an instance $\Psi=(C_1,\ldots,C_m;w_1,\ldots,w_m)$ on $n$ variables with $C_i = (f_i,\vecj(i))$ and an assignment $\vecb \in [q]^{n}$, the distribution $\cD(\Psi^\vecb) \in \Delta(\cF \times [q]^k)$ is sampled as follows: Sample $i \in [m]$ with probability $w_i/W$ where $W=\sum_{i\in [m]} w_i$, and output $(f_i,\vecb\mid_{\vecj(i)})$.

We start by relating the bias $B_{\veclambda(\Psi)}$ to $\cD(\Psi)$.

\begin{lemma}\label{lem:bias_comparsion}
For every vector $\vecb\in[q]^{n}$, we have $\veclambda(\cD(\Psi^\vecb)) = \sum_{\ell=1}^n \bias_{\veclambda}(\Psi)_{\ell,b_\ell} $. Consequently we have $B_{\veclambda}(\Psi) = \sum_{\ell=1}^n \max_{\sigma\in [q]} \bias_{\veclambda}(\Psi)_{\ell,\sigma} = \max_{\vecb\in [q]^{n}} \{ {\veclambda}(\cD(\Psi^\vecb)) \}$.
\end{lemma}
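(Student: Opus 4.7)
The plan is to prove the identity by expanding both sides using the definitions and then observing that they yield the same double sum indexed by $(i,t) \in [m]\times[k]$. Specifically, I would first unpack $\veclambda(\cD(\Psi^\vecb)) = \langle \vecmu(\cD(\Psi^\vecb)), \veclambda\rangle$. From the definition of $\cD(\Psi^\vecb)$, for any $f \in \cF$, $t \in [k]$, $\sigma \in [q]$ we have
\[
\mu_{f,t,\sigma}(\cD(\Psi^\vecb)) = \tfrac{1}{W} \sum_{i \in [m] : f_i = f,\, b_{\vecj(i)_t} = \sigma} w_i,
\]
so summing against $\veclambda$ and swapping the order of summation gives
\[
\veclambda(\cD(\Psi^\vecb)) = \tfrac{1}{W} \sum_{i \in [m]} \sum_{t \in [k]} \lambda_{f_i, t, b_{\vecj(i)_t}} \cdot w_i.
\]

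Next, I would expand $\sum_{\ell=1}^n \bias_{\veclambda}(\Psi)_{\ell, b_\ell}$ directly from \cref{def:bias-matrix}, obtaining
\[
\sum_{\ell=1}^n \bias_{\veclambda}(\Psi)_{\ell, b_\ell} = \tfrac{1}{W}\sum_{\ell=1}^n \sum_{i,t : \vecj(i)_t = \ell} \lambda_{f_i, t, b_\ell} \cdot w_i.
\]
The key observation is that since the coordinates of each $\vecj(i)$ are required to be distinct, each pair $(i,t) \in [m]\times[k]$ appears for exactly one value of $\ell$, namely $\ell = \vecj(i)_t$, at which point $b_\ell = b_{\vecj(i)_t}$. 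Reindexing the double sum by $(i,t)$ therefore produces exactly the same expression as above, establishing the first claim $\veclambda(\cD(\Psi^\vecb)) = \sum_{\ell=1}^n \bias_{\veclambda}(\Psi)_{\ell,b_\ell}$.

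For the consequence, the crucial point is that the right-hand side decouples across the rows of $\bias_{\veclambda}(\Psi)$: the $\ell$-th summand depends only on $b_\ell$, so maximizing over $\vecb \in [q]^n$ can be done coordinatewise. Thus
\[
\max_{\vecb \in [q]^n} \sum_{\ell=1}^n \bias_{\veclambda}(\Psi)_{\ell, b_\ell} \;=\; \sum_{\ell=1}^n \max_{\sigma \in [q]} \bias_{\veclambda}(\Psi)_{\ell,\sigma} \;=\; B_{\veclambda}(\Psi),
\]
with the maximizer being $b_\ell^\star \in \arg\max_{\sigma} \bias_{\veclambda}(\Psi)_{\ell,\sigma}$. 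Combining with the first part yields $B_{\veclambda}(\Psi) = \max_{\vecb} \veclambda(\cD(\Psi^\vecb))$. The argument is essentially bookkeeping; there is no substantial obstacle, though one must be careful to invoke the distinctness of the coordinates of $\vecj(i)$ when reindexing.
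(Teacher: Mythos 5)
Your proof is correct and follows essentially the same route as the paper's: expand $\veclambda(\cD(\Psi^\vecb))$ via the marginals, reindex the resulting double sum by $(i,t)$ to match $\sum_{\ell}\bias_{\veclambda}(\Psi)_{\ell,b_\ell}$, and then maximize coordinatewise over $\vecb$. (One small remark: the reindexing works because each pair $(i,t)$ satisfies $\vecj(i)_t=\ell$ for exactly one $\ell$, so the distinctness of the coordinates of $\vecj(i)$ is not actually needed for this step.)
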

\begin{proof}
We start with the first equality. Fix $b\in [q]^n$. Given $f\in \cF$, $t\in [k]$, and $\sigma\in [q]$, we have
$\mu(\cD(\Psi^\vecb))_{f,t,\sigma} = \frac{1}{W} \sum_{i=1}^m w_i \cdot \mathbbm{1}[f_i=f, b_{j(i)_t}=\sigma]$. Hence,
\begin{align*}
    \veclambda(\cD(\Psi^\vecb)) &= \sum_{f\in \cF,t\in[k],\sigma\in[q]} \mu(\cD(\Psi^\vecb))_{f,t,\sigma} \cdot  \lambda_{f,t,\sigma}\\
    &= \frac{1}{W} \sum_{f\in\cF,t\in[k],\sigma\in[q]} \sum_{i\in [m]}w_i \cdot \mathbbm{1}[f_i=f, b_{j(i)_t}=\sigma] \cdot \lambda_{f,t,\sigma}\\
    &= \frac{1}{W} \sum_{i\in[m],t\in[k],\sigma\in[q]:b_{j(i)_t}=\sigma} w_i \cdot \lambda_{f_i,t,\sigma}\\
    &=  \sum_{\ell=1}^n \frac{1}{W} \sum_{i\in[m],t\in[k]:j(i)_t=l} w_i \cdot \lambda_{f_i,t,b_l}\\
    &= \sum_{\ell=1}^n \bias_{\veclambda}(\Psi)_{\ell,b_\ell} \, . \\
\end{align*}

For the final equality, observe that
\[
B_{\veclambda}(\Psi) = \sum_{\ell=1}^n \max_{\sigma\in [q]} \bias_{\veclambda}(\Psi)_{\ell,\sigma} = \max_{\vecb\in [q]^{n}}\sum_{\ell=1}^n \bias_{\veclambda}(\Psi)_{\ell,b_\ell} = \max_{\vecb\in [q]^{n}} \{ {\veclambda}(\cD(\Psi^\vecb)) \} \, .
\]
\end{proof}
The following lemmas relate $\val_\Psi$ to the properties of $\cD(\Psi^\veca)$.

\begin{lemma}\label{lem:optimum_YES}
For every $\Psi\in \maxF$ and $\vecb\in [q]^{n}$, if $\val_\Psi(\vecb) \geq \gamma$ then $\cD(\Psi^\vecb) \in S_\gamma^Y(\cF)$. 
\end{lemma}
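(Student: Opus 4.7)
The plan is to simply unwind the definitions and verify the inclusion, since this is essentially a definition-chasing verification rather than a substantive argument. The key observation is that the planted assignment $\canon$ was defined precisely so that $\cC(f,\veca)(\canon) = f(\veca)$, and the distribution $\cD(\Psi^\vecb)$ was defined precisely so that the expectation of $f(\veca)$ under it equals $\val_\Psi(\vecb)$.

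First I would compute $\cC(f,\veca)(\canon)$ for any $(f,\veca) \in \cF \times [q]^k$. By the definition of $\canon$ as $\canon_{i,\sigma} = \sigma$, and the definition of $\cC(f,\veca)(\vecb) = f(b_{1,a_1},\ldots,b_{k,a_k})$, we get
\[
\cC(f,\veca)(\canon) \;=\; f(\canon_{1,a_1},\ldots,\canon_{k,a_k}) \;=\; f(a_1,\ldots,a_k) \;=\; f(\veca).
\]

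Next I would compute $\Exp_{(f,\veca)\sim \cD(\Psi^\vecb)}[\cC(f,\veca)(\canon)]$ directly using the sampling procedure defining $\cD(\Psi^\vecb)$: pick $i \in [m]$ with probability $w_i/W$ and output $(f_i, \vecb|_{\vecj(i)})$. Combining with the identity from the previous step,
\[
\Exp_{(f,\veca)\sim \cD(\Psi^\vecb)}\bigl[\cC(f,\veca)(\canon)\bigr] \;=\; \sum_{i=1}^m \frac{w_i}{W}\cdot f_i(\vecb|_{\vecj(i)}) \;=\; \val_\Psi(\vecb) \;\geq\; \gamma,
\]
where the final inequality is the hypothesis of the lemma. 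By the definition of $S_\gamma^Y(\cF)$ in \cref{def:sysn}, this inequality is exactly the condition for $\cD(\Psi^\vecb) \in S_\gamma^Y(\cF)$, completing the proof. There is no real obstacle here; the lemma is essentially a sanity check confirming that the abstract convex-set framework in \cref{sec:main_notions} is set up consistently with the notion of satisfying an instance.
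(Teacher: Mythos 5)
Your proof is correct and follows exactly the same route as the paper's: observe that $\cC(f,\veca)(\canon)=f(\veca)$, so the expectation under $\cD(\Psi^\vecb)$ equals $\tfrac1W\sum_i w_i\, f_i(\vecb|_{\vecj(i)})=\val_\Psi(\vecb)\ge\gamma$, which is precisely the defining condition of $S_\gamma^Y(\cF)$. The paper's proof is just a terser version of the same definition-chasing computation.
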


\begin{proof}
Follows from the fact that
\[
\Exp_{(f,\veca)\sim \cD(\Psi^\vecb)}[C(f,a)(\canon)] = \frac{1}{W}\sum_{i=1}^m w_i \cdot f_i(b\mid_{j(i)}) = \val_\Psi(\vecb) \ge \gamma,
\]
implying $\cD(\Psi^\vecb) \in S_\gamma^Y(\cF)$.
\end{proof}

\begin{lemma}\label{lem:optimum_NO}
For every $\Psi\in \maxF$, if $\val_{\Psi} \leq \beta$, then for all $\vecb\in [q]^{n}$, we have $\cD(\Psi^\vecb) \in S_\beta^N(\cF)$.
\end{lemma}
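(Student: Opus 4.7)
The plan is to reduce the claim to the observation that any random assignment has expected value at most $\val_\Psi$. Specifically, fix $\vecb \in [q]^n$ and an arbitrary family of distributions $(\cP_\sigma)_{\sigma\in[q]}$ with $\cP_\sigma \in \Delta([q])$. I will construct a random assignment $\vecx \in [q]^n$ to the variables of $\Psi$ by sampling $x_j \sim \cP_{b_j}$ independently for each $j \in [n]$, and show that the expected value of $\vecx$ equals exactly the quantity appearing in \cref{def:sysn} for $\cD(\Psi^\vecb)$ and the distributions $(\cP_\sigma)_\sigma$. Since $\Exp_\vecx[\val_\Psi(\vecx)] \le \val_\Psi \le \beta$ trivially, this will give the lemma.

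The main calculation is to unpack both sides and verify they match. On the one hand,
\[
\Exp_\vecx[\val_\Psi(\vecx)] = \frac{1}{W}\sum_{i=1}^m w_i \cdot \Exp_{\vecx}\bigl[f_i(x_{\vecj(i)_1},\ldots,x_{\vecj(i)_k})\bigr].
\]
Because the coordinates of $\vecj(i)$ are distinct, the variables $x_{\vecj(i)_1},\ldots,x_{\vecj(i)_k}$ are independent with $x_{\vecj(i)_\ell} \sim \cP_{b_{\vecj(i)_\ell}}$. On the other hand, for the distribution $\cD(\Psi^\vecb)$, which samples $(f_i, \vecb|_{\vecj(i)})$ with probability $w_i/W$,
\[
\Exp_{(f,\veca)\sim \cD(\Psi^\vecb)}\!\bigl[\Exp_{\vecy,\, y_{i,\sigma}\sim\cP_\sigma}[\cC(f,\veca)(\vecy)]\bigr] = \frac{1}{W}\sum_{i=1}^m w_i \cdot \Exp_{z_1\sim \cP_{b_{\vecj(i)_1}},\ldots, z_k \sim \cP_{b_{\vecj(i)_k}}}\!\bigl[f_i(z_1,\ldots,z_k)\bigr],
\]
where I used that $\cC(f,\veca)(\vecy) = f(y_{1,a_1},\ldots,y_{k,a_k})$ depends only on the $k$ independent variables $y_{i,a_i}$, and that for $(f,\veca) \sim \cD(\Psi^\vecb)$ corresponding to constraint $i$, we have $a_\ell = b_{\vecj(i)_\ell}$. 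The two expressions are identical, term by term.

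Putting this together, for every choice of $(\cP_\sigma)_{\sigma\in[q]}$ we get
\[
\Exp_{(f,\veca)\sim \cD(\Psi^\vecb)}\!\bigl[\Exp_{\vecy,\, y_{i,\sigma}\sim\cP_\sigma}[\cC(f,\veca)(\vecy)]\bigr] = \Exp_\vecx[\val_\Psi(\vecx)] \le \val_\Psi \le \beta,
\]
which by \cref{def:sysn} is precisely the condition $\cD(\Psi^\vecb) \in S_\beta^N(\cF)$. The only subtlety, and the single place where a structural property of $\Psi$ is used, is the independence in the first display, which relies on the distinctness of the coordinates of each $\vecj(i)$; everything else is a direct unpacking of definitions.
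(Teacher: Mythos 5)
Your proof is correct and is essentially the paper's argument: the paper merely phrases it as a contrapositive, but the core is the same chain of equalities identifying $\Exp_{(f,\veca)\sim\cD(\Psi^\vecb)}\bigl[\Exp_{\vecy, y_{i,\sigma}\sim\cP_\sigma}[\cC(f,\veca)(\vecy)]\bigr]$ with $\Exp_{\vecx, x_j\sim\cP_{b_j}}[\val_\Psi(\vecx)] \le \val_\Psi \le \beta$. You also correctly isolate the one subtle point the paper itself highlights, namely that this identification needs the distinctness of the coordinates of each $\vecj(i)$.
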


\begin{proof}
We prove the contrapositive. We assume that $\exists \vecb\in [q]^n$ such that $\cD(\Psi^\vecb) \notin S_\beta^N(\cF)$ and show that this implies $\val_{\Psi} > \beta$. Then there exists $(P_\sigma\in \Delta([q]))_{\sigma \in [q]}$ satisfying the following inequality $\Exp_{(f,a)\sim \cD(\Psi^\vecb)} \left[\Exp_{\vecc, c_{i,\sigma} \sim \cP_\sigma} [\cC(f,\veca)(\vecc)]\right] > \beta$.

We thus have
\begin{eqnarray*}
    \beta & < & \Exp_{(f,a)\sim \cD(\Psi^\vecb)} \left[\Exp_{\vecc, c_{i,\sigma} \sim \cP_\sigma} [\cC(f,\veca)(\vecc)]\right] \\
    &= &\Exp_{\vecc, c_{i,\sigma} \sim \cP_\sigma} \left[\Exp_{(f,a)\sim \cD(\Psi^\vecb)} [\cC(f,\veca)(\vecc)]\right]\\
    &= &\Exp_{\vecc, c_{i,\sigma} \sim \cP_\sigma} \left[\frac{1}{W} \sum_{i=1}^m w_i \cdot f_i((c_{t,b_{j(i)_t}})_{t\in[k]})\right]\\
    &= &\frac{1}{W}\sum_{i=1}^m w_i \cdot \Exp_{\vecc, c_{i,\sigma} \sim \cP_\sigma} \left[ f_i((c_{t,b_{j(i)_t}})_{t\in[k]})\right]\\
    &= &\frac{1}{W}\sum_{i=1}^m w_i \cdot \Exp_{\vecx, x_{\ell} \sim \cP_{b_{\ell}}}\left[ f_i((x_{j(i)_t})_{t\in[k]})\right]\\ 
    &= &\Exp_{\vecx, x_{\ell} \sim \cP_{b_{\ell}}} \left[\frac{1}{W}\sum_{i=1}^m w_i \cdot f_i((x_{j(i)_t})_{t\in[k]})\right]\\
    &= & \Exp_{\vecx, x_{\ell} \sim \cP_{b_{\ell}}}\left[\val_{\Psi}(\vecx)\right]\\
    & \leq & \max_{\vecx \in [q]^n} \val_\Psi(\vecx) \\
    & = & \val_\Psi\, 
\end{eqnarray*}
which contradicts the assumption that $\val_\Psi\le \beta$. This concludes the proof of the claim and hence the lemma. 
\end{proof}

The key step above is the one asserting  $\frac{1}{W}\sum_{i=1}^m w_i \cdot \Exp_{\vecc, c_{i,\sigma} \sim \cP_\sigma} \left[ f_i((c_{t,b_{j(i)_t}})_{t\in[k]})\right] = \frac{1}{W}\sum_{i=1}^m w_i \cdot \Exp_{\vecx, x_{\ell} \sim \cP_{b_{\ell}}}\left[ f_i((x_{j(i)_t})_{t\in[k]})\right]$ which relies crucially on column symmetry of the distributions used in the definition of $\sbnf$ in \cref{def:sysn}.
Without this restriction, or even more stringent ones, this step of the rounding would fail.
And the reason we can't use a more stringent restriction will become clear in the proof of \cref{thm:main-negative} (and is specifically used in the proof of \cref{lem:csp value}).
We also note that this key  equality relies on the assumption that the variables in a single constraint are {\em distinct}. In particular the left hand side assumes $c_{i,\sigma}$s are drawn independently whereas the right side allows this only for the distinct variables $x_\ell$ in a constraint. 
\section{Sketching and Streaming Space Lower Bounds for \texorpdfstring{$\maxF$}{Max-CSP(F)}}\label{sec:lower-bound}

In this section, we prove our two lower bound results, modulo a communication complexity lower bound which is proved in \cref{sec:kpart,sec:polar,sec:spl}. We start by restating the results to be proved. Recall (from \cref{def:ow-and-pow}) the notion of a padded one-wise pair of distributions:  $(\cD_1,\cD_2)$ is a padded one-wise pair if there exist $\cD_0,\cD'_1,\cD'_2$ and $\tau \in [0,1]$ such that for every $i \in \{1,2\}, \cD'_i$ is one-wise independent, and $\cD_i = \tau \cD_0 + (1-\tau) \cD'_i$.

The first theorem we prove is the lower bound in the streaming setting for padded one-wise pairs of distributions. We restate the theorem below for convenience.

\restatethmmainnegative*

We also restate the lower bound against sketching algorithms from \cref{thm:main-detailed-dynamic} as a separate theorem below. 

\begin{theorem}[Lower bounds against sketching algorithms]\label{thm:main-negative-dynamic}
For every $q,k \in \N$, every family of functions $\cF \subseteq \{f:[q]^k \to \{0,1\}\}$ and for every $0 \leq \beta<\gamma \leq 1$, if $\kgyf \cap \kbnf \neq \emptyset$, then for every $\epsilon>0$, any sketching algorithm for the $(\gamma-\eps,\beta+\eps)$-$\maxF$ problem requires $\Omega(\sqrt{n})$ space.
Furthermore, if $\gamma = 1$, then any sketching algorithm for $(1,\beta+\epsilon)$-$\maxF$ requires $\Omega(\sqrt{n})$ space.
\end{theorem}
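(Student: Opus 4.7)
The plan is to reduce from the Signal Detection (SD) communication problem. Since $\kgyf\cap\kbnf\ne\emptyset$, first pick witness distributions $\cD_Y\in\sgyf$ and $\cD_N\in\sbnf$ with $\vecmu(\cD_Y)=\vecmu(\cD_N)$; the main communication-complexity result of the paper (\cref{thm:communication lb matching moments}, proved in \cref{sec:kpart,sec:polar,sec:spl}) asserts that two-party one-way $(\cD_Y,\cD_N)$-SD requires $\Omega(\sqrt{n})$ communication whenever the marginals match, and I would invoke this as a black box. The heart of the argument is then a reduction that turns any space-$s$ linear sketching algorithm for $(\gamma-\eps,\beta+\eps)$-$\maxF$ into a $T$-player simultaneous communication protocol of total cost $O(Ts)$ for a $T$-fold distinguishing task, where $T=T(q,k,\eps)$ is a large constant.

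The hard CSP is built on $N=nq$ variables organized as an $n\times q$ matrix $\{x_{\ell,\sigma}\}_{\ell\in[n],\sigma\in[q]}$. Sample $T$ independent $k$-uniform matchings on $[n]$; each block contributes $\alpha n$ constraints, where for every edge $\vecj$ we sample $(f,\veca)\sim\cD_Y$ (YES) or $(f,\veca)\sim\cD_N$ (NO) and place $f$ on the variables $(x_{\vecj(1),a_1},\dots,x_{\vecj(k),a_k})$. The planted assignment is the canonical $\canon$ with $x_{\ell,\sigma}=\sigma$, under which each constraint evaluates to $f(\veca)$. In the YES case, $\Exp_{(f,\veca)\sim\cD_Y}[f(\veca)]=\Exp[\cC(f,\veca)(\canon)]\geq\gamma$ by $\cD_Y\in\sgyf$, so Hoeffding yields $\val_\Psi(\canon)\geq\gamma-\eps$ with high probability.

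The NO-case value analysis is the crux. Fix $\vecy\in[q]^N$ viewed as an $n\times q$ matrix and let $\cP_\sigma\in\Delta([q])$ be the empirical distribution of the $\sigma$-th column of $\vecy$. For a uniformly random edge $\vecj$ and sample $(f,\veca)\sim\cD_N$, the value $y_{\vecj(t),a_t}$ fed into the $t$-th slot of $f$ is distributed as $\cP_{a_t}$, independently across $t$ up to $O(k^2/n)$ sampling-without-replacement error. The column-symmetric inequality defining $\sbnf$ (see~\cref{def:sysn}) then bounds the per-constraint satisfaction probability by $\beta$, and a Chernoff bound followed by a union bound over the $q^N$ assignments (feasible once $T\gg q\log q/(\alpha\eps^2)$) gives $\val_\Psi(\vecy)\leq\beta+\eps$ for all $\vecy$ simultaneously. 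The column-symmetric (rather than arbitrary product or fully identical) formulation of $\sbnf$ in~\cref{def:sysn} is exactly what makes this work: arbitrary product distributions would be too restrictive for the algorithmic direction (cf.~\cref{lem:optimum_NO}), while identical-column distributions would be too weak for this union bound.

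A space-$s$ linear sketch $(\COMP,\COMB)$ for $\maxF$ yields a $T$-player simultaneous protocol for the $T$-fold distinguishing problem with total communication $O(Ts)$: each player sketches their block via $\COMP$, and the referee applies $\COMB$ and the output function. By a standard hybrid argument, this $T$-fold simultaneous problem inherits the $\Omega(\sqrt{n})$ lower bound from two-party one-way SD, because any protocol distinguishing all-YES from all-NO inputs must distinguish some pair of adjacent hybrids differing in a single block, yielding $Ts=\Omega(\sqrt{n})$ and hence $s=\Omega(\sqrt{n})$ for constant $T$. The $\gamma=1$ sharpening follows from the observation that any $\cD_Y\in S^Y_1$ is forced to be supported on $(f,\veca)$ with $f(\veca)=1$, so YES instances are exactly satisfied by $\canon$ and no $\eps$ slack is needed on the YES side. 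The main anticipated obstacles are the quantitative NO-case union bound (making $T$ large enough in terms of $q,\eps$ but still constant in $n$) and verifying that the hybrid simulation faithfully embeds two-party SD while generating the unused blocks from shared public randomness.
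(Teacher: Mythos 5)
There is a genuine gap, and it sits in your hard-instance construction rather than in the sketch-to-protocol simulation or the value analysis. You place each sampled constraint $(f,\veca)$ on the lifted variables $(x_{\vecj(1),a_1},\dots,x_{\vecj(k),a_k})$, so the column pattern $\veca$ --- that is, the sample from $\cD_Y$ or $\cD_N$ --- is written in the clear into the instance (variable identities are part of the input). Since $\sgyf\cap\sbnf=\emptyset$ for $\gamma>\beta$, we have $\cD_Y\neq\cD_N$ even though $\vecmu(\cD_Y)=\vecmu(\cD_N)$, so some pattern $(f,\veca)$ has different probabilities under the two distributions; an $O(\log n)$-space counter of constraints carrying that pattern already distinguishes your \yes\ and \no\ distributions with overwhelming probability, so no $\Omega(\sqrt n)$ bound can hold for them. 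Correspondingly, the final embedding step of your plan cannot be executed: in $(\cD_Y,\cD_N)$-SD (\cref{def:sd}) Bob receives only the hypermatching and the indicator bits $\tilde z_i=\One[\vecx^*|_{\vecj(i)}=\widetilde{\vecb}(i)]$, never the masks $\widetilde{\vecb}(i)$, so the communication players cannot even build a block of your instance from their SD inputs; and in your hybrid decomposition a single block is just i.i.d.\ samples of $(f,\veca)$ in the clear, a trivially easy distinguishing task rather than an SD-hard one. Notice also that the matching-marginals hypothesis --- the entire hardness criterion --- is never actually exploited anywhere in your construction, which is a symptom of the problem.

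The paper's route keeps the SD structure intact: instances live on the original $n$ variables, and player $t$ converts $(M^{(t)},\vecz^{(t)})$ into constraints by adding $f_i(\vecx|_{\vecj(i)})$ exactly when $\tilde z_i=1$ and discarding the rest (\cref{def:simsd}, \cref{sssec:csp value}), so the algorithm sees only which hyperedges ``fired'' and the \yes/\no\ information is encoded solely in the correlation between the kept edges and the hidden $\vecx^*$. The planted assignment is $\vecx^*$, not the public $\canon$: in the \yes\ case a kept constraint is satisfied by $\vecx^*$ iff $f(\vecb(i))=1$, which occurs at rate at least $\gamma$ since $\cD_Y\in\sgyf$, while in the \no\ case one sets $\cP_\sigma$ to be the empirical distribution of a candidate assignment $\vecnu$ restricted to coordinates where $\vecx^*=\sigma$ and applies the column-symmetric condition of \cref{def:sysn} (this is \cref{lem:csp value}, including the $\gamma=1$ refinement you anticipated). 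Because the $T$ blocks are correlated only through $\vecx^*$, the hybrid argument (\cref{lem:reduce SD to simul-SD}) lets Alice, who holds $\vecx^*$, locally simulate the other $T-1$ players, and the adjacent-hybrid distinguishing task really is two-party SD, which is where \cref{thm:communication lb matching moments} and the matching marginals finally enter. Your column-symmetric \no-case rounding idea is the right one, but it must be applied to this construction; as stated, your reduction does not establish the theorem.
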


To prove both theorems, we introduce a new communication game we call the  \textit{Signal Detection (SD)} in \cref{sec:sd}. In \cref{thm:communication lb matching moments} we state a lower bound on the communication complexity of this problem. This lower bound is established in \cref{sec:kpart,sec:polar,sec:spl}. We then use this lower bound to prove \cref{thm:main-negative} in \cref{sec:insertionLB} and to prove \cref{thm:main-negative-dynamic} in \cref{sec:lb dynamic}.

{
\begin{figure}[ht]
    \centering
    \includegraphics[width=14cm]{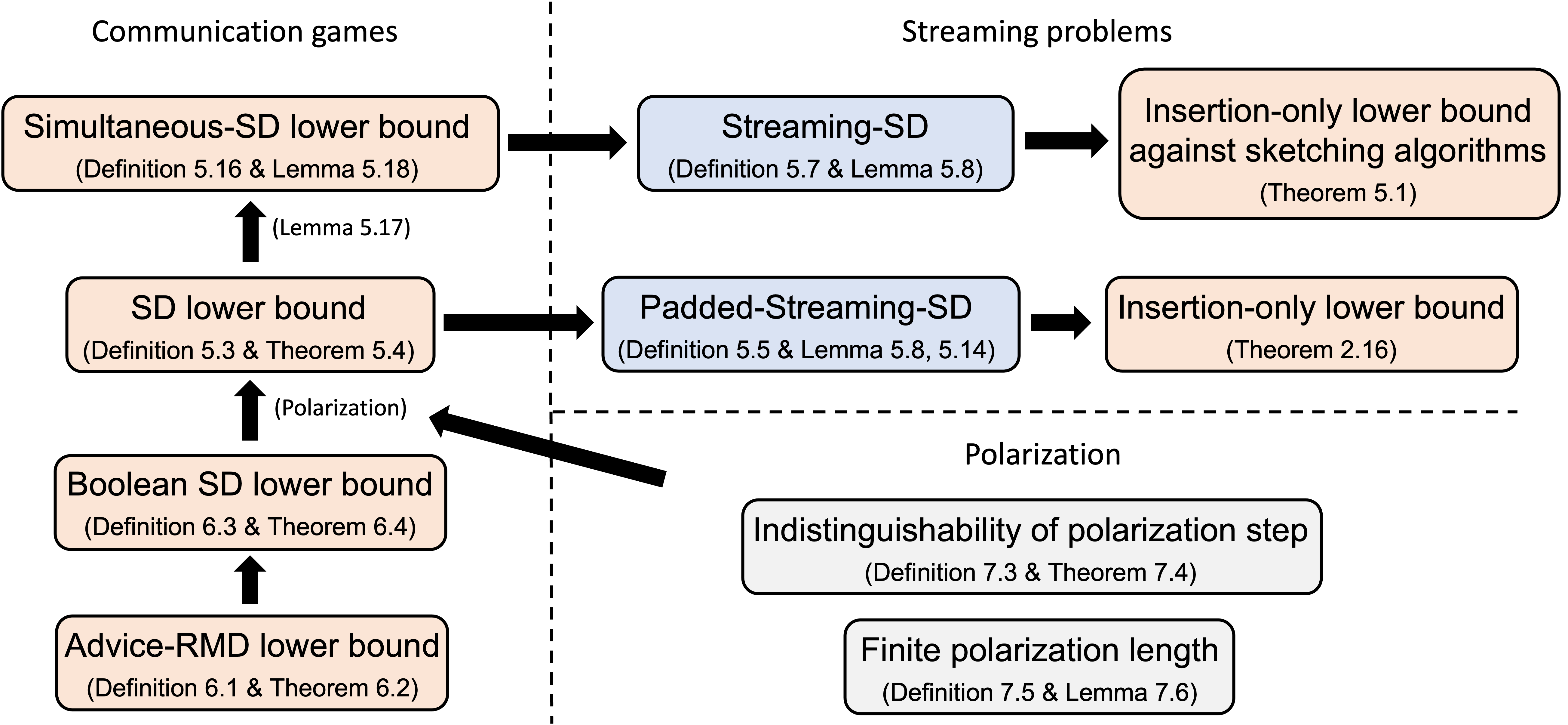}
    \caption{The roadmap of our lower bounds. The top two rows describe the results of this section, while the remaining rows describe notions and results from \cref{sec:kpart,sec:polar,sec:spl}.}
    \label{fig:outline}
\end{figure}

}

\subsection{The Signal Detection Problem and Results}\label{sec:sd}

In this section we introduce our communication game and state the lower bound for this game. 
We start with the definition of a general one-way communication game.

\begin{definition}[One-way communication game]
Given two distributions $\cY$ and $\cN$, an instance of the two-player one-way communication game is a pair $(X,Y)$ either drawn from $\cY$ or from $\cN$. Two computationally unbounded parties, Alice and Bob, receive $X$ and $Y$, respectively. A protocol $\Pi = (\Pi_A,\Pi_B)$ is a pair of functions with $\Pi_A(X) \in \{0,1\}^c$ denoting Alice's message to Bob, and $\Pi_B(\Pi_A(X),Y)\in \{\yes,\no\}$ denoting the protocol's output. We denote this output by $\Pi(X,Y)$. The complexity of this protocol is the parameter $c$ specifying the maximum length of Alice's message $\Pi_A(X)$. The advantage of the protocol $\Pi$ is the quantity 
\[
\left| \Pr_{(X,Y)\sim\cY} [ \Pi(X,Y) = \yes] - \Pr_{(X,Y)\sim \cN} [\Pi(X,Y) = \yes] \right| \,.
\]
\end{definition}

We now define the specific game we are interested in.

\begin{definition}[Signal Detection (SD) Problem]\label{def:sd}
Let $n,k,q\in\mathbb{N}, \alpha\in(0,1)$, where $k$, $q$ and $\alpha$ are constants with respect to $n$, and $\alpha n$ is an integer less than $n/k$. Let $\cF$ be a finite set.
For a pair $\cD_Y$ and $\cD_N$ of distributions over $\cF \times [q]^k$, we consider the following two-player one-way communication problem $(\cF,\cD_Y,\cD_N)$-\textsf{SD}.
\begin{itemize}
\item The generator samples the following objects:
\begin{enumerate}
    \item $\vecx^* \sim \textsf{Unif}([q]^n)$.
    \item $M \in \{0,1\}^{k\alpha n \times n}$ is chosen uniformly among all matrices with exactly one $1$ in each row and at most one $1$ in each column. We let $M = (M_1,\ldots,M_{\alpha n})$ where $M_i \in \{0,1\}^{k \times n}$ is the $i$-th block of rows of $M$, where each block has exactly $k$ rows.
    \item $\vecb = (\vecb(1),\ldots,\vecb(\alpha n))$ is sampled from one of the following distributions: 
\begin{itemize}
    \item (\yes) each $\vecb(i) = (f_i,\widetilde{\vecb}(i)) \in \cF \times [q]^k$ is sampled according to $\cD_Y$.
    \item (\no) each $\vecb(i) = (f_i,\widetilde{\vecb}(i)) \in \cF \times [q]^k$ is sampled according to $\cD_N$.
\end{itemize}
   \item $\vecz = (\vecz(1),\ldots,\vecz(\alpha n))$
   is determined from $M$, $\vecx^*$ and $\vecb= (\vecb(1),\dots,\vecb(\alpha n))$ as follows. Recall that $\vecb(i)=(f_i,\widetilde{\vecb}(i))$.
   We let $\vecz(i) = (f_i,\widetilde{z}_i) \in \cF \times \{0,1\}$ where $\widetilde{z}_i = 1$ iff $M_i \vecx^* = \widetilde{\vecb}(i)$.
\end{enumerate}
\item Alice receives $\vecx^*$ as input.
\item Bob receives $M$ and $\vecz$ as input.
\end{itemize}
\end{definition}

In the special case when the set $\cF$ contains just one element, $|\cF|=1$, we call the corresponding communication problem  $(\cD_Y,\cD_N)$-\textsf{SD}.

We note that our communication game is slightly different from those in previous works: Specifically the problem studied in \cite{GKKRW,KKS} is called the \textit{Boolean Hidden Matching (BHM)} problem from~\cite{GKKRW} and the works \cite{KKSV17,KK19} study a variant called the \textit{Implicit Hidden Partition} problem. While these problems are similar, they are less expressive than our formulation, and specifically do not seem to capture all $\maxf$ problems. 

There are two main differences between the previous settings and our setting. The first difference is the way to encode the matching matrix $M$. In all the previous works, each edge (or hyperedge) is encoded by a single row in $M$ where the corresponding columns are assigned to $1$, so that $m = \alpha n$. However, it turns out that this encoding hides too much information and hence we do not know how to reduce the problem to general \textsf{Max-CSP}. We unfold the encoding by using $k$ rows to encode a single $k$-hyperedge (leading to the setting of $m = k\alpha n$ in our case). The second difference is that we allow the masking vector $\vecb$ to be sampled from a more general distribution. This is also for the purpose of establishing a reduction to general \textsf{Max-CSP}. Due to the above two differences, it is not clear how to derive communication lower bounds for general $\cD_Y$ and $\cD_N$ by reduction from the previous works.

\begin{restatable}[Communication lower bound for $(\cF, \cD_Y,\cD_N)$-\textsf{SD}]{theorem}{reducermdtosd}\label{thm:communication lb matching moments}
For every $k,q$, every finite set $\cF$, every pair of distributions $\cD_Y,\cD_N \in \Delta(\cF\times[q]^k)$ with $\vecmu(\cD_Y) = \vecmu(\cD_N)$ there exists $\alpha_0 > 0$ such that for every $0<\alpha \leq \alpha_0$ and $\delta > 0$ there exists $\tau > 0$ such that the following holds: Every protocol for $(\cF, \cD_Y,\cD_N)$-SD achieving advantage $\delta$ on instances of length $n$ requires $\tau\sqrt{n}$ bits of communication.
\end{restatable}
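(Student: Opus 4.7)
My plan is to establish the $\Omega(\sqrt{n})$ communication lower bound in two phases: first prove a base case for a Boolean communication problem via Fourier analysis, then reduce the general $q$-ary matching-marginals problem to this base case through a discrete ``polarization'' argument. At a high level, this mirrors the outline described in Section~\ref{ssec:techniques}/\ref{sec:lb-detail-insert}: the analytic ingredients (hypercontractivity, Parseval) are Boolean, whereas the reduction from general $\cD_Y,\cD_N$ is purely combinatorial.

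First, I would handle the Boolean base case. Concretely, I would introduce the \emph{Advice-RMD} problem, a two-player one-way communication problem in which $q=2$, both $\cD_Y$ and $\cD_N$ are distributions on $\{-1,1\}^k$ with \emph{uniform} marginals, Alice additionally receives as advice a partition of $[n]$ into $k$ parts promised to be consistent with $M$, and Bob receives $\vecz$ of the form $\vecx^*\!\mid_{\vecj(i)} \oplus \widetilde\vecb(i)$. For this restricted problem I would adapt the Fourier-analytic lower bound going back to \cite{GKKRW,KKS}: expand the transcript-to-input posterior in the Fourier basis over $\{-1,1\}^n$, use the fact that $\cD_Y$ and $\cD_N$ have identical low-degree Fourier coefficients (this is where uniform marginals enter), and apply the KKL-style hypercontractive bound \cref{lem:kkl} together with Parseval (\cref{prop:parseval}) to show that distinguishing $\vecz$ in the two cases via any $o(\sqrt n)$-bit message yields subconstant total variation distance. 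This gives the $\Omega(\sqrt{n})$ bound when $\cD_Y,\cD_N$ are Boolean and uniform-marginal.

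Second, I would execute the polarization reduction. Given an arbitrary matching-marginal pair $(\cD_Y,\cD_N)$ over $\cF\times[q]^k$, I would produce a finite chain of distributions $\cD_Y = \cD_0,\cD_1,\ldots,\cD_L = \cD_N$ with the following properties: (i) every $\cD_\ell$ has the same marginals $\vecmu(\cD_Y)=\vecmu(\cD_N)$; (ii) for each $\ell$, $\cD_\ell - \cD_{\ell+1}$ is supported on $\{f\}\times\{a,b\}^k$ for some single $f\in\cF$ and $a,b\in[q]$, and the positive and negative parts, after normalization, are two Boolean distributions with uniform marginals; and (iii) $L$ is bounded by a function of $q,k,|\cF|$ only. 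Property (iii) is the finite-polarization lemma that I would prove by a dimension/linear-algebra argument: each elementary polarization step reduces a certain invariant (essentially the support size weighted by moment defects) by a controlled amount. For each single step, I would reduce the $(\cD_\ell,\cD_{\ell+1})$-SD problem to the Boolean Advice-RMD problem on the two-letter alphabet $\{a,b\}$, using the coordinates outside $\{a,b\}^k$ as ``padding'' shared between the two distributions (they cancel in Bob's posterior), and treating the product structure $\{a,b\}^k$ as $\{-1,1\}^k$. Step~1 then gives an $\Omega(\sqrt n)$ lower bound for each $(\cD_\ell,\cD_{\ell+1})$-SD.

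Finally, I would combine the per-step bounds via a hybrid/triangle-inequality argument on transcript distributions. If some protocol $\Pi$ of complexity $c$ achieved advantage $\delta$ between $\cD_Y$ and $\cD_N$, then by triangle inequality (\cref{prop:tvd properties}) at least one consecutive pair $(\cD_\ell,\cD_{\ell+1})$ would be distinguished with advantage $\delta/L$, contradicting Step~2 whenever $c = o(\sqrt n)$; choosing $\tau$ in terms of $\delta/L$ and the constants from the Boolean Fourier bound completes the proof. I expect the main obstacle to be Step~2, specifically (a) proving finiteness of the polarization chain with a bound depending only on $q,k,|\cF|$ (one needs a clean invariant that decreases at each step, since the distributions live in a continuous space), and (b) engineering the reduction from a single polarization step to Advice-RMD so that the non-$\{a,b\}^k$ mass of the distribution is absorbed without distorting Bob's posterior in either direction --- a step that requires the advice structure precisely because a plain RMD reduction would lose the product structure that makes Fourier analysis on $\{-1,1\}^k$ applicable.
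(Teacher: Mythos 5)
Your high-level architecture is the same as the paper's: a Boolean Advice-RMD lower bound via Fourier analysis and hypercontractivity (\cref{thm:rmd}, transferred to Advice-SD in \cref{thm:kpart}), a finite chain of marginal-preserving ``polarization'' steps between $\cD_Y$ and $\cD_N$, a per-step reduction back to the Boolean advice problem, and a hybrid/triangle-inequality combination. But two points in your Step~2 are genuine gaps. First, requiring each difference $\cD_\ell-\cD_{\ell+1}$ to be supported on $\{f\}\times\{a,b\}^k$ for a \emph{single} pair $a,b\in[q]$ is impossible in general. Already for $k=2$, $q=3$: signed measures on $[3]^2$ with zero marginals form a $4$-dimensional space, while differences confined to some $\{a,b\}^2$ span only the three ``checkerboards'' $e_{aa}+e_{bb}-e_{ab}-e_{ba}$; the difference between the uniform distributions on $\{(1,2),(2,3),(3,1)\}$ and on $\{(1,3),(2,1),(3,2)\}$ (both with uniform marginals) is not in their span, since matching the zero diagonal forces all three coefficients to vanish. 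The correct elementary step, as in \cref{def:polarization operator}, is indexed by an incomparable pair $\vecu,\vecv\in[q]^k$, so the two-letter set $\{u_i,v_i\}$ may differ across coordinates; the Boolean embedding then identifies $\{u_i,v_i\}$ with $\{0,1\}$ coordinatewise, after restricting to the coordinates where $u_i\neq v_i$ and padding the rest — exactly what \cref{lem:SD boolean to non-boolean} does.

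Second, the uniform bound on the chain length $L$ (your (iii)) is the crux, and the ``clean decreasing invariant'' you propose is not available off the shelf: a polarization step removes one of $\vecu,\vecv$ from the support but may add both $\vecu\wedge\vecv$ and $\vecu\vee\vecv$, so support size can grow, and while natural potentials such as $\Exp_{\vecb\sim\cD}[(\sum_i b_i)^2]$ increase monotonically, the mass moved per step is $\min\{\cD(\vecu),\cD(\vecv)\}$ and can be arbitrarily small, so monotonicity alone bounds nothing. The paper instead fixes a canonical endpoint — the unique distribution with the prescribed marginals supported on a chain (\cref{prop:canonical}) — and runs a recursive scheduling procedure (\cref{alg:polarization}) in which the chosen pairs increase lexicographically, yielding the recursion $\tilde{N}(Q)\le (Q^2+3)\tilde{N}(Q-1)$ with $Q=\sum_i q_i$ (\cref{lem:polarization finite}); without such an argument your $\delta/L$ hybrid has no finite $L$ to divide by. Two smaller repairs you would also need: the multi-function case is handled by a separate hybrid over $\cF$ (\cref{lem:reduce-to-single-function}), and the common mass $\cD_0$ shared by adjacent hybrids does not simply ``cancel in Bob's posterior'' — it must be simulated explicitly by a reduction that plants the $\cD_0$-constraints on fresh variables using shared randomness, which is the content of \cref{lem:indis of shift}.
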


\cref{sec:kpart,sec:spl,sec:polar} are devoted to proving 
\cref{thm:communication lb matching moments}. The specific proof can be found in \cref{ssec:proof-of-sd}. 
In the rest of this section we use this theorem to prove \cref{thm:main-negative,thm:main-negative-dynamic}.

\subsection{The streaming lower bound}\label{sec:insertionLB}

The hardness of \textsf{SD} suggests a natural path for hardness of $\maxF$ problems in the streaming setting. Such a reduction would take two distributions $\cD_Y \in S_\gamma^Y$ and $\cD_N \in S_\beta^N$ with matching marginals, construct  distributions $\cY$ and $\cN$ of \textsf{RMD}, and then interpret these distributions (in a natural way) as distributions over instances of $\maxf$ that are indistinguishable to small space algorithms. While the exact details of this ``interpretation'' need to be spelled out, every step in this path can be achieved. Unfortunately this does not mean any hardness for $\maxf$ since the CSPs generated by this reduction would consist of instances that have at most one constraint per variable, and such instances are easy to solve! 

To go from the instance suggested by the  \textsf{SD} problem to hard CSP instances, we instead pick $T$ samples (somewhat) independently from the distributions $\cY$ and $\cN$ suggested by the \textsf{SD} problem and concatenate these. With an appropriate implementation of this notion (see \cref{def:ssd}) it turns out it is possible to use the membership of the underlying distributions in $S^Y_\gamma$ and $S^N_\beta$ to argue that the resulting instances $\Psi$ do (almost always) have $\val_\Psi \geq \gamma$ or $\val_\Psi \leq \beta$. (We prove this after appropriate definitions in \cref{lem:csp value}.) But now one needs to connect the streaming problem generated from the $T$-fold sampled version to the SD problem.

To this end we formalize the $T$-fold streaming problem, which we call $(\cD_Y,\cD_N,T)$\textsf{-streaming-SD} problem, in \cref{def:ssd}. Unfortunately, we are not able to reduce the $(\cD_Y, \cD_N)$\textsf{-SD} problem to $(\cD_Y,\cD_N,T)$\textsf{-streaming-SD} problem for all 
$\cD_Y$ and $\cD_N$.\footnote{Roughly, this problem arises from the fact that the $T$ samples $(\vecx^{*}(t),M(t),\vecz(t))$ are not sampled independently from $\cY$ (or $\cN$ for $t \in [T]$). Instead they are sampled independently conditioned on $\vecx^*(1) = \cdots = \vecx^*(T)$. This hidden correlation in {\em both} the \yes\ and the \no\ cases turns out to be a serious problem.} But in the setting where $\cD_Y$ and $\cD_N$ have uniform marginals then we are able to effect the reduction and thus show that the streaming problem requires large space. This is a special case of \cref{lem:reduce to streaming,cor:space lb 1 wise} which we discuss next.

We are able to extend our reduction from \textsf{SD} to \textsf{streaming-SD} slightly beyond the uniform marginal case, to the case where $\cD_Y$ and $\cD_N$ form a padded one-wise pair, but both the streaming problem and the analysis of the resulting CSP value need to be altered to deal with this case, as elaborated next. Let $\tau \in [0,1]$ and $\cD_0, \cD'_Y, \cD'_N$ be such that for $i \in \{Y,N\}$ we have $\cD_i = \tau \cD_0+ (1-\tau)\cD'_i$ and $\cD'_i$ has uniform marginals. Our padded streaming problem, denoted $(\cD'_Y,\cD'_N,T,\cD_0,\tau)$\textsf{-padded-streaming-SD} problem, includes an appropriately large number of constraints generated according to $\cD_0$, followed by $T$ samples chosen according to the $(\cD'_Y,\cD'_N,T)$\textsf{-streaming-SD} problem. See \cref{def:ssd} for a formal definition. In \cref{lem:csp value} we show that the CSP value of the resulting streaming problem inherits the properties of $\cD_Y$ and $\cD_N$ (which is not as immediate for \textsf{padded-streaming-SD} as for \textsf{streaming-SD}). We then show effectively that $(\cD'_Y,\cD'_N)$-\textsf{SD} reduces to $(\cD'_Y,\cD'_N,T,\cD_0,\tau)$\textsf{-padded-streaming-SD}. See \cref{lem:reduce to streaming,cor:space lb 1 wise}. Putting these together leads to a proof of \cref{thm:main-negative}.

\subsubsection{The (Padded) Streaming SD Problem}

\begin{definition}[$(\cF,\cD_Y,\cD_N,T)$-\textsf{streaming-SD}]\label{def:ssd}

For $k,q,T\in\mathbb{N}$, $\alpha\in(0,1/k]$, a finite set $\cF$ and distributions $\cD_Y,\cD_N$ over $\cF\times[q]^k$, the streaming problem $(\cF,\cD_Y,\cD_N,T; \alpha,k,q)$-\textsf{streaming-SD} is the task of distinguishing, for every $n$, $\vecsigma \sim \cY_{\strm,n}$ from $\vecsigma \sim \cN_{\strm,n}$ where for a given length parameter $n$, the distributions $\cY_\strm = \cY_{\strm,n}$ and $\cN_\strm=\cN_{\strm,n}$ are defined as follows:
\begin{itemize}
    \item Let $\cY$ be the distribution over instances of length $n$, i.e., triples $(\vecx^*,M,\vecz)$, from the definition of $(\cF,\cD_Y,\cD_N)$-\textsf{SD}. For $\vecx \in [q]^n$, let $\cY|_\vecx$ denote the distribution $\cY$ conditioned on $\vecx^* = \vecx$. The stream $\vecsigma\sim\cY_\strm$ is sampled as follows: Sample $\vecx^*$ uniformly from $[q]^n$. Let $(M^{(1)},\vecz^{(1)}),\ldots,(M^{(T)},\vecz^{(T)})$ be sampled independently according to $\cY|_{\vecx^*}$. Let $\vecsigma^{(t)}$ be the pair $(M^{(t)},\vecz^{(t)})$ presented as a stream of edges with labels in $\cF\times\{0,1\}$, i.e., $\vecz^{(t)}=(f_i,\tilde{z}_i)$. Specifically for $t \in [T]$ and $i \in [\alpha n]$, let $\vecsigma^{(t)}(i) = (e^{(t)}(i),\vecz^{(t)}(i))$ where $e^{(t)}(i)$ is the $i$-th hyperedge of $M^{(t)}$, i.e., $e^{(t)}(i) = (j^{(t)}(k(i-1)+1),\ldots,j^{(t)}(k(i-1)+k)$ and $j^{(t)}(\ell)$ is the unique index $j$ such that $M^{(t)}_{j,\ell} = 1$. Finally we let $\vecsigma = \vecsigma^{(1)} \circ \cdots \circ \vecsigma^{(T)}$ be the concatenation of the $\vecsigma^{(t)}$s. 
    \item $\vecsigma\sim\cN_\strm$ is sampled similarly except we now sample $(M^{(1)},\vecz^{(1)}),\ldots,(M^{(T)},\vecz^{(T)})$  independently according to $\cN|_{\vecx^*}$ where $\cN|_\vecx$ is the distribution $\cN$ condition on $\vecx^* = \vecx$.
\end{itemize}

\end{definition}

Again when $\alpha,k,q$ are clear from context we suppress them and simply refer to the $(\cF,\cD_Y,\cD_N,T)$-\textsf{streaming-SD} problem. 
\begin{remark}\label{rem:uniform-ssd}

We note that when $\cD_N = \cD_\cF\times\textsf{Unif}([q]^k)$ for some $\cD_\cF\in\Delta(\cF)$, then the distributions $\cN|_{\vecx^*}$ are identical for all $\vecx^*$ (and the variables
$\vecz^{(t)}(i)$ are distributed as $\cD_\cF\times\bern(q^{-k})$ independently for every $t,i$). 
\end{remark}

For technical reasons, we need the following \textit{padded} version of \textsf{streaming-SD} to extend our lower bound techniques in the streaming setting beyond the setting of one-wise independent distributions.
\begin{definition}[$(\cF,\cD_Y,\cD_N,T,\cD_0,\tau)$-\textsf{padded-streaming-SD}]\label{def:pssd}

For $k,q,T\in\mathbb{N}$, $\alpha\in(0,1/k]$, $\tau\in[0,1)$, a finite set $\cF$, and distributions $\cD_Y,\cD_N,\cD_0$ over $\cF\times[q]^k$, the streaming problem\\ $(\cF, \cD_Y,\cD_N,T,\cD_0,\tau; \alpha,k,q)$-\textsf{padded-streaming-SD} is the task of distinguishing, for every $n$, $\vecsigma \sim \cY_{\pstrm,n}$ from $\vecsigma \sim \cN_{\pstrm,n}$ where for a given length parameter $n$, the distributions $\cY_\pstrm = \cY_{\pstrm,n}$ and $\cN_\pstrm=\cN_{\pstrm,n}$ are defined as follows: Sample $\vecx^*$ from $[q]^n$ uniformly. For each $i\in[\frac{\tau}{1-\tau}\alpha nT]$, uniformly sample a tuple $e^{(0)}(i)=(i_1,\dots,i_k)\in\binom{[n]}{k}$ and $(f_i,\vecb^{(0)}(i))\sim\cD_0$, let $\vecsigma^{(0)}(i)=(e^{(0)}(i),(f_i,\mathbf{1}_{\vecb^{(0)}(i)=\vecx^*|_{e^{(0)}(i)}}))$. Next, sample $\vecsigma^{(1)},\dots,\vecsigma^{(T)}$ according to the Yes and No distribution of $(\cF, \cD_Y,\cD_N,T)$-\textsf{streaming-SD} respectively. Finally, let $\vecsigma = \vecsigma^{(0)} \circ \cdots \circ \vecsigma^{(T)}$ be the concatenation of the $\vecsigma^{(t)}$s.

\end{definition}
Again when $\alpha,k,q$ are clear from context we suppress them and simply refer to the \\$(\cF,\cD_Y,\cD_N,T,\cD_0,\tau)$-\textsf{padded-streaming-SD} problem.
Note that when $\tau=0$, $(\cF,\cD_Y,\cD_N,T,\cD_0,\tau)$-\textsf{padded-streaming-SD} is the same as $(\cF,\cD_Y,\cD_N,T)$-\textsf{streaming-SD}.

\subsubsection{CSP value of \psSD}\label{sssec:csp value}

There is a natural way to convert instances of \psSD\ to instances of a $\maxF$\ problem. In this section we make this conversion explicit and show to use properties of the underlying distributions $\cD_0,\cD_Y,\cD_N$ to get bounds on the value of the instances produced.

Note that an instance $\vecsigma$ of \psSD\ is simply a sequence $(\sigma(1),\ldots,\sigma(\ell))$ 
where each $\sigma(i) = (\vecj(i),\vecz(i))$ with $\vecj(i) \in [n]^k$ and $\vecz(i)=(f_i,\tilde{z}_i) \in \cF\times\{0,1\}$. This sequence is already syntactically very close to the description of a $\maxF$ instance. 
Formally, we define an instance $\Psi(\vecsigma)$ of $\maxF$ as follows. For each $\sigma_i=(\vecj(i),\vecz(i))$ with $\vecz(i)=(f_i,\tilde{z}_i)$, if $\tilde{z}_i=1$ we add the constraint $f_i(\vecx|_{\vecj(i)})$ to $\Psi(\vecsigma)$; otherwise, we do not add any constraint to the formula.

In what follows we show that if $\cD_Y \in S_\gamma^Y$ then for all sufficiently large constant $T$, and sufficiently large $n$, if we draw $\vecsigma \sim \cY_{\pstrm,n}$, then with high probability, $\Psi(\vecsigma)$ has value at least $\gamma-o(1)$. Conversely if $\cD_N \in S_\beta^N$, then for all sufficiently large $n$, if we draw $\vecsigma \sim \cN_{\pstrm,n}$, then with high probability $\Psi(\vecsigma)$ has value at most $\beta+o(1)$.

\begin{lemma}[CSP value of \psSD]\label{lem:csp value}
For every $q,k\in\mathbb{N}$, $\cF \subseteq \{f:[q]^k \to \{0,1\}\}$,  $0 \leq \beta < \gamma \leq 1$, $\epsilon > 0$, $\tau=[0,1)$, and distributions $\cD_Y, \cD_N, \cD_0 \in \Delta(\{-1,1\}^k)$ there exists $\alpha_0$ such that for every $\alpha\in(0,\alpha_0]$ the following hold for every sufficiently large $T$: 
\begin{enumerate} 
\item If $\tau\cD_0+(1-\tau)\cD_Y \in S_\gamma^Y$, then for every sufficiently large $n$, the $(\cF, \cD_Y,\cD_N,T,\cD_0,\tau)$-\psSD\ \yes\ instance $\vecsigma \sim \cY_{\pstrm,n}$ satisfies $\Pr[\val_{\Psi(\vecsigma)} < (\gamma-\epsilon)] \leq \exp(-n)$.\footnote{In this lemma and proof we use $\exp(-n)$ to denote functions of the form $c^{-n}$ for some $c>1$ that does not depend on $n$ or $T$, but may depend on all other parameters including $q,k,\cD_Y,\cD_N,\cD_0,\beta,\gamma,\epsilon$.} 
\item If $\tau\cD_0+(1-\tau)\cD_N \in S_\beta^N$, then for every sufficiently large $n$, the $(\cF, \cD_Y,\cD_N,T,\cD_0,\tau)$-\psSD\ \no\ instance $\vecsigma \sim \cN_{\pstrm,n}$ satisfies $\Pr[\val_{\Psi(\vecsigma)} > (\beta+\epsilon)] \leq \exp(-n)$. 
\end{enumerate} 
Furthermore, if $\gamma = 1$ then $\Pr_{\vecsigma\sim\cY_{\pstrm,n}}\left[\val_{\Psi(\vecsigma)}=1\right]=1$. 
\end{lemma}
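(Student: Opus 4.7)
The plan is to think of $\Psi(\vecsigma)$ as follows: the instance consists of $M_0=\tfrac{\tau}{1-\tau}\alpha nT$ ``padded'' candidate constraints with distribution $\cD_0$ together with $M_1=\alpha nT$ ``streaming'' candidate constraints with distribution $\cD_Y$ (resp.\ $\cD_N$), and each candidate is actually added to $\Psi(\vecsigma)$ only if its flag $\tilde z_i=1$, i.e., if $\vecx^*|_{\vecj(i)}=\widetilde{\vecb}(i)$. For any assignment $\vecy\in[q]^n$ I would track $Y_i=\mathbf{1}[\tilde z_i=1]$ and $X_i(\vecy)=\mathbf{1}[\tilde z_i=1,\; f_i(\vecy|_{\vecj(i)})=1]$, so that $\val_{\Psi(\vecsigma)}(\vecy)=\sum_i X_i(\vecy)/\sum_i Y_i$ whenever the denominator is positive. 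The crucial observation driving all concentration is that conditioning on $\vecx^*$ together with the padded edges $e^{(0)}(i)$ and the matchings $M^{(t)}$ makes the pairs $(X_i(\vecy),Y_i)$ independent across $i$, since the only remaining randomness is the independent draws of $\widetilde{\vecb}(i)$.

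For part 1 I plan to use the planted assignment $\vecx^*$ itself. Whenever $Y_i=1$ one has $\widetilde{\vecb}(i)=\vecx^*|_{\vecj(i)}$, so $X_i(\vecx^*)=\mathbf{1}[\tilde z_i=1,\, f_i(\widetilde{\vecb}(i))=1]$. A direct calculation should give $\Pr[Y_i=1]=q^{-k}$ and $\Pr[X_i(\vecx^*)=1]=q^{-k}\cdot \Exp_{(f,\veca)\sim\cD}[\cC(f,\veca)(\canon)]\ge q^{-k}\gamma$, where $\cD=\tau\cD_0+(1-\tau)\cD_Y$ and the final inequality is the hypothesis $\cD\in S^Y_\gamma$. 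A two-sided Hoeffding bound applied to $\sum Y_i$ and $\sum X_i(\vecx^*)$ (using the conditional independence above and the fact that $M=\Theta(nT)$ bounded indicators suffice) would then show that both sums equal $(1\pm o_\epsilon(1))$ times their expectations with probability $1-\exp(-\Omega(nT))$, yielding $\val_{\Psi(\vecsigma)}\ge\val_{\Psi(\vecsigma)}(\vecx^*)\ge\gamma-\epsilon$. When $\gamma=1$, the support of $\cD$ is forced into $\{(f,\veca):f(\veca)=1\}$, so $X_i(\vecx^*)=Y_i$ coordinatewise and $\val_{\Psi(\vecsigma)}(\vecx^*)=1$ holds deterministically, giving the ``furthermore'' clause.

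For part 2 I will fix any $\vecy\in[q]^n$ and aim for an $\exp(-cnT\epsilon^2q^{-2k})$ tail bound before union-bounding over all $q^n$ candidate assignments. For each $\sigma\in[q]$ let $\cP_\sigma^{\vecx^*,\vecy}\in\Delta([q])$ denote the empirical conditional distribution of $y_j$ given $x^*_j=\sigma$; then a calculation mirroring the YES case should yield
\[
\Exp[X_i(\vecy)\mid \vecx^*,\text{edges}]\approx q^{-k}\cdot \Exp_{(f,\veca)\sim\cD}\Bigl[\Exp_{\vecc,\; c_{i,\sigma}\sim \cP_\sigma^{\vecx^*,\vecy}}[\cC(f,\veca)(\vecc)]\Bigr]\le q^{-k}\beta,
\]
where the last inequality is precisely $\cD=\tau\cD_0+(1-\tau)\cD_N\in S^N_\beta$ applied to the \emph{column-symmetric} tuple $(\cP_\sigma^{\vecx^*,\vecy})_\sigma$; combined with $\Exp[Y_i\mid\vecx^*,\text{edges}]\approx q^{-k}$ and Hoeffding, this gives the desired tail. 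A large-but-constant $T\gtrsim q^{2k}\log q/\epsilon^2$ then overwhelms the $q^n$ union bound, producing the stated $\exp(-\Omega(n))$ bound.

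The main obstacle will be justifying the expectation computation in the displayed equation, and this is precisely where the column-symmetric quantifier in the definition of $\sbnf$ is crucial: it is exactly calibrated to the empirical distribution $\cP_\sigma^{\vecx^*,\vecy}$ that arises because, for uniformly random $\vecx^*$, the positions with $x^*_j=\sigma$ form a roughly $(1/q)$-fraction of $[n]$ on which $y_j$ is (empirically) distributed as $\cP_\sigma^{\vecx^*,\vecy}$. A weaker definition requiring $\cP_\sigma$ independent of $\sigma$ would fail against $\vecy$ correlated with $\vecx^*$, whereas a stronger definition allowing arbitrary (not independent-across-coordinates) assignments would collide with part 1 where the single planted $\vecx^*$ must remain good. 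Two lower-order technical points to dispatch along the way are (a) atypical $\vecx^*$ whose empirical marginals deviate from $1/q$, handled by a separate Chernoff event of probability $\exp(-\Omega(n))$, and (b) the mild discrepancy between a uniformly random $k$-tuple $\vecj$ and a $k$-tuple of distinct coordinates of a random matching, which is $O(1/n)$ per coordinate and can be absorbed into the $\epsilon$ slack provided $\alpha$ is taken smaller than some $\alpha_0(k)$.
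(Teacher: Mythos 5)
Your overall route matches the paper's: condition on an approximately balanced $\vecx^*$ (a separate $\exp(-\Omega(n))$ event), use the planted assignment in the \yes\ case, and in the \no\ case plug the empirical column-conditional distributions $\cP_\sigma^{\vecx^*,\vecy}$ into the definition of $\sbnf$ (this is exactly how the paper exploits column symmetry), then prove a per-assignment tail of the form $\exp(-\Omega(nT))$ and union bound over $q^n$ assignments with $T$ a large constant; the furthermore clause is handled identically. The substantive divergence is the concentration mechanism, and there your argument has a gap. You condition on $\vecx^*$ together with all hyperedges, correctly note that the indicators are then independent (only the $\vecb(i)$ remain random), and apply Hoeffding --- but Hoeffding then gives concentration around $\sum_i \Exp[X_i(\vecy)\mid \vecx^*,\text{edges}]$, a random quantity depending on the realized edges, not around $q^{-k}\beta\cdot N$. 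Your displayed claim that $\Exp[X_i(\vecy)\mid\vecx^*,\text{edges}]\approx q^{-k}\beta$ is false for an individual realized edge: conditioned on $\vecj(i)$, this expectation equals $\Pr_{(f,\vecb)\sim\cD}\bigl[\vecb=\vecx^*|_{\vecj(i)}\ \wedge\ f(\vecy|_{\vecj(i)})=1\bigr]$, which can be of constant size (e.g.\ when $\vecx^*|_{\vecj(i)}$ is a heavy atom of $\cD$ and $\vecy|_{\vecj(i)}$ satisfies $f$). Only the average over a \emph{random} edge is $\approx q^{-k}\beta$, and a realized matching could concentrate its edges on atypical vertex sets. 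The same issue, in milder form, affects your \yes-case and constraint-count estimates.

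To complete your plan you therefore need a second concentration step: with probability $1-\exp(-\Omega(\alpha nT))$ over the (dependent, within-block) choice of hyperedges, the empirical average of the per-edge conditional expectations is at most $(1+O(\epsilon))q^{-k}\beta$ --- concentration for sampling without replacement inside each block, multiplied across the $T$ independent blocks; your point (b) only addresses the $O(k^2/n)$ bias between a uniform $k$-tuple and a matching edge, not the fluctuations of this empirical average, and the failure probability must beat the $q^n$ union bound. The paper avoids the two-stage conditioning altogether: it applies a single submartingale bound (its Azuma-type \cref{lem:our-azuma}) to the full sequence of indicators, using that each conditional expectation given the preceding variables differs from the unconditional one by at most $k\alpha\le \epsilon q^{-k}/20$ because at most $k\alpha n$ vertices are already occupied within a block. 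Either fix (a Serfling/Azuma argument over the matching randomness, or the paper's one-shot submartingale route) is standard, but as written your Hoeffding step does not suffice.
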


\begin{proof}
We assume $\eps \leq 1/2$ (and if not we prove the lemma for $\eps'=\frac12$ and this implies the lemma also for $\eps$). 
We prove the lemma for $\alpha_0 = \frac{\epsilon}{20kq^k}$ and $T_0 = 1000/(\epsilon^2\alpha)$. In what follows we set $\eta = \frac{\epsilon}{20kq^k}$. 

In what follows we let $N_0 = \frac{\tau \alpha n T}{1-\tau}$,  $N_t = \alpha n$ for $t \in [T]$ and $N = N_0 + TN_1$.
Recall that an instance of $(\cF, \cD_Y,\cD_N,T,\cD_0,\tau)$-\psSD\ consists of a stream $\vecsigma = \vecsigma^{(0)} \circ \cdots \circ \vecsigma^{(T)}$ where $\vecsigma^{(t)} = (\vecsigma^{(t)}(i) | i \in [N_t])$ and $\vecsigma^{(t)}(i)=(e^{(t)}(i),(f^{(t)}_{i},\tilde{z}^{(t)}(i))$ where $e^{(t)}(i)$ denotes a $k$-uniform hyperedge on $[n]$ and $f^{(t)}(i) \in \cF$ and $\tilde{z}^{(t)}(i) \in \{0,1\}$. 
Finally recall that $\vecsigma^{(t)} \sim \cY|_{\vecx^*}$ in the \yes\ case and $\vecsigma^{(t)} \sim \cN|_{\vecx^*}$ in the \no\ case independently for each $t$, where $\vecx^* \sim \textsf{Unif}([q]^n)$ is common across all $t$. 
We use 
$\cI = (\{0\} \times [T_0]) \cup ([T] \times [T_1])$ to denote the set of legal pairs of indices $(t,i)$. 
We let $m$ denote the total number of constraints in $\Psi(\vecsigma)$ with $m_t$ denoting the number of constraints from $\vecsigma^{(t)}$ for $0 \leq t\leq T$. (Note that $m$ and the $m_t$'s are random variables.)

For $\eta>0$, define $\vecx^*$ to be {\em $\eta$-good} if for every $\sigma \in [q]$, we have $|\{i \in [n] ~|~ \vecx^*_i = \sigma\}| \in [(1-\eta)\cdot\frac{n}q,(1+\eta)\cdot\frac{n}q]$. A straightforward application of Chernoff bounds shows that for every $\eta>0$ the vector $\vecx^*$ is $\eta$-good with  probability $1-\exp(-n)$. 

Below we condition on a good $\vecx^*$ and prove the following: (1) We show the expected value of $m$ is roughly $q^{-k}\cdot N$ and furthermore $m$ is sharply concentrated around its expectation. (2) In the \yes\ case we prove that the expected number of constraints satisfied by $\vecx^*$ is roughly at least $\gamma\cdot q^{-k} \cdot N$ and again this variable is sharply concentrated around its expectation. (3) In the \no\ case we prove that the expected number of constraints satisfied by any assignment is roughly at most $\beta\cdot q^{-k} \cdot N$ and again this variable is sharply concentrated around its expectation. We note that the sharp concentration part is essentially the same in all cases and it is bounding the expectations that is different in each case. That being said the analysis of the \no\ case does require sharper concentration since we need to take a union bound over all possible assignments.

\paragraph{Bounding the number of constraints.}
We start with step (1). Fix an $\eta$-good $\vecx^*$. Note that $m_t = \sum_{i \in [N_t]} \tilde{z}^{(t)}(i)$ for every $0\leq t\leq T$. 
We divide the analysis into two subparts. In step (1a) we bound $\mu \triangleq \Exp[\tilde{z}^{(t)}(i)]$ (in particular this expectation does not depend on $i$ or $t$). Note that $m = \sum_{t=0}^T \sum_{i \in [N_t]} \tilde{z}^{(t)}(i)$ and so bounding $\mu$ bounds $\Exp[m] = \mu\cdot N$.
Then in step (1b) we show that $m$ is concentrated around its expected value. 

For step (1a), let $p_{\sigma}$ denote the fraction of occurrences of the letter $\sigma$ in $\vecx^*$, i.e., $p_{\sigma} = \frac1n|\{i \in [n] | \vecx^*_i = \sigma\}|$. Note that given a sequence $\widetilde{\vecb}^{(t)}(i)=\veca\in[q]^k$, the probability that $\tilde{z}^{(t)}(i)=1$ over a random choice of $e^{(t)}(i)$ depends on $\veca$ as well as the $p_\sigma$'s. (Specifically this probability is $\prod_{j=1}^k p_{\veca_j} \pm O(k^2/n)$, where the additive correction term accounts for the sampling without replacement in the choice of $e^{(t)}(i)$.) However if the vector $\vecx^*$ is good, this dependence has little quantitative effect. In particular, if $\vecx^*$ is $\eta$-good, we have $\mu \in (\frac1q \pm \eta)^k\pm O(k^2/n)$ and thus we get $q^{-k} - 2 k \eta \leq \mu \leq q^{-k}+2k\eta$ provided $\eta \leq 1/(4kq)$ and $n$ is sufficiently large. This simplifies further to $\mu \in (1\pm\frac{\eps}{10})q^{-k}$ using $\eta \leq q^{-k}\eps/(20k)$.  Summing up over $(t,i)\in\cI$ we get $\Exp[m] \in (1\pm\frac{\eps}{10})q^{-k}N$. 

We now turn to step (1b), i.e., proving that $m$ is concentrated around its expectation. (In this part we work a little harder than necessary to prove that the failure probability is $\exp(-nT)$ rather than $\exp(-n)$. This is not necessary, but will be needed for the similar step in step (3).) 
Let $\tilde{Z}$ denote the set of random variables $\{\tilde{z}^{(t)}(i) | (t,i)\in \cI\}$ and for $(t,i)\in \cI$, let $\tilde{Z}_{-(t,i)} = \tilde{Z} \setminus \{\tilde{z}^{(t)}(i)\}$. We first show that for every $(t,i) \in \cI$ we have $\Exp[\tilde{z}^{(t)}(i)~|~\tilde{Z}_{-(t,i)}] \in (1\pm\frac\eps{10})\Exp[\tilde{z}^{(t)}(i)]$. 
Let $B_t$ denote the $t$-th block of variables, i.e., $B_t = \{\tilde{z}^{(t)}(i) | i \in [N_t]\}$. Now note that the only dependence among the $\tilde{z}^{(t)}(i)$'s is among the variables within a block while the blocks themselves are independent. Furthermore the variables in the block $B_0$ are independent of each other. Thus for $i\in[N_0]$ we have $\Exp[\tilde{z}^{(0)}(i)| Z_{-(0,i)}] = \Exp[\tilde{z}^{(0)}(i)]$.
For $t>0$, we have the variables from block $B_t$ may depend on each other due to the constraint that the underlying set of hyperedges are vertex disjoint. Fix $(t,i)\in \cI$ with $t>0$ and let $S$ be the set of variables touched by the hyperedges from block $B_t$, excluding $e^{(t)}(i)$. 
Now consider picking a hyperedge uniformly from $[n]$ and let $\psi$ be the probability that this hyperedge touches $S$. We clearly have $\psi \leq k|S|/n \leq k\alpha$. On the other hand, $\psi$ also upper bounds the difference between $\Exp[\tilde{z}^{(t)}(i)~|~\tilde{Z}_{-(t,i)}]$ and 
$\Exp[\tilde{z}^{(t)}(i)]$, so we have:
$$| \Exp[\tilde{z}^{(t)}(i)~|~\tilde{Z}_{-(t,i)}] - \Exp[\tilde{z}^{(t)}(i)] | \leq \psi \leq k \alpha \leq \frac{\eps q^{-k}}{20}\leq \frac{\eps}{10}\Exp[\tilde{z}^{(t)}(i)].$$
Applying \cref{lem:our-azuma} to the variables of $\tilde{Z}$ (arranged in some arbitrary order) we have 
$\Pr[m \not \in ((q^{-k} \cdot (1\pm\eps/10)^3] \leq \exp(-nT)$. Using $(1\pm\eps/10)^3 \subseteq (1\pm\eps/2)$ for $\eps <1$ we get:
\begin{equation}
    \Pr[m \not \in (1\pm\eps/2)\cdot q^{-k}N ] \leq \exp(-nT) \label{eqn:csp value m}
\end{equation}

\paragraph{Lower bounding the number of satisfied constraints in the \yes\ case.}
Let $Z^{(t)}(i)$ be the indicator variable for the event that the $i$-th element of $\vecsigma^{(t)}$ produces a constraint that is satisfied by $\vecx^*$, i.e.,  $Z^{(t)}(i) = \tilde{z}^{(t)}(i)\cdot f_i(\vecx^*|_{\vecj^{(t)}(i)})$. Note that the number of constraints satisfied by $\vecx^*$ is $\sum_{(t,i)\in\cI} Z^{(t)}(i)$. Note further that $Z^{(0)}(i)$'s are identically distributed across $i\in [N_0]$, and $Z^{(t)}(i)$ are also identically distributed across $t \in [T]$ and $i \in [N_0]$. By construction (see \cref{def:pssd}) we have $\Exp[Z^{(0)}(i)] = \Exp_{(f,\vecb)\sim \cD_0}[f(\vecb)\cdot \Exp_{\vecj}[\One(\vecx^*|_\vecj = \vecb)]]$. By the $\eta$-goodness of $\vecx^*$, we have that for every $\vecb \in [q]^k$, $\Exp_{\vecj}[\One(\vecx^*|_\vecj = \vecb)] \geq (1-\frac\eps{10})q^{-k}$. Thus we get $\Exp[Z^{(0)}(i)] \geq (1-\frac\eps{10})q^{-k}\cdot \Exp_{(f,\vecb)\sim \cD_0}[f(\vecb)]$. Similarly for $t>0$ we have $\Exp[Z^{(t)}(i)] = \Exp_{(f,\vecb)\sim \cD_Y}[f(\vecb)\cdot \Exp_{\vecj}[\One(\vecx^*|_\vecj = \vecb)]] \geq (1-\frac\eps{10})q^{-k}\cdot \Exp_{(f,\vecb)\sim \cD_Y}[f(\vecb)]$. Using linearity of expectations we now get
\begin{align*}
    \Exp\left[ \sum_{(t,i) \in \cI} Z^{(t)}(i) \right] & = N_0 \Exp[Z^{(0)}(1)] + T N_T  \Exp[Z^{(1)}(1)]\\
        & = N (\tau \Exp[Z^{(0)}(1)] + (1-\tau)  \Exp[Z^{(1)}(1)])\\
        & \geq \left(1-\frac{\eps}{10}\right) q^{-k} N \cdot (\tau \Exp_{(f,\vecb)\sim \cD_0}[f(\vecb)] 
             + (1-\tau) \Exp_{(f,\vecb)\sim \cD_Y}[f(\vecb)])\\
        & = \left(1-\frac{\eps}{10}\right) q^{-k} N \cdot \Exp_{(f,\vecb)\sim \tau\cD_0 + (1-\tau)\cD_Y}[f(\vecb)] \\
        & \geq \gamma \cdot \left(1-\frac{\eps}{10}\right) q^{-k} N,
\end{align*}
where the final inequality uses $\tau\cD_0 + (1-\tau)\cD_Y \in \sgyf$. 
The concentration can be analyzed exactly as in step (1b). In particular if we let $Z$ denote all variables $Z^{(t)}(i)$'s, then we have 
$\Exp[Z^{(t)}(i) | Z \setminus \{Z^{(t)}(i)\}] \geq \Exp[Z^{(t)}(i)] - \frac{\eps}{10}q^{-k}$. 
\begin{equation}
\Pr\left[\sum_{(t,i) \in \cI} Z^{(t)}(i) \leq (\gamma - \frac{3\eps}{10})\cdot  q^{-k} N \leq  \gamma\cdot (1-\frac{\eps}{10}) q^{-k} N - \frac{\eps}5 q^{-k} N \right] \leq \exp(-nT).
\label{eqn:csp value yes}
\end{equation}

\paragraph{Upper bounding the number of satisfiable constraints in the \no\ case.}
Fix an assignment $\vecnu \in [q]^k$ and consider the expected number of constraints satisfied by $\vecnu$. (We will later take a union bound over all $\vecnu$.) 
Let $W^{(t)}(i)$ be the indicator variable for the event that the $i$-th element of $\vecsigma^{(t)}$ produces a constraint that is satisfied by $\vecnu$, i.e.,  $W^{(t)}(i) = \tilde{z}^{(t)}(i)\cdot f_i(\vecnu|_{\vecj^{(t)}(i)})$. Note once again that $W^{(0)}(i)$'s are identically distributed across $i$ and $W^{(t)}(i)$ are identical across $t >0$ and $i$. Let $\mu_0 = \Exp[W^{(0)}(1)]$ and $\mu_N = \Exp[W^{(1)}(1)]$. Note that the expected number of satisfied constraints is $\Exp[\sum_{(t,i) \in \cI} W^{(t)}(i)] = N \cdot (\tau \mu_0 + (1-\tau) \mu_N)$, so we bound $\mu_0$ and $\mu_N$. By construction we have  
$$\mu_0 = \Exp_{(f,\vecb)\sim \cD_0, \vecj}[f(\vecnu|_\vecj)\cdot \One(\vecx^*|_\vecj = \vecb)] = \Exp_{(f,\vecb)\sim \cD_0, \vecj}[\One(\vecx^*|_\vecj = \vecb)]\cdot  \Exp_{(f,\vecb)\sim \cD_0, \vecj}[f(\vecnu|_\vecj) ~|~ \One(\vecx^*|_\vecj = \vecb)]$$ 
where $\vecj$ is a uniform random sequence of $k$ distinct elements of $[n]$. 
As argued earlier for every $\vecb$ we have $\Exp_{\vecj}[\One(\vecx^*|\vecj=b)] \leq (1+\frac{\eps}{10})q^{-k}$ for $\eta$-good $\vecx^*$. So we turn to bounding the second term. 

For $\sigma,\rho \in [q]$ let $\cP_{\sigma}(\rho)$ be the fraction of coordinates in $\vecnu$ that take the value $\rho$ among those coordinates where $\vecx^*$ is $\sigma$, i.e., $\cP_{\sigma}(\rho) = \frac{|\{i\in[n] | \vecnu_i = \rho ~\& ~ \vecx^*_i = \sigma\}|}{|\{i\in[n] | \vecx^*_i = \sigma\}|}$. Note that for every $\sigma$, $\cP_{\sigma}$ is a probability distribution in $\Delta(q)$. 
Furthermore, conditioning on $\vecx^*|_\vecj(\ell) = \vecb(\ell)$, the distribution of $\vecnu|_\vecj(\ell)$ is given by $\cP_{\vecb(\ell)}$. Thus the joint distribution of $\vecnu|_\vecj$ is $O(k^2/n)$-close in total variation distance to $\cP_{\vecb(1)} \times \cdots \times \cP_{\vecb(k)}$.
We thus have 
\begin{align*}
    \Exp_{(f,\vecb)\sim \cD_0, \vecj}[f(\vecnu|_\vecj) ~|~ \One(\vecx^*|_\vecj = \vecb)] 
         &\leq \Exp_{(f,\veca)\sim \cD_0}[\Exp_{\vecc, c_\ell \sim \cP_{a_\ell}}[f(\vecc)]]+O(k^2/n) \\
         & = \Exp_{(f,\veca)\sim \cD_0}[\Exp_{\vecd, d_{\ell,\sigma} \sim \cP_{\sigma}}[\cC(f,\veca)(\vecd)]]+O(k^2/n),
\end{align*}
where $\vecc \in [q]^k$ and $\vecd \in [q]^{k \times q}$. Note that the final expression is simply a change of notation applied to the middle expression above to make the expression syntactically closer to the notation in the definition of $\sbnf$. 
Combining with the bound on $\Exp_{\vecj}[\One(\vecx^*|\vecj=b)]$ above we get 
$$\mu_0 \leq  (1+\frac{\eps}{10}) q^{-k} \cdot (\Exp_{(f,\veca)\sim \cD_0}[\Exp_{\vecd, d_{\ell,\sigma} \sim \cP_{\sigma}}[\cC(f,\veca)(\vecd)]]) + O(k/n).$$
Similarly we get 
$$\mu_N \leq  (1+\frac{\eps}{10}) q^{-k}  \cdot (\Exp_{(f,\veca)\sim \cD_N}[\Exp_{\vecd, d_{\ell,\sigma} \sim \cP_{\sigma}}[\cC(f,\veca)(\vecd)]]) + O(k/n).$$
Now combining the two conditions above we get 
\begin{align*}
(\tau \mu_0 + (1-\tau) \mu_N) 
&\leq (1+\frac{\eps}{10}) q^{-k} \cdot \left(\Exp_{(f,\veca)\sim \tau \cD_0+(1-\tau)\cD_N}[\Exp_{\vecd, d_{\ell,\sigma} \sim \cP_{\sigma}}[\cC(f,\veca)(\vecd)]] \right)  +O(k^2/n)\\
& \leq \beta \cdot (1+\frac{\eps}{10}) q^{-k} + O(k^2/n) \\
& \leq \beta \cdot (1+\frac{\eps}{9}) q^{-k},
\end{align*}
where the final inequality uses the fact that $n$ is sufficiently large.
We thus conclude the the expected number of constraints satisfied by $\vecnu$ is at most $\beta \cdot (1+\frac{\eps}{9})q^{-k} N$. 
Concentration around the mean is now similar to before. In particular if we let $W$ denote the set of all $W^{(t)}(i)$'s then we still have If we  $\Exp[W^{(t)}(i) | W \setminus \{W^{(t)}(i)\}] \leq \Exp[W^{(t)}(i)]+k\alpha \leq \Exp[W^{(t)}(i)] + \frac{\eps}{10}q^{-k}N$, and so by \cref{lem:our-azuma} we get 
$$\Pr\left[\sum_{(t,i) \in \cI} W^{(t)}(i) \geq  (\beta+\frac{2\eps}9)\cdot q^{-k} N \geq \beta(1+\eps/9) q^{-k}N + \frac{\eps}{9} q^{-k}N\right] \leq \exp(-nT).$$
In particular by using $T$ sufficiently large, we get that the probability that  more than $(\beta+\frac{2\eps}9)\cdot q^{-k} N$ constraints are satisfied by $\vecnu$ is at most $c^{-n}$ for some $c > q$.
So by a union bound over all possible $\vecnu$'s we get the following:
\begin{equation}
    \label{eqn:csp value no} 
    \Pr\left[\exists \vecnu\in [q]^k \mbox{ s.t. }\vecnu \mbox{ satisfies more than } (\beta+\frac{2\eps}9)\cdot q^{-k} N \mbox{ constraints}\right] \leq \exp(-nT).
\end{equation}

\paragraph{Putting it together.}
Putting the above together we get that in the \yes\ case with probability $1 - \exp(-n)$ we have that $\vecx^*$ is good and the number of constraints is at most $(1+\frac\eps2)q^{-k}N$ (by \cref{eqn:csp value m}) while the number of satisfied constraints is at least $(\gamma - \frac{3\eps}{10})\cdot  q^{-k} N$ (by \cref{eqn:csp value yes}). 
Taking ratios we get 
$$\val_{\Psi(\vecsigma)} \geq
\frac{\gamma - \frac{3\eps}{10}}{1+\frac\eps2} \geq \gamma - \eps.
$$

Similarly in the \no\ case we have with probability at least $1-\exp(-n)$ we have that $\vecx^*$ is good, and the number of constraints is at least $(1-\frac\eps2)q^{-k}N$ (by \cref{eqn:csp value m}) while the number of satisfied constraints is at most $(\beta+\frac{2\eps}9)\cdot q^{-k} N$ (by \cref{eqn:csp value no}).
Taking ratios we get 
$$\val_{\Psi(\vecsigma)} \leq \frac{\beta+\frac{2\eps}9}{1-\frac\eps2}\leq \beta+\eps.$$
This proves the main part of the lemma.

The furthermore part follows from the fact that if $\gamma=1$ then every constraint in the \yes\ case is satisfied by $\vecx^*$. 

\end{proof}

\subsubsection{Reduction from one-way \texorpdfstring{$(\cD_Y,\cD_N)$-\SD}{(DY,DN)-SD)}\ to \texorpdfstring{$\psSD$}{padded-streaming-SD}}

We start by reducing \textsf{SD} to \textsf{padded-streaming-SD} in the special case where $\cD_N$ is ``uniform on the variables'' in the sense defined next.
We say a distribution $\cD \in \Delta(\cD \times [q]^k)$ is {\em uniform on the variables} if there exists a distribution $\cD_f \in \Delta(\cF)$ such that $\cD = \cD_f \times \textsf{Unif}([q]^k)$. 
The following lemma implies that in this special case \textsf{padded-streaming-SD} is hard. Since this holds for all one-wise independent distributions $\cD_Y$, by applying the lemma twice we get that \textsf{padded-streaming-SD} is hard for all one-wsie independent $\cD_Y$ and $\cD_N$.

\begin{lemma}\label{lem:reduce to streaming}
Let $\cF$ be a finite set, $T,q,k\in\mathbb{N}$, $\alpha \in (0,\alpha_0(k)]$, $\tau\in[0,1)$, and $\cD_Y,\cD_N,\cD_0\in\Delta(\cF\times[q]^k)$ with $\cD_Y$ being one-wise independent and $\cD_N=\cD_f\times\unif([q]^k)$ for some $\cD_f\in\Delta(\cF)$ and $\vecmu(\cD_Y)=\vecmu(\cD_N)$. Suppose there is a streaming algorithm $\ALG$ that solves $(\cF,\cD_Y,\cD_N,T,\cD_0,\tau)$-\textsf{padded-streaming-SD} on instances of length $n$ with advantage $\Delta$ and space $s$, then there is a one-way protocol for $(\cF,\cD_Y,\cD_N)$-\textsf{SD} on instances of length $n$ using at most $sT$ bits of communication achieving advantage at least $\Delta/T$.
\end{lemma}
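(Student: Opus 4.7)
The plan is to reduce $(\cF, \cD_Y, \cD_N)$-\textsf{SD} to $(\cF, \cD_Y, \cD_N, T, \cD_0, \tau)$-\psSD\ via a standard hybrid argument combined with the key structural fact, a special case of \cref{rem:uniform-ssd}, that when $\cD_N = \cD_f \times \textsf{Unif}([q]^k)$ the conditional distribution $\cN|_{\vecx^*}$ does not depend on $\vecx^*$ and hence is simulatable without any knowledge of $\vecx^*$.

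First I would define hybrid distributions $H_0, H_1, \ldots, H_T$ on streams as follows: $H_t$ samples $\vecx^* \sim \textsf{Unif}([q]^n)$, generates the padding $\vecsigma^{(0)}$ exactly as in \cref{def:pssd}, then draws $\vecsigma^{(1)}, \ldots, \vecsigma^{(t)}$ independently from $\cY|_{\vecx^*}$ and $\vecsigma^{(t+1)}, \ldots, \vecsigma^{(T)}$ independently from $\cN|_{\vecx^*}$. By construction $H_0 = \cN_{\pstrm,n}$ and $H_T = \cY_{\pstrm,n}$, so the assumption that $\ALG$ distinguishes these two endpoint distributions with advantage $\Delta$ together with a telescoping/triangle inequality produces an index $t^* \in [T]$ such that $\ALG$ distinguishes $H_{t^*-1}$ from $H_{t^*}$ with advantage at least $\Delta/T$. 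This $t^*$ is a property of $\ALG$ alone and may be hardwired into the protocol.

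Next I would build the one-way protocol. Alice, who holds $\vecx^*$, uses public randomness together with $\vecx^*$ to generate: (i) the padding $\vecsigma^{(0)}$ (whose signal bits $\mathbf{1}_{\vecb^{(0)}(i) = \vecx^*|_{e^{(0)}(i)}}$ she can compute because she knows $\vecx^*$), and (ii) the first $t^*-1$ yes-blocks $\vecsigma^{(1)}, \ldots, \vecsigma^{(t^*-1)} \sim \cY|_{\vecx^*}$ (again evaluating the indicator $\widetilde{z}^{(t)}(i) = \mathbf{1}_{M_i^{(t)} \vecx^* = \widetilde{\vecb}^{(t)}(i)}$ using $\vecx^*$). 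She then runs $\ALG$ on the concatenated prefix, obtaining a state $S$ of length $s$, and transmits $S$ to Bob. Bob converts his SD input $(M, \vecz)$ into the $t^*$-th block $\vecsigma^{(t^*)}$ in the natural way, uses public randomness to sample the remaining blocks $\vecsigma^{(t^*+1)}, \ldots, \vecsigma^{(T)}$ from the unconditional $\cN$ (which by the above structural fact equals $\cN|_{\vecx^*}$ and requires no access to $\vecx^*$), continues $\ALG$ from state $S$, and outputs its decision. To obtain the stated $sT$-bit bound, rather than hardwiring $t^*$ one can have Alice transmit the state of $\ALG$ after each of the prefixes $\vecsigma^{(0)} \circ \vecsigma^{(1)} \circ \cdots \circ \vecsigma^{(t)}$ for $t \in \{0, \ldots, T-1\}$; Bob then uses the $t^*$-th sketch.

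By construction, a Yes SD instance yields a stream distributed exactly as $H_{t^*}$ while a No SD instance yields exactly $H_{t^*-1}$, so the protocol's advantage equals the distinguishing advantage of $\ALG$ on this hybrid pair, namely $\ge \Delta/T$. The main subtlety, and the place I expect to spend the most care, is verifying that the joint distribution of the stream produced by the protocol really coincides with the appropriate hybrid; this requires checking that (a) Alice's yes-blocks are correctly coupled with $\vecx^*$ via the signal bits, which holds because she computes $\widetilde{z}$ using her copy of $\vecx^*$ exactly as in \cref{def:sd}, and (b) Bob's no-blocks are i.i.d.\ from $\cN$ \emph{and} are also correctly coupled with $\vecx^*$, which is the one place the uniform-marginal hypothesis on $\cD_N$ is essential, via \cref{rem:uniform-ssd}. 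Once these distributional identities are verified the communication and advantage bounds follow immediately.
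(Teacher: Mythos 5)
Your proposal is correct, and it reaches the lemma by a somewhat more direct route than the paper. The paper hybridizes over total variation distances of the \emph{state sequences} $\Delta_t = \|S^Y_{0:t}-S^N_{0:t}\|_{tvd}$, picks $t^*$ where this gap jumps by $\Delta/T$, bounds $\|S^Y_{t^*+1}-\tilde{S}\|_{tvd}$ via the triangle and data-processing inequalities (the data-processing step is where $\cD_N=\cD_f\times\unif([q]^k)$ is used, to make $\vecsigma_N^{(t^*+1)}$ independent of the states), and then has Bob stop right after the embedded block and apply an \emph{optimal} distinguisher $T_A$ that depends on the whole prefix of states $A=S^Y_{0:t^*}$ — hence the $(t^*+1)s\le sT$ communication. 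You instead hybridize directly over full-stream distributions $H_0,\dots,H_T$, embed the \textsf{SD} instance as block $t^*$, have Bob complete the stream with No blocks (simulable without $\vecx^*$ — this is your use of \cref{rem:uniform-ssd}, the same structural fact the paper exploits), and simply use $\ALG$'s own final decision; this dispenses with the TV/data-processing machinery and, with $t^*$ hardwired, needs only a single $s$-bit state (your fallback of sending all prefix states matches the stated $sT$ bound). Two small housekeeping points you should make explicit: fix $\ALG$'s internal randomness by averaging (or put it in the public coins shared by Alice and Bob), and note that if the sign of $p_{t^*}-p_{t^*-1}$ is negative Bob flips his output, so the advantage (an absolute value) is preserved. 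With those noted, your distributional identities — Yes instances of \textsf{SD} yield exactly $H_{t^*}$ and No instances exactly $H_{t^*-1}$ — are verified exactly as you describe, and the advantage and communication bounds follow.
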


The proof of~\cref{lem:reduce to streaming} is based on a hybrid argument (\textit{e.g.,}~\cite[Lemma 6.3]{KKS}). We provide a proof here based on the proof of \cite[Lemma 4.11]{CGV20}.

\begin{proof}[Proof of~\cref{lem:reduce to streaming}]

Note that since we are interested in distributional advantage, we can fix the randomness in $\ALG$ so that it becomes a deterministic algorithm. By an averaging argument the randomness can be chosen to ensure the advantage does not decrease. Let $\Gamma$ denote the evolution of function of $\ALG$ as it processes a block of edges. That is, if the algorithm is in state $s$ and receives a stream $\vecsigma$ then it ends in state $\Gamma(s,\vecsigma)$. Let $s_0$ denote its initial state. \\

We consider the following collection of (jointly distributed) random variables: Let $\vecx^* \sim \textsf{Unif}(\{-1,1\}^n)$. Denote $\cY=\cY_{\pstrm,n}$ and $\cN=\cN_{\pstrm,n}$. Let $(\vecsigma_Y^{(0)},\vecsigma_Y^{(1)},\ldots,\vecsigma_Y^{(T)}) \sim \cY|_{\vecx^*}$. Similarly, let $(\vecsigma_N^{(0)},\vecsigma_N^{(1)},\ldots,\vecsigma_N^{(T)}) \sim \cN|_{\vecx^*}$. Recall by Remark~\ref{rem:uniform-ssd} that since $\cD_N=\cD_f\times\unif([q]^k)$, we have $\cN|_{\vecx^*}$ is independent of $\vecx^*$, a feature that will be crucial to this proof. 

Let $S_t^Y$ denote the state of 
$\ALG$ after processing $\vecsigma_Y^{(0)},\ldots,\vecsigma_Y^{(t)}$, i.e., $S_0^Y = \Gamma(s_0,\vecsigma_Y^{(0)})$ and $S_t^Y = \Gamma(S_{t-1}^Y,\vecsigma_Y^{(t)})$ where $s_0$ is the fixed initial state (recall that $\ALG$ is deterministic). 
Similarly let $S_t^N$ denote the state of $\ALG$ after processing $\vecsigma_N^{(0)},\ldots,\vecsigma_N^{(t)}$. Note that since $\vecsigma_Y^{(0)}$ has the same distribution (conditioned on the same $\vecx^*$) as $\vecsigma_N^{(0)}$ by definition, we have $\|S_{0}^Y-S_{0}^N\|_{tvd}=0$.

Let $S^Y_{a:b}$ denote the sequence of states $(S_a^Y,\ldots,S_b^Y)$ and similarly for $S^N_{a:b}$. Now let $\Delta_t = \|S_{0:t}^Y-S_{0:t}^N\|_{tvd}$. 
Observe that $\Delta_0=0$ while $\Delta_T\geq\Delta$. (The latter is based on the fact that $\ALG$ distinguishes the two distributions with advantage $\Delta$.)
Thus $\Delta \leq \Delta_T - \Delta_0 = \sum_{t=0}^{T-1} (\Delta_{t+1} - \Delta_t)$ and so there exists $t^*\in\{0,1,\dots,T-1\}$ such that
\[
\Delta_{t^*+1} - \Delta_{t^*} = \|S_{0:t^*+1}^Y-S_{0:t^*+1}^N\|_{tvd}-\|S_{0:t^*}^Y-S_{0:t^*}^N\|_{tvd}\geq\frac{\Delta}{T} \, .
\]

Now consider the random variable $\tilde{S} = \Gamma(S_{t^*}^Y,\vecsigma_N^{(t^*+1)})$ (so the previous state is from the \yes\ distribution and the input is from the \no\ distribution). We claim below that $\| S_{t^*+1}^Y - \tilde{S} \|_{tvd} = \Exp_{A \sim_d S_{0:t^*}^Y}[\| S_{t^*+1}^Y|_{S_{0:t^*}^Y=A} - \tilde{S}|_{S_{0:t^*}^Y=A} \|_{tvd} ] \geq \Delta_{t^*+1} - \Delta_{t^*}$.
Once we have the claim, we show how to get a space $T\cdot s$ protocol for $(\cF,\cD_Y,\cD_n)$-\textsf{SD} with advantage $\Delta_{t^*+1} - \Delta_{t^*}$ concluding the proof of the lemma.

\begin{claim}
$\| S_{t^*+1}^Y - \tilde{S} \|_{tvd} \geq \Delta_{t^*+1} - \Delta_{t^*}$.
\end{claim}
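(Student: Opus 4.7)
The approach is a one-step hybrid argument. The plan is to introduce the intermediate trajectory $\tilde S_{0:t^*+1} \triangleq (S_{0:t^*}^Y,\tilde S)$, which follows the YES evolution for $t^*+1$ coordinates and only then applies the NO block $\vecsigma_N^{(t^*+1)}$ in its final transition. Then the triangle inequality gives
$$\Delta_{t^*+1} = \|S_{0:t^*+1}^Y - S_{0:t^*+1}^N\|_{tvd} \le \|S_{0:t^*+1}^Y - \tilde S_{0:t^*+1}\|_{tvd} + \|\tilde S_{0:t^*+1} - S_{0:t^*+1}^N\|_{tvd},$$
so the claim reduces to bounding the second summand on the right by $\Delta_{t^*}$.

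To bound that summand, I would appeal to the data processing inequality (\cref{prop:tvd properties}) combined with the key structural fact (\cref{rem:uniform-ssd}) that, under the lemma's hypothesis $\cD_N = \cD_f \times \textsf{Unif}([q]^k)$, the conditional distribution $\cN|_{\vecx^*}$ does not depend on $\vecx^*$. Consequently $\vecsigma_N^{(t^*+1)}$ is independent of $\vecx^*$, and therefore independent of the prefixes that determine both $S_{0:t^*}^Y$ and $S_{0:t^*}^N$. Since $\tilde S_{0:t^*+1}$ and $S_{0:t^*+1}^N$ are obtained from $S_{0:t^*}^Y$ and $S_{0:t^*}^N$ respectively by applying the same deterministic map $s_{0:t^*} \mapsto (s_{0:t^*},\,\Gamma(s_{t^*},\vecsigma_N^{(t^*+1)}))$, with $\vecsigma_N^{(t^*+1)}$ playing the role of shared external randomness, data processing yields $\|\tilde S_{0:t^*+1} - S_{0:t^*+1}^N\|_{tvd} \le \|S_{0:t^*}^Y - S_{0:t^*}^N\|_{tvd} = \Delta_{t^*}$, as required.

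Rearranging the triangle inequality then gives $\|S_{0:t^*+1}^Y - \tilde S_{0:t^*+1}\|_{tvd} \ge \Delta_{t^*+1} - \Delta_{t^*}$. Since the two joint distributions share their first $t^*+1$ coordinates by construction, this joint tvd equals the expected conditional tvd $\Exp_{A \sim S_{0:t^*}^Y}[\|S_{t^*+1}^Y|_{S_{0:t^*}^Y=A} - \tilde S|_{S_{0:t^*}^Y=A}\|_{tvd}]$ that appears in the claim. The main subtlety, and the place to be careful, is that the leftmost expression $\|S_{t^*+1}^Y - \tilde S\|_{tvd}$ in the statement must be read as this joint (expected-conditional) tvd rather than the marginal tvd on the final coordinate alone, which would be strictly weaker and insufficient for the downstream protocol (which Alice and Bob execute by sharing the $s$-bit state and then completing the simulation of the algorithm). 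The whole argument hinges on the uniform-marginals hypothesis on $\cD_N$: without it, $\vecsigma_N^{(t^*+1)}$ would correlate with $\vecx^*$ and the data-processing step would fail, which is precisely why the lemma is stated only in this restricted setting.
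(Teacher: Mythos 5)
Your proposal is correct and follows essentially the same route as the paper: the reverse triangle inequality combined with the data processing inequality (\cref{prop:tvd properties}), with the crucial independence of $\vecsigma_N^{(t^*+1)}$ from the prefix states coming from $\cD_N = \cD_f\times\unif([q]^k)$ via \cref{rem:uniform-ssd}. Your explicit treatment of the joint/history distributions $(S_{0:t^*}^Y,\cdot)$ and the identification of the joint tvd with the expected conditional tvd is exactly the intended (and needed) reading of the paper's slightly looser single-state notation, so there is no substantive difference.
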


\newcommand{\cpl}{\textrm{couple}} 

\begin{proof}
First, by triangle inequality for the total variation distance, we have
\[
\|S_{t^*+1}^Y - \tilde{S}\|_{tvd} \geq \|S_{t^*+1}^Y - S_{t^*+1}^N\|_{tvd} - \|\tilde{S} - S_{t^*+1}^N\|_{tvd} \, .
\]
Recall that $\tilde{S} = \Gamma(S_{t^*}^Y,\vecsigma_N^{(t^*+1)})$ and $S^N_{t^*+1} = \Gamma(S_{t^*}^N,\vecsigma_N^{(t^*+1)})$. Also, note that $\vecsigma_N^{(t^*+1)}$ follows the product distribution $(\cD_f\times\bern(q^{-k}))^{\alpha n}$ and in particular is independent of $S_{t^*}^Y$ and $S_{t^*}^N$. (This is where we rely crucially on the property $\cD_N = \cD_f\times\unif([q]^k)$.)
Furthermore $\Gamma$ is a deterministic function, and so we can apply the data processing inequality (Item (2) of~\cref{prop:tvd properties} with $X=S^Y_{t^*}$, $Y=S^N_{t^*}$, $W=\vecsigma_N^{(t^*+1)}$, and $f=\Gamma$) to conclude 
\[\|\tilde{S} - S_{t^*+1}^N\|_{tvd} = \|\Gamma(S_{t^*}^Y,\vecsigma_N^{(t^*+1)}) -  \Gamma(S_{t^*}^N,\vecsigma_N^{(t^*+1)})\|_{tvd}\leq\|S_{t^*}^Y-S_{t^*}^N\|_{tvd}.\]
Combining the two inequalities above we get 

\[
\|S_{t^*+1}^Y - \tilde{S}\|_{tvd} \geq \|S_{t^*+1}^Y - S_{t^*+1}^N\|_{tvd} - \|S_{t^*}^Y - S_{t^*}^N\|_{tvd}=\Delta_{t^*+1}-\Delta_{t^*}
\]
as desired.

\end{proof}

We now show how a protocol can be designed for $(\cF,\cD_Y,\cD_N)$-\textsf{SD} that achieves advantage at least $\theta = \Exp_{A \sim_d S_{0:t^*}^Y}[\| S_{t^*+1}^Y|_{S_{0:t^*}=A} - \tilde{S}|_{S_{0:t^*}=A} \|_{tvd} ] \geq \Delta_{t^*+1} - \Delta_{t^*}$ concluding the proof of the lemma. The protocol uses the distinguisher $T_A:\{0,1\}^{s} \to \{0,1\}$ such that $\Exp_{A,S_{t^*+1}^Y,\tilde{S}}[T_A(S_{t^*+1}^Y)] - \Exp[T_A(\tilde{S})] \geq \theta$ which is guaranteed to exist by the definition of total variation distance.

Our protocol works as follows: Let Alice receive input $\vecx^*$ and Bob receive inputs $(M,\vecz)$ sampled from either $\cY_{\SD}|_{\vecx^*}$ or $\cN_{\SD}|_{\vecx^*}$ where $\cY_{\SD}$ and $\cN_{\SD}$ are the Yes and No distribution of $(\cF,\cD_Y,\cD_N)$-$\SD$ respectively.
\begin{enumerate}
\item Alice samples $(\vecsigma^{(0)},\vecsigma^{(1)},\ldots,\vecsigma^{(T)}) \sim \cY|_{\vecx^*}$ and computes $A = S_{0:t^*}^Y \in \{0,1\}^{(t^*+1)s}$ and sends $A$ to Bob.
\item Bob extracts $S_{t^*}^Y$ from $A$, computes $\widehat{S} = \Gamma(S_{t^*}^Y,\vecsigma)$, where $\vecsigma$ is the encoding of $(M,\vecz)$ as a stream, and outputs \yes\ if $T_A(\widehat{S})=1$ and \no\ otherwise.
\end{enumerate}
Note that if $(M,\vecz)\sim \cY_{\SD}|_{\vecx^*}$ then $\widehat{S} \sim_d S_{t^*+1}^Y|_{S^Y_{0:t^*} = A}$ while if $(M,\vecz)\sim \cN_{\SD}|_{\vecx^*}$ then $\widehat{S} \sim \tilde{S}_{S^Y_{0:t^*} = A}$. 
It follows that the advantage of the protocol above exactly equals $\Exp_A[T_A(S_{t^+1}^Y)] - \Exp_A[T_A(\tilde{S})] \geq \theta \geq \Delta_{t^*+1}-\Delta_{t^*} \geq \Delta/T$.
This concludes the proof of the lemma.
\end{proof}

By combining~\cref{lem:reduce to streaming} with~\cref{thm:communication lb matching moments}, we immediately have the following consequence.

\begin{lemma}\label{cor:space lb 1 wise}
For $k \in \N$ let  $\alpha_0(k)$ be as given by \cref{thm:communication lb matching moments}.
Let $T\in\mathbb{N}$, $\alpha\in(0,\alpha_0(k)]$, $\tau\in[0,1)$, and $\cD_0, \cD_Y,\cD_N,  \in \Delta(\cF \times [q]^k)$ where $\cD_Y$ and $\cD_N$ are one-wise independent distributions with $\vecmu(\cD_Y)=\vecmu(\cD_N)$.

Then every streaming algorithm $\ALG$ solving $(\cF,\cD_Y,\cD_N,T,\cD_0,\tau)$-\textsf{padded-streaming-SD} in the streaming setting with advantage $1/8$ for all lengths $n$ uses space $\Omega(\sqrt{n})$.
\end{lemma}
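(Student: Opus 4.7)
\textbf{Proof plan for Lemma \ref{cor:space lb 1 wise}.} The obstacle to directly invoking Lemma \ref{lem:reduce to streaming} is that its hypothesis forces the NO-side distribution to have the special product form $\cD_f \times \unif([q]^k)$, whereas we are only given that $\cD_N$ is one-wise independent. My plan is to introduce an intermediate distribution of this product form and run a two-step hybrid argument.

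Concretely, let $\cD_f \in \Delta(\cF)$ be the common marginal of $\cD_Y$ and $\cD_N$ on $\cF$ (which are equal because $\vecmu(\cD_Y) = \vecmu(\cD_N)$), and set $\cD_U = \cD_f \times \unif([q]^k)$. Since $\cD_Y$ and $\cD_N$ are both one-wise independent, $\cD_U$ has the same marginal $\vecmu$ as both of them. Now suppose $\ALG$ uses space $s$ and distinguishes the YES distribution $\cY_\pstrm$ of $(\cF,\cD_Y,\cD_N,T,\cD_0,\tau)$-\psSD\ from the NO distribution $\cN_\pstrm$ with advantage at least $1/8$. Let $\cM_\pstrm$ denote the analogue of these distributions but with the $T$ signal blocks drawn using $\cD_U$ in place of $\cD_Y$ (equivalently $\cD_N$). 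By the triangle inequality on distinguishing advantage, $\ALG$ must distinguish at least one of the pairs $(\cY_\pstrm,\cM_\pstrm)$ or $(\cM_\pstrm,\cN_\pstrm)$ with advantage at least $1/16$.

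In the first case, $\ALG$ is a streaming algorithm of space $s$ and advantage $\geq 1/16$ for $(\cF,\cD_Y,\cD_U,T,\cD_0,\tau)$-\psSD, where $\cD_Y$ is one-wise independent and $\cD_U = \cD_f \times \unif([q]^k)$ has the required product form, with matching marginals. Applying Lemma \ref{lem:reduce to streaming} produces a one-way protocol for $(\cF,\cD_Y,\cD_U)$-\textsf{SD} of communication $sT$ and advantage at least $1/(16T)$. Since $\vecmu(\cD_Y) = \vecmu(\cD_U)$, Theorem \ref{thm:communication lb matching moments} (applied with constant $\alpha \le \alpha_0$ as given there and with the constant $\delta = 1/(16T)$) yields $sT \geq \tau\sqrt{n}$ for a positive constant $\tau$, hence $s = \Omega(\sqrt{n})$. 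In the second case, we observe that distinguishing advantage is symmetric under swapping the two distributions (negate the output of $\ALG$), so $\ALG$ gives a streaming algorithm for $(\cF,\cD_N,\cD_U,T,\cD_0,\tau)$-\psSD\ with advantage $\geq 1/16$, and exactly the same two-step argument (Lemma \ref{lem:reduce to streaming} followed by Theorem \ref{thm:communication lb matching moments}, using that $\cD_N$ is one-wise independent and $\vecmu(\cD_N) = \vecmu(\cD_U)$) forces $s = \Omega(\sqrt{n})$.

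There is no genuinely hard step here: the two ingredients are already in place, and the only subtlety is noticing the asymmetry in the hypotheses of Lemma \ref{lem:reduce to streaming} and bridging it with the uniform-on-variables intermediate distribution $\cD_U$. The constants track cleanly because $T$ is a fixed constant (depending on $\cF,q,k,\cD_Y,\cD_N,\cD_0,\tau,\alpha$ but not $n$), so the $1/T$ loss in advantage and the $T$ factor in communication are both absorbed into the hidden constant in $\Omega(\sqrt{n})$.
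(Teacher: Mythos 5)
Your proposal is correct and is essentially identical to the paper's own proof: the paper also introduces the intermediate distribution $\cD_M = \cD_f \times \textsf{Unif}([q]^k)$ (your $\cD_U$), applies the triangle inequality to get advantage $1/16$ against one of the two hybrid pairs, and then chains \cref{lem:reduce to streaming} with \cref{thm:communication lb matching moments} (with $\delta = 1/(16T)$) to conclude $s = \Omega(\sqrt{n})$.
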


\begin{proof}
Let $\ALG$ be an algorithm using space $s$ solving  $(\cF,\cD_Y,\cD_N,T,\cD_0,\tau)$-\textsf{padded-streaming-SD} with advantage $1/8$.
For $g \in \cF$, let $p_g = \Pr_{(f,\sigma)\sim \cD_Y}[f = g]$ and let $\cD_f$ be the distribution given by $\cD_f(g) = p_g$. Let
$\cD_M = \cD_f \times \textsf{Unif}([q]^k)$. Note that $\cD_M$ is uniform on the variables and satisfies $\vecmu(\cD_M) = \vecmu(\cD_Y) = \vecmu(\cD_N)$. Then by the triangle inequality $\ALG$ solves either the $(\cF,\cD_Y,\cD_M,T,\cD_0,\tau)$-\textsf{padded-streaming-SD} with advantage $1/16$ or it solves the\\ 
$(\cF,\cD_N,\cD_M,T,\cD_0,\tau)$-\textsf{padded-streaming-SD} with advantage $1/16$. Assume without loss of generality it is the former. 
Then by \cref{lem:reduce to streaming}, there exists a  one-way protocol for $(\cF,\cD_Y,\cD_M)$-\textsf{SD} using at most $sT$ bits of communication with advantage at least $1/(16T)$. Applying \cref{thm:communication lb matching moments} with $\delta = 1/(16T) > 0$, we now get that $s = \Omega(\sqrt{n})$. 
\end{proof}

\subsubsection{Proof of the streaming lower bound}\label{sec:lb insert}

We are now ready to prove \cref{thm:main-negative}.

\begin{proof}[Proof of \cref{thm:main-negative}]

We combine \cref{thm:communication lb matching moments}, \cref{cor:space lb 1 wise} and \cref{lem:csp value}. So in particular we set our parameters $\alpha$ and $T$ so that the conditions of these statements are satisfied. Specifically  
$k$ and $\epsilon>0$, let $\alpha^{(1)}_0$ be the constant from \cref{thm:communication lb matching moments}
and let $\alpha^{(2)}_0$ be the constant from \cref{lem:csp value}. Let $\alpha_0 = \min\{\alpha_0^{(1)},\alpha_0^{(2)}\}$, 
Given $\alpha \in (0,\alpha_0)$ let $T_0$ be the constant from \cref{lem:csp value} and let $T = T_0$. (Note that these choices allow for both \cref{thm:communication lb matching moments} and \cref{lem:csp value} to hold.)

Suppose there exists a streaming algorithm $\ALG$ that solves $(\gamma-\epsilon,\beta+\epsilon)$-$\maxF$. 
Let $\tau\in[0,1)$ and $\cD_Y, \cD_N,\cD_0$ be distributions such that (i) $\cD_Y$ and $\cD_N$ are one-wise independent, (ii) $\tau\cD_0+(1-\tau)\cD_Y\in S_\gamma^Y(\cF)$, and (iii) $\tau\cD_0+(1-\tau)\cD_N\in S_\beta^N(\cF)$.

Let $n$ be sufficiently large and let $\cY_{\strm,n}$ and $\cN_{\strm,n}$ denote the distributions of \yes\ and \no\ instances of $(\cF,\cD_Y,\cD_N,T,\cD_0,\tau)$-\psSD\ of length $n$. 
Since $\alpha$ and $T$ satisfy the conditions of \cref{lem:csp value}, we have for every sufficiently large $n$
\[
\Pr_{\vecsigma\sim\cY_{\strm,n}}\left[\val_{\Psi(\vecsigma)}<\left(\gamma-\epsilon\right)\right]=o(1)
\text{~~~and~~~}
\Pr_{\vecsigma\sim\cN_{\strm,n}}\left[\val_{\Psi(\vecsigma)}>\left(\beta+\epsilon\right)\right]=o(1) \, .
\]

We conclude that $\ALG$ can distinguish \yes\ instances of \textsf{Max-CSP}($\cF$)  from \no\ instances  with advantage at least $1/4-o(1)\geq 1/8$. 
However, since $\cD_Y,\cD_N$ and $\alpha$ satisfy the conditions of \cref{cor:space lb 1 wise} (in particular $\cD_Y$ and $\cD_N$ are one-wise independent and $\alpha \in (0,\alpha_0(k))$) such an algorithm requires space at least $\Omega(\sqrt{n})$. Thus, we conclude that any streaming algorithm that solves $(\gamma-\epsilon,\beta+\epsilon)$-\textsf{Max-CSP}($\cF$) requires $\Omega(\sqrt{n})$ space.

Finally, note that if $\gamma = 1$ then in \cref{lem:csp value}, we have $\val_\Psi=1$ with probability one. Repeating the above reasoning with this information, shows that $(1,\beta+\epsilon)-\maxF$ requires $\Omega(\sqrt{n})$-space.

\end{proof}

\subsection{The lower bound against sketching algorithms}\label{sec:lb dynamic}

In the absence of a reduction from \textsf{SD} to \textsf{streaming-SD} for general $\cD_Y$ and $\cD_N$, we turn to other means of using the hardness of \textsf{SD}. In particular, we use lower bounds on the communication complexity of a $T$-player communication game in the {\em simultaneous communication} setting --- one which is significantly easier to obtain lower bounds for than the one-way setting.
Below we describe a family of  $T$-player simultaneous communication games, which we call $(\cF,\cD_Y,\cD_N,T)$-\textsf{simultaneous-SD}.
(See \cref{def:simsd}.) We then show a simple reduction from $(\cF,\cD_Y,\cD_N)$-\textsf{SD} to $(\cF,\cD_Y,\cD_N,T)$-\textsf{simultaneous-SD}. Combining this reduction with our lower bounds on \textsf{SD} and the reduction from \textsf{simultaneous-SD} to streaming complexity leads to the proof of \cref{thm:main-negative-dynamic}.

\subsubsection{\texorpdfstring{$T$}{T}-Player Simultaneous Version of SD}\label{sec:T-SD}

In this section, we consider the complexity of \textit{$T$-player number-in-hand simultaneous message passing communication games} (abbrev. $T$-player simultaneous communication games). Such games are described by two distributions $\cY$ and $\cN$. An instance of the game is a $T$-tuple $(X^{(1)},\dots,X^{(T)})$ either drawn from $\cY$ or from $\cN$ and $X^{(t)}$ is given as input to the $t$-th player. A (simultaneous communication) protocol $\Pi = (\Pi^{(1)},\dots,\Pi^{(T)},\Pi_{\text{ref}})$ is a $(T+1)$-tuple of functions with $\Pi^{(t)}(X^{(t)}) \in \{0,1\}^c$ denoting the $t$-th player's message to the \textit{referee}, and $\Pi_{\text{ref}}(\Pi^{(1)}(X^{(1)}),\dots,\Pi^{(T)}(X^{(T)}))\in \{\yes,\no\}$ denoting the protocol's output. We denote this output by $\Pi(X^{(1)},\dots,X^{(T)})$. The complexity of this protocol is the parameter $c$ specifying the maximum length of $\Pi^{(1)}(X^{(1)}),\dots,\Pi^{(T)}(X^{(T)})$ (maximized over all $X$). The advantage of the protocol $\Pi$ is the quantity 
$$\left| \Pr_{(X^{(1)},\dots,X^{(T)})\sim\cY} [ \Pi(X^{(1)},\dots,X^{(T)}) = \yes] - \Pr_{(X^{(1)},\dots,X^{(T)})\sim \cN} [\Pi(X^{(1)},\dots,X^{(T)}) = \yes] \right|. $$

\begin{definition}[$(\cF,\cD_Y,\cD_N,T)$-\textsf{simultaneous-SD}]\label{def:simsd}

For $k,T\in\N$, $\alpha\in(0,1/k]$, a finite set $\cF$, 

distributions $\cD_Y,\cD_N$ over $\cF\times[q]^k$, the $(\cF,\cD_Y,\cD_N,T)$-\textsf{simultaneous-SD} is a $T$-player communication game given by a family of instances $(\cY_{\simul,n},\cN_{\simul,n})_{n\in\N,n\geq1/\alpha}$ where for a given $n$, $\cY=\cY_{\simul,n}$ and $\cN=\cN_{\simul,n}$ are as follows: Both $\cY$ and $\cN$ are supported on tuples $(\vecx^*,M^{(1)},\dots,M^{(T)},\vecz^{(1)},\dots,\vecz^{(T)})$ where $\vecx^*\in[q]^n$, $M^{(t)}\in\{0,1\}^{k\alpha n\times n}$, and $\vecz^{(t)}\in(\cF\times\{0,1\})^{k\alpha n}$, where the pair $(M^{(t)},\vecz^{(t)})$ are the $t$-th player's inputs for all $t\in[T]$. We now specify the distributions of $\vecx^*$, $M^{(t)}$, and $\vecz^{(t)}$ in $\cY$ and $\cN$:
\begin{itemize}
\item In both $\cY$ and $\cN$, $\vecx^*$ is distributed uniformly over $[q]^n$.
\item In both $\cY$ and $\cN$ the matrix $M^{(t)}\in\{0,1\}^{\alpha kn\times n}$ is chosen uniformly (and independently of $\vecx^*$) among matrices with exactly one $1$ per row and at most one $1$ per column.
\item The vector $\vecz^{(t)}$ is determined from $M^{(t)}$ and $\vecx^*$ as follows. Sample a random vector $\vecb^{(t)}\in (\cF\times[q]^k)^{\alpha kn}$ whose distribution differs in $\cY$ and $\cN$. Specifically, let $\vecb^{(t)}=(\vecb^{(t)}(1),\dots,\vecb^{(t)}(\alpha n))$ be sampled from one of the following distributions (independent of $\vecx^*$ and $M$):
\begin{itemize}
    \item $\cY$:  Each $\vecb^{(t)}(i)=(f_i,\tilde{\vecb}(i))\in\cF\times[q]^k$ is sampled independently according to $\cD_Y$.
    \item $\cN$:  Each $\vecb^{(t)}(i)=(f_i,\tilde{\vecb}(i))\in\cF\times[q]^k$ is sampled independently according to $\cD_N$.
\end{itemize}
We now set $\vecz^{(t)}=(f_i,\tilde{z}_i)$ where $\tilde{z}_i=1$ iff $ = (M^{(t)} \vecx^*)= \tilde{\vecb}^{(t)}(i)$.
\end{itemize}

If $\cF \subseteq\{f:[q]^k\to\{0,1\}\}$, then given an instance $\vecsigma = (\vecx^*,M^{(1)},\dots,M^{(T)},\vecz^{(1)},\dots,\vecz^{(T)})$,  we will let $\Psi(\vecsigma)$ represent the associated instance of $\maxF$ as described in~\cref{sssec:csp value}.

\end{definition}

Note that the instance $\Psi(\vecsigma)$ obtained in the \yes\ and \no\ cases of $(\cF,\cD_Y,\cD_N,T)$-\textsf{simultaneous-SD} are distributed exactly according to instances derived in the \yes\ and \no\ cases of\\ $(\cF,\cD_Y,\cD_N,T,\cD_0,\tau=0)$-\textsf{padded-streaming-SD} and thus \cref{lem:csp value} can still be applied to conclude that \yes\ instances usually satisfy $\val_{\Psi(\vecsigma)} \geq \gamma - o(1)$ and \no\ instances usually satisfy $\val_{\Psi(\vecsigma)} \leq \beta - o(1)$. We will use this property when proving \cref{thm:main-negative-dynamic}.

We start by showing the \textsf{simultaneous-SD} problems above do not have low-communication protocols when the marginals of $\cD_Y$ and $\cD_N$ match.

\begin{lemma}\label{lem:reduce SD to simul-SD}
Let $\cF$ be a finite set, $k,q,T\in\N$, $\cD_Y,\cD_N\in \Delta(\cF \times [q]^k)$, and let $\alpha\in(0,1/k]$.
Suppose there is a protocol $\Pi$ that solves $(\cF,\cD_Y,\cD_N,T)$-\textsf{simultaneous-SD} on instances of length $n$ with advantage $\Delta$ and space $s$, then there is a one-way protocol for $(\cF,\cD_Y,\cD_N)$-\textsf{SD} on instances of length $n$ using at most $s(T-1)$ bits of communication and achieving advantage at least $\Delta/T$.
\end{lemma}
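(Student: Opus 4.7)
\textbf{Proof plan for \cref{lem:reduce SD to simul-SD}.}

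The plan is to run a standard hybrid argument, exploiting the crucial asymmetry that in the one-way SD problem Alice holds $\vecx^*$ and can therefore simulate any player's input from either $\cY|_{\vecx^*}$ or $\cN|_{\vecx^*}$. (Contrast this with the reduction to \psSD\ in \cref{lem:reduce to streaming}, where Bob had to sample from $\cN|_{\vecx^*}$ without knowing $\vecx^*$, and so we needed the additional ``uniform on the variables'' hypothesis on $\cD_N$.) Because here the helpful ``knowledge of $\vecx^*$'' is already on Alice's side, no such restriction is needed.

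First, fix the randomness of $\Pi$ to make it deterministic without loss of generality (by averaging). Define a sequence of $T+1$ hybrid distributions $H_0,H_1,\dots,H_T$ on $(\vecx^*, (M^{(t)},\vecz^{(t)})_{t\in[T]})$ as follows: $\vecx^*$ is uniform in $[q]^n$; conditioned on $\vecx^*$, the inputs for players $t \le j$ are drawn independently from $\cN|_{\vecx^*}$, and the inputs for players $t > j$ are drawn independently from $\cY|_{\vecx^*}$. Thus $H_0=\cY_{\simul,n}$ and $H_T=\cN_{\simul,n}$. Writing $p_j = \Pr_{H_j}[\Pi = \yes]$, the assumption gives $|p_0-p_T| \ge \Delta$, so by the triangle inequality there exists $t^* \in \{0,\dots,T-1\}$ such that $|p_{t^*+1}-p_{t^*}| \ge \Delta/T$. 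We may assume this difference is $p_{t^*}-p_{t^*+1} \ge \Delta/T$ (else swap the roles of $\yes$ and $\no$).

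Next, describe the one-way SD protocol. Alice receives $\vecx^*$ and Bob receives $(M,\vecz)$ sampled from $\cY_{\SD}|_{\vecx^*}$ (in the \yes\ case of SD) or $\cN_{\SD}|_{\vecx^*}$ (in the \no\ case). Alice, using her private randomness and $\vecx^*$, samples $(M^{(t)},\vecz^{(t)}) \sim \cN|_{\vecx^*}$ independently for each $t \in [t^*]$, and samples $(M^{(t)},\vecz^{(t)}) \sim \cY|_{\vecx^*}$ independently for each $t \in \{t^*+2,\dots,T\}$. She then computes the $T-1$ player messages $\Pi^{(t)}(M^{(t)},\vecz^{(t)})$ for $t \neq t^*+1$ and sends them to Bob, using $s(T-1)$ bits in total. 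Bob treats his own input $(M,\vecz)$ as $(M^{(t^*+1)},\vecz^{(t^*+1)})$, computes the missing message $\Pi^{(t^*+1)}(M,\vecz)$, and runs the referee $\Pi_{\text{ref}}$ on all $T$ messages, outputting its verdict.

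Finally, analyze correctness. By construction, the joint distribution of $(\vecx^*,(M^{(t)},\vecz^{(t)})_{t\in[T]})$ that the referee sees equals $H_{t^*}$ when $(M,\vecz)\sim\cY_{\SD}|_{\vecx^*}$ and equals $H_{t^*+1}$ when $(M,\vecz)\sim\cN_{\SD}|_{\vecx^*}$. Hence the protocol's advantage on the one-way SD instance is exactly $|p_{t^*}-p_{t^*+1}| \ge \Delta/T$, completing the proof. No step here is a real obstacle: the main (mild) subtlety is simply to note that Alice's possession of $\vecx^*$ lets her sample the other $T-1$ players' inputs from the correct conditional distributions, which is precisely what fails in the harder reduction of \cref{lem:reduce to streaming}.
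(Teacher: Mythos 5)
Your proposal is correct and follows essentially the same route as the paper's proof: a hybrid argument over the $T$ players, with Alice using her knowledge of $\vecx^*$ to sample and compress the other $T-1$ players' inputs (hybridized between $\cY|_{\vecx^*}$ and $\cN|_{\vecx^*}$) and Bob plugging his SD input into the remaining slot before running the referee. The only differences are cosmetic (indexing of the hybrids and the remark that no sign/WLOG swap is really needed since advantage is an absolute value).
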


\begin{proof} 

Let us first fix the randomness in $\Pi$ so that it becomes a deterministic protocol. Note that by an averaging argument the advantage of $\Pi$ does not decrease. Recall that $\cY$ and $\cN$ are Yes and No input distribution of $(\cF,\cD_Y,\cD_N,T)$-\textsf{simultaneous-SD} and we have
\[
\Pr_{X\sim\cY}[\Pi(X)=\yes]-\Pr_{X\sim\cN}[\Pi(X)=\yes]\geq\Delta \, .
\]
Now, we define the following distributions $\cD_0,\dots,\cD_{T}$. Let $\cD_0=\cY$ and $\cD_{T}=\cN$. For each $t\in[T-1]$, we define $\cD_t$ to be the distribution of input instances of $(\cF,\cD_Y,\cD_N,T)$-\textsf{simultaneous-SD} by sampling $\vecb^{(t')}(i)$ independently according to $\cD_Y$ (resp. $\cD_N$) for all $t'\leq t$ (resp. $t'>t$) and $i$ (see~\cref{def:simsd} to recall the definition).  Next, for each $t\in[T]$, let
\[
\Delta_t=\Pr_{X\sim\cD_t}[\Pi(X)=\yes]-\Pr_{X\sim\cD_{t-1}}[\Pi(X)=\yes] \, .
\]
Observe that $\sum_{t\in[T]}\Delta_t=\Delta$ and hence there exists $t^*\in[T]$ such that $\Delta_{t^*}\geq\Delta/T$.

Now, we describe a protocol $\Pi'$ for $(\cF,\cD_Y,\cD_N)$-\textsf{SD} as follows. On input $(\vecx^*,M,\vecz)$, Alice receives $\vecx^*$ and Bob receives $(M,\vecz)$. Alice first samples matrices $M^{(1)},\dots,M^{(t^*-1)},M^{(t^*+1)},\dots,M^{(T)}$ as the second item in~\cref{def:simsd}. Next, Alice samples $\vecb^{(t')}(i)=(f_i,\tilde{\vecb}^{(t')}(i))$ according to $\cD_Y$ (resp. $\cD_N$) for all $t'<t^*$ (resp. $t'>t^*$) and $i\in[\alpha nT]$ and sets $\vecz^{(t')}(i)=(f_i,\tilde{z}_i)$ as the third item in~\cref{def:simsd}. Note that Alice can do this because she possesses $\vecx^*$. Finally, Alice sends $\{\Pi^{(t')}(M^{(t')},\vecz^{(t')})\}_{t'\in[T]\backslash\{t^*\}}$ to Bob. After receiving Alice's message $(X^{(1)},\dots,X^{(t^*-1)},X^{(t^*+1)},\dots,X^{(T)})$, Bob computes $\Pi^{(t^*)}(M,\vecz)$ and outputs $\Pi'(M,\vecz)=\Pi_{\text{ref}}(X^{(1)},\dots,X^{(t^*-1)},\Pi^{(t^*)}(M,\vecz),X^{(t^*+1)},\dots,X^{(T)})$.

It is clear from the construction that the protocol $\Pi'$ uses at most $s(T-1)$ bits of communication. To see $\Pi'$ has advantage at least $\Delta/T$, note that if $(\vecx^*,M,\vecz)$ is sampled from the Yes distribution $\cY_{\textsf{SD}}$ of $(\cF,\cD_Y,\cD_N)$-\textsf{SD}, then $((M^{(1)},\vecz^{(1)}),\dots,(M^{(t^*-1)},\vecz^{(t^*-1)}),(M,\vecz),(M^{(t^*+1)},\vecz^{(t^*+1)}),\dots,(M^{(T)},\vecz^{(T)}))$ follows the distribution $\cD_{t^*}$. Similarly, if $(\vecx^*,M,\vecz)$ is sampled from the No distribution $\cN_{\textsf{SD}}$ of $(\cF,\cD_Y,\cD_N)$-\textsf{SD}, then $((M^{(1)},\vecz^{(1)}),\dots,(M^{(t^*-1)},\vecz^{(t^*-1)}),(M,\vecz),(M^{(t^*+1)},\vecz^{(t^*+1)}),\dots,(M^{(T)},\vecz^{(T)}))$ follows the distribution $\cD_{t^*-1}$. Thus, the advantage of $\Pi'$ is at least
\begin{align*}
&\Pr_{(M,\vecz)\sim\cY_{\textsf{SD}},\Pi'}[\Pi'(M,\vecz)=\yes]-\Pr_{(M,\vecz)\sim\cN_{\textsf{SD}},\Pi'}[\Pi'(M,\vecz)=\yes]\\
=\, &\Pr_{X\sim\cD_{t^*}}[\Pi(X)=\yes]-\Pr_{X\sim\cD_{t^*-1}}[\Pi(X)=\yes]=\Delta_{t^*}\geq\Delta/T \, .
\end{align*}
We conclude that there is a one-way protocol for $(\cF,\cD_Y,\cD_N)$-\textsf{SD} using at most $s(T-1)$ bits of communication achieving advantage at least $\Delta/T$.

\end{proof}

As an immediate consequence of~\cref{thm:communication lb matching moments} and~\cref{lem:reduce SD to simul-SD} we get that $(\cF,\cD_Y,\cD_N,T)$-\textsf{simultaneous-SD} requires $\Omega(\sqrt{n})$ bits of communication when the marginals of $\cD_Y$ and $\cD_N$ match.

\begin{lemma}\label{cor:communication lb matching moments sim}
For every $k,q\in\N$, there exists $\alpha_0 > 0$ such that for every $\alpha\in(0,\alpha_0)$ and $\delta>0$ the following holds: For every finite set $\cF$ and $T\in\N$ and every pair of distributions $\cD_Y,\cD_N\in \Delta(\cF \times [q]^k)$ with $\vecmu(\cD_Y) = \vecmu(\cD_N)$,
there exists $\tau > 0$ and $n_0$ such that for every $n \geq n_0$, every protocol for $(\cF,\cD_Y,\cD_N,T)$-\textsf{simultaneous-SD} achieving advantage $\delta$ on instances of length $n$ requires $\tau\sqrt{n}$ bits of communication. 
\end{lemma}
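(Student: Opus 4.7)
The plan is to derive this lemma as a direct corollary of \cref{thm:communication lb matching moments} (the lower bound for the two-player one-way \textsf{SD} problem) combined with the reduction in \cref{lem:reduce SD to simul-SD} (which turns any simultaneous-SD protocol into a one-way SD protocol). Since both ingredients are already in hand, there is no genuine obstacle: the work is just a matter of bookkeeping the parameters correctly.

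First, I would fix $k,q \in \N$ and let $\alpha_0 > 0$ be the constant produced by \cref{thm:communication lb matching moments} for these values of $k$ and $q$. Fix $\alpha \in (0,\alpha_0)$ and $\delta > 0$. Now suppose we are given any $\cF$, $T \in \N$, and distributions $\cD_Y, \cD_N \in \Delta(\cF \times [q]^k)$ with $\vecmu(\cD_Y) = \vecmu(\cD_N)$. Assume towards a contradiction that there exists a protocol $\Pi$ for $(\cF,\cD_Y,\cD_N,T)$-\textsf{simultaneous-SD} on instances of length $n$ with advantage $\delta$ and per-player communication $s$.

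Next, I would invoke \cref{lem:reduce SD to simul-SD} with the parameters above. This yields a one-way two-player protocol for $(\cF,\cD_Y,\cD_N)$-\textsf{SD} on instances of length $n$ using at most $s(T-1)$ bits of communication and achieving advantage at least $\delta/T$. Since $\delta/T > 0$ and $\vecmu(\cD_Y) = \vecmu(\cD_N)$, we may now apply \cref{thm:communication lb matching moments} with advantage parameter $\delta' \triangleq \delta/T$: it guarantees the existence of $\tau' > 0$ (depending on $\cF, \cD_Y, \cD_N, \alpha$ and $\delta/T$, and in particular on $T$) such that, on instances of sufficiently large length $n \geq n_0$, any one-way protocol achieving advantage $\delta/T$ requires at least $\tau' \sqrt{n}$ bits of communication. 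Consequently $s(T-1) \geq \tau' \sqrt{n}$, so setting $\tau \triangleq \tau'/(T-1)$ (or $\tau'$ itself if $T=1$, which is trivial) yields $s \geq \tau \sqrt{n}$, as desired.

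Thus the only nontrivial aspect is ensuring that the constants line up: $\tau$ depends on $T$ because the advantage that feeds \cref{thm:communication lb matching moments} scales as $\delta/T$, but the lemma only claims existence of such a $\tau$ for each fixed $T$, so this is exactly what we need. No new techniques are required.
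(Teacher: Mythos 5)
Your proposal is correct and follows exactly the paper's route: the paper also obtains this lemma as an immediate consequence of \cref{thm:communication lb matching moments} combined with the reduction in \cref{lem:reduce SD to simul-SD}, with the same parameter bookkeeping (advantage $\delta/T$, communication $s(T-1)$, and $\tau$ depending on $T$). Nothing is missing.
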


We are now ready to prove \cref{thm:main-negative-dynamic}.

\subsubsection{Proof of Theorem~\ref{thm:main-negative-dynamic}}

\begin{proof}[Proof of \cref{thm:main-negative-dynamic}]
The proof is a straightforward combination of \cref{lem:csp value} and \cref{cor:communication lb matching moments sim}
and so we pick parameters so that all these are applicable. 
Given $\epsilon$ and $k$, let $\alpha^{(1)}_0$ be as given by \cref{lem:csp value} and let $\alpha^{(2)}_0$ be as given by \cref{cor:communication lb matching moments sim}. Let $\alpha = \min\{\alpha^{(1)}_0,\alpha^{(2)}_0\}$. Given this choice of $\alpha$, let $T_0$ be as given by \cref{lem:csp value}. We set $T = T_0$ below. Let $n$ be sufficiently large. 

Throughout this proof we will be considering integer weighted instances of $\maxF$ on $n$ variables with constraints. Note that such an instance $\Psi$ can be viewed as a vector in $\Z^N$ where $N = O(|\cF|\times n^k)$ represents the number of possibly distinct constraints applications on $n$ variables. 
Let $\Gamma = \{\Psi | \val_\Psi \geq \gamma - \epsilon\}$. 
Let $B = \{\Psi | \val_\Psi \leq \beta + \epsilon\}$. 
Suppose there exists a sketching algorithm $\ALG_1$ that solves $(\gamma-\epsilon,\beta+\epsilon)$-\textsf{Max-CSP}($\cF$) using at most $s(n)$ bits of space. Note that $\ALG_1$ must achieve advantage at least $1/3$ on the problem $(\Gamma,B)$. 
By running several independent copies of $\ALG_1$ and thresholding appropriately, we can get an algorithm $\ALG_2$ with space $O(s)$ and advantage $1 - \frac{1}{100}$ solving $(\Gamma,B)$.

Now, let $\COMP$ and $\COMB$ be the compression and combination functions as given by this sketching algorithm (see~\cref{def:sketching alg}). We use these to design a protocol for $(\cF,\cD_Y,\cD_N,T)$-\textsf{simultaneous-SD} as follows.

Let $(M^{(t)},\vecz^{(t)})$ denote the input to the $t$-th player in $(\cF,\cD_Y,\cD_N,T)$-\textsf{simultaneous-SD}. Each player turn his/her inputs into $\Psi^{(t)}=(C^{(t)}_1,\dots,C^{(t)}_{m_t})$ 
where $C^{(t)}_i$ corresponds to the constraint $(\vecj^{(t)}(i),f_i^{(t)})$ with $\vecj^{(t)}_i\in[n]^k$ the indicator vector for the $i$-th hyperedge of $M^{(t)}$. Next, the players use shared randomness to compute the sketch of his/her input $\COMP(\Psi^{(t)})$ and send it to the referee. Finally, the referee computes the sketch for all streams $\COMB( \COMP(\Psi^{(1)}),\dots,\COMP(\Psi^{(T)}))$ and outputs the corresponding answer. 

To analyze the above, note that the communication is $O(s)$. Next, by the advantage of the sketching algorithm, we have that 
\begin{equation}
\min_{\Psi \in \Gamma}[\ALG_2(\Psi) = 1] - \max_{\Psi \in B}[\ALG_2(\Psi) = 1] \geq 1-12/100.
\label{eq:dyn-one}    
\end{equation} 
Now we consider what happens when $\Psi \sim \cY_{\simul,n}$ and $\Psi \sim \cN_{\simul,n}$. By \cref{lem:csp value} we have that $\Pr_{\Psi \sim \cY_{\simul,n}}[\Psi \in \Gamma] \geq  1 - o(1)$ and $\Pr_{\Psi \sim \cN_{\simul,n}}[\Psi \in B] \geq  1 - o(1)$. Combining with \cref{eq:dyn-one} we thus get
\[
\Pr_{\Psi \sim \cY_{\simul,n}}[\ALG_2(\Psi) = 1] - \Pr_{\Psi \sim \cN_{\simul,n}}[\ALG_2(\Psi) = 1] \geq 1 - 12/100 - o(1) \geq 1/2,
\]
We thus get an $O(s)$ simultaneous communication protocol for $(\cF,\cD_Y,\cD_N,T)$-\textsf{simultaneous-SD} with advantage at least $1/2$. 

Now we conclude by applying \cref{cor:communication lb matching moments sim} with $\delta = 1/2$ to get that $s = \Omega(\sqrt{n})/T = \Omega(\sqrt{n})$, thus yielding the theorem.

\end{proof}

\section{Hardness of Advice-Signal-Detection with Uniform Marginals}\label{sec:kpart}
The goal of this section is to prove a variant of \cref{thm:communication lb matching moments}  that will be used in  \cref{sec:polar} and \cref{sec:spl} for a proof of the general case of \cref{thm:communication lb matching moments}. Recall that in the $(\cD_Y,\cD_N)$-\textsf{SD} problem $|\cF|=1$, so we omit $\cF$. The main result of this section, presented in \cref{thm:kpart}, gives an $\Omega(\sqrt{n})$ lower bound on the communication complexity of $(\cD_Y,\cD_N)$-\textsf{SD} for distributions with matching marginals $\vecmu(\cD_Y)=\vecmu(\cD_N)$ for the case when (i) the alphabet is Boolean $\{-1,1\}$~\footnote{Throughout this section we use $\{-1,1\}$ to denote the Boolean domain.}, (ii) the marginals are uniform $\vecmu(\cD_Y)=\vecmu(\cD_N)=0^k$, but (iii) both players also receive a specific advice vector $\veca$. We define the corresponding Advice-SD communication game below.

In order to prove the hardness of Advice-SD, we first define the Randomized Mask Detection with advice (Advice-RMD) communication game, and prove an $\Omega(\sqrt{n})$ lower bound on the communication complexity of this game in \cref{thm:rmd}. The proof of the main result of this section, \cref{thm:kpart}, will then follow from the corresponding lower bounds for Advice-RMD in \cref{thm:rmd}.

\subsection{Hardness of \textsf{Advice-RMD}}
In this section we state a theorem that establishes hardness of \textsf{RMD} in the Boolean setting and with uniform marginals while allowing for advice. The proof of this theorem is postponed to \cref{sec:rmd_proof}. First we define the \textsf{Advice-RMD} one-way communication game.

\begin{definition}[\textsf{Advice-RMD}]
Let $n,k\in\mathbb{N}, \alpha\in(0,1)$, where $k$ and $\alpha$ are constants with respect to $n$, and $\alpha n$ is an integer less than $n/k$.
For a pair $\cD_Y$ and $\cD_N$ of distributions over $\{-1,1\}^k$, we consider the following two-player one-way communication problem $(\cD_Y,\cD_N)$-\textsf{Advice-RMD}.
\begin{itemize}
\item The generator samples the following objects:
\begin{enumerate}
    \item $\vecx^* \sim \textsf{Unif}(\{-1,1\}^n)$.
    \item $\Gamma \in S_n$ is chosen uniformly among all permutations of $n$ elements.
    \item We let $M \in \{0,1\}^{k\alpha n \times n}$ be a partial permutation matrix capturing $\Gamma^{-1}(j)$ for $j \in [k\alpha n]$. Specifically, $M_{ij} = 1$ if and only if $j = \Gamma(i)$. We view $M = (M_1,\ldots,M_{\alpha n})$ where each $M_i \in \{0,1\}^{k \times n}$ is a block of $k$ successive rows of $M$.
    \item $\vecb = (\vecb(1),\ldots,\vecb(\alpha n))$ is sampled from one of the following distributions: 
\begin{itemize}
    \item (\yes) each $\vecb(i) \in \{-1,1\}^k$ is sampled according to $\cD_Y$.
    \item (\no) each $\vecb(i)  \in \{-1,1\}^k$ is sampled according to $\cD_N$.
\end{itemize}
   
    \item $\vecz=M\vecx^* \odot \vecb$, where $\odot$ denotes the coordinate-wise product of the elements.
    \item Define a vector $\veca\in [k]^n$ as $a_j =i$ where $i = \Gamma^{-1}(j) \pmod k$ for every $j \in [n]$.
\end{enumerate}
\item Alice receives $\vecx^*$ and $\veca$ as input.
\item Bob receives $M$, $\vecz$, and $\veca$ as input.
\end{itemize}
\end{definition}

We follow the approach of~\cite{GKKRW} to prove the following theorem showing a $\Omega(\sqrt{n})$ communication lower bound for Boolean \textsf{Advice-RMD}. We postpone the proof to~\cref{sec:rmd_proof}.

\begin{theorem}[Communication lower bound for Boolean \textsf{Advice-RMD}]\label{thm:rmd}
For every $k\in\N$, and every pair of distributions $\cD_Y,\cD_N \in \Delta(\{-1,1\}^k)$ with uniform marginals $\vecmu(\cD_Y)=\vecmu(\cD_N)=0^k$ there exists $\alpha_0>0$ such that for every $\alpha\leq\alpha_0$ and $\delta > 0$ there exists $\tau>0$ such that every protocol for $(\cD_Y,\cD_N)$-\textsf{Advice-RMD} achieving advantage $\delta$ requires $\tau\sqrt{n}$ bits of communication on instances of length $n$. 
\end{theorem}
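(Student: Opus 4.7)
The plan is to adapt the Fourier-analytic one-way communication lower bound from GKKRW (as used in KKS and earlier versions of this work) to the advice setting. First, I would apply Yao's minimax principle and a standard averaging/entropy argument: supposing a protocol uses $s$ bits, fix a message $m^*$ that Alice sends with probability at least $2^{-s}$, and then fix a ``typical'' advice vector $\veca$ (whose probability of arising is $\Omega(1)$). Conditioned on $(m^*, \veca)$, Alice's input $\vecx^*$ is distributed uniformly on a set $A \subseteq \{-1,1\}^n$ with $|A| \geq 2^{n-c}$ for $c = O(s)$. The protocol's advantage is then bounded by the expected total variation distance between the distributions of Bob's inputs $(M, \vecz)$ under YES and NO, where the expectation is over $m^*$ and $\veca$. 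So it suffices to upper bound this TVD by $o(1)$ whenever $s = o(\sqrt n)$.

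Next, I would set up the Fourier analysis. With $\veca$ fixed, the partial matching $M$ is uniformly distributed over ``$k$-partite'' partial matchings that place the $i$-th coordinate of each hyperedge into the part $P_i = \{j : a_j = i\}$, each of size $n/k$. I will compute the distribution of $\vecz = M\vecx^* \odot \vecb$ for $\vecx^* \sim \mathrm{Unif}(A)$ by expanding over the Fourier basis. For each $\vecv \in \{0,1\}^k$ and each hyperedge $e$ of $M$, the character $\chi_{\vecv}(\vecz|_e)$ factors as $\chi_{\vecv}(\vecb(e)) \cdot \chi_{\vecv \circ e}(\vecx^*)$, where $\vecv \circ e \in \{0,1\}^n$ is the indicator of the coordinates of $e$ selected by $\vecv$. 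Since $\vecmu(\cD_Y) = \vecmu(\cD_N) = 0^k$, we have $\widehat{\cD_Y}(\vecv) = \widehat{\cD_N}(\vecv)$ at levels $|\vecv|\leq 1$; only levels $|\vecv| \geq 2$ contribute to the distinguishing advantage. By Parseval (\cref{prop:parseval}), the squared TVD between Bob's distributions under YES and NO is bounded by a weighted sum of $\widehat{\One_A}(\vecu)^2$ over $\vecu \in \{0,1\}^n$, with weights determined by the probability that a random $k$-partite matching covers the support of $\vecu$ at levels $\ell = |\vecu| \geq 2$.

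The central estimate is then obtained via \cref{lem:kkl} (the KKL/hypercontractivity bound): for $|A| \geq 2^{n-c}$ and each $\ell \leq 4c$,
\[
\frac{2^{2n}}{|A|^2} \sum_{|\vecu|=\ell} \widehat{\One_A}(\vecu)^2 \leq \left(\frac{4\sqrt 2\, c}{\ell}\right)^\ell.
\]
For a single hyperedge of a $k$-partite random matching, the probability of it selecting a specific $\ell$-subset $\vecu$ (respecting the parts) is $O(k^\ell (n/k)^{-\ell+1}/n) = O(k^{2\ell-1}/n^\ell)$, and there are $\alpha n$ hyperedges. Combining these gives a level-$\ell$ contribution of order $\alpha \cdot (4\sqrt 2 c \cdot k^2 / \ell)^\ell$, which, for $\ell \geq 2$, is dominated by a geometric series summed in $\ell$, yielding squared TVD $O(\alpha \cdot k^4 c^2)$. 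For $s = o(\sqrt n)$ (so $c = o(\sqrt n)$) and $\alpha$ below a threshold depending on $k$ and $\delta$, this is below $\delta^2$, contradicting the assumed advantage.

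The main obstacle I expect is handling the $k$-partite restriction on the matching correctly. In the standard BHM/RMD arguments the matching is uniform over partial matchings of $[n]$, so each $\ell$-subset is an edge support with a well-understood probability; here the restriction that coordinate $i$ of each edge lies in $P_i$ changes which $\ell$-subsets $\vecu \in \{0,1\}^n$ can appear and with what weight. Because the parts all have size exactly $n/k$, however, the per-subset probability only shifts by a constant depending on $k$, so the KKL bound absorbs this change cleanly. A secondary bookkeeping point is that the ``typical advice'' step must be handled so that conditioning on $\veca$ does not shrink $|A|$ by more than a $2^{O(1)}$ factor; this follows from a standard convexity/subadditivity argument on entropy over the choice of $\veca$ (which is independent of $\vecx^*$ in the underlying distribution). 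With these two points addressed, the remainder is essentially the GKKRW template applied to our advice-augmented game.
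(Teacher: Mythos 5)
Your overall template is the paper's own: condition on Alice's message (and the advice), bound the total variation distance of Bob's posterior by Fourier coefficients of the indicator $f=\One_A$ of the message set, and control those via the hypercontractive bound of \cref{lem:kkl} (this is \cref{lem:step 1 1 wise} and \cref{lem:distance_p_to_u}); comparing YES to NO directly instead of comparing each to the uniform distribution and using a triangle inequality is only a cosmetic difference, since matching uniform marginals kill exactly the same set of weight-one blocks. The genuine gap is in your central combinatorial estimate. The weight you need at level $\ell$ is the probability, over a random $\veca$-respecting hypermatching $M$, that there \emph{exists} $\vecs\neq 0$ with every block $\vecs(i)$ of weight $\neq 1$ and $M^\top\vecs=\vecv$ for a fixed $\vecv$ of weight $\ell$ (the quantity $h(\ell)$ of \cref{lem:step 1 1 wise}). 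You only bound the event that a \emph{single} hyperedge covers all $\ell$ coordinates of $\vecv$, giving roughly $\alpha k^{O(\ell)}/n^{\ell-1}$; but the event is dominated by $\vecs$ supported on up to $\ell/2$ blocks, each covering at least two coordinates of $\vecv$, and the correct bound (\cref{lem:step 2 1 wise}) is $h(\ell)\leq(\ell/n)^{\ell/2}(e^3\alpha k^5)^{\ell/2}$, i.e., decay like $n^{-\ell/2}$ rather than $n^{-(\ell-1)}$. This is not harmless bookkeeping: the $n^{-\ell/2}$ decay, multiplied against $(4\sqrt{2}c/\ell)^{\ell}$ from \cref{lem:kkl}, is exactly what caps the message length at $c=O(\delta\sqrt{n}/\sqrt{\alpha k^5})$ and hence yields the $\tau\sqrt{n}$ bound; your estimate would suggest a much stronger conclusion, and your final claim that the squared TVD is $O(\alpha k^4 c^2)$ is also dimensionally off (the true bound scales like $c^2/n$ --- without the $1/n$ it is not $o(1)$ for $c=o(\sqrt n)$).

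Two smaller repairs. First, no single advice vector arises with probability $\Omega(1)$ (there are exponentially many); instead condition on each $\veca$ separately --- the TVD bound holds uniformly in $\veca$ by symmetry --- and average, as in the paper's final step, where for each fixed $\veca$ Alice's message partitions $\{-1,1\}^n$ and all but a $\delta/4$ fraction of inputs lie in parts of size at least $2^{n-c}$ with $c=s+1-\log\delta$. Second, \cref{lem:kkl} applies only for $\ell\leq 4c$; the tail $\ell>4c$ must be handled separately, using Parseval ($\sum_{\vecv}\widehat f(\vecv)^2\leq 1$) together with the bound on $h(\ell)$, as in the proof of \cref{lem:distance_p_to_u}. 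With the corrected estimate of $h(\ell)$ and these two points, your argument matches the paper's proof.
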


\subsection{Hardness of \textsf{Advice-SD}}
Let us first extend the definition of the Signal Detection (\textsf{SD}) problem to the following \textsf{Advice-SD} one-way communication game.
\begin{definition}[\textsf{Advice-SD}]\label{def:advice_sd}
Let $n,k,q\in\mathbb{N}, \alpha\in(0,1)$, where $k$, $q$ and $\alpha$ are constants with respect to $n$, and $\alpha n/k$ is an integer less than $n$. For a pair $\cD_Y$ and $\cD_N$ of distributions over~$[q]^k$, we consider the following two-player one-way communication problem $(\cD_Y,\cD_N)$-\textsf{Advice-SD}.
\begin{itemize}
\item The generator samples the following objects:
\begin{enumerate}
    \item $\vecx^* \sim \textsf{Unif}([q]^n)$.
    \item $\Gamma \in S_n$ is chosen uniformly among all permutations of $n$ elements.
    \item We let $M \in \{0,1\}^{k\alpha n \times n}$ be a partial permutation matrix capturing $\Gamma^{-1}(j)$ for $j \in [k\alpha n]$. Specifically, $M_{ij} = 1$ if and only if $j = \Gamma(i)$. We view $M = (M_1,\ldots,M_{\alpha n})$ where each $M_i \in \{0,1\}^{k \times n}$ is a block of $k$ successive rows of $M$.
    \item $\vecb = (\vecb(1),\ldots,\vecb(\alpha n))$ is sampled from one of the following distributions: 
\begin{itemize}
    \item (\yes) each $\vecb(i) \in [q]^k$ is sampled according to $\cD_Y$.
    \item (\no) each $\vecb(i)  \in [q]^k$ is sampled according to $\cD_N$.
\end{itemize}
   \item $\vecz = (z_1,\ldots,z_{\alpha n}) \in \{0,1\}^{\alpha n}$ 
   is determined from $M$, $\vecx^*$ and $\vecb$ as follows. 
   We let $z_i = 1$ if $M_i \vecx^* = \vecb(i)$, and $z_i = 0$ otherwise.
   \item Define a vector $\veca\in [k]^n$ as $a_j =i$ where $i = \Gamma^{-1}(j) \pmod k$ for every $j \in [n]$.
\end{enumerate}
\item Alice receives $\vecx^*$ and $\veca$ as input.
\item Bob receives $M$, $\vecz$, and $\veca$ as input.
\end{itemize}
\end{definition}
Almost immediately we get the following corollary for the \textsf{Advice-SD} problem from~\cref{thm:rmd}.

\begin{theorem}[Communication lower bound for Boolean \textsf{Advice-SD}]\label{thm:kpart}
For every $k\in\N$, and every pair of distributions $\cD_Y,\cD_N \in \Delta(\{-1,1\}^k)$ with uniform marginals $\vecmu(\cD_Y)=\vecmu(\cD_N)=0^k$ there exists $\alpha_0>0$ such that for every $0<\alpha\leq\alpha_0$ and $\delta > 0$ there exists $\tau > 0$, such that every protocol for $(\cD_Y,\cD_N)$-advice-SD achieving advantage $\delta$ requires $\tau\sqrt{n}$ bits of communication on instances of length $n$. 
\end{theorem}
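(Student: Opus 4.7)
The plan is to reduce $(\cD_Y,\cD_N)$-\textsf{Advice-RMD} to $(\cD_Y,\cD_N)$-\textsf{Advice-SD} and then invoke \cref{thm:rmd}. The key observation is that in the Boolean setting ($\pm 1$-valued), for any $\vecu,\vecv\in\{-1,1\}^k$, the product $\vecu\odot\vecv$ equals $\mathbf{1}^k$ if and only if $\vecu=\vecv$. In particular, if Bob is handed the Advice-RMD signal $\vecz^{\mathrm{RMD}}_i = M_i\vecx^*\odot\vecb(i)$, then he can locally compute $z^{\mathrm{SD}}_i = \mathbbm{1}[\vecz^{\mathrm{RMD}}_i = \mathbf{1}^k]$, and this is exactly $\mathbbm{1}[M_i\vecx^*=\vecb(i)]$, i.e. the Advice-SD signal for the \emph{same} sample of $(\vecx^*,\Gamma,\vecb)$.

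Concretely, suppose for contradiction that $\Pi_{\mathrm{SD}}=(\Pi_{\mathrm{SD}}^A,\Pi_{\mathrm{SD}}^B)$ is a protocol for $(\cD_Y,\cD_N)$-\textsf{Advice-SD} that uses $s=o(\sqrt{n})$ bits of communication and achieves advantage $\delta$. I will build a protocol $\Pi_{\mathrm{RMD}}=(\Pi_{\mathrm{RMD}}^A,\Pi_{\mathrm{RMD}}^B)$ for $(\cD_Y,\cD_N)$-\textsf{Advice-RMD} of the same communication cost and the same advantage. Alice, on input $(\vecx^*,\veca)$, simply runs $\Pi_{\mathrm{SD}}^A(\vecx^*,\veca)$ and forwards this message; both problems give Alice identical inputs, so this is legal. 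Bob, on input $(M,\vecz^{\mathrm{RMD}},\veca)$, first coordinate-wise folds his signal into $\vecz^{\mathrm{SD}}\in\{0,1\}^{\alpha n}$ by setting $z^{\mathrm{SD}}_i = \mathbbm{1}[\vecz^{\mathrm{RMD}}_i=\mathbf{1}^k]$, and then runs $\Pi_{\mathrm{SD}}^B$ on $(M,\vecz^{\mathrm{SD}},\veca)$ together with Alice's message.

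The correctness of the reduction hinges on showing that the joint distribution of $(\vecx^*,\veca,M,\vecz^{\mathrm{SD}})$ produced by the reduction under the \yes\ (resp. \no) Advice-RMD distribution is identical to the joint distribution of the same quadruple under the \yes\ (resp. \no) Advice-SD distribution. This is immediate from the definitions: in both experiments $\vecx^*\sim\textsf{Unif}(\{-1,1\}^n)$ and $\Gamma$ is a uniform permutation (yielding the same $M$ and $\veca$); moreover $\vecb(i)$ is sampled i.i.d.\ from $\cD_Y$ (\yes) or $\cD_N$ (\no) independently of $(\vecx^*,M)$; and finally $z^{\mathrm{SD}}_i=\mathbbm{1}[M_i\vecx^*=\vecb(i)]$ in both views, since the reduction's fold is exactly this indicator in the $\pm 1$ setting. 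Consequently, $\Pi_{\mathrm{RMD}}$ achieves advantage $\delta$ on Advice-RMD with $s$ bits of communication, contradicting \cref{thm:rmd} which asserts that any such protocol for distributions with uniform marginals requires $\Omega(\sqrt{n})$ bits.

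I do not anticipate a real obstacle here: the heavy Fourier-analytic work is encapsulated in \cref{thm:rmd}, and the \textsf{Advice-SD}$\to$\textsf{Advice-RMD} step is a genuine ``for free'' post-processing because the Boolean product structure makes the SD signal a deterministic function of the RMD signal. The only thing to be slightly careful about is ensuring the advice $\veca$ and the matrix $M$ are distributed identically in the two games (they are, by construction from the same uniform $\Gamma$), so that Bob's local conversion is legitimate without extra communication.
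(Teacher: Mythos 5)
Your reduction is correct and is essentially identical to the paper's own proof of \cref{thm:kpart}: Bob folds each block $\vecz_i\in\{-1,1\}^k$ of the Advice-RMD signal into the indicator $\mathbbm{1}[\vecz_i=1^k]=\mathbbm{1}[M_i\vecx^*=\vecb(i)]$, thereby simulating the Advice-SD input distribution exactly, and the lower bound then follows from \cref{thm:rmd}. Nothing further is needed.
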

\begin{proof}
We show that a~protocol achieving advantage $\delta$ in the $(\cD_Y,\cD_N)$-\textsf{Advice-SD} game with $s$ bits of communication implies a protocol achieving advantage $\delta$ for the $(\cD_Y,\cD_N)$-\textsf{Advice-RMD} game with $s$ bits of communication. Then the lower bounds of \cref{thm:rmd} for distributions with matching marginals will finish the proof.

Assume that there exists Bob's algorithm $\cB(M, \vecz, \veca, \text{Alice's message})$ that distinguishes $\vecb_i\sim\cD_Y$ and $\vecb_i\sim\cD_N$ with advantage $\delta$ in the \textsf{Advice-SD} game. For the \textsf{Advice-RMD} game, we keep the same algorithm for Alice, and modify Bob's algorithm as follows. Bob receives $M\in\{0,1\}^{k\alpha n \times n}, \vecz\in\{-1,1\}^{k\alpha n}, \veca$, and Alice's message, and partitions $\vecz=(\vecz_1,\ldots,\vecz_{\alpha n})$ where $\vecz_i\in\{-1,1\}^{k}$. For each $i\in[\alpha n]$, Bob computes $\widetilde{z}_i\in\{0,1\}$ as follows: $\widetilde{z}_i=1$ if and only if $\vecz_i=1^k$. Now Bob sets $\vecz'=(\widetilde{z}_1,\ldots,\widetilde{z}_{\alpha n})\in\{0,1\}^{\alpha n}$, and outputs $\cB(M, \vecz', \veca, \text{Alice's message})$. It is easy to see that in both $\yes$ and $\no$ cases, the distribution of the vectors $\vecz'$ computed by Bob is the distribution of vectors $\vecz$ sampled in the $(\cD_Y,\cD_N)$-\textsf{Advice-SD} game. Thus, the protocol achieves advantage $\delta$ for the $(\cD_Y,\cD_N)$-\textsf{Advice-SD} game using $s$ bits of communication as desired.
\end{proof}

\subsection{Proof of Theorem~\ref{thm:rmd}}\label{sec:rmd_proof}

Our proof of \cref{thm:rmd} follows the methodology of \cite{GKKRW} with some modifications as required by the Advice-RMD formulation. Their proof uses Fourier analysis to reduce the task of proving a communication lower bound to that of proving some combinatorial identities about randomly chosen matchings. We follow the same approach and this leads us to different conditions about randomly chosen hypermatchings which requires a fresh analysis in \cref{lem:step 2 1 wise}.

Without loss of generality in the following we assume that $n$ is a multiple of $k$. A~vector $\veca\in [k]^n$ is called an advice vector if for every $i\in[k], \, |\{j\colon a_j=i\}|=n/k$. 
For an advice vector $\veca\in [k]^n$, we say that a~partial permutation matrix $M \in \{0,1\}^{k\alpha n \times n}$ of a permutation $\Gamma$ is $\veca$-respecting if for every $i\in[k\alpha n]$ and $j\in[n]$, $M_{ij}=1$ if and only if $a_j =i \pmod k$. Intuitively, $\veca$ is the advice vector that tells you which congruence class $\Gamma(j)$ lies in.

For each advice vector $\veca\in [k]^n$, each $\veca$-respecting partial permutation matrix $M \in \{0,1\}^{k\alpha n \times n}$, distribution $\cD$ over $\{-1,1\}^k$, and a fixed Alice's message, the posterior distribution function $p_{M,\cD,\veca}:\{-1,1\}^{k \alpha n}\rightarrow[0,1]$ is defined as follows. For each $\vecz\in\{-1,1\}^{k \alpha n}$, let
\[
p_{M,\cD,\veca}(\vecz) := \Pr_{\substack{\vecx^*\in\{-1,1\}^n\\\vecb\sim\cD^{\alpha n}}}[\vecz=(M\vecx^*)\odot\vecb\ |\ M, \ \veca,\ \text{Alice's message}]=\Exp_{\vecx^*\in A}\Exp_{\vecb\sim\cD^{\alpha n}}[\mathbf{1}_{\vecz=(M \vecx^*)\odot \vecb}] \,,
\]
where $A\subset\{-1,1\}^n$ is the set of Alice's inputs that correspond to the message.

\begin{lemma}\label{lem:step 1 1 wise}
Let $\veca\in [k]^n$, $A\subseteq\{-1,1\}^n$, and $f:\{-1,1\}^n\rightarrow\{0,1\}$ be the indicator function of $A$. Let $k\in\mathbb{N}$ and $\alpha\in(0,1/100k)$. Let $\cD$ be a distribution over $\{-1,1\}^k$ such that $\mathbb{E}_{\veca\sim\cD}[a_j]=0$ for all $j\in[k]$.
\[
\Exp_{\substack{M\\M \text{is } \veca\text{-resp.}}}[\|p_{M,\cD,\veca}-U\|_{tvd}^2]\leq\frac{2^{2n}}{|A|^2}\sum_{\ell\geq2}^{k \alpha n}h(\ell)\cdot\sum_{\substack{\vecv\in\{0,1\}^n\\|\vecv|=\ell}}\widehat{f}(\vecv)^2 \, ,
\]
where $U\sim\unif(\{-1,1\}^{k\alpha n})$ and for each $\ell\in[n]$,
\[
h(\ell)=\max_{\substack{\vecv_\ell\in\{0,1\}^n\\|\vecv_\ell|=\ell}}  \Pr_{\substack{M\\M \text{is } \veca\text{-resp.}}}\left[\exists \vecs\in\{0,1\}^{k\alpha n}\backslash\{0^{k \alpha n}\},\ |\vecs(i)|\neq1\, \forall i,\ M^\top \vecs=\vecv_\ell\right] \, .
\]
Here for a vector $\vecs\in\{0,1\}^{k\alpha n}$ and integer $i\in[\alpha n]$, $\vecs(i)\in\{0,1\}^k$ denotes the $i$-th group of $k$ coordinates of~$\vecs$.
\end{lemma}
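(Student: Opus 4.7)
The plan is to carry out a standard Fourier-analytic reduction in six clean steps, following the approach of \cite{GKKRW} (and its adaptation in prior work) but adapted to the presence of the advice vector $\veca$, which only restricts the support of $M$ to $\veca$-respecting matrices.

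\textbf{Step 1 (Cauchy--Schwarz to pass from TVD to $\ell_2$).} I would first note that for any distribution $p$ on $\{-1,1\}^{k\alpha n}$, Cauchy--Schwarz gives $\|p-U\|_{tvd}^2 \le \tfrac{2^{k\alpha n}}{4}\cdot \sum_{\vecz}(p(\vecz)-U(\vecz))^2$. \textbf{Step 2 (Parseval).} Then I would invoke Parseval's identity (using the Fourier normalization $\widehat{g}(\vecs)=\Exp_\vecz[g(\vecz)\chi_\vecs(\vecz)]$ introduced in Section~\ref{sec:fourier}) to rewrite the $\ell_2$ sum as $\sum_\vecz(p-U)(\vecz)^2 = 2^{k\alpha n}\sum_{\vecs\ne 0}\widehat{p}(\vecs)^2$, yielding the clean bound $\|p-U\|_{tvd}^2 \le \tfrac{2^{2k\alpha n}}{4}\sum_{\vecs\ne 0}\widehat{p_{M,\cD,\veca}}(\vecs)^2$.

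\textbf{Step 3 (Fourier coefficients of the posterior).} The heart is computing $\widehat{p_{M,\cD,\veca}}(\vecs)$ for $\vecs\in\{0,1\}^{k\alpha n}$. Using independence of $\vecx^*$ (uniform on $A$) and $\vecb\sim \cD^{\alpha n}$, plus the fact that $\chi_\vecs(M\vecx^*\odot \vecb) = \chi_\vecs(M\vecx^*)\chi_\vecs(\vecb) = \chi_{M^\top \vecs}(\vecx^*)\prod_{i=1}^{\alpha n}\chi_{\vecs(i)}(\vecb(i))$, I get
\[
  \widehat{p_{M,\cD,\veca}}(\vecs) \;=\; \frac{2^n}{|A|}\,\widehat{f}(M^\top \vecs)\,\prod_{i=1}^{\alpha n}\widehat{\cD}(\vecs(i)).
\]
\textbf{Step 4 (Use uniform marginals).} The hypothesis $\Exp_{\veca\sim\cD}[a_j]=0$ for every $j\in[k]$ translates to $\widehat{\cD}(\vece_j)=0$, so every $\vecs$ with some block $\vecs(i)$ of weight exactly $1$ contributes zero. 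For the surviving $\vecs$ with all $|\vecs(i)|\ne 1$, the crude bound $|\widehat{\cD}(\vecs(i))|\le \widehat{\cD}(0)=2^{-k}$ gives $\prod_i \widehat{\cD}(\vecs(i))^2 \le 2^{-2k\alpha n}$.

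\textbf{Step 5 (Swap sum, exploit injectivity of $M^\top$).} Since $M$ is a partial permutation matrix, $M^\top$ is injective on $\{0,1\}^{k\alpha n}$ and preserves Hamming weight, so for each target $\vecv\in\{0,1\}^n$ with $|\vecv|=\ell$ there is at most one admissible $\vecs$ with $M^\top \vecs=\vecv$. Plugging into the bound of Steps 1--2 cancels the $2^{2k\alpha n}$ factors and, after grouping by $\ell=|\vecv|=|\vecs|\ge 2$, yields
\[
  \Exp_M[\|p_{M,\cD,\veca}-U\|_{tvd}^2] \;\le\; \frac{2^{2n}}{|A|^2}\sum_{\ell\ge 2}^{k\alpha n}\sum_{|\vecv|=\ell}\widehat{f}(\vecv)^2\;\Pr_M\!\left[\exists\,\vecs\ne 0,\ |\vecs(i)|\ne 1\ \forall i,\ M^\top \vecs=\vecv\right].
\]
\textbf{Step 6 (Identify $h(\ell)$).} The probability above, maximized over $\vecv$ of weight $\ell$, is exactly the quantity $h(\ell)$ in the lemma statement, completing the bound. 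I do not expect any serious obstacle here: the advice vector $\veca$ enters only by restricting the distribution on $M$ to $\veca$-respecting matrices, and this restriction is absorbed into the definition of $h(\ell)$ (where the probability is computed over the same restricted distribution). The only minor nuisance is bookkeeping the Fourier normalization and the exclusion $\vecs\ne 0$ versus $\vecv\ne 0$, which is routine.
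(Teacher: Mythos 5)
Your proposal is correct and follows essentially the same route as the paper's proof: Cauchy--Schwarz plus Parseval, the product formula $\widehat{p_{M,\cD,\veca}}(\vecs)=\frac{2^n}{|A|}\widehat{f}(M^\top\vecs)\prod_i\widehat{\cD}(\vecs(i))$ with the uniform-marginal hypothesis killing blocks of weight one, and then injectivity and weight-preservation of $M^\top$ to pass to the weight-level quantity $h(\ell)$ over $\veca$-respecting $M$. The only cosmetic difference is that you carry the $\prod_i\widehat{\cD}(\vecs(i))$ factor explicitly and bound it by $2^{-k\alpha n}$, whereas the paper bounds the corresponding $\vecb$-expectation by $1$ in absolute value before normalizing, which is the same estimate.
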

\begin{proof}
Observe that
\[
\|p_{M,\cD,\veca}-U\|_2^2=\sum_{\vecs\in\{0,1\}^{k\alpha n}}\left(\widehat{p}_{M,\cD,\veca}(\vecs)-\widehat{U}(\vecs)\right)^2=\sum_{\vecs\in\{0,1\}^{k\alpha n}\backslash\{0^{k\alpha n}\}}\widehat{p}_{M,\cD,\veca}(\vecs)^2 \,.
\]
Now by the Cauchy–Schwarz inequality we have that
\begin{align}
\Exp_{\substack{M\\M \text{is } \veca\text{-resp.}}}\left[\|p_{M,\cD,\veca}-U\|_{tvd}^2\right]&\leq2^{2k\alpha n}\Exp_{\substack{M\\M \text{is } \veca\text{-resp.}}}\left[\|p_{M,\cD,\veca}-U\|_2^2\right]\nonumber\\
&=2^{2k\alpha n}\Exp_{\substack{M\\M \text{is } \veca\text{-resp.}}}\left[\sum_{\vecs\in\{0,1\}^{k\alpha n}\backslash\{0^{k\alpha n}\}}\widehat{p_{M,\cD,\veca}}(\vecs)^2\right] \, . \label{eq:upper bound tvd 1 wise}
\end{align}

The following claim shows that the Fourier coefficients of the posterior distribution $p_{M,\cD,\veca}$ can be bounded from above by a certain Fourier coefficient of the indicator function $f$. Let's define $\GOOD:=\{\vecs\in\{0,1\}^{k\alpha n}\, |\, |\vecs(i)|\neq1\ \forall i\}$.

\begin{claim}\label{claim:1 wise fourier bound}
\[
\Exp_{\substack{M\\M \text{is } \veca\text{-resp.}}}[\|p_{M,\cD,\veca}-U\|_{tvd}^2]\leq\frac{2^{2n}}{|A|^2}\sum_{\vecs\in\GOOD\backslash\{0^{k\alpha n}\}}\Exp_{\substack{M\\M \text{is } \veca\text{-resp.}}}\left[\widehat{f}(M^\top \vecs)^2\right] \, . 
\]
\end{claim}
\begin{proof}
Observe that
\begin{align*}
\widehat{p_{M,\cD,\veca}}(\vecs)&= \frac{1}{2^{k\alpha n}}\sum_{\vecz\in\{-1,1\}^{k\alpha n}}p_{M,\cD,\veca}(\vecz)\prod_{\substack{i\in[\alpha n],j\in[k]\\s(i)_{j}=1}}z(i)_{j} \, .
\intertext{Recall that $p_{M,\cD,\veca}(\vecz)=\Exp_{\vecx^*\in A}\Exp_{\vecb\sim\cD^{\alpha n}}[\mathbf{1}_{\vecz=M \vecx^*\odot \vecb}]$, the equation becomes}
&=\frac{1}{2^{k\alpha n}}\cdot\Exp_{\vecx^*\in A}\left[\prod_{\substack{i\in[\alpha n],j\in[k]\\s(i)_{j}=1}}(M \vecx^*)_{i,j}\right]\Exp_{\vecb\sim\cD^{\alpha n}}\left[\prod_{\substack{i\in[\alpha n],j\in[k]\\s(i)_{j}=1}}b(i)_{j}\right] \, .
\intertext{Since $\Exp_{\veca\sim\cD}[a_j]=0$ for all $j\in[k]$, the right most sum is $0$ if there exists $i$ such that $|\vecs(i)|=1$. This equation becomes}
&\leq\frac{1}{2^{k\alpha n}}\cdot\left|\Exp_{\vecx^*\in A}\left[\prod_{\substack{i\in[\alpha n],j\in[k]\\s(i)_{j}=1}}(M \vecx^*)_{i,j}\right]\right|\cdot\mathbf{1}_{\vecs\in\GOOD} \, .
\intertext{Note that as each row and column of $M$ has at most $1$ non-zero entry, we have}
&=\frac{1}{2^{k\alpha n}}\cdot\left|\Exp_{\vecx^*\in A}\left[\prod_{\substack{i\in[n]\\(M^\top \vecs)_{i}=1}} \vecx^*_{i}\right]\right|\cdot\mathbf{1}_{\vecs\in\GOOD}
\end{align*}

Now we relate the above quantity to the Fourier coefficients of $f$. Recall that $f$ is the indicator function of the set $A$ and hence for each $\vecv\in\{0,1\}^n$, we have
\[
\widehat{f}(\vecv)=\frac{1}{2^n}\sum_{\vecx^*}f(\vecx^*)\prod_{i\in[n]:v_i=1}\vecx^*_i=\frac{1}{2^n}\sum_{\vecx^*\in A}\prod_{i\in[n]:v_i=1}\vecx^*_i \, .
\]
Thus, the Fourier coefficient of $p_M$ can be bounded as follows.
\begin{equation}\label{eq:1 wise fourier coefficient}
\widehat{p_{M,\cD,\veca}}(\vecs)\leq\frac{1}{2^{\alpha kn}}\cdot\frac{2^n}{|A|}\left|\widehat{f}(M^\top \vecs)\right|\cdot\mathbf{1}_{\vecs\in\GOOD} \, .
\end{equation}
By plugging~\cref{eq:1 wise fourier coefficient} into~\cref{eq:upper bound tvd 1 wise}, we have the desired bound and complete the proof of~\cref{claim:1 wise fourier bound}.
\end{proof}
Next, by~\cref{claim:1 wise fourier bound}, we have
\begin{align*}
\Exp_{\substack{M\\M \text{is } \veca\text{-resp.}}}[\|p_{M,\cD,\veca}-U\|_{tvd}^2]&\leq\frac{2^{2n}}{|A|^2}\sum_{\vecs\in\GOOD\backslash\{0^{\alpha kn}\}}\Exp_{\substack{M\\M \text{is } \veca\text{-resp.}}}\left[\widehat{f}(M^\top \vecs)^2\right] \, . 
\intertext{Since for a fixed $M$, the map $M^\top$ is injective, the right hand side of the above inequality has the following combinatorial form.}
&=\frac{2^{2n}}{|A|^2}\sum_{\vecv\in\{0,1\}^n\backslash\{0^n\}}\Pr_{\substack{M\\M \text{is } \veca\text{-resp.}}}\left[\exists \vecs\in\GOOD\backslash\{0^{k\alpha n}\},\ M^\top \vecs=\vecv\right]\widehat{f}(\vecv)^2 \, . 
\intertext{By symmetry, the above probability term will be the same for $\vecv$ and $\vecv'$ having the same Hamming weight. Recall that $$h(\ell)=\max_{\substack{\vecv_\ell\in\{0,1\}^n\\|\vecv_\ell|=\ell}}  \Pr_{\substack{M\\M \text{is } \veca\text{-resp.}}}\left[\exists \vecs\in\GOOD\backslash\{0^{k\alpha n}\},\ M^\top \vecs=\vecv_\ell\right] \,,$$ this equation becomes}
&\leq\frac{2^{2n}}{|A|^2}\sum_{\ell\geq1}^nh(\ell)\cdot\sum_{\substack{\vecv\in\{0,1\}^n\\|\vecv|=\ell}}\widehat{f}(\vecv)^2 \, .
\intertext{
Note that for $\ell=1$ and every $\ell>\alpha kn$, $h(\ell)=0$ by definition. Thus, this expression simplifies to the following.}
&=\frac{2^{2n}}{|A|^2}\sum_{\ell\geq2}^{\alpha kn}h(\ell)\cdot\sum_{\substack{\vecv\in\{0,1\}^n\\|\vecv|=\ell}}\widehat{f}(\vecv)^2 \, .
\end{align*}
This completes the proof of~\cref{lem:step 1 1 wise}.
\end{proof}

Now we bound from above the combinatorial quantity $h(\ell)$ from \cref{lem:step 1 1 wise}.
\begin{lemma}\label{lem:step 2 1 wise}
For every $0<\alpha\in(0,1/100k^2)$ and $\ell\in[k \alpha n]$, we have
\[
h(\ell)=\max_{\substack{\vecv_\ell\in\{0,1\}^n\\|\vecv_\ell|=\ell}} 
\Pr_{\substack{M\\M \text{is } \veca\text{-resp.}}}\left[\exists \vecs\neq0,\ |\vecs(i)|\neq1\ \forall i,\ M^\top \vecs=\vecv_\ell\right]
\leq\left(\frac{\ell}{n}\right)^{\ell/2} (e^3\alpha k^5)^{\ell/2} \,.
\]
\end{lemma}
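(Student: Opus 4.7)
The plan is a direct union bound over candidate vectors $\vecs$, exploiting a key structural observation about $\veca$-respecting random partial permutations. The advice vector $\veca$ partitions $[n]$ into $k$ equinumerous classes $V^c := \{j : a_j = c\}$ (each of size $n/k$). Because each block of $k$ rows in an $\veca$-respecting $M$ must select exactly one column from each class, a uniformly random such $M$ is equivalent to $k$ \emph{mutually independent} uniformly random injections $\Gamma_c : [\alpha n] \to V^c$, where $\Gamma_c(i)$ is the class-$c$ column chosen in block $i$. Writing $V_\ell$ for the support of $\vecv_\ell$ and $S_i := \{c : \vecs(i)_c = 1\} \subseteq [k]$, the event $M^\top \vecs = \vecv_\ell$ becomes exactly the requirement that the disjoint union $\bigsqcup_{i \in B}\{\Gamma_c(i) : c \in S_i\}$ equals $V_\ell$, where $B := \{i : S_i \neq \emptyset\}$ is the set of active blocks.

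I will re-encode each valid $\vecs$ (nonzero, with every $|S_i| \neq 1$) as a pair $(B, \phi)$, where $\phi : V_\ell \to B$ is the assignment sending each $j \in V_\ell$ to the unique block that covers it. The constraint $|S_i| \geq 2$ translates to $|\phi^{-1}(i)| \geq 2$ for all $i \in B$, and the fact that a single block produces at most one column per class translates to $\phi$ being injective on each restriction $V_\ell \cap V^c$. Setting $t := |B| \leq \ell/2$, the number of candidate pairs is crudely bounded by $\binom{\alpha n}{t} \cdot t^\ell$ (choose the $t$ active blocks, then map each of the $\ell$ elements of $V_\ell$ arbitrarily to one of them). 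For each fixed candidate, by the independence of the $\Gamma_c$'s the realization probability factors across classes: writing $\ell_c := |V_\ell \cap V^c|$, each class $c$ contributes a probability $\prod_{j=0}^{\ell_c - 1}(n/k - j)^{-1} \leq (2k/n)^{\ell_c}$, so the per-configuration probability is at most $(2k/n)^\ell$ (using $\alpha \leq 1/(100k^2)$, hence $\ell_c \leq \ell \leq k \alpha n \leq n/(2k)$, to control the factors).

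Combining these ingredients,
\[
h(\ell) \leq \sum_{t=1}^{\ell/2} \binom{\alpha n}{t}\, t^\ell\, (2k/n)^\ell.
\]
Applying $\binom{\alpha n}{t} \leq (e\alpha n/t)^t$ and the crude bound $t^{\ell - t} \leq (\ell/2)^{\ell - t}$ (valid for $t \leq \ell/2$) yields $(\ell/2)^\ell \sum_t (2e\alpha n/\ell)^t$. A short case analysis on whether $2e\alpha n/\ell$ is bigger or smaller than $1$ (i.e., whether $\ell$ is dominated by or dominates $e \alpha n$) bounds the geometric sum by its largest term up to a factor of order $\ell$, leading after simplification to $(\ell/2)(e\alpha n \ell/2)^{\ell/2}(2k/n)^\ell = (\ell/2)(\ell/n)^{\ell/2}(2e\alpha k^2)^{\ell/2}$. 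The linear factor $\ell/2$ is absorbed into the base using $\ell/2 \leq (e^2 k^3/2)^{\ell/2}$, which holds for all $\ell \geq 2$ and $k \geq 1$ since $e^2 k^3/2 \geq 2$. This yields the stated bound $h(\ell) \leq (\ell/n)^{\ell/2}(e^3 \alpha k^5)^{\ell/2}$.

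The only genuinely non-routine step is the very first one: recognizing that the advice decouples $M$ into $k$ independent random injections, one per class. Once that decomposition is in hand, the bijection between valid $\vecs$'s and labeled partitions $(B, \phi)$ is immediate and makes both the counting and the class-by-class probability factorization transparent. The constants in the final bound ($e^3 k^5$) are quite loose and reflect the slack we permit ourselves at each step; any polynomial-in-$k$ bound of this form will suffice downstream.
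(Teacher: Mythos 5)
Your proposal is correct, and it reaches the bound by a genuinely different route than the paper. The paper's proof first fixes a canonical advice by symmetry and then \emph{swaps the random objects}: it fixes the hypermatching $M$ and randomizes the target vector over all vectors with a given class profile $(\ell_1,\ldots,\ell_k)$, so that $h(\ell_1,\ldots,\ell_k)$ becomes a pure counting ratio $|U|/|V|$, bounded via $|V|\ge\binom{n/k}{\ell}\ge(n/(k\ell))^\ell$ and $|U|\le\max_{t\le \ell/2}\binom{\alpha n}{t}\binom{kt}{\ell}$, followed by the same kind of case analysis on $\ell$ versus $\alpha n$. You instead keep $M$ random, use the (correct) observation that conditioning on the advice decouples $M$ into $k$ mutually independent uniform injections $\Gamma_c:[\alpha n]\to V^c$, and run a union bound over covering patterns $(B,\phi)$, with the exact per-pattern probability $\prod_c \frac{(n/k-\ell_c)!}{(n/k)!}\le(2k/n)^\ell$ (your check $\ell\le k\alpha n\le n/(2k)$ justifies the $2k/n$ per factor) and at most $\binom{\alpha n}{t}t^\ell$ patterns with $t=|B|\le\ell/2$. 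Your route avoids both the symmetrization step and the maximization over profiles $(\ell_1,\ldots,\ell_k)$, at the price of a slightly larger pattern count ($t^\ell$ versus $\binom{kt}{\ell}$), and the final algebra — including the split on whether $2e\alpha n/\ell$ exceeds $1$ and absorbing the linear factor $\ell/2$ via $\ell/2\le(e^2k^3/2)^{\ell/2}$ — checks out and lands on exactly the stated $(\ell/n)^{\ell/2}(e^3\alpha k^5)^{\ell/2}$. One presentational nit: "re-encode each valid $\vecs$ as a pair $(B,\phi)$" is imprecise, since $\phi$ depends on $M$ as well as $\vecs$; what you are really doing is including the event $\{\exists\,\vecs\}$ in the union over admissible pairs $(B,\phi)$ of the events $\{\Gamma_{c(j)}(\phi(j))=j\ \forall j\in V_\ell\}$ — your probability computation already treats it this way, so phrase it accordingly when writing it up.
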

\begin{proof}
By symmetry, without loss of generality we can fix the advice vector $\veca=(1^{n/k}2^{n/k}\ldots k^{n/k})$. For non-negative integers $\ell_1,\ldots,\ell_k$, we say that $\vecv_\ell\in\{0,1\}^n$ is an $(\ell_1,\ldots,\ell_k)$-vector if for every $i\in[k]$, $\vecv$ has exactly $\ell_i$ entries equal~$1$ in the $i$th group of $n/k$ coordinates. For fixed values of $\ell_i$,
let us define 
\[
h(\ell_1,\ldots,\ell_k)=\Pr_{\substack{M\\M \text{is } \veca\text{-resp.}}}\left[\exists \vecs\neq0,\ |\vecs(i)|\neq1\ \forall i,\ M^\top \vecs \text{ is a } (\ell_1,\ldots,\ell_k) \text{-vector}\right] \, .
\]
We note that 
\begin{align}\label{eq:hell}
h(\ell)=\max_{\substack{\ell_1,\ldots,\ell_k\geq0\\\sum_i \ell_i=\ell}} h(\ell_1,\ldots,\ell_k) \,.
\end{align}

An equivalent way to compute the probability $h(\ell_1,\ldots,\ell_k)$ is to fix the matching $M=\{(i, n/k+i,\ldots,(k-1)n/k+i)| i\in[\alpha n] \}$, and to let $\vecv$ be a random $(\ell_1,\ldots,\ell_k)$-vector . Then 
\begin{align}\label{eq:hbound}
h(\ell_1,\ldots,\ell_k)
=
\Pr_{\substack{\vecv\\ \vecv \text{ is } (\ell_1,\ldots,\ell_k)}}\left[\exists \vecs\neq0,\ |\vecs(i)|\neq1\ \forall i,\ M^\top \vecs=\vecv\right] 
=\frac{|U|}{|V|}
\, ,
\end{align}
where $V\subseteq\{0,1\}^n$ is the set of all $(\ell_1,\ldots,\ell_k)$-vectors, and $U =\{\vecu\in V\colon \exists \vecs\neq0,\ |\vecs(i)|\neq1\ \forall i,\ M^\top \vecs=\vecu\}$. From $\ell_1+\ldots+\ell_k=\ell$, the number of $(\ell_1,\ldots,\ell_k)$-vectors is
\begin{align}\label{eq:vbound}
|V|=\prod_{i=1}^k \binom{n/k}{\ell_i} \geq \binom{n/k}{\sum_{i=1}^k{\ell_i}}=\binom{n/k}{\ell} \geq \left(\frac{n}{k\ell}\right)^\ell \,,
\end{align}
where the first inequality uses that $n/k\geq k\alpha n\geq \ell$ for $\alpha\leq1/k^2$.

For a vector $\vecs\in\{0,1\}^{k\alpha n}$, let $T_\vecs=\{i\colon |\vecs(i)|>0\}$ be the set of indices of non-zero blocks of $\vecs$. In order to give an upper bound on the size of $U$, first we pick a set $T_\vecs$, and then we choose a vector $\vecu$ such that $M^\top \vecs=\vecu$ for some $\vecs$ corresponding to the set $T_\vecs$. Note that since for each $i\in T$, $\vecs(i)>0$ and $\vecs(i)\neq1$ by the definition of $h(\ell)$, the size of $t=|T|\leq k/2$. For every $t$, the number of ways to choose $T_\vecs$ is $\binom{\alpha n}{t}$. For a fixed $T_\vecs$, it remains to choose the $\ell$ coordinates of $\vecu$ among at most $kt$ non-zero coordinates of $\vecs$. 
For a vector $\vecs\in\{0,1\}^{k\alpha n}$, let $T_\vecs=\{i\in[\alpha n]\colon |\vecs(i)|>0\}$ be the set of indices of non-zero blocks of $\vecs$. In order to give an upper bound on the size of $U$, first we pick a set $T$, and then we choose a vector $\vecu$ such that $M^\top \vecs=\vecu$ for some $\vecs$ with (i) $|\vecs(i)\neq1|$ for all $i$ and (ii) $T_\vecs=T$. Note that since for each $i\in T$, $\vecs(i)>0$ and $|\vecs(i)|\neq1$, the size of $t=|T|\leq \ell/2$. For every $t$, the number of ways to choose $T$ is $\binom{\alpha n}{t}$. For a fixed $T$, it remains to choose the $\ell$ coordinates of $\vecu$ among at most $kt$ non-zero coordinates of $\vecs$.
This gives us the following upper bound on the size of $|U|$.
\begin{align}\label{eq:boundingu}
|U|
\leq \max_{t\leq \ell/2} \binom{\alpha n}{t}\binom{kt}{\ell} \,.
\end{align}
The second term of the upper bound in \cref{eq:boundingu} can be bounded from above by \[\binom{kt}{\ell} \leq \left(\frac{ekt}{\ell}\right)^{\ell} \leq \left(\frac{ek\ell/2}{\ell}\right)^{\ell}=\left(\frac{ek}{2}\right)^{\ell}\,. \]

Now we'll show that the first term of the upper bound in \cref{eq:boundingu} can be bounded from above by 
$\left(\frac{2ek\alpha n}{\ell}\right)^{\ell/2}
$.
If $\ell\geq 2\alpha n$, then
\[
\binom{\alpha n}{t} \leq 2^{\alpha n}\leq 2^{\ell/2} \leq \left(\frac{2ek\alpha  n}{\ell}\right)^{\ell/2} \,,
\]
where in the last inequality we use $\ell\leq k \alpha n$.
If $\ell <2\alpha n$, then $t\leq \ell/2 < \alpha n$, and
\[
\binom{\alpha n}{t} \leq \left(\frac{e\alpha n}{t}\right)^t \leq \left(\frac{2e\alpha n}{\ell}\right)^{\ell/2}<
\left(\frac{2ek \alpha n}{\ell}\right)^{\ell/2} \,.
\]

The above implies that
\begin{align}\label{eq:ubound}
|U|
\leq \max_{t\leq \min\{\alpha n, \ell/2\}} \binom{\alpha n}{t}\binom{kt}{\ell}
\leq \left(\frac{ek}{2}\right)^{\ell} \left(\frac{2ek \alpha n}{\ell}\right)^{\ell/2} 
\leq \left(\frac{n}{\ell}\right)^{\ell/2} (e^3\alpha k^3)^{\ell/2} \,.
\end{align}
Finally, from \cref{eq:hell,eq:hbound,eq:vbound,eq:ubound},
\[
h(\ell)=\max_{\substack{\ell_1,\ldots,\ell_k\geq0\\\sum_i \ell_i=\ell}} h(\ell_1,\ldots,\ell_k) 
=\frac{|U|}{|V|} 
\leq
\left(\frac{k\ell}{n}\right)^\ell \cdot \left(\frac{n}{\ell}\right)^{\ell/2} (e^3\alpha k^3)^{\ell/2}
\leq
\left(\frac{\ell}{n}\right)^{\ell/2} (e^3\alpha k^5)^{\ell/2}
\,.
\]
\end{proof}

In \cref{lem:distance_p_to_u} below we give the final ingredient needed for the proof of \cref{thm:rmd}. If $U$ is the uniform distribution over $\{-1,1\}^k$, then we show that for every large set $A\subseteq\{0,1\}^n$ of inputs $x$ corresponding to a fixed Alice's message (and a fixed advice $\veca$), $\Exp_{M,M \text{is } \veca\text{-resp.}}[\|p_{M,\cD,\veca}-U\|_{tvd}^2]$ is small.

\begin{lemma}\label{lem:distance_p_to_u}
For every $k\in\N$ there exists $\alpha_0>0$ such that for every $0<\alpha\leq\alpha_0, \delta\in(0,1)$, and $c\leq \frac{\delta \sqrt{n}}{100 \sqrt{\alpha k^5}}$ the following holds for all large enough~$n$. If $\cD$ is a distribution over $\{-1,1\}^k$ such that for all $j\in[k],\, \Exp_{\veca\sim\cD}[a_j]=0$,
and $A\subseteq\{-1,1\}^n$ is of size $|A|\geq 2^{n-c}$, then 
\[
\Exp_{\substack{M\\M \text{is } \veca\text{-resp.}}}[\|p_{M,\cD,\veca}-U\|_{tvd}^2]\leq \frac{\delta^2}{16} \,.
\]
where $U\sim\unif(\{-1,1\}^{k\alpha n})$.
\end{lemma}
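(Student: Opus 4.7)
} The plan is to combine the two technical lemmas already established in this section --- \cref{lem:step 1 1 wise} (the Fourier-analytic reduction) and \cref{lem:step 2 1 wise} (the combinatorial matching bound) --- with the hypercontractivity estimate of \cref{lem:kkl}. Starting from \cref{lem:step 1 1 wise}, we have
\[
\Exp_{M\text{ is }\veca\text{-resp.}}\left[\|p_{M,\cD,\veca}-U\|_{tvd}^2\right]\leq\sum_{\ell=2}^{k\alpha n}h(\ell)\cdot\frac{2^{2n}}{|A|^2}\sum_{|\vecv|=\ell}\widehat{f}(\vecv)^2,
\]
and by \cref{lem:step 2 1 wise} each $h(\ell)\leq(\ell/n)^{\ell/2}(e^3\alpha k^5)^{\ell/2}$. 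The natural strategy is to split the sum at the hypercontractivity threshold $\ell=4c$ since \cref{lem:kkl} only applies for $\ell\leq 4c$.

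For the low-degree range $2\leq \ell\leq 4c$, I would substitute the KKL bound $\frac{2^{2n}}{|A|^2}\sum_{|\vecv|=\ell}\widehat{f}(\vecv)^2\leq(4\sqrt{2}c/\ell)^\ell$ from \cref{lem:kkl}. Multiplying by the bound on $h(\ell)$, each summand simplifies to
\[
\left(\frac{e^3\alpha k^5\ell}{n}\right)^{\ell/2}\!\!\left(\frac{4\sqrt2\, c}{\ell}\right)^{\ell}=\left(\frac{32\,e^3\alpha k^5 c^2}{n\,\ell}\right)^{\ell/2}.
\]
Plugging in the hypothesis $c\leq \delta\sqrt n/(100\sqrt{\alpha k^5})$ gives $32e^3\alpha k^5 c^2/n\leq C\delta^2$ for some absolute constant $C<1/2$, so each term is bounded by $(C\delta^2/\ell)^{\ell/2}\leq (C\delta^2/2)^{\ell/2}$ and the whole sum over $\ell\geq 2$ is a geometric series in $\sqrt{C\delta^2/2}$, giving a total contribution of $O(\delta^2)$ which, with a sufficiently small absolute constant in place of $100$, is at most $\delta^2/32$.

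For the high-degree range $\ell>4c$, I would drop the hypercontractivity bound and use only the Parseval estimate $\sum_{|\vecv|=\ell}\widehat{f}(\vecv)^2\leq \|f\|_2^2=|A|/2^n$, which gives $\frac{2^{2n}}{|A|^2}\sum_{|\vecv|=\ell}\widehat{f}(\vecv)^2\leq 2^n/|A|\leq 2^c$. Since $\ell\leq k\alpha n$, we also have $h(\ell)\leq (e^3\alpha^2 k^6)^{\ell/2}$, and by choosing $\alpha_0$ small enough so that $e^3\alpha_0^2 k^6\leq 1/16$, each summand becomes at most $2^c\cdot 16^{-\ell/2}=2^{c-2\ell}$. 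Using $\ell>4c$ this is $\leq 2^{-7\ell/4}$, and the tail sum over $\ell>4c$ is bounded by $O(2^{-7c})=o(1)$, which for large enough $n$ is far smaller than $\delta^2/32$.

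The main subtleties are book-keeping: picking a single $\alpha_0=\alpha_0(k)$ small enough to simultaneously satisfy (i) the hypothesis $\alpha\in(0,1/(100k^2))$ of \cref{lem:step 2 1 wise}, (ii) $e^3\alpha_0^2 k^6\leq 1/16$ for the tail, and (iii) the implicit requirement that the geometric series in the low-degree range converges with the right constant; and verifying that the constant $100$ in the bound on $c$ is large enough to pay for all the absolute constants absorbed above. Adding the two contributions yields the claimed bound of $\delta^2/16$.
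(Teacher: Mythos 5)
Your low-degree analysis ($2\le\ell\le 4c$) follows the paper's proof exactly: combine \cref{lem:step 1 1 wise}, \cref{lem:step 2 1 wise} and \cref{lem:kkl}, collapse each summand to $\bigl(32e^3\alpha k^5c^2/(n\ell)\bigr)^{\ell/2}$, substitute the hypothesis on $c$, and sum a geometric series. (Two small remarks there: with the constant $100$ as stated this sum comes out to roughly $\delta^2/28$ rather than $\delta^2/32$ --- harmless, since only the total needs to stay below $\delta^2/16$ --- and your hedge about a ``sufficiently small'' constant in place of $100$ points the wrong way: shrinking that constant weakens the hypothesis on $c$ and makes the bound harder, not easier.)

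The genuine gap is in the tail $\ell>4c$. By bounding $(\ell/n)^{\ell/2}\le(k\alpha)^{\ell/2}$ you discard the only $n$-dependence in $h(\ell)$, so your tail estimate $O(2^{-7c})$ does not involve $n$ at all; the assertion that it is ``$o(1)$ for large enough $n$'' is therefore false, and it need not be below $\delta^2/32$. The lemma quantifies over every $c\le\delta\sqrt n/(100\sqrt{\alpha k^5})$ and every $\delta\in(0,1)$, so $c$ may be a small constant while $\delta$ is tiny: for $c=1$ and $\delta=10^{-2}$ your tail bound is about $3\times10^{-3}$, vastly larger than $\delta^2/32\approx3\times10^{-6}$. (In the paper's application $c$ happens to grow like $\sqrt n$, but the statement being proved does not assume this.) The paper's proof avoids this by keeping the $(\ell/n)^{\ell/2}$ factor: for $\alpha\le 1/(2e^3k^5)$ the map $\ell\mapsto\bigl(\ell e^3\alpha k^5/n\bigr)^{\ell/2}$ is decreasing on the whole range $(4c,k\alpha n]$, so after a single application of Parseval (using $\sum_{\vecv}\widehat f(\vecv)^2\le1$ and $2^{2n}/|A|^2\le 2^{2c}$) the entire tail is at most $2^{2c}\bigl((4c+1)e^3\alpha k^5/n\bigr)^{(4c+1)/2}$, which after substituting the bound on $c$ has the form $\bigl(O_k(\delta\sqrt{\alpha})/\sqrt n\bigr)^{2c}$ and hence is negligible compared to $\delta^2$ for all large $n$, uniformly over the allowed $c$ (the case $c=0$ being trivial). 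Your tail argument needs to be repaired along these lines, i.e., by retaining the $1/n$ decay coming from the maximizer $\ell=4c+1$ rather than reducing to a bound independent of $n$.
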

\begin{proof}
\cref{lem:step 1 1 wise} and \cref{lem:step 2 1 wise} imply that for every $A$ of size $|A|\geq 2^{n-c}$,
\[
\Exp_{\substack{M\\M \text{is } \veca\text{-resp.}}}[\|p_{M,\cD,\veca}-U\|_{tvd}^2] \leq\frac{2^{2n}}{|A|^2}\cdot\sum_{\ell\geq2}^{k \alpha n}\left(\frac{\ell}{n}\right)^{\ell/2} (e^3\alpha k^5)^{\ell/2} \sum_{\substack{\vecv\in\{0,1\}^n\\|\vecv|=\ell}}\widehat{f}(\vecv)^2 \, .
\]

For every $\ell\in[4c]$, \cref{lem:kkl} implies that 
\begin{align*}
\frac{2^{2n}}{|A|^2}\sum_{\substack{\vecv\in\{0,1\}^n\\|\vecv|=\ell}}\widehat{f}(\vecv)^2&\leq\left(\frac{4\sqrt{2}c}{\ell}\right)^\ell \, .
\end{align*}
By the Parseval identity,
$\sum_{\vecv}\widehat{f}(\vecv)^2\leq1$. This gives us that
\begin{align*}
\Exp_{\substack{M\\M \text{is } \veca\text{-resp.}}}[\|p_{M,\cD,\veca}-U\|_{tvd}^2]&\leq\sum_{\ell\geq2}^{4c}\left(\frac{\ell}{n}\right)^{\ell/2} (e^3\alpha k^5)^{\ell/2} \cdot\left(\frac{4\sqrt{2}c}{\ell}\right)^\ell+\frac{2^{2n}}{|A|^2}\cdot\max_{4c<\ell\leq k\alpha n}\left\{\left(\frac{\ell}{n}\right)^{\ell/2} (e^3\alpha k^5)^{\ell/2}\right\}\,.
\intertext{Recall that $c\leq \frac{\delta \sqrt{n}}{100 \sqrt{\alpha k^5}}$. Let $\alpha_0=\frac{1}{2e^3k^5}$. Then for every $\alpha\leq\alpha_0$, the max term on the right hand side is maximized by $\ell=4c+1$ for all large enough $n$,}
&\leq\sum_{\ell\geq2}^{4c}\left(\frac{32e^3\alpha k^5 c^2}{n\ell}\right)^{\ell/2} + \left(\frac{8e^3c\alpha k^5}{n}\right)^{2c} \\
&\leq \sum_{\ell\geq2}^{4c} \left(\frac{\delta^2}{30}\right)^{\ell/2}+\left(\frac{8e^3\delta\sqrt{\alpha}}{100\sqrt{k^3}\sqrt{n}}\right)^{2c}
\\
&<\frac{\delta^2}{16} \, .
\end{align*}
\end{proof}

We are ready to finish the proof of \cref{thm:rmd}.
\begin{proof}[Proof of \cref{thm:rmd}]
Let us set $\tau=\frac{\delta}{200\sqrt{\alpha k^5}}$, and let $\alpha_0$ be as set in \cref{lem:distance_p_to_u}. 
Suppose that there exists a one-way communication protocol for $(\cD_Y,\cD_N)$-\textsf{Advice-RMD} that uses $s=\tau\sqrt{n}$ bits of communication and has advantage at least $\delta$. By the triangle inequality there must exist a protocol with advantage $\delta/2$ and $s$ bits of communication for either the $(\cD_Y,\cD_{unif})$-\textsf{Advice-RMD} or the $(\cD_N,\cD_{unif})$-\textsf{Advice-RMD} problem. Without loss of generality, we assume that $(\cD_Y,\cD_{unif})$-\textsf{Advice-RMD} can be solved with advantage $\delta/2$. Then,
\[
\|p_{M,\cD_Y,\veca}-p_{M,\cD_{unif},\veca}\|_{tvd}\geq\frac{\delta}{2} \, .
\]
Without loss of generality, we can assume that Alice's protocol is deterministic. In other words, for every $\veca$, Alice's $s$-bit communication protocol partitions the set of $\{-1,1\}^n$ of inputs $x$ into $2^s$ sets $A_1,\ldots,A_{2^s}\subseteq\{-1,1\}^n$ according to the message sent by Alice. Therefore, at least $(1-\delta/4)$-fraction of inputs $x\in\{-1,1\}^n$ belongs to sets $A_i$ of size $|A_i|\geq \frac{\delta}{4}\cdot 2^{n-s}\geq 2^{n-c}$ for $c=s+1-\log\delta$. By \cref{lem:distance_p_to_u}, for every $A_i$ of size $|A_i|\geq2^{n-c}$,
\[
\|p_{M,\cD_Y,\veca}-p_{M,\cD_{unif},\veca}\|_{tvd}|_{\vecx^*\in A_i}
=\Exp_{\substack{M\\M \text{is } \veca\text{-resp.}}}[\|p_{M,\cD,\veca}-U\|_{tvd}|_{\vecx^*\in A_i}]
\leq \delta/4 \,.
\]
Finally, 
\begin{align*}
\|p_{M,\cD_Y,\veca}-p_{M,\cD_{unif},\veca}\|_{tvd}
&\leq \Pr[x\in A_i\colon |A_i|< 2^{n-c}]\\
&+\Pr[x\in A_i\colon |A_i|\geq 2^{n-c}]\cdot\|p_{M,\cD_Y,\veca}-p_{M,\cD_{unif},\veca}\|_{tvd}|_{\vecx^*\in A_i}\\
&\leq \delta/4 + (1-\delta/4)\cdot \delta/4\\
&<\delta/2 \,.
\end{align*}
\end{proof}

\section{Hardness of Signal Detection}\label{sec:polar}

In this section we extend the hardness result of the SD problems for the special distributions described in~\cref{sec:kpart} to the fully general setting, thus proving the following theorem.

\reducermdtosd*

The bulk of this section is devoted to proving that for every pair of distributions $\cD_Y$ and $\cD_N$, we can find a path (a sequence) of intermediate distributions $\cD_Y = \cD_0,\cD_1,\ldots,\cD_L = \cD_N$ such that adjacent pairs in this sequence are indistinguishable by a ``basic'' argument, where a basic argument is a combination of an indistinguishability result from~\cref{thm:polarization step hardness} and a shifting argument.

Our proof comes in the following steps:

\begin{enumerate}
    \item For every marginal vector $\vecmu$, we identify a {\em canonical} distribution $\cD_{\vecmu}$ that we use as the endpoint of the path. So it suffices to prove that for all $\cD$, $\cD$ is indistinguishable from $\cD_{\vecmu(\cD)}$, i.e., there is a path of finite length from $\cD$ to $\cD_{\vecmu(\cD)}$. 
  
    \item We give a combinatorial proof that there is a path of finite length (some function of $k$) that takes us from an arbitrary distribution to the canonical one.
\end{enumerate}
Putting these ingredients together, along with a proof that a ``basic step'' is indistinguishable  gives us the final theorem.

Let $\gd = [q_1]\times \cdots [q_k]$ where $\forall i, q_i\in \mathbb{N}$. We start with the definition of the chain and the canonical distribution. For a distribution $\cD \in \Delta(\gd)$, its support is the set  $\supp(\cD) = \{\veca\in \gd\, |\, \cD(\veca) > 0\}$.  For $\cD \in \gd$, we define the marginal vector $\vecmu(\cD) = (\mu_{i,\sigma})_{i \in [k], \sigma\in [q_i]}$ as $\mu_{i,\sigma} = \Pr_{\veca \sim \cD} [a_i = \sigma]$. Next, we consider the following partial order on $\gd$. For vectors $\veca,\vecb \in \gd$ we use the notation $\veca \leq \vecb$ if $a_i \leq b_i$ for every $i \in [k]$. Further we use $\veca < \vecb$ if $\veca \leq \vecb$ and $\veca \ne \vecb$.

\begin{definition}[Chain]\label{def:chain}
We refer to a sequence $\veca(0)< \veca(1) < \cdots < \veca(\ell)$, $\veca(i) \in \gd$ for every $i \in \{0,\ldots,\ell\}$,  as a {\em chain} of length $\ell$. Note that chains in $\gd$ have length at most $\sum_{i=1}^k (q_i-1)$.
\end{definition}

\begin{lemma}[Canonical distribution]\label{prop:canonical}
Given a vector of marginals $\vecmu=(\mu_{i,\sigma})_{i\in[k],\sigma\in[q_i]}$, there exists a unique distribution $\cD$ with matching marginals ($\vecmu(\cD)=\vecmu$) such that the support of $\cD$ is a chain. We call this the \emph{canonical distribution} $\cD_\mu$ associated with $\vecmu$.
\end{lemma}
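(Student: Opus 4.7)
The plan is to prove both existence and uniqueness via a quantile-coupling (monotone rearrangement) construction, which gives a very clean canonical object. Write $F_i:[q_i]\to[0,1]$ for the CDF of the $i$th marginal, $F_i(\sigma) = \sum_{\tau\le\sigma}\mu_{i,\tau}$, with the convention $F_i(0)=0$, and let $F_i^{-1}(u)=\min\{\sigma\in[q_i]:F_i(\sigma)\ge u\}$ be the associated quantile function.

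For existence, I would define $\cD_{\vecmu}$ as the pushforward of the uniform distribution on $[0,1]$ under the map $\vecg:[0,1]\to\gd$ given by $\vecg(u)=(F_1^{-1}(u),\ldots,F_k^{-1}(u))$. Equivalently, $\cD_\vecmu$ is supported on the (finite) set $\{\vecg(u):u\in[0,1]\}$, and the mass assigned to $\vecg(u)$ for $u$ in a maximal constancy interval of $\vecg$ equals the length of that interval. Two checks then suffice. First, the marginals match: for each $i$ and $\sigma$, $\Pr_{U}[F_i^{-1}(U)=\sigma]=F_i(\sigma)-F_i(\sigma-1)=\mu_{i,\sigma}$, so the $i$th marginal of $\cD_\vecmu$ is exactly $\mu_{i,\cdot}$. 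Second, the support is a chain: each $F_i^{-1}$ is non-decreasing in $u$, so for $u<u'$ we have $\vecg(u)\le\vecg(u')$ coordinatewise, whence any two support points are comparable in the partial order. Collapsing equal values and sorting gives the required strict chain $\veca(0)<\cdots<\veca(\ell)$.

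For uniqueness, suppose $\cD$ is any distribution on $\gd$ with marginals $\vecmu$ whose support is a chain $\veca(0)<\veca(1)<\cdots<\veca(\ell)$ with probabilities $p_0,\ldots,p_\ell$ (summing to $1$). Let $s_j=p_0+\cdots+p_{j-1}$ and consider the natural representation $\vech:[0,1)\to\gd$ that sends $u\in[s_j,s_{j+1})$ to $\veca(j)$; then $\vech$ is coordinatewise non-decreasing in $u$ and its pushforward under Lebesgue measure is $\cD$. Now fix a coordinate $i$: the function $h_i:[0,1)\to[q_i]$ is non-decreasing and its pushforward is the measure $\mu_{i,\cdot}$ on $[q_i]$. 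A non-decreasing step function on $[0,1)$ is determined almost everywhere by its pushforward measure — indeed, the threshold at which $h_i$ jumps from value $\le \sigma$ to value $>\sigma$ must equal $F_i(\sigma)$, since the mass of $\{u:h_i(u)\le\sigma\}$ must equal $\sum_{\tau\le\sigma}\mu_{i,\tau}=F_i(\sigma)$. Hence $h_i(u)=F_i^{-1}(u)$ for almost every $u$, and this holds simultaneously for every $i\in[k]$. Therefore $\vech=\vecg$ almost everywhere, and pushing forward gives $\cD=\cD_\vecmu$.

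The core of the argument is the monotone rearrangement / quantile coupling, and the main (but modest) technical point is the almost-everywhere uniqueness of a non-decreasing step function with a prescribed pushforward measure; everything else is bookkeeping in the finite product poset $\gd$. No other ingredients are needed beyond the elementary partial-order structure, and no quantitative estimates are involved, so this lemma should go through cleanly as a self-contained structural fact that can then be invoked in the subsequent polarization argument.
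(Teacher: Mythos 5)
Your proof is correct, but it takes a genuinely different route from the paper. The paper argues by induction on $\sum_i q_i$: it sets $\tau=\min_i \mu_{i,q_i}$, peels off mass $\tau$ at the top point $(q_1,\dots,q_k)$, recurses on the reduced marginals, and proves uniqueness by showing that any chain-supported distribution with marginals $\vecmu$ is forced to place exactly $\tau$ on the top point (via the coordinate achieving the minimum). You instead construct the canonical distribution in closed form as the comonotone (quantile) coupling, pushing forward $\unif([0,1])$ under $u\mapsto (F_1^{-1}(u),\dots,F_k^{-1}(u))$, and you get uniqueness from the fact that a non-decreasing step function on $[0,1)$ is determined almost everywhere by its pushforward: since the level sets $\{u: h_i(u)\le\sigma\}$ of a monotone $h_i$ are initial intervals, their lengths pin down the jump thresholds to be $F_i(\sigma)$, so $h_i=F_i^{-1}$ a.e.\ simultaneously for all $i$, hence any chain-supported $\cD$ with these marginals coincides with your construction. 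Both arguments are complete; yours gives an explicit formula for $\cD_\vecmu$ and identifies it with the standard comonotone coupling (the unique coupling supported on a chain), while the paper's greedy top-down induction stays entirely combinatorial, avoids the measure-theoretic representation on $[0,1)$, and matches the recursive, greedy style reused later in the \textsc{Polarize} algorithm. Either proof can be invoked verbatim in the subsequent polarization argument.
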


\begin{proof}
We will prove the proposition by applying induction on $\sum_{i=1}^k q_i$. In the base case when $\sum_{i=1}^k q_i = k$, there is only one point in the support of the distribution and the claim holds trivially. For  $\sum_{i=1}^k q_i > k$, define $h = \arg \min_{i\in[k]} \mu_{i,q_i}$ and $\tau = \mu_{h,q_h}$. Let $\tilde{q}_h = q_h-1$ and $\tilde{q}_i = q_i$, for $i\ne h$. Define a vector of marginals $\tilde{\vecmu}=(\tilde{\mu}_{i,\sigma})_{i\in[k],\sigma\in[\tilde{q}_i]}$ as follows: $\tilde{\mu}_{i,\sigma} = (\mu_{i,\sigma}-\tau)/(1-\tau)$ if $i\neq h$ and $\sigma=q_i$, and $\tilde{\mu}_{i,\sigma} = \mu_{i,\sigma}/(1-\tau)$ otherwise. By the induction hypothesis, there exists a unique distribution $\tilde{\cD}$ supported on a chain such that $\vecmu(\tilde{\cD})=\tilde{\vecmu}$. Observe that the distribution $\cD=(1-\tau)\tilde{\cD} + \tau \{(q_1,\dots,q_k)\}$ has marginal $\mu$ and is supported on a chain. We will now show that $\cD$ is the unique distribution with these properties. For a distribution $\cD\in\Delta([q_1]\times\cdots\times[q_k])$ and  $\vecv\in[q_1]\times\cdots\times[q_k]$, we define $\cD(\vecv) = \Pr_{\vecc\sim \cD}[\vecc=\vecv] $. Note that it suffices to prove that if $\cD'\in\Delta([q_1]\times\cdots\times[q_k])$ is supported on a chain and $\vecmu(\cD')=\vecmu$, then $\cD'(q_1,\dots,q_k)=\tau$. Clearly $\cD'(q_1,\dots,q_k)\le\tau$. Let $\vecu$ be lexicographically the largest vector smaller than $(q_1,\dots,q_k)$ in the support of $\cD'$. Let $r$ be an index where $\vecu_{r}<q_r$. Since $\cD'$ is supported on a chain, $\cD'(\vecv)=0$ for $\vecv\in[q_1]\times\cdots\times[q_k]$ such that $\vecv_r = q_r$ and $\vecv\neq (q_1,\dots,q_k)$. Hence $\mu_{r,q_r} = \cD'(q_1,\dots,q_k)$. Since $\tau = \min_{i\in[k]} \mu_{i,q_i}$, we have $\tau\le \mu_{r,q_r} = \cD'(q_1,\dots,q_k)$.

\end{proof}

For $\vecu,\vecv \in \gd$, let 
$\vecu' = \min\{\vecu,\vecv\}\triangleq(\min\{u_1,v_1\},\ldots,\min\{u_k,v_k\})$ and let 
$\vecv' = \max\{\vecu,\vecv\}\triangleq(\max\{u_1,v_1\},\ldots,\max\{u_k,v_k\})$. We say $\vecu$ and $\vecv$ are incomparable if $\vecu \not\leq \vecv$ and $\vecv \not\leq \vecu$. Note that if $\vecu$ and $\vecv$ are incomparable then $\{\vecu,\vecv\}$ and $\{\vecu',\vecv' \}$ are disjoint\footnote{To see this, suppose $\vecu=\vecu'$, then we have $u_j=\min\{u_j,v_j\}$ for all $j\in[k]$ and hence $\vecu\leq\vecv$, which is a contradiction. The same analysis works for the other cases.}.

\begin{definition}[Polarization (update) operator]\label{def:polarization operator}
Given a distribution $\cD \in \Delta(\gd)$ and incomparable elements $\vecu,\vecv\in \gd$, we define the $(\vecu,\vecv)$-{\em polarization} of $\cD$, denoted $\cD_{\vecu,\vecv}$, to be the distribution as given below. Let $\epsilon = \min\{\cD(\vecu),\cD(\vecv)\}$. 
\[
\cD_{\vecu,\vecv}(\vecb) =\left\{\begin{array}{ll}
\cD(\vecb)-\epsilon     & ,\ \vecb\in\{\vecu,\vecv\} \\
\cD(\vecb)+\epsilon       & ,\ \vecb \in \{\vecu', \vecv' \} \\
\cD(\vecb) & ,\ \text{otherwise.}
\end{array}\right.
\]
We refer to $\epsilon(\cD,\vecu,\vecv) = \min\{\cD(\vecu),\cD(\vecv)\}$ as the polarization amount.
\end{definition}

It can be verified that the polarization operator preserves the marginals, i.e., $\vecmu(\cD) = \vecmu(\cD_{\vecu,\vecv})$. Note also that this operator is non-trivial, i.e., $\cD_{\vecu,\vecv}=\cD$, if  $\{\vecu,\vecv\} \not\subseteq \supp(\cD)$.

\begin{restatable}[Indistinguishability of the polarization step]{theorem}{polarizationstephardness}\label{thm:polarization step hardness}
Let $n,k,q\in\N$, $\alpha\in(0,1)$ where $k,q,\alpha$ are constants with respect to $n$ and $\alpha n$ is an integer less than $n/k$. For a distribution $\cD\in\Delta([q]^k)$, incomparable vectors $\vecu,\vecv\in[q]^k$, and $\delta>0$, there exists $\tau>0$ such that every protocol for $(\cD,\cD_{\vecu,\vecv})$-SD  achieving advantage $\delta$ requires $\tau\sqrt{n}$ bits of communication. 
\end{restatable}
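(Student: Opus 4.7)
The plan is to reduce the Boolean \textsf{Advice-SD} problem with uniform marginals (which is hard by \cref{thm:kpart}) to $(\cD, \cD_{\vecu,\vecv})$-\textsf{SD}, exploiting the Boolean cube structure hidden inside the four points $\{\vecu, \vecv, \vecu', \vecv'\}$ on which the polarization operator acts.

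First I would isolate this Boolean structure. Let $S = \{i \in [k] : u_i \ne v_i\}$ and $s = |S|$; since $\vecu, \vecv$ are incomparable, $s \ge 2$. For $i \in S$ set $\alpha_i = \min(u_i, v_i)$ and $\beta_i = \max(u_i, v_i)$; for $i \notin S$ set $\gamma_i = u_i = v_i$. Under the coordinatewise identification $\alpha_i \leftrightarrow -1$, $\beta_i \leftrightarrow +1$ on each $i \in S$, the four points restricted to $S$ become $\{\vecw, -\vecw, -\mathbf{1}, \mathbf{1}\} \subseteq \{-1,1\}^s$, where $w_i = -1$ iff $u_i = \alpha_i$. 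The Boolean distributions $\widetilde{\cD}_Y := \unif\{\vecw, -\vecw\}$ and $\widetilde{\cD}_N := \unif\{\mathbf{1}, -\mathbf{1}\}$ are both supported on antipodal pairs and hence have uniform one-wise marginals, so \cref{thm:kpart} gives an $\Omega(\sqrt{n})$ communication lower bound for $(\widetilde{\cD}_Y, \widetilde{\cD}_N)$-\textsf{Advice-SD}.

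Next I would build the reduction in two stages. Stage A lifts the Boolean \textsf{Advice-SD} to a $(\bar{\cD}_Y, \bar{\cD}_N)$-\textsf{SD} instance on $[q]^k$, where $\bar{\cD}_Y := \unif\{\vecu, \vecv\}$ and $\bar{\cD}_N := \unif\{\vecu', \vecv'\}$ extend $\widetilde{\cD}_Y, \widetilde{\cD}_N$ by the fixed values $\gamma_i$ on coordinates outside $S$. Alice translates $\vecx^{*'}$ into $\vecx^*$ by mapping each Boolean coordinate through the $\alpha_i / \beta_i$ dictionary indicated by the advice $\veca'$, using shared randomness to embed these values into a uniform $[q]^n$-string (for instance, by randomly permuting, per coordinate, the two $[q]$-values used to encode $\pm 1$); Bob extends each Boolean $s$-hyperedge of $M'$ to a $k$-hyperedge by appending $k - s$ padding positions sampled through shared randomness, attaching $\gamma_i$ to the mask entries. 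Stage B embeds this $(\bar{\cD}_Y, \bar{\cD}_N)$-\textsf{SD} instance (at rate $2\epsilon\alpha$ with $\epsilon = \min(\cD(\vecu), \cD(\vecv))$) into a $(\cD, \cD_{\vecu,\vecv})$-\textsf{SD} instance at rate $\alpha$, using the mixture decompositions $\cD = (1-2\epsilon)\cD_* + 2\epsilon\bar{\cD}_Y$ and $\cD_{\vecu,\vecv} = (1-2\epsilon)\cD_* + 2\epsilon\bar{\cD}_N$ where $\cD_* := (\cD - 2\epsilon\bar{\cD}_Y)/(1-2\epsilon)$ is the common component, by interleaving $(1-2\epsilon)\alpha n$ auxiliary $\cD_*$-constraints through shared randomness. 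A $c$-bit $(\cD, \cD_{\vecu,\vecv})$-\textsf{SD} protocol of advantage $\delta$ then yields an $O(c)$-bit Boolean \textsf{Advice-SD} protocol of advantage $\Omega(\delta)$, and \cref{thm:kpart} forces $c = \Omega(\sqrt{n})$.

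The main obstacle is Stage B: each auxiliary $\cD_*$-constraint's signal bit $z(i) = \One[\vecx^*|_{\vecj(i)} = \vecb(i)]$ depends on Alice's private $\vecx^*$, yet Bob is the one who must feed the entire $\vecz$ to the SD protocol, and naively communicating these signal bits costs $\Omega(n)$ — well beyond our budget. I plan to overcome this via a hybrid/telescoping argument in the spirit of \cref{lem:reduce to streaming}: partition the $\alpha n$ constraints into $T = \Theta(1)$ blocks, interpolate block by block between an ``all-\textsf{Yes}'' stream and an ``all-\textsf{No}'' stream, and locate a block boundary that contributes $\Omega(\delta/T)$ to the total advantage; by the data processing inequality this extracts a distinguisher for a single block of $\bar{\cD}$-constraints whose inputs can be constructed from shared randomness plus the Boolean \textsf{Advice-SD} inputs alone, with no $z$-bits needing to be communicated for $\cD_*$-constraints. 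Feeding this single-block distinguisher back through Stage A delivers a Boolean \textsf{Advice-SD} protocol with communication $O(c)$ and the required advantage, completing the proof via \cref{thm:kpart}.
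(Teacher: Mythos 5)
Your high-level architecture is the same as the paper's: hardness of Boolean \textsf{Advice-SD} with uniform marginals (\cref{thm:kpart}), lifted to the pair $\unif\{\vecu,\vecv\}$ vs.\ $\unif\{\vecu\wedge\vecv,\vecu\vee\vecv\}$ over $[q]^k$ (the paper's \cref{lem:SD boolean to non-boolean} plus a padding step for the coordinates where $u_i=v_i$), followed by a mixture-embedding into $(\cD,\cD_{\vecu,\vecv})$ using exactly the decomposition $\cD=(1-2\epsilon)\cD_*+2\epsilon\,\unif\{\vecu,\vecv\}$, $\cD_{\vecu,\vecv}=(1-2\epsilon)\cD_*+2\epsilon\,\unif\{\vecu\wedge\vecv,\vecu\vee\vecv\}$ (the paper's \cref{lem:indis of shift}). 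However, your Stage~B has a genuine gap. You correctly flag the obstacle --- Bob must produce the signal bits of the $\cD_*$-padding constraints, which depend on $\vecx^*$ --- but the proposed fix (a block-hybrid in the spirit of \cref{lem:reduce to streaming}) does not resolve it. The padding constraints are present, with the \emph{same} distribution, in every hybrid between the ``all-\yes'' and ``all-\no'' endpoints; switching which block carries the unknown masks does nothing to make their $z$-bits computable by Bob. The hybrid in \cref{lem:reduce to streaming} works only because there $\cD_N$ is uniform on variables, so the \no-blocks are independent of $\vecx^*$ and can be sampled without knowing it, while Alice can simulate the \yes-blocks herself; neither feature has an analogue here (Bob, not Alice, must hold all of $(M,\vecz)$ in a one-shot reduction, and $\cD_*$ is arbitrary). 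You also cannot replace the padding $z$-bits by independent $\bern(q^{-k})$ coins from shared randomness: the target distribution correlates each $z(i)$ with $\vecx^*|_{\vecj(i)}$, which Alice knows, and a protocol may exploit exactly that correlation. The missing idea --- which is what \cref{lem:indis of shift} does --- is much simpler: make the source instance live on only $n'=\Theta(\epsilon n)$ variables, let the remaining coordinates of $\vecx^*$ be a \emph{shared} uniform string $\vecy$ (with $\vecx^*$ a shared random permutation of the concatenation), and place every $\cD_*$-padding constraint entirely on $\vecy$-coordinates. Vertex-disjointness of the hyperedges makes this possible, Bob can then compute all padding $z$-bits locally, and the joint distribution of $(\vecx^*,M,\vecz)$ is exactly the target one; no hybrid is needed.

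A secondary, smaller issue is in Stage~A: ``randomly permuting, per coordinate, the two $[q]$-values used to encode $\pm1$'' leaves every embedded coordinate of $\vecx^*$ supported on just two of the $q$ symbols, so the constructed $\vecx^*$ is not uniform over $[q]^n$ and the joint distribution again fails to match \cref{def:sd} (and for the same reason, constraints whose variables take values outside $\{u_i,v_i\}$ must get $z=0$, which your sketch does not account for). The paper instead samples a full shared \yes-instance, embeds the Boolean instance only into those constraints/variables where the shared uniform string already takes values in $\{u_i,v_i\}$, re-randomizes those positions with the Boolean input, and bounds the probability of running out of such positions; some version of this conditioning (with the accompanying parameter bookkeeping and error events, and Bernoulli rather than fixed-count selection of embedded constraints) is needed to make your Stage~A distributionally exact.
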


We defer the proof of this theorem to \cref{sec:indis SD shift} and focus instead on the number of steps

\subsection{Finite upper bound on the number of polarization steps}\label{ssec:finite}

In this section we prove that there is a finite upper bound on the number of polarization steps needed to move from a distribution $\cD \in \Delta(\gd)$ to the canonical distribution with marginal $\vecmu(\cD)$, i.e., $\cD_{\vecmu(\cD)}$. Together with the indistinguishability result from \cref{thm:polarization step hardness} this allows us to complete the proof of \cref{thm:communication lb matching moments} by going from $\cD_Y$ to $\cD_{\vecmu(\cD_Y)}=\cD_{\vecmu(\cD_N)}$ and then to $\cD_N$ by using the triangle inequality for indistinguishability.

In this section we extend our considerations to functions $A:\gd \to \R^{\geq 0}$. Let $\cfk = \{A:\gd \to \R^{\geq 0}\}$. For $A \in \cfk$ and $i\in[k]$, let $\mu_{0}(A) = \sum_{\veca\in\gd} A(\veca)$. Note $\Delta(\gd) \subseteq \cfk$ and $A \in \Delta(\gd)$ if and only if $A\in\cfk$ and $\mu_0(A) = \sum_{\veca \in \gd} A(\veca) = 1$. We extend the definition of marginals, support, canonical distribution, and polarization operators to $\cfk$. In particular we let $\vecmu(A) = (\mu_0,(\mu_{i,\sigma})_{i\in[k],\sigma\in[q_i]})$ where $\mu_{i,\sigma} = \sum_{\veca \in \gd: \veca_i = \sigma}  A(\veca)$. We also define canonical function and polarization operators so as to preserve $\vecmu(A)$. So given arbitrary $A$, let $\cD = \frac{1}{\mu_0(A)}\cdot A$. Note $\cD\in \Delta(\gd)$. For $\vecmu  = (\mu_0,(\mu_{i,\sigma})_{i\in[k],\sigma\in[q_i]})$ where $\forall i, \sum_{\sigma\in [q_i]} \mu_{i,\sigma} = \mu_0$, we define $A_{\vecmu} = \mu_0\cdot \cD_{\vecmu'}$ where $\vecmu' = (\mu_{i,\sigma}/\mu_0)_{i\in[k],\sigma\in[q_i]}$ to be the canonical function associated with $\vecmu$. 

\begin{definition}[Polarization length]\label{def:polarization-length}
For distribution $A\in\cfk$, where $\gd = [q_1]\times\cdots\times[q_k]$, let $N(A)$ be the smallest~$t$ such that there exists a sequence $\vecA = A_0,A_1,\ldots,A_t$ such that $A_0 = A$, $A_t = A_{\vecmu(A)}$ is canonical and for every $i \in [t]$ it holds that there exists incomparable  $\vecu_i,\vecv_i \in \supp(A_{i-1})$ such that $A_i = (A_{i-1})_{\vecu_i,\vecv_i}$. If no such finite sequence exists then let $N(A)$ be infinite. Let $N(k,q_1,\dots,q_k) = \sup_{A\in\cfk}  \{N(A)\}$, and $\tilde{N}(Q) = \max_{k,q_1,\ldots,q_k | \sum_i q_i = Q} N(k,q_1,\ldots,q_k)$. Again, if $N(A) = \infty$ for some $A$ or if no finite upper bound exists, $\tilde{N}(Q)$ is defined to be $\infty$. 
\end{definition}

Note that if $\cD \in \Delta(\gd)$, so is every element in the sequence, so the polarization length bound below applies also to distributions. Our main lemma in this subsection is the following:

\begin{lemma}[A finite upper bound on $\tilde{N}(Q)$]\label{lem:polarization finite}
$\tilde{N}(Q)$ is finite for every finite $Q$. Specifically $\tilde{N}(Q) \leq (Q^2+3)\tilde{N}(Q-1)\, .$
Consequently for every $k,q_1,\ldots,q_k$, $N(q_1,\ldots,q_k)$ is finite as well.
\end{lemma}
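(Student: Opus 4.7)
The plan is a strong induction on $Q$. The base case is $Q = k$, where every $q_i = 1$ and $\gd$ is a single point; every $A$ is then its own canonical form, giving $\tilde{N}(k) = 0$. For the inductive step, I use the same decomposition as in the proof of \cref{prop:canonical}: given $A$ on $\gd$ with $\sum q_i = Q$, let $h = \arg\min_i \mu_{i, q_i}$ and set $\tau = \mu_{h, q_h}$. The unique canonical $A_{\vecmu(A)}$ places mass $\tau$ at the point $\vecw = (q_1, \ldots, q_k)$ and supports its remaining $\mu_0(A) - \tau$ mass on a chain living entirely in the ``bottom slab'' $\gd' = \{\veca \in \gd : a_h < q_h\}$, which is itself a product box with parameter sum $Q - 1$. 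Accordingly, the argument splits into two phases.

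In Phase~1 (\emph{concentration of the top slab}), I perform polarization steps to reach a function $A^\star$ with $A^\star(\veca) = 0$ for every $\veca \in S_h \setminus \{\vecw\}$, where $S_h = \{\veca : a_h = q_h\}$. The plan is to interleave \emph{inner canonicalizations} of $A|_{S_h}$ with single \emph{transfer polarizations}. Because $S_h$ is a product box with parameter sum $Q - q_h \leq Q - 1$, each inner canonicalization can be performed, by the induction hypothesis, in at most $\tilde{N}(Q-1)$ polarization steps that keep $\min(\vecu,\vecv),\max(\vecu,\vecv) \in S_h$. After such a canonicalization $A|_{S_h}$ is supported on a chain in $S_h$; if its maximum point $\vecu$ is not $\vecw$, some coordinate $j \neq h$ satisfies $u_j < q_j$, and the minimality of $\mu_{h,q_h}$ gives $\mu_{j,q_j} \geq \tau$, so that some $\vecv \in \gd'$ with $v_j = q_j$ carries positive mass. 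Such a $\vecv$ is incomparable to $\vecu$, and a single transfer polarization between them moves strictly positive mass from $\vecu$ into $\max(\vecu,\vecv) \in S_h$ (which is strictly larger than $\vecu$ in the product order, having one more coordinate at its maximum) while shifting the remainder into $\gd'$. I will argue that at most $Q^2 + 2$ such transfer epochs suffice before $S_h$ contains only $\vecw$, giving a Phase~1 cost of at most $(Q^2 + 2)\tilde{N}(Q-1)$.

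In Phase~2 (\emph{canonicalization of the bottom slab}), the restriction $A'$ of the current function to $\gd'$ is a function on a product box with parameter sum $Q - 1$; by the induction hypothesis it can be made canonical in at most $\tilde{N}(Q-1)$ polarization steps. These steps use incomparable pairs $\vecu, \vecv \in \gd'$, so $\min(\vecu, \vecv), \max(\vecu, \vecv) \in \gd'$ and the point mass at $\vecw$ remains undisturbed. The resulting function is supported on the chain of $A'$ extended by $\vecw$ at the top and has the correct marginals, so by the uniqueness clause of \cref{prop:canonical} it equals $A_{\vecmu(A)}$. Summing, $\tilde{N}(Q) \leq (Q^2 + 2)\tilde{N}(Q-1) + \tilde{N}(Q-1) = (Q^2 + 3)\tilde{N}(Q-1)$, and finiteness of $\tilde{N}(Q)$ for every finite $Q$ follows by induction.

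The main obstacle is establishing the combinatorial bound of $Q^2 + 2$ on the number of transfer epochs in Phase~1. A natural monovariant is the potential $\Psi(A) = \sum_\veca A(\veca) \bigl(\sum_i a_i\bigr)^2$: because the coordinate-wise identity $\max(u_i,v_i)^2 + \min(u_i,v_i)^2 = u_i^2 + v_i^2$ conserves much of the coordinate-sum structure, $\Psi$ turns out to be invariant under polarizations whose min and max stay on the same side of $S_h$ (in particular under inner canonicalizations) but strictly increases by at least $2\epsilon$ under every transfer polarization, where $\epsilon$ is the polarization amount. Combined with an a priori lower bound on $\epsilon$ at each transfer (coming from the chain structure of $A|_{S_h}$ after canonicalization together with the marginal condition $\mu_{j,q_j} \geq \tau$) and the a priori range $\Psi \leq \mu_0(A) \cdot Q^2$, this should yield the $Q^2 + 2$ bound; the delicate part is verifying that inner canonicalizations never undo the $\Psi$ gains accumulated from previous transfer steps.
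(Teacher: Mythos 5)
There is a genuine gap, and it sits exactly where you flag it: the bound on the number of transfer epochs in Phase~1. The potential $\Psi(A)=\sum_{\veca}A(\veca)\bigl(\sum_i a_i\bigr)^2$ cannot deliver a bound of the form $Q^2+O(1)$ that is uniform over all $A$. First, the claimed invariance is false: for any incomparable $\vecu,\vecv$ with positive mass, writing $s=\sum_i u_i$, $t=\sum_i v_i$, $M=\sum_i\max(u_i,v_i)$, $m=\sum_i\min(u_i,v_i)$, one has $M+m=s+t$ and $M>\max(s,t)$, so $\Psi$ strictly increases under \emph{every} nontrivial polarization, including your inner canonicalizations (this error is at least in the harmless direction). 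The fatal point is the other one: the gain per transfer is proportional to the polarization amount $\epsilon=\min\{A(\vecu),A(\vecv)\}$, and there is no a priori lower bound on $\epsilon$ in terms of $Q$ alone, because $\tilde{N}(Q)$ in \cref{def:polarization-length} is a supremum over arbitrary nonnegative functions $A$ with no granularity: the top of the chain in $S_h$, or the partner $\vecv\in\gd'$ with $v_j=q_j$, may carry arbitrarily little mass, and the conditions you cite ($A|_{S_h}$ chain-supported, $\mu_{j,q_j}\geq\tau$) only bound sums of masses, not individual atoms. So the potential argument can only bound the number of transfers in terms of the mass profile of $A$, not in terms of $Q$, which is exactly what the lemma forbids. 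Separately, the Phase~1 loop as described does not reach its stated goal: a transfer from the maximal support point $\vecu$ of $A|_{S_h}$ moves only $\min\{A(\vecu),A(\vecv)\}$ upward, so when $A(\vecv)<A(\vecu)$ positive mass is stranded at the old $\vecu$, and once the chain's top reaches $\vecw$ your loop condition is satisfied while $S_h\setminus\{\vecw\}$ still carries mass; re-canonicalizing $A|_{S_h}$ cannot fix this, since canonicalization preserves the within-slab marginals and hence cannot concentrate the slab's mass at $\vecw$. Patching by transferring from the maximal point of $S_h\setminus\{\vecw\}$ restores correctness of the goal but destroys your monovariant (the relevant maximum no longer strictly increases when $\max(\vecu,\vecv)=\vecw$ and $\vecu$ retains mass), so termination in a bounded number of steps is again unproven.

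The paper's proof avoids mass-dependent quantities altogether. Its \textsc{Polarize} keeps \emph{both} slabs chain-supported (re-canonicalizing only the bottom slab after each transfer, since that is the slab the transfer disturbs), always polarizes a bottom-chain point $\veca_t(i_t)$ against a top-chain point $\vecb_t(j_t)$ whose join is the corner $(q_1,\ldots,q_k)$, chosen lexicographically smallest, and then proves via the respecting/downward-respecting invariants (\cref{clm:subcubes}, \cref{claim:polarization At Bt support}) that the index pair $(i_t,j_t)$ strictly increases lexicographically (\cref{clm:key-terminate}); this gives at most $Q^-\cdot Q^+\leq Q^2$ iterations independently of the mass values, after which a single clean-up recursion applies. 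If you want to salvage your two-phase scheme, you need to replace $\Psi$ by a monovariant of this purely combinatorial kind (e.g., tracking positions along fixed supporting chains) rather than one weighted by the masses.
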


We prove~\cref{lem:polarization finite} constructively in the following four steps.

\paragraph{Step 1: The algorithm {\sc Polarize}.}
Let us start with some notations. For $A\in\cF([q_1]\times\cdots\times[q_k])$ we let $A|_{x_\ell = q_\ell}$ denote the function $A$ restricted to the domain $[q_1]\times\cdots\times[q_{\ell-1}] \times \{q_\ell\} \times [q_{\ell+1}]\times\cdots\times[q_k]$. Note that $A|_{x_\ell = q_\ell}$ is effectively a $(k-1)$-dimensional function. We also define $A|_{x_\ell < q_\ell}$ as the restriction of $A$ to the domain $[q_1]\times\cdots\times[q_{\ell-1}] \times [q_\ell-1] \times [q_{\ell+1}]\times\cdots\times[q_k]$..

\begin{algorithm}[H]
	\caption{$\textsc{Polarize}(\cdot)$}
	\label{alg:polarization}
    \begin{algorithmic}[1]
	    \Input $A\in\cF([q_1]\times\cdots\times[q_k])$.
		\If{k=1 OR $\not\exists i: q_i\ge 2$}
		    \State {\bf Output:} $A$.
		\EndIf
		\State WLOG, let $q_k\ge2$.
		\State $t \gets 0$; $Q^- \gets \sum_{i=1}^k (q_i-1) - 1$;  $Q^+ \gets \sum_{i=1}^{k-1}(q_i-1)$
		\State $(A_0)|_{x_k < q_k} \gets $ {\sc Polarize}$(A|_{x_k <q_k})$ ; $(A_0)|_{x_k = q_k} \gets $ {\sc Polarize}$(A|_{x_k = q_k})$ 
		\State Let $(1)^k = \veca_t(0) < \cdots < \veca_t(Q^-) = (q_1,\dots,q_{k-1},q_{k}-1)$ be a chain supporting $(A_t)|_{x_k <q_k}$.
        \State Let $((1)^{k-1},q_k) = \vecb_t(0) < \cdots < \vecb_t(Q^+) = (q_1,\dots,q_{k})$ be a chain supporting $(A_t)|_{x_k = q_k}$.
		\While{$\exists (i,j)$ with $j < Q^+$ s.t. $\max\{\veca_t(i),\vecb_t(j)\} = (q_1,\dots,q_k)$ and $A_t(\veca_t(i)), A_t(\vecb_t(j))>0$}
		    \State Let $(i_t,j_t)$ be the lexicographically smallest such pair $(i,j)$.
		    \State $B_t \gets (A_t)_{\veca_t(i_t),\vecb_t(j_t)}$.
		    \State $(A_{t+1})|_{x_k <q_k} \gets $ {\sc Polarize}($B_t|_{x_k <q_k}$); $(A_{t+1})|_{x_k = q_k} \gets (B_t)|_{x_k = q_k}$.
		    \State $t\gets t+1$.
		    \State Let $(1)^k = \veca_t(0) < \cdots < \veca_t(Q^-) = (q_1,\dots,q_{k}-1)$ be a chain supporting $(A_t)|_{x_k <q_k}$.
            \State Let $((1)^{k-1},q_k) = \vecb_t(0) < \cdots < \vecb_t(Q^+) = (q_1,\dots,q_{k})$ be a chain supporting $(A_t)|_{x_k = q_k}$.
		\EndWhile
		\State Let $\ell \in [k]$ be such that for every $\veca \in [q_1]\times\cdots\times[q_k] \setminus \{(q_1,\dots,q_k)\}$ we have $A_t(\veca)>0 \Rightarrow a_\ell < q_\ell$.
		\State $(A_{t+1})|_{x_\ell < q_\ell} \gets ${\sc Polarize}$(A_t)|_{x_\ell < q_\ell}$;  $(A_{t+1})|_{x_\ell = q_\ell} \gets (A_{t})|_{x_\ell = q_\ell}$.
		\State {\bf Output:} $A_{t+1}$.
	\end{algorithmic}
\end{algorithm}

The goal of the rest of the proof is to show that~\cref{alg:polarization} terminates after a finite number of steps and outputs $A_{\vecmu(A)}$.

\paragraph{Step 2: Correctness assuming {\sc Polarize} terminates.}

\begin{claim}[Correctness condition of {\sc Polarize}]\label{claim:polarization correctness}
For every $A\in\cF([q_1]\times\cdots\times[q_k])$, if $\textsc{Polarize}$ terminates, then $\textsc{Polarize}(A)=A_{\vecmu(A)}$. In particular, $\textsc{Polarize}(A)$ has the same marginals as $A$ and is supported on a chain.
\end{claim}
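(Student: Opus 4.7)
My proof plan is to establish the two sub-claims separately: that \textsc{Polarize}$(A)$ has the same marginals as $A$, and that it is supported on a chain. The conclusion $\textsc{Polarize}(A)=A_{\vecmu(A)}$ then follows immediately from the uniqueness part of \cref{prop:canonical}. Marginal preservation is essentially automatic: by \cref{def:polarization operator} every single polarization step $(A_t)_{\vecu,\vecv}$ preserves $\vecmu$, and (by a straightforward induction on the domain size $\sum_i q_i$) each recursive call preserves the marginals of its input restricted to the appropriate sub-slab; gluing the slabs back together preserves the global marginals. So the real work is in showing the support is a chain.

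For the chain property I would proceed by induction on $\sum_i q_i$ (equivalently, on the size of the domain $\gd$). The base cases ($k=1$ or all $q_i=1$) are immediate, since any non-negative function on a totally ordered set or a singleton is trivially supported on a chain. For the inductive step, the two calls at line 6 return, by the inductive hypothesis, chains supporting $(A_0)|_{x_k<q_k}$ and $(A_0)|_{x_k=q_k}$. I would then verify the loop invariant that at the start of each iteration $t$, $A_t|_{x_k<q_k}$ and $A_t|_{x_k=q_k}$ are each supported on a chain (parameterized as $\veca_t(0)<\cdots<\veca_t(Q^-)$ and $\vecb_t(0)<\cdots<\vecb_t(Q^+)$, respectively); this invariant is maintained because the single polarization step moves mass only between $\veca_t(i_t)$, $\vecb_t(j_t)$, their maximum $(q_1,\dots,q_k)\in$ top slice, and their minimum which lies in the lower slab, after which the recursive call on line 12 re-establishes the chain in the lower slab.

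The central technical step is showing that after the while loop exits, the coordinate $\ell$ sought on line 17 exists. For monotone chains the sets $R_\veca(i)=\{r\in[k-1]:\veca_t(i)_r=q_r\}$ and $R_\vecb(j)=\{r\in[k-1]:\vecb_t(j)_r=q_r\}$ are non-decreasing in $i$ and $j$ respectively. Let $i^*$ and $j^*$ denote the largest indices for which $A_t(\veca_t(i^*))>0$ and $A_t(\vecb_t(j^*))>0$ with $j^*<Q^+$ (handling the degenerate cases of absent indices separately, where one may take $\ell=k$ or any $\ell\in[k-1]\setminus R_\vecb(j^*)$). The termination condition says precisely that $\max\{\veca_t(i^*),\vecb_t(j^*)\}\ne(q_1,\dots,q_k)$; since the $k$-th coordinate of this max is $q_k$, there must be some $\ell\in[k-1]$ with $\veca_t(i^*)_\ell<q_\ell$ and $\vecb_t(j^*)_\ell<q_\ell$, i.e., $\ell\notin R_\veca(i^*)\cup R_\vecb(j^*)$. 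By the monotonicity of $R_\veca$ and $R_\vecb$, this same $\ell$ works for all smaller indices in the support, so every support point other than $(q_1,\dots,q_k)$ satisfies $a_\ell<q_\ell$.

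Given such an $\ell$, the final recursive call on line 18 returns, by the inductive hypothesis, a chain supporting $(A_{t+1})|_{x_\ell<q_\ell}$; meanwhile $(A_{t+1})|_{x_\ell=q_\ell}=(A_t)|_{x_\ell=q_\ell}$ is supported only on $\{(q_1,\dots,q_k)\}$ by the choice of $\ell$. Since $(q_1,\dots,q_k)$ is strictly greater than every element of the lower chain (in every coordinate), concatenating yields a chain, completing the induction. The main obstacle I anticipate is carefully justifying the monotonicity argument for $\ell$ and book-keeping the edge cases where one of $i^*$ or $j^*$ is undefined; beyond that the argument is a structural induction built on invariants that are individually quite transparent.
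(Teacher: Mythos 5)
Your proposal is correct and follows essentially the same route as the paper's proof: marginals are preserved automatically by the polarization operator, and the chain property is established by induction, with the key step being that the while-loop's exit condition applied to the largest supported points of the two slab chains yields the coordinate $\ell$ needed on line 17, after which the final recursive call plus dominance of $(q_1,\dots,q_k)$ gives a chain and uniqueness of the canonical function finishes the argument. Your version is slightly more explicit than the paper's (induction on $\sum_i q_i$ rather than on $k$, explicit loop invariant and edge cases); the only nitpick is that $(q_1,\dots,q_k)$ need not exceed the lower-chain elements \emph{in every coordinate} — strict dominance in the partial order, which is what you actually need, is all that holds.
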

\begin{proof}
First, by the definition of the polarization operator (\cref{def:polarization operator}), the marginals of $A_t$ are the same for every $t$. So in the rest of the proof, we focus on inductively showing that if {\sc Polarize} terminates, then $\textsc{Polarize}(A)$ is supported on a chain.

The base case where $k=1$ is trivially supported on a chain as desired. 

When $k>1$, note that when the algorithm enters the Clean-up stage, if we let $m$ and $n$ denote the largest indices such that $A_t(\veca_t(m)), A_t(\vecb_t(n)) > 0$ and $ A_t(\vecb_t(n))\neq (q_1,\dots,q_k)$, then the condition that $\max\{\veca_t(m), \vecb_t(n)\} \ne (q_1,\dots,q_k)$ implies that there is a coordinate $\ell$ such that $\veca_t(m)_\ell < q_\ell$ and $\vecb_t(n)_\ell < q_\ell$. Since every $\vecc$ such that $A_t(\vecc) > 0$ and $c_k < q_k$ satisfies $\vecc \leq \veca_t(m)$, we have $A_t(\vecc) > 0$ implies $c_\ell < q_\ell$. Similarly for every $\vecc\ne (q_1,\dots,q_k)$ such that $c_k = q_k$, we have $A_t(\vecc)>0$ implies $c_\ell < q_\ell$.
We conclude that $A_t$ is supported on $\{(q_1,\dots,q_k)\} \cup \{\vecc\, |\, c_\ell < q_\ell\}$. 
Thus, by the induction hypothesis, after polarizing $(A_t)|_{x_\ell < q_\ell}$ and leaving $(A_t)|_{x_\ell = q_\ell}$ unchanged,

we get that the resulting function $A_{t+1}$ is supported on a chain as desired and complete the induction. We conclude that if {\sc Polarize} terminates, we have $\textsc{Polarize}(A)=A_{\vecmu(A)}$.
\end{proof}

\paragraph{Step 3: Invariant in {\sc Polarize}.} 
Now, in the rest of the proof of~\cref{lem:polarization finite}, the goal is to show that for every input $A$, the number of iterations of the while loop in~\cref{alg:polarization} is finite.
The key claim (\cref{clm:key-terminate}) here asserts that the sequence of pairs $(i_t,j_t)$ is monotonically increasing in lexicographic order. Once we establish this claim, it follows that there are at most $Q^-\cdot Q^+$ iterations of the while loop and so $\tilde{N}(Q) \leq (Q^2+3)\tilde{N}(Q-1)\, ,$ proving \cref{lem:polarization finite}.
Before proving \cref{clm:key-terminate}, we establish the following properties that remain invariant after every iteration of the while loop.

\begin{claim}\label{clm:chain-supp}
For every $t\ge0$, we have $(A_t)|_{x_k=q_k}$ and $(A_t)|_{x_k<q_k}$ are both supported on chains.
\end{claim}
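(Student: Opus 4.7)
The plan is to prove Claim~\ref{clm:chain-supp} by induction on $t$, nested inside an outer induction on $Q=\sum_i q_i$ that gives us (via the inductive version of \cref{lem:polarization finite} and \cref{claim:polarization correctness}) that any recursive call $\textsc{Polarize}(B)$ on inputs $B\in\cF(\gd')$ with $\sum_i q_i' < Q$ both terminates and returns a chain-supported function.

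The base case $t=0$ is immediate from this outer induction hypothesis: the two functions $(A_0)|_{x_k<q_k}=\textsc{Polarize}(A|_{x_k<q_k})$ and $(A_0)|_{x_k=q_k}=\textsc{Polarize}(A|_{x_k=q_k})$ are outputs of $\textsc{Polarize}$ on domains of total size $Q-1$, so each is supported on a chain. For the inductive step, assume $(A_t)|_{x_k<q_k}$ and $(A_t)|_{x_k=q_k}$ are chain-supported, with chains $\veca_t(0)<\cdots<\veca_t(Q^-)$ and $\vecb_t(0)<\cdots<\vecb_t(Q^+)$ respectively. The crucial observation is that the while-loop picks $\veca_t(i_t)$ (so its $k$-th coordinate is $<q_k$) and $\vecb_t(j_t)$ (so its $k$-th coordinate equals $q_k$), and these are incomparable since $\max\{\veca_t(i_t),\vecb_t(j_t)\}=(q_1,\ldots,q_k)\ne\vecb_t(j_t)$ (using $j_t<Q^+$). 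Thus $\min\{\veca_t(i_t),\vecb_t(j_t)\}$ has $k$-th coordinate $<q_k$ and $\max\{\veca_t(i_t),\vecb_t(j_t)\}=(q_1,\ldots,q_k)=\vecb_t(Q^+)$ has $k$-th coordinate $=q_k$.

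Consequently, the polarization $B_t=(A_t)_{\veca_t(i_t),\vecb_t(j_t)}$ splits cleanly across the two slices. On the $x_k=q_k$ slice only two points change: $\vecb_t(j_t)$ decreases and $\vecb_t(Q^+)$ increases, and both already lie on the chain supporting $(A_t)|_{x_k=q_k}$, so $(B_t)|_{x_k=q_k}$ is supported on a sub-chain of this chain. Since $(A_{t+1})|_{x_k=q_k}=(B_t)|_{x_k=q_k}$, this slice stays chain-supported. On the $x_k<q_k$ slice, the two points that change are $\veca_t(i_t)$ (decrease) and $\min\{\veca_t(i_t),\vecb_t(j_t)\}$ (increase); the latter need not lie on the original chain, so a priori $(B_t)|_{x_k<q_k}$ is not chain-supported. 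However, the algorithm then sets $(A_{t+1})|_{x_k<q_k}=\textsc{Polarize}(B_t|_{x_k<q_k})$. Since $B_t|_{x_k<q_k}\in\cF([q_1]\times\cdots\times[q_{k-1}]\times[q_k-1])$ has parameter sum $Q-1$, the outer induction hypothesis guarantees this call terminates and returns a chain-supported function. This completes the inductive step.

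The main obstacle is organising the two levels of induction so that no circularity creeps in: the inner induction on $t$ invokes the correctness of $\textsc{Polarize}$ on domains of smaller total size, which is exactly what the outer induction on $Q$ provides, while the outer induction itself (proving \cref{lem:polarization finite}) will rely on \cref{clm:chain-supp} together with the forthcoming \cref{clm:key-terminate} to bound the number of while-loop iterations at level $Q$. As long as the outer induction hypothesis is taken as ``for all $Q'<Q$, $\textsc{Polarize}$ on $\cF(\gd')$ with $\sum_i q_i'=Q'$ terminates and outputs a chain-supported function,'' the argument above goes through cleanly and preserves both invariants throughout the execution of the while loop.
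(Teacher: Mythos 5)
Your proof is correct and follows essentially the same route as the paper's: for the $x_k<q_k$ slice you invoke the correctness of the recursive call to \textsc{Polarize}, and for the $x_k=q_k$ slice you observe that the only points whose value changes ($\vecb_t(j_t)$ decreasing and $(q_1,\ldots,q_k)$ increasing) already lie on the existing supporting chain. The only difference is that you make explicit the nested induction on $Q=\sum_i q_i$ (and the incomparability of $\veca_t(i_t),\vecb_t(j_t)$) that the paper leaves implicit, which is a fair and non-circular way to organize the same argument.
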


\begin{proof}
For $(A_t)|_{x_k<q_k}$, the claim follows from the correctness of the recursive call to {\sc Polarize}. For $(A_t)|_{x_k=q_k}$, we claim by induction on $t$ that the supporting chain $\vecb_t(0) < \cdots < \vecb_t(Q^+)$ never changes (with $t$). To see this, note that $\vecb_t(k-1)=(q_1,\dots,a_k)$ is the only point in the support of $(A_t)|_{x_k=q_k}$ that increases in value, and this is already in the supporting chain. Thus $\vecb_t(0) < \cdots < \vecb_t(Q^+)$ continues to be a supporting chain for $(A_{t+1})|_{x_k=q_k}$. 
\end{proof}

For $\vecc \in [q_1]\times\cdots\times[q_k]$, we say that a function $A:[q_1]\times\cdots\times[q_k] \to \R^{\geq0}$ is {\em $\vecc$-respecting} if for every $\vecc'$ such that $A(\vecc') > 0$, we have $\vecc' \geq \vecc$ or $\vecc' \leq \vecc$. We say that $A$ is {\em $\vecc$-downward-respecting} if $A$ is $\vecc$-respecting and the points in the support of $A$ above $\vecc$ form a partial chain, specifically, if $\vecu,\vecv > \vecc$ have $A(\vecu),A(\vecv) > 0$, then either $\vecu \geq \vecv$ or $\vecv \geq \vecu$. 

Note that if $A$ is supported on a chain then $A$ is $\vecc$-respecting for every point $\vecc$ in the chain. Conversely, if $A$ is supported on a chain and $A$ is $\vecc$-respecting, then $A$ is supported on a chain that includes $\vecc$. 

\begin{claim}\label{clm:subcubes}
Let $A$ be a $\vecc$-respecting function and let $\tilde{A}$ be obtained from $A$ by a finite sequence of polarization updates, as in \cref{def:polarization operator}.
Then $\tilde{A}$ is also $\vecc$-respecting. Furthermore if $A$ is $\vecc$-downward-respecting and $\vecw > \vecc$ then 
$\tilde{A}$ is also $\vecc$-downward-respecting and $A(\vecw) = \tilde{A}(\vecw)$. 
\end{claim}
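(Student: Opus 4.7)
My plan is to reduce everything to the case of a single polarization update via induction on the length of the update sequence, and then analyze that single step by a small case distinction based on where $\vecu,\vecv$ lie relative to $\vecc$. Throughout, the key structural observation I want to exploit is: \emph{if $\vecu \geq \vecc$ and $\vecv \leq \vecc$, then $\vecv \leq \vecc \leq \vecu$, so $\vecu$ and $\vecv$ are comparable and no polarization on the pair $(\vecu,\vecv)$ is triggered.} Consequently, whenever a single polarization step $(\vecu,\vecv)$ is actually executed on a $\vecc$-respecting function $A$, both $\vecu$ and $\vecv$ must lie on the \emph{same} side of $\vecc$ (both $\geq \vecc$ or both $\leq \vecc$).

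First I would establish the basic $\vecc$-respecting preservation. It suffices to show that one polarization step preserves the property; the full claim then follows by a straightforward induction on the number of steps. Fix a single update with incomparable $\vecu,\vecv \in \supp(A)$. By the key observation, both $\vecu,\vecv \geq \vecc$ or both $\vecu,\vecv \leq \vecc$; in either case $\min\{\vecu,\vecv\}$ and $\max\{\vecu,\vecv\}$ lie on the same side of $\vecc$ (monotonicity of $\min$ and $\max$ under the coordinatewise order). Since the support of the updated function is contained in $\supp(A) \cup \{\min\{\vecu,\vecv\},\max\{\vecu,\vecv\}\}$, the updated function remains $\vecc$-respecting.

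For the ``furthermore'' part, assume additionally that $A$ is $\vecc$-downward-respecting, i.e.\ the portion of $\supp(A)$ strictly above $\vecc$ is a chain. I would again induct on the number of polarization steps, so fix one step on incomparable $\vecu,\vecv \in \supp(A)$. Using the key observation, we split into two subcases. If both $\vecu,\vecv \geq \vecc$, then they would be two incomparable elements of the chain of points above $\vecc$ that lie in $\supp(A)$, a contradiction; so this case does not arise. Hence both $\vecu,\vecv \leq \vecc$, which means $\min\{\vecu,\vecv\}$ and $\max\{\vecu,\vecv\}$ are also $\leq \vecc$. Therefore the polarization alters the function only at points $\leq \vecc$, leaving every coordinate of the function at each point $\vecw > \vecc$ untouched. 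In particular $\tilde A(\vecw)=A(\vecw)$, and the set $\{\vecw > \vecc : \tilde A(\vecw) > 0\}$ equals $\{\vecw > \vecc : A(\vecw) > 0\}$, which is a chain by hypothesis. Combined with the fact, established in the first part, that $\tilde A$ is $\vecc$-respecting, this shows $\tilde A$ is $\vecc$-downward-respecting.

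I do not anticipate any real obstacle here; the proof is essentially a careful bookkeeping exercise once the ``incomparable pair must be on the same side of $\vecc$'' observation is isolated. The mildest subtlety is simply remembering that the polarization amount $\epsilon = \min\{A(\vecu),A(\vecv)\}$ can be zero (when one of $\vecu,\vecv$ is not in $\supp(A)$, in which case the update is trivial and there is nothing to check), so I would explicitly restrict attention to nontrivial updates where both $\vecu,\vecv \in \supp(A)$ at the time the update is applied.
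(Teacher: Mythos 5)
Your proof is correct and follows essentially the same route as the paper's: reduce to a single polarization step by induction, use $\vecc$-respecting plus incomparability to force $\vecu,\vecv$ onto the same side of $\vecc$ (so $\min\{\vecu,\vecv\}$ and $\max\{\vecu,\vecv\}$ stay on that side), and in the downward-respecting case rule out the ``both above $\vecc$'' possibility via the chain condition so the update touches only points below $\vecc$. The explicit handling of trivial updates with $\epsilon=0$ is a harmless addition the paper leaves implicit.
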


\begin{proof}
Note that it suffices to prove the claim for a single update by a polarization operator since the rest follows by induction. So let $\tilde{A} = A_{\vecu,\vecv}$ for incomparable $\vecu,\vecv\in \supp(A)$. 
Since $A$ is $\vecc$-respecting, and $\vecu,\vecv$ are incomparable, either $\vecu \leq \vecc,\vecv \leq \vecc$ or $\vecu \geq \vecc, \vecv \geq \vecc$. Suppose the former is true, then $\max\{\vecu, \vecv \}\leq \vecc$ and 
$\min\{\vecu,\vecv\} \leq \vecc$, and hence, $\tilde{A}$ is $\vecc$-respecting. Similarly, in the case when $\vecu \geq \vecc, \vecv \geq \vecc$, we can show that $\tilde{A}$ is $\vecc$-respecting. 
The furthermore part follows by noticing that for $\vecu$ and $\vecv$ to be incomparable if $A$ is $\vecc$-downward-respecting and $A(\vecu),A(\vecv)>0$, then $\vecu,\vecv \leq \vecc$, and so the update changes $A$ only at points below $\vecc$. 
\end{proof}

The following claim asserts that in every iteration of the while loop, by the lexicographically minimal choice of $(i_t,j_t)$, there exists a coordinate $h\in[k-1]$ such that every vector $c<a_t(i_t)$ in the support of $A_t$, $B_t$, or $A_{t+1}$ has $c_h<q_h$, and every vector $c\neq (q_1,\dots,q_k)$ in the support of $(A_t)|_{x_k = q_k}$ has $c_h<q_h$.

\begin{claim}\label{claim:polarization At Bt support}
For every $t\ge 0$, $\exists h \in [k-1]$ such that $\forall \vecc \in [q_1]\times\dots\times[q_k]$, if $\vecc\in \supp(A_t)\cup\supp(B_t)\cup\supp(A_{t+1})$, then the following hold:
\begin{itemize}
    \item If $\vecc<\veca_t(i_t)$, then $c_h<q_h$.
    \item If $c_k=q_k$ and $\vecc\neq(q_1,\dots,q_k)$, then $c_h<q_h$.
\end{itemize}
\end{claim}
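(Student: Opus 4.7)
The plan is to pick a single coordinate $h \in [k-1]$ using the lexicographic minimality of $(i_t, j_t)$, and then verify the two required implications on $\supp(A_t)$, $\supp(B_t)$, and $\supp(A_{t+1})$ in turn, relying throughout on the chain structure furnished by~\cref{clm:chain-supp}. Let $i^{\star} := \max\{i < i_t : A_t(\veca_t(i)) > 0\}$ (when this set is non-empty) and $j^{\star} := \max\{j < Q^+ : A_t(\vecb_t(j)) > 0\}$; note that $j^{\star}$ always exists because $A_t(\vecb_t(j_t)) > 0$ and $j_t < Q^+$. If $i^{\star}$ exists then $(i^{\star}, j^{\star}) < (i_t, j_t)$ lexicographically while $A_t(\veca_t(i^{\star})), A_t(\vecb_t(j^{\star})) > 0$ and $j^{\star} < Q^+$, so by the minimality of $(i_t, j_t)$ we must have $\max\{\veca_t(i^{\star}), \vecb_t(j^{\star})\} \neq (q_1, \ldots, q_k)$; this yields a coordinate $h$ with $\veca_t(i^{\star})_h < q_h$ and $\vecb_t(j^{\star})_h < q_h$, and necessarily $h \in [k-1]$ because $\vecb_t(j^{\star})_k = q_k$. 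If $i^{\star}$ does not exist we instead take any $h \in [k-1]$ with $\vecb_t(j^{\star})_h < q_h$, which exists since $\vecb_t(j^{\star}) \neq (q_1, \ldots, q_k)$ while $\vecb_t(j^{\star})_k = q_k$.

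With this $h$, condition 1 on $\supp(A_t)$ follows because any $\vecc \in \supp(A_t)$ with $\vecc < \veca_t(i_t)$ has $c_k < q_k$ and hence lies on the chain $\veca_t(0), \ldots, \veca_t(Q^-)$ at some index $i \leq i^{\star}$, giving $\vecc_h \leq \veca_t(i^{\star})_h < q_h$. Condition 2 on $\supp(A_t)$ follows similarly, as such a $\vecc$ is $\vecb_t(j)$ for some $j \leq j^{\star}$, so $\vecc_h \leq \vecb_t(j^{\star})_h < q_h$. For $\supp(B_t) \setminus \supp(A_t)$ the only candidates are $\min\{\veca_t(i_t), \vecb_t(j_t)\}$ (whose $h$-coordinate equals $\vecb_t(j_t)_h \leq \vecb_t(j^{\star})_h < q_h$ and whose $k$-coordinate is $\veca_t(i_t)_k < q_k$) and $(q_1,\ldots,q_k)$ itself (which both conditions explicitly exclude), so both conditions transfer to $B_t$. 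For $A_{t+1}$, the equality $\supp(A_{t+1}|_{x_k = q_k}) = \supp(B_t|_{x_k = q_k})$ together with $\supp(B_t|_{x_k = q_k}) \subseteq \supp(A_t|_{x_k = q_k}) \cup \{(q_1,\ldots,q_k)\}$ settles condition 2, while on $\supp(A_{t+1}|_{x_k < q_k})$ condition 2 is vacuous.

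The main technical obstacle is condition 1 on $\supp(A_{t+1}|_{x_k < q_k})$, since the recursive call to \textsc{Polarize} on $B_t|_{x_k < q_k}$ can create new support points. I plan to establish the following invariant by induction on the number of polarization updates performed inside that recursive call: \emph{every support point $\vecc$ with $c_h = q_h$ satisfies $\vecc \geq \veca_t(i_t)$}. The base case for $B_t|_{x_k < q_k}$ follows from the analysis above, since the only candidates with $c_h = q_h$ are chain elements $\veca_t(i)$ with $i \geq i_t$ (for $i < i_t$ in the support we already have $\veca_t(i)_h \leq \veca_t(i^{\star})_h < q_h$, and the extra point $\min\{\veca_t(i_t), \vecb_t(j_t)\}$ has $h$-coordinate $< q_h$). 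For the inductive step, consider a polarization on an incomparable pair $(\vecu, \vecv)$ in the current support; if at most one of $\vecu_h, \vecv_h$ equals $q_h$, a coordinate-wise argument combined with the invariant on the single extremal vector handles the new points, while the delicate case $\vecu_h = \vecv_h = q_h$ is resolved by noting that the invariant forces $\vecu, \vecv \geq \veca_t(i_t)$ and hence both $\min(\vecu, \vecv)$ and $\max(\vecu, \vecv)$ remain $\geq \veca_t(i_t)$. The invariant immediately yields condition 1 for $A_{t+1}$ by contrapositive, completing the proof plan.
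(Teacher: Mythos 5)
Your proposal is correct, and its first half — the choice of $h$ via the maximal populated indices $i^{\star},j^{\star}$ below $(i_t,j_t)$ together with lexicographic minimality, and the verification of both conditions on $\supp(A_t)$ and $\supp(B_t)$ using the chain structure from \cref{clm:chain-supp} — is essentially the paper's argument (your $i^{\star},j^{\star}$ are the paper's $m,n$; you additionally handle the degenerate case where no index below $i_t$ carries mass, which the paper glosses over). Where you genuinely diverge is on the hard sub-step, condition~1 for $\supp(A_{t+1}|_{x_k<q_k})$. The paper introduces the auxiliary point $\veca'$ obtained from $\veca_t(i_t)$ by lowering its $h$-th coordinate to $q_h-1$, shows $B_t|_{x_k<q_k}$ is $\veca'$-respecting, and invokes \cref{clm:subcubes} (polarization preserves being $\vecc$-respecting) to conclude that $A_{t+1}|_{x_k<q_k}$ is $\veca'$-respecting, from which condition~1 follows; note this implicitly uses that $\veca_t(i_t)_h=q_h$ (forced by $\max\{\veca_t(i_t),\vecb_t(j_t)\}=(q_1,\dots,q_k)$ and $\vecb_t(j_t)_h<q_h$) so that $\veca'\le\veca_t(i_t)$. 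You instead run a direct induction over the polarization updates comprising the recursive call, maintaining the bespoke invariant that every support point with $h$-th coordinate equal to $q_h$ dominates $\veca_t(i_t)$; your case analysis (the $\vee$ point inherits domination from whichever of $\vecu,\vecv$ has $h$-coordinate $q_h$, and the $\wedge$ point can have $h$-coordinate $q_h$ only when both do) is complete and correct, and the base case matches your $B_t$ analysis. Your invariant is strictly weaker than the paper's $\veca'$-respecting property but suffices for the claim as stated; what the paper's route buys is reuse of \cref{clm:subcubes} (which is needed again, with downward-respecting, in the proof of \cref{clm:key-terminate}), while your route is self-contained, avoids the $\veca'$ device and the implicit appeal to $\veca_t(i_t)_h=q_h$, at the cost of redoing a preservation-under-polarization induction by hand.
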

\begin{proof}
Since $(i_t,j_t)$ is lexicographically the smallest incomparable pair in the support of $A_t$, for $i < i_t$, $j < Q^+$, and 
$A_t(\veca(i)), A_t(\vecb(j)) > 0$, we have $\max\{\veca(i),\vecb(j)\} \ne (q_1,\dots,q_k)$. Let $m$ be the largest index smaller than $i_t$ such that
$A_t(\veca_t(m)) > 0$. Similarly, let $n < Q^+$ be the largest index such that $A_t(\vecb_t(n)) > 0$. Then the fact that $\max\{\veca_t(m), \vecb_t(n)\} \ne (q_1,\dots,q_k)$ implies that there exists $h\in[k-1]$ such that $\veca_t(m)_h < q_h$ and $\vecb_t(n)_h < q_h$. Now, using the fact (from \cref{clm:chain-supp}) that $(A_t)|_{x_k<q_k}$ is supported on a chain, we conclude that for every $\vecc < \veca_t(i_t)$, $A_t(\vecc)> 0$ implies that $\vecc \leq \veca_t(m)$ and hence, $c_h<q_h$. Similarly, for every vector $\vecc \ne (q_1,\dots,q_k)$ in the support of $(A_t)|_{x_k = q_k}$, by the maximality of $n$, we have $c_h < q_h$.

We now assert that the same holds for $B_t$. First, recall that since $B_t=(A_t)_{\veca_t(i_t),\vecb_t(j_t)}$, we have that $\supp(B_t)\subset\supp(A_t)\cup\{(q_1,\dots,q_k),\min\{\veca_t(i_t),\vecb_t(j_t)\}\}$.
Next, note that the only point (other than $(q_1,\dots,q_k)$) where $B_t$ is larger than $A_t$ is
$\min\{\veca_t(i_t),\vecb_t(j_t)\}$. It suffices to show that $\min\{\veca_t(i_t),\vecb_t(j_t)\}_h < q_h $. We have $\min\{\veca_t(i_t),\vecb_t(j_t)\} \leq \vecb_t(j_t) \leq \vecb_t(n)$ and hence $\min\{\veca_t(i_t),\vecb_t(j_t)\}_h < q_h$.

Finally, we assert that same holds also for $A_{t+1}$. Since $A_{t+1}|_{x_k=q_k} = B_t|_{x_k=q_k}$, the second item in the claim follows trivially. To prove the first item, let us consider $\veca' \in [q_1]\times\cdots\times[q_k]$ defined as follows: $\veca'_h = q_h-1$ and $\veca'_{r} = \veca_t(i_t)_{r}$ for $r \ne h$. Note that $B_t|_{x_k <q_k}$ is $\veca_t(i_t)$-respecting since potentially the only new point in its support (compared to $A_t|_{x_k < q_k}$) is $\min\{\veca_t(i_t),\vecb_t(j_t)\} \leq \veca_t(i_t)$. From the previous paragraph we also have that if $B_t(\vecc) > 0$ and $\vecc < \veca_t(i_t)$, then $c_h < q_h$ and hence, $\vecc \leq \veca'$. On the other hand, if $B_t(\vecc) > 0$ and $\vecc \ge \veca_t(i_t)$, then $\vecc\ge a'$. Therefore, $B_t|_{x_k<q_k}$ is $\veca'$-respecting. By applying \cref{clm:subcubes}, we conclude that $(A_{t+1})|_{x_k<q_k}$ is also $\veca'$-respecting. It follows that if $\vecc < \veca(i_t)$ and $A_{t+1}(\vecc) > 0$, then $\vecc \leq \veca'$ and so $c_h < q_h$.
\end{proof}

\paragraph{Step 4: Proof of~\cref{lem:polarization finite}.}

The following claim establishes that the while loop in the $\textsc{Polarize}$ algorithm terminates after a finite number of iterations.

\begin{claim}\label{clm:key-terminate}
For every $t\ge 0$, $(i_t,j_t)<(i_{t+1},j_{t+1})$ in lexicographic ordering.
\end{claim}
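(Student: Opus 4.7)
The plan is to exploit the coordinate $h \in [k-1]$ provided by \cref{claim:polarization At Bt support} together with the structural invariants from \cref{clm:chain-supp} and \cref{clm:subcubes}. Writing $\vecu = \veca_t(i_t)$ and $\vecv = \vecb_t(j_t)$, I first note that $v_h < q_h$ (because $v_k = q_k$ and $\vecv \neq (q_1,\dots,q_k)$ since $j_t < Q^+$, so the second bullet of \cref{claim:polarization At Bt support} applies), and hence $u_h = q_h$ (because $\max\{\vecu,\vecv\} = (q_1,\dots,q_k)$). The same reasoning, applied to a hypothetical next valid pair, forces $\veca_{t+1}(i_{t+1})_h = q_h$.

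The next step pins down the structure of the new lower chain $\veca_{t+1}$. I would factor the transformation $(A_t)|_{x_k<q_k} \to (A_{t+1})|_{x_k<q_k}$ into two moves: first, the lower-slice effect of the step $A_t \to B_t$ (which merely shifts mass $\epsilon$ from $\vecu$ to $\min\{\vecu,\vecv\} \le \vecu$), and second, the recursive call $\textsc{Polarize}(B_t|_{x_k<q_k})$, a genuine sequence of polarization updates. The first move preserves the property of being $\vecu$-downward-respecting and leaves all values strictly above $\vecu$ untouched; \cref{clm:subcubes} then shows that the second move preserves both of these properties as well. Together with \cref{claim:polarization At Bt support} (which prohibits $c_h = q_h$ for any $\vecc < \vecu$ in the support of $A_{t+1}$), I conclude that the set of chain elements of $\veca_{t+1}$ with $h$-coordinate $q_h$ is exactly $\{\veca_t(i_t+1),\dots,\veca_t(Q^-)\}$, possibly augmented by $\vecu$ itself (the latter occurring iff $A_{t+1}(\vecu) = A_t(\vecu)-\epsilon > 0$). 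Since the chain $\veca_{t+1}$ has fixed length $Q^-+1$ and these elements must occupy its top positions, this forces $m_{t+1} \in \{i_t, i_t+1\}$, where $m_{t+1}$ denotes the first index at which $\veca_{t+1}(\cdot)$ attains $h$-coordinate $q_h$. Therefore $i_{t+1} \ge m_{t+1} \ge i_t$.

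The final case analysis is then straightforward. If $i_{t+1} > i_t$, the lexicographic inequality is immediate. Otherwise $i_{t+1}=i_t=m_{t+1}$, which forces $\vecu \in \supp(A_{t+1}|_{x_k<q_k})$ and $\veca_{t+1}(i_t)=\vecu$; hence the polarization at time $t$ must have depleted $\vecv$ rather than $\vecu$, giving $A_{t+1}(\vecv)=0$, so the pair $(i_t,j_t)$ cannot reappear at time $t+1$. For any $j'<j_t$, the lex-minimality of $(i_t,j_t)$ at time $t$ ensures that either $A_t(\vecb_t(j')) = 0$ or $\max\{\vecu,\vecb_t(j')\}\neq(q_1,\dots,q_k)$; since $\vecb_{t+1}=\vecb_t$ by \cref{clm:chain-supp} and $A_{t+1}|_{x_k=q_k}$ differs from $A_t|_{x_k=q_k}$ only at $(q_1,\dots,q_k)$ and at $\vecv$, both conditions persist at time $t+1$, ruling out $(i_t,j')$. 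Thus $j_{t+1}>j_t$, completing the proof.

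The main obstacle I anticipate is the careful justification that the top portion of $\veca_{t+1}$ coincides with that of $\veca_t$ and that no new chain element with $h$-coordinate $q_h$ can appear. This requires decomposing the non-standard step $A_t \to B_t$ on the lower slice --- which shifts mass between the comparable points $\vecu$ and $\min\{\vecu,\vecv\}$ rather than applying a polarization operator in the strict sense of \cref{def:polarization operator} --- so that \cref{clm:subcubes} can legitimately be applied to the subsequent recursive polarization. This is the one point where all three previously-established structural claims must be combined simultaneously.
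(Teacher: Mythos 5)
Your overall structure mirrors the paper's proof --- use the coordinate $h$ from \cref{claim:polarization At Bt support} to force $\veca_{t+1}(i_{t+1})_h=q_h$ and hence $i_{t+1}\ge i_t$, rule out pairs $(i_t,j')$ with $j'<j_t$ via the unchanged upper chain, and finally argue that $(i_t,j_t)$ itself cannot recur because the polarization depleted one of its endpoints --- and the first two parts of your argument are sound (your bookkeeping via coordinate sums in place of the paper's ``reuse the old chain above $i_t$'' is fine). The gap is in the last step, exactly at the point you flagged but did not resolve. From $A_{t+1}(\vecu)>0$ you infer that the time-$t$ polarization ``must have depleted $\vecv$ rather than $\vecu$,'' i.e.\ that $B_t(\vecu)>0$. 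This requires $A_{t+1}(\vecu)=B_t(\vecu)$, i.e.\ that the recursive call $\textsc{Polarize}(B_t|_{x_k<q_k})$ cannot change the value \emph{at} $\vecu$. But the only tool you invoke, \cref{clm:subcubes} with anchor $\vecc=\vecu$, protects values at points \emph{strictly above} $\vecu$: its proof shows updates only move mass among points $\le\vecu$, and such an update can perfectly well deposit mass \emph{at} $\vecu$ (take two incomparable support points below $\vecu$ whose join is exactly $\vecu$; nothing in $\vecu$-downward-respecting forbids such points from appearing during the recursion). If that happened in the case $B_t(\vecu)=0<B_t(\vecv)$, then at time $t+1$ we would have $A_{t+1}(\vecu)>0$ and $A_{t+1}(\vecv)=B_t(\vecv)>0$, so $(i_t,j_t)$ would again be a valid, lexicographically minimal pair --- the very thing the claim denies. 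Your ``iff $A_{t+1}(\vecu)=A_t(\vecu)-\epsilon>0$'' in the second paragraph silently presumes the same unproved identity.

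The paper closes this hole with a strictly smaller anchor: define $\veca'$ by $\veca'_h=q_h-1$ and $\veca'_r=u_r$ for $r\ne h$, so $\veca'<\vecu$. By \cref{claim:polarization At Bt support}, every support point of $B_t|_{x_k<q_k}$ below $\vecu$, as well as $\vecu\wedge\vecv$, has $h$-th coordinate $<q_h$ and is therefore $\le\veca'$, which makes $B_t|_{x_k<q_k}$ $\veca'$-downward-respecting; the furthermore part of \cref{clm:subcubes} applied with $\vecc=\veca'$ and $\vecw=\vecu>\veca'$ then gives $A_{t+1}(\vecu)=B_t(\vecu)$, since the recursion only moves mass among points $\le\veca'<\vecu$. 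That is precisely the missing ingredient; with it, your case analysis goes through. In short, the $h$-coordinate information must be used once more --- not only to classify the existing support, as in your second paragraph, but to manufacture a downward-respecting anchor strictly below $\vecu$ that protects the value at $\vecu$ itself.
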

\begin{proof}

Consider the chain $\veca_{t+1}(0) < \cdots < \veca_{t+1}(Q^-)$ supporting $A_{t+1}|_{x_k < q_k}$. Note that for  $i \geq i_t$, $A_{t+1}|_{x_k < q_k}$ is $\veca_t(i)$-respecting (since $A_t|_{x_k<q_k}$ and $B_{t}|_{x_k<q_k}$ were also so).  In particular, $A_t|_{x_k<q_k}$ is $\veca_t(i)$-respecting because it is supported on a chain containing $a_t(i)$. Next $B_t|_{x_k<q_k}$ is $\veca_t(i)$-respecting since potentially the only new point in its support is $\min\{\veca_t(i_t),\vecb_t(j_t)\} \leq \veca_t(i)$. Finally, $A_{t+1}|_{x_k <q_k}$ is also $\veca_t(i)$-respecting using \cref{clm:subcubes}. Thus we can build a chain containing  $\veca_t(i)$ that supports $A_{t+1}|_{x_k<q_k}$. It follows that we can use $\veca_{t+1}(i) = \veca_t(i)$ for $i \geq i_t$. Now consider $i < i_t$. We must have $\veca_{t+1}(i) < \veca_{t+1}(i_t) = \veca_t(i_t)$.
By~\cref{claim:polarization At Bt support}, there exists $h\in [k-1]$ such that for $i < i_t$, $\veca_{t+1}(i)_h < q_h$.

We now turn to analyzing $(i_{t+1},j_{t+1})$. By definition, $A_{t+1}(\veca_{t+1}(i_{t+1}))>0$ and 
$A_{t+1}(\vecb_{t+1}(b_{t+1}))>0$. First, let us show that $i_t \le i_{t+1}$. On the contrary, let us assume that $i_{t+1} < i_t$. It follows from the above paragraph that $\veca_{t+1}(i_{t+1})_h < q_h$. Also, for every $\vecb_{t+1}(j)$ with $j <Q^+$ and $A_{t+1}(\vecb_{t+1}(j))>0$, we have $\vecb_{t+1}(j)_h < q_h$. Therefore, $\max\{\veca(i_{t+1}),\vecb(j_{t+1})\} \ne (q_1,\dots,q_k)$ (in particular $\max\{\veca(i_{t+1}),\vecb(j_{t+1})\}_h<q_h$), which is a contradiction.

Next, we show that if $i_{t+1} = i_t$, then $j_{t+1}\ge j_t$. By the minimality of $(i_t,j_t)$ in the $t$-th round, for $j<j_t$ such that $A_t(b_t(j))>0$, we have $\max\{a_{t}(i_{t}),b_t(j)\} \ne (q_1,\dots,q_k)$. Since  $i_{t+1} = i_t$, $a_{t+1}(i_{t+1})=a_{t+1}(i_{t})=a_{t}(i_{t})$. We already noted in the proof of \cref{clm:chain-supp} that $\vecb_t(0) < \cdots < \vecb_t(Q^+)$ is also a supporting chain for $(A_{t+1})|_{x_k=q_k}$. The only point where the function $A_{t+1}|_{x_k=q_k}$ has greater value than $A_t|_{x_k=q_k}$ is $(q_1,\dots,q_k)$. Therefore, for $j<j_t$ such that $A_{t+1}(b_{t+1}(j))>0$, we have $\max\{a_{t+1}(i_{t+1}), b_{t+1}(j)\} \ne (q_1,\dots,q_k)$ and hence, $j_{t+1}\ge j_t$.

So far, we have established that $(i_{t+1},j_{t+1})\ge (i_t,j_t)$ in lexicographic ordering. Finally, we will show that $(i_{t+1},j_{t+1}) \neq (i_t,j_t)$ by proving that at least one of $A_{t+1}(\veca_{t+1}(i_{t}))$ and $A_{t+1}(\vecb_{t+1}(j_t))$ is zero. The polarization update ensures that at least one of $B_{t}(\veca_{t}(i_{t}))$ and $B_{t}(\vecb_t(j_t))$ is zero. If $B_t(\vecb_t(j_t)) = 0$, then by definition, we have $A_{t+1}(\vecb_{t+1}(j_t)) = A_{t+1}(\vecb_t(j_t)) = 0$. Finally to handle the case $B_{t}(\veca_{t}(i_{t}))=0$, let us again define $\veca'$ as: $\veca'_h = q_h-1$ and $\veca'_{r} = \veca_t(i_t)_{r}$ for $r \ne h$, where $h$ is as given by \cref{claim:polarization At Bt support}. We assert that $B_t|_{x_k<q_k}$ is $\veca'$-downward-respecting. As shown in the proof of \cref{claim:polarization At Bt support}, we have $B_t|_{x_k<q_k}$ is $\veca'$-respecting. The support of $B_t|_{x_k<q_k}$ is contained in $\{\veca_t(0),\cdots,\veca_t(Q^-)\} \cup \{\min\{\veca_t(i_t),\vecb_t(j_t)\}\}$ and $\min\{\veca_t(i_t),\vecb_t(j_t)\} < \veca_t(i_t)$, and by \cref{claim:polarization At Bt support}, $\min\{\veca_t(i_t),\vecb_t(j_t)\}\leq \veca'$. It follows that $B_t|_{x_k<q_k}$ is $\veca'$-downward-respecting. Finally, by the furthermore part of \cref{clm:subcubes} applied to $B_{t}|_{x_k<q_k}$ and $\vecw = \veca_t(i_t)$, we get that $A_{t+1}(\veca_{t+1}(i_{t}))=A_{t+1}(\veca_t(i_t)) = B_{t}(\veca_t(i_t)) = 0$. It follows that $(i_{t+1},j_{t+1}) \ne (i_t,j_t)$. 
\end{proof}

\begin{proof}[Proof of~\cref{lem:polarization finite}]
By~\cref{claim:polarization correctness}, we know that if~\cref{alg:polarization} terminates, then we have $\textsc{Polarize}(A)=A_{\vecmu(A)}$. Hence, the maximum number of polarization updates used in {\sc Polarize} (on input from $\cF([q_1]\times\cdots\times[q_k])$) serves as an upper bound for $\tilde{N}(Q)$, for $Q=\sum_{i=1}^k q_k$.
By~\cref{clm:key-terminate}, we know that there are at most $Q^2$ iterations of the while loop and so $\tilde{N}(Q) \leq (Q^2+3)\tilde{N}(Q-1)$ as desired.
\end{proof}

\subsection{Reduction from single function to a family of functions}\label{sec:mixed}
In this subsection, we prove the following lemma that reduces an SD problem for a single function to an SD problem for a family of functions.

\begin{lemma}\label{lem:reduce-to-single-function}
	Suppose there exists $\cF,\cD_Y,\cD_N$, $\delta > 0$ with $\vecmu(\cD_Y) = \vecmu(\cD_N)$ and a $c=c(n)$-communication protocol achieving advantage $\delta$ solving $(\cF,\cD_Y,\cD_N)$-SD on instances of length $n$ for every $n \geq n_0$. Then
	there exist $\cD_1,\cD_2\in\Delta([q]^k)$ with $\vecmu(\cD_1) = \vecmu(\cD_2)$, $\delta' > 0$, $n'_0$,  and a $c$-communication protocol achieving advantage $\delta'$ solving $(\cD_1,\cD_2)$-SD on instances of length $n\geq n'_0$ using $O(s)$ bits of communication. 
\end{lemma}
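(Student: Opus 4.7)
The plan is to apply a hybrid argument over $\cF$ to reduce to distinguishing two distributions that differ only on a single function $f_{j^*} \in \cF$, and then to take the conditional distributions $\cD_Y^{f_{j^*}}, \cD_N^{f_{j^*}} \in \Delta([q]^k)$ as the desired $(\cD_1, \cD_2)$.

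First I would formalize the hybrid. For each $f \in \cF$, let $p(f) = \Pr_{(g,\veca) \sim \cD_Y}[g = f]$, which equals $\Pr_{(g,\veca) \sim \cD_N}[g = f]$ since $\vecmu(\cD_Y) = \vecmu(\cD_N)$; without loss of generality $p(f) > 0$ for every $f \in \cF$. Let $\cD_Y^f, \cD_N^f \in \Delta([q]^k)$ denote the conditional distributions of the $[q]^k$-coordinate given $f$; these have matching marginals in $\Delta([q]^k)$. Enumerating $\cF = \{f_1, \ldots, f_L\}$, define hybrid distributions $\cE_j \in \Delta(\cF \times [q]^k)$ for $j \in \{0,\ldots,L\}$ by: sample $f \sim p$, then sample the second coordinate from $\cD_Y^f$ if $f \in \{f_1,\ldots,f_j\}$ and from $\cD_N^f$ otherwise. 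Then $\cE_0 = \cD_N$, $\cE_L = \cD_Y$, and every $\cE_j$ has the same marginals as $\cD_Y$. By the triangle inequality there exists $j^* \in [L]$ for which the given protocol $\Pi$, viewed as a distinguisher for $(\cF, \cE_{j^*-1}, \cE_{j^*})$-SD, achieves advantage at least $\delta/L$. I would set $\cD_1 = \cD_Y^{f_{j^*}}$ and $\cD_2 = \cD_N^{f_{j^*}}$, which lie in $\Delta([q]^k)$ and have matching marginals.

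Second, I would construct a protocol for $(\cD_1, \cD_2)$-SD using $\Pi$. Given an inner instance with parameter $\alpha_{\text{in}} = \alpha \cdot p(f_{j^*})$ for an outer parameter $\alpha$ chosen to match Theorem~\ref{thm:communication lb matching moments}, Alice and Bob use shared randomness to simulate an outer $(\cF,\cE_{j^*-1},\cE_{j^*})$-SD instance in which the inner hyperedges are precisely the $f_{j^*}$-labeled positions. The shared randomness determines: an extension of $M_{\text{in}}$ to a random outer hypermatching $M_{\text{out}}$ with $\alpha n$ hyperedges; the function labels at the outer positions (so that the $f_{j^*}$-positions coincide with the inner hyperedges and the remaining labels follow the conditional-on-not-$f_{j^*}$ distribution induced by $p$); and the auxiliary vectors at the non-$f_{j^*}$ positions, drawn from the distribution common to both hybrids. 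Alice then transmits $\Pi_A(\vecx^*)$ to Bob, who uses the inner signals $\vecz_{\text{in}}$ for the $f_{j^*}$-positions and assembles a completion of the outer instance on which to run $\Pi_B$.

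The hardest step will be handling the outer signals $\tilde{z}_i = \One[M_i \vecx^* = \tilde{\vecb}_i]$ at the non-$f_{j^*}$-positions, which nominally depend on $\vecx^*$ (known only to Alice). The key observation is that the joint distribution of these signals given $\vecx^*$ is identical under both hybrids, and that their unconditional marginal is a product of Bernoulli$(q^{-k})$'s that Bob can sample locally. I would therefore define Bob's output as the randomized evaluation of $\Pi_B$ with the non-$f_{j^*}$-signals drawn from this product marginal via shared randomness, and argue by a standard coupling on the shared randomness (exploiting that the hybrids agree outside the $f_{j^*}$-block) that the advantage loss incurred by sampling from the marginal rather than the $\vecx^*$-conditional distribution is $o(1)$ as $n \to \infty$. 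The resulting single-function protocol uses only the $c(n)$ bits already spent by $\Pi_A$ and achieves advantage $\delta' = \delta/(2L) > 0$, which is the conclusion of the lemma.
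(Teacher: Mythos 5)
Your first step (the hybrid over the functions of $\cF$, using the fact that matched marginals force matched function-weights, and the choice $\cD_1=\cD_Y^{f_{j^*}}$, $\cD_2=\cD_N^{f_{j^*}}$) coincides with the paper's argument. The gap is in your second step, i.e., in how Bob handles the signals at the non-$f_{j^*}$ positions. Those signals $\tilde z_i=\One[M_i\vecx^*=\tilde{\vecb}(i)]$ are functions of $\vecx^*$ restricted to the padding hyperedges, and conditioned on $(\vecx^*,M)$ their joint law is in general \emph{far} in total variation from the product of unconditional $\bern(q^{-k})$ marginals: e.g.\ if the common part of the two hybrids is a point mass on some $\vecu\in[q]^k$, then each padding signal equals $\One[\vecx^*|_{M_i}=\vecu]$, a deterministic function of $\vecx^*$, and there are $\Theta(n)$ of them. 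Since the advantage of the given protocol $\Pi$ is only guaranteed against the \emph{true} joint distribution of (Alice's message, Bob's input), and Alice's message can carry substantial information about $\vecx^*$, an arbitrary $\Pi$ may distinguish only when the padding signals are consistent with that message; feeding it independently resampled signals can collapse its advantage to zero. So the claimed ``$o(1)$ loss by a standard coupling'' is not justified and is false in general — the fact that the two hybrids agree on the padding block says the padding distribution is the \emph{same} under Yes and No, not that it can be replaced by its marginal. (A secondary issue: the shared randomness cannot ``extend $M_{\mathrm{in}}$'' since it is independent of Bob's input; the extension must be done privately by Bob, which is fine, but it does not help with the signal problem.)

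The paper avoids this entirely by not keeping the padding constraints on Alice's variables at all. It invokes \cref{lem:indis of shift}: writing the two adjacent hybrids as $(1-\eps)\cD_0+\eps\cD_1$ and $(1-\eps)\cD_0+\eps\cD_2$ with $\eps$ the weight of $f_{j^*}$, it embeds an \emph{inner} $(\cD_1,\cD_2)$-SD instance of smaller length $n'=\eps n$ (with adjusted $\alpha'$) into an outer instance whose vector is $\Gamma(\vecx',\vecy)$, where $\vecy$ and $\Gamma$ are shared randomness; all padding hyperedges are placed on the fresh coordinates $\vecy$ and their mask vectors are drawn from $\cD_0$ using shared randomness. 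Consequently Bob computes the padding signals \emph{exactly} (they depend only on shared randomness), the simulated outer instance has exactly the correct joint distribution, and no coupling or approximation is needed — at the price of a constant-factor change in the instance length and matching density, which the lemma statement absorbs through $n'_0$ and the parameters of the SD problem. If you want to salvage your approach, you essentially have to rediscover this device: make the padding constraints computable from information Bob (and the shared randomness) already has, rather than sampling their signals from a marginal.
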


	We prove the lemma by a hybrid argument, where we slowly change the distribution $\cD_Y$ to $\cD_N$ by considering one function from $\cF$ at a time. The crux of the lemma is in showing that two adjacent steps in this sequence are at least as hard as some single-function SD problem, which follows from the following lemma.
	
    \begin{restatable}{lemma}{restindis}\label{lem:indis of shift}
    Let $n,k,q\in\N$, $\alpha\in(0,1)$ where $k,q,\alpha$ are constants with respect to $n$ and $\alpha n$ is an integer less than $n/k$. Let $\cF \subseteq\{f:[q]^k\to\{0,1\}\}$ For every $\epsilon,\delta\in(0,1]$, there exist $n'=\Omega(n)$ and constants $\alpha',\delta'\in(0,1)$ such that the following holds. For every distributions $\cD_Y,\cD_N,\cD_0,\cD_1,\cD_2\in\Delta(\cF\times[q]^k)$ such that $\cD_Y=(1-\epsilon)\cD_0+\epsilon\cD_1$ and $\cD_N=(1-\epsilon)\cD_0+\epsilon\cD_2$ and for every $c\in\N$, suppose there exists a protocol for $(\cF,\cD_Y,\cD_N)$-\textsf{SD} with parameters $n$ and $\alpha$ using $c$ bits of communication with advantage $\delta$, then there exists a protocol for $(\cF,\cD_1,\cD_2)$-\textsf{SD} with parameters $n'$ and $\alpha'$ using $c$ bits of communication with advantage $\delta'$.
    \end{restatable}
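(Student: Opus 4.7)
The plan is to prove the lemma by a reduction that converts any $c$-bit communication protocol $\Pi$ for $(\cF, \cD_Y, \cD_N)$-\textsf{SD} on $n$ variables with advantage $\delta$ into a $c$-bit protocol for $(\cF, \cD_1, \cD_2)$-\textsf{SD} on $n' = \Omega(n)$ variables with constant advantage $\delta' > 0$. The guiding intuition is that the shared component $\cD_0$ of $\cD_Y = (1-\epsilon)\cD_0 + \epsilon\cD_1$ and $\cD_N = (1-\epsilon)\cD_0 + \epsilon\cD_2$ carries no information distinguishing the \textbf{YES} from the \textbf{NO} case, so Alice and Bob can manufacture the $\cD_0$-labeled hyperedges themselves using public randomness and then invoke $\Pi$ on the augmented instance.

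Concretely, given an input $(\vecx^*, M, \vecz)$ of $(\cF, \cD_1, \cD_2)$-\textsf{SD} on $n'$ variables with $m' = \alpha' n'$ hyperedges, Alice and Bob use shared randomness to sample a uniform random injection $\pi : [n'] \to [n]$, dummy values $\vecy$ for the coordinates $[n] \setminus \pi([n'])$, a uniform random matching $M^{(0)}$ of appropriate size on $[n] \setminus \pi([n'])$, and i.i.d.\ labels $\vecb^{(0)} \sim \cD_0$ for $M^{(0)}$. Alice forms $\vecx^*_{\mathrm{big}} \in [q]^n$ by placing $\vecx^*$ at the positions $\pi([n'])$ and $\vecy$ elsewhere, and runs $\Pi_A(\vecx^*_{\mathrm{big}})$. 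Bob forms $M_{\mathrm{big}} = \pi(M) \cup M^{(0)}$ and $\vecz_{\mathrm{big}}$ by concatenating $\vecz$ (which lifts to $\pi(M)$ because $\pi(M_i)\vecx^*_{\mathrm{big}} = M_i\vecx^*$ by construction) with signals he computes himself for $M^{(0)}$; crucially he can evaluate these signals from public randomness since $M^{(0)}$ touches only dummy coordinates whose values $\vecy$ he knows. He then runs $\Pi_B$. Parameters are chosen so that $\alpha n \cdot \epsilon = m'$ (matching the mixture ratio) and $n - n' = \Omega(n)$ (so that $M^{(0)}$ fits in the dummy region), yielding $n' = \Omega(n)$.

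The main technical obstacle is that the reduced distribution does not exactly match the true $(\cF, \cD_Y, \cD_N)$-\textsf{SD} distribution, in two respects: (i)~the number of $\cD_1/\cD_2$-labeled hyperedges is deterministic ($m'$) rather than $\mathrm{Binomial}(\alpha n, \epsilon)$; and (ii)~the $\cD_0$-labeled hyperedges are confined to the dummy coordinates $[n] \setminus \pi([n'])$ rather than being placed uniformly on $[n]$. Discrepancy (i) can be fixed by inflating $\alpha'$ slightly so that $m' > \epsilon \alpha n$ with constant margin, then sampling $N \sim \mathrm{Binomial}(\alpha n, \epsilon)$ via public randomness and using only a uniform random $N$-subset of the $m'$ input hyperedges; by a Chernoff bound the event $N \leq m'$ holds with probability $1 - \exp(-\Omega(n))$, and conditional on it the marginal count exactly matches that of the true distribution. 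Discrepancy (ii) is the harder part: the placements of labeled vs.\ unlabeled hyperedges are correlated in the reduction but not in the truth. I expect to handle this via a coupling/symmetrization argument that exploits the uniformity of $\pi$ to show that $\Pi$'s output distribution on the reduced instance differs from its output distribution on the true $(\cF, \cD_Y, \cD_N)$-\textsf{SD} instance by at most a constant factor in total variation, preserving a constant fraction of the advantage. This is the step I expect to require the most care, and the conclusion is a $c$-bit protocol with constant advantage $\delta' > 0$ for $(\cF, \cD_1, \cD_2)$-\textsf{SD} on $n' = \Omega(n)$ variables.
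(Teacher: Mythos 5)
Your overall strategy is the same as the paper's: embed the given $(\cF,\cD_1,\cD_2)$-\textsf{SD} instance into a larger $(\cF,\cD_Y,\cD_N)$-\textsf{SD} instance by using shared randomness to manufacture the $\cD_0$-labeled hyperedges on coordinates whose values are public, and then run the assumed protocol. However, the step you yourself flag as ``the harder part'' --- discrepancy (ii), that your $\cD_0$-edges live only on $[n]\setminus\pi([n'])$ rather than being placed uniformly among all coordinates unused by the embedded edges --- is left as an unproven expectation, and the fallback you sketch is not sound as stated: showing the simulated \yes/\no\ distributions are within ``a constant factor in total variation'' of the true ones does not preserve the advantage unless that constant is smaller than $\delta$ (an advantage-$\delta$ distinguisher degrades to advantage $\delta-2\eta$ when each input distribution is perturbed by $\eta$ in total variation, so you need $\eta$ to be $o(1)$ or at least $<\delta/2$, not merely $O(1)$). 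As written, the crucial distributional claim is neither exact nor quantitatively controlled, so the proof is incomplete. There is also a smaller unaddressed point: the embedded blocks must occupy a \emph{uniformly random} subset of the $\alpha n$ block positions (matching the latent i.i.d.\ Bernoulli types of the mixture), not, say, a prefix; your Binomial fix chooses how many input edges to use but does not say where they sit in the output block sequence.

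Two remarks on how to close this. First, the paper sidesteps the issue entirely by a cleaner randomization: each of the $\alpha n$ blocks is independently declared ``embedded'' or ``fresh'' by a Bernoulli coin $\vecw$, embedded blocks are filled from the input instance on the first $n'$ coordinates, fresh blocks are filled with $\cD_0$-samples on dedicated dummy coordinates, and then a \emph{uniformly random permutation} $\Gamma$ of all $n$ coordinates is applied to both the assignment and the matching. This makes the simulated $(\vecx,M,\vecz)$ exactly distributed as a $(\cF,\cD_Y,\cD_N)$-\textsf{SD} instance, up to an error event (running out of embedded blocks or dummy coordinates) of probability $\exp(-\Omega(n))$, so the advantage transfers essentially losslessly. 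Second, your injection-based variant can in fact also be made exact rather than approximate: conditioned on the embedded edges $A$, the set $\pi([n'])$ is a uniform $n'$-superset of $V(A)$, and a uniform hypermatching drawn on the complement of a uniformly random such superset is distributed exactly as a uniform hypermatching on $[n]\setminus V(A)$ (every hypermatching of the given size is contained in the same number of admissible complements and is equally likely given each). So the right repair is to prove exactness of your simulation via this symmetry (together with randomizing the block positions), not to argue approximate closeness in total variation.
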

    
    The proof idea of~\cref{lem:indis of shift} is very similar to that of~\cref{thm:polarization step hardness}. We defer the proof to~\cref{sec:indis SD shift} and turn to showing how \cref{lem:reduce-to-single-function} follows.
    
\begin{proof}[Proof of \cref{lem:reduce-to-single-function}]
	Let $\ALG(\vecx^*; M,\vecz)$ be the $c$-bit protocol for $(\cF,\cD_Y,\cD_N)$-SD achieving advantage $\delta$ guaranteed to exist by the theorem statement. Let $\cF=\{f_1,\ldots,f_\ell\}$. Since $\vecmu(\cD_Y) = \vecmu(\cD_N)$ for each $i\in[m]$, we have that $\Pr[f=f_i\colon (f,\vecb)\sim\cD_Y]=\Pr[f=f_i\colon (f,\vecb)\sim\cD_N]$. Let us denote this probability by $w^{(i)}$, $w^{(i)}=\Pr[f=f_i\colon (f,\vecb)\sim\cD_Y]=\Pr[f=f_i\colon (f,\vecb)\sim\cD_N]$ for each $i\in[\ell]$.
	For each $i\in[\ell]$, let $\cD_Y^{(i)}$ be the distribution of a random variable $\vecb\in[q]^k$ that is sampled from $(f,\vecb)\sim\cD_Y$ conditioned on $f=f_i$. Similarly, for each $i\in[\ell]$, let $\cD_N^{(i)}$ be the distribution of $\vecb\in[q]^k$ from $(f,\vecb)\sim\cD_N$ conditioned on $f=f_i$. This way we have that $\cD_Y$ and $\cD_N$ are the mixture distributions:
	$\cD_Y=\sum_{i\in[\ell]} w^{(i)}\cdot \cD_Y^{(i)}$ and $\cD_N=\sum_{i\in[\ell]} w^{(i)}\cdot \cD_N^{(i)}$.
	
	For every $i\in\{0,\ldots,\ell\}$, we define a distribution $\cD^{(i)}$ as the following mixture distribution:
	\[
	\cD^{(i)}=\sum_{j\in\{1,\ldots,i\}} w^{(j)}\cdot \cD_N^{(j)}+
	\sum_{j\in\{i+1,\ldots,\ell\}} w^{(j)}\cdot \cD_Y^{(j)} \,.
	\]
	Let $p_i=\Pr[\ALG(\vecx^*; M,\vecz)=\yes\colon (f,\vecb)\sim\cD^{(i)}]$ for every $i\in\{0,\ldots,\ell\}$. Observe that $p_0=\Pr[\ALG(\vecx^*; M,\vecz)=\yes\colon (f,\vecb)\sim\cD_Y]$ and $p_\ell=\Pr[\ALG(\vecx^*; M,\vecz)=\yes\colon (f,\vecb)\sim\cD_N]$. Since the advantage of $\ALG$ in distinguishing $\cD_Y$ and $\cD_N$ is at least $\delta$, we have that
	\[
	\delta= \left|p_0-p_\ell\right|=\left|\sum_{i\in\{0,\ldots,\ell-1\}}(p_i-p_{i+1})\right| \leq \sum_{i\in\{0,\ldots,\ell-1\}}\left|p_i-p_{i+1}\right|\,.
	\]
	Let $\delta'=\delta/\ell$. We have that at least one term of this sum is $\left|p_i-p_{i+1}\right|\geq \delta'$. From this we conclude that for some $i\in\{0,\ldots,\ell-1\}$, $\ALG$ achieves advantage at least $\delta'$ for $(\cF,\cD^{(i)},\cD^{(i+1)})$-SD. 
	
	It remains to show that if one can distinguish $\cD^{(i)}$ and $\cD^{(i+1)}$ that differ only for $(f,\vecb)$ with $f=f_{i+1}$, then one can also distinguish $\cD_1=\cD_Y^{(i+1)}$ and $\cD_2=\cD_N^{(i+1)}$. Since $\vecmu(\cD_1) = \vecmu(\cD_2)$, this will finish the proof. We show that $\cD_1$ and $\cD_2$ are distinguishable using \cref{lem:indis of shift}.
	
	Let us define $\eps=w_{i+1}$, $\cD=\frac{1}{1-\eps}\left(\sum_{j\in\{1,\ldots,i\}} w^{(j)}\cdot \cD_N^{(j)}+
	\sum_{j\in\{i+2,\ldots,\ell\}} w^{(j)}\cdot \cD_Y^{(j)}\right)$. Now observe that
	$\cD^{(i)}=(1-\epsilon)\cD+\epsilon\cD_1$ and
	$\cD^{(i+1)}=(1-\epsilon)\cD+\epsilon\cD_2$. Now by \cref{lem:indis of shift}, a protocol that distinguishes $\cD^{(i)}$ and $\cD^{(i+1)}$ implies a protocol for $(\cD_1, \cD_2)$-SD with advantage $\delta''>0$ and communication complexity $O(s)$.
\end{proof}

\subsection{Putting it together}\label{ssec:proof-of-sd}

We now have the ingredients in place to prove \cref{thm:communication lb matching moments} which we recall below for convenience.

\reducermdtosd*

\begin{proof}[Proof of \cref{thm:communication lb matching moments}]
Fix $\cF \subseteq \{f:[q]^k \to \{0,1\}\}$ and distributions $\cD_Y,\cD_N \in \Delta(\cF\times[q]^k)$ with $\vecmu = \vecmu(\cD_Y) = \vecmu(\cD_N)$. \cref{lem:reduce-to-single-function}, applied to $(\cF, \cD_Y,\cD_N)$, gives us $n_0,\delta'$, and distributions
$\cD_Y',\cD_N' \in \Delta([q]^k)$ with $\vecmu'=\vecmu(\cD_Y') = \vecmu(\cD_N')$ such that any $c$-communication protocol for $(\cF,\cD_Y,\cD_N)$-\textsf{SD} with advantage $\delta$ implies a $c$-communication protocol for $(\cD_Y',\cD_N')$-\textsf{SD} with advantage $\delta'$ for all $n\geq n_0$. Now we'll focus on proving a lower bounds for the problem $(\cD_Y',\cD_N')$-\textsf{SD}.

\cref{lem:polarization finite}, applied to $\cD_Y'$, gives us $\cD_0 = \cD_Y',\cD_1,\ldots,\cD_t = \cD_{\vecmu'}$ such that $\cD_{i+1} = (\cD_i)_{\vecu(i),\vecv(i)}$, i.e., $\cD_i$ is an update of $\cD_i$, with $t \leq \tilde{N}(Q) < \infty$, for $Q=\sum_{i=1}^k q_k$. 
Similarly \cref{lem:polarization finite}, applied to $\cD_N'$, gives us $\cD'_0 = \cD_N',\cD'_1,\ldots,\cD'_{t'} = \cD_{\vecmu'}$ such that $\cD'_{i+1} = (\cD'_i)_{\vecu'(i),\vecv'(i)}$ with $t' \leq \tilde{N}(Q) < \infty$. 

Applying \cref{thm:polarization step hardness} with $\delta'' = \delta'/(2\tilde{N}(Q))$ to the pairs $\cD_i$ and $\cD_{i+1}$, we get that there exists $\tau_i$ such that every protocol for 
$(\cD_i,\cD_{i+1})$-\textsf{SD} requires $\tau_i \sqrt{n}$ bits of communication to achieve advantage $\delta''$. 
Similarly applying \cref{thm:polarization step hardness} again with $\delta'' = \delta'/(2\tilde{N}(Q))$ to the pairs $\cD'_i$ and $\cD'_{i+1}$, we get that there exists $\tau'_i$ such that every protocol for 
$(\cD'_i,\cD'_{i+1})$-\textsf{SD} requires $\tau'_i \sqrt{n}$ bits of communication to achieve advantage $\delta''$. 

Letting $\tau' = \min\left\{\min_{i \in [t]}\{\tau_i\}, \min_{i \in [t']}\{\tau'_i\}\right\}$, we get, using the triangle inequality for indistinguishability, that every protocol $\Pi'$ for $(\cD_Y',\cD_N')$-\textsf{SD} achieving advantage $(t+t')\delta''\leq\delta'$ requires $\tau'\sqrt{n}$ bits of communication. Finally, by~\cref{lem:reduce-to-single-function}, every protocol $\Pi$ for $(\cF,\cD_Y,\cD_N)$-\textsf{SD} achieving advantage $\delta$ requires $\tau'\sqrt{n}$ bits of communication.
\end{proof}
\section{Indistinguishability of the Polarization Step}\label{sec:spl}

Recall that in~\cref{def:polarization operator} we define a polarization operator that polarizes a distribution $\cD\in\Delta([q]^k)$ to $\cD_{\vecu,\vecv}\in\Delta([q]^k)$ for every incomparable pair $(\vecu,\vecv)$. In this section, we show that $(\cD,\cD_{\vecu,\vecv})$-SD requires $\Omega(\sqrt{n})$ communication.

\polarizationstephardness*

Let $\vecu\vee\vecv,\vecu\wedge\vecv \in [q]^k$ be given by $u_i\vee v_i = \max\{u_i,v_i\}$ and $u_i\wedge v_i = \min\{u_i,v_i\}$.
Let $\cA_Y = \unif(\{\vecu,\vecv\})$ and $\cA_N = \unif(\{\vecu\vee\vecv,\vecu\wedge\vecv\})$. We prove~\cref{thm:polarization step hardness} in two steps. First, we use the Boolean hardness in~\cref{thm:kpart} to show in~\cref{lem:SD boolean to non-boolean} that the hardness holds for the special case $(\cA_Y,\cA_N)$-SD.
Next, we reduce $(\cA_Y,\cA_N)$-advice-SD to $(\cD,\cD_{\vecu,\vecv})$-SD for arbitrary distribution $\cD\in\Delta([q]^k)$.

\subsection{Reduce a Boolean SD problem to a non-Boolean SD problem}
In this subsection, we consider a special case of $\vecu,\vecv\in[q]^k$ where $u_i\neq v_i$ for every $i\in[k]$. The following key lemma of this subsection establishes the hardness of $(\cA_Y,\cA_N)$-SD via a reduction from a Boolean SD problem to a non-Boolean version. 

\begin{lemma}\label{lem:SD boolean to non-boolean}
Let $n,k,q\in\N$, $\alpha\in(0,1)$ where $k,q,\alpha$ are constants with respect to $n$ and $\alpha n$ is an integer less than $n/k$.
For $\vecu,\vecv \in [q]^k$ satisfying $u_i \ne v_i$ for all $i \in [k]$ and $\delta>0$, there exists $\tau>0$ such that every protocol for $(\cA_Y,\cA_N)$-SD  achieving advantage $\delta$ requires $\tau\sqrt{n}$ bits of communication.
\end{lemma}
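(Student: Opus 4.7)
The plan is to reduce the Boolean Advice-SD lower bound (\cref{thm:kpart}) to the target non-Boolean SD lower bound via two steps: first from non-Boolean SD to non-Boolean Advice-SD by supplying the advice $\veca$ for free, and then from non-Boolean Advice-SD to Boolean Advice-SD by embedding Boolean coordinates into $[q]$. Since $u_i \ne v_i$ for every $i$, I can define coordinatewise bijections $\psi_i \colon \{-1,1\} \to \{u_i,v_i\}$ by $\psi_i(-1)=u_i$, $\psi_i(+1)=v_i$, and their product $\psi \colon \{-1,1\}^k \to [q]^k$. Let $\vecs \in \{-1,1\}^k$ have $s_i=+1$ iff $v_i>u_i$; then $\psi(\vecs)=\vecu\vee\vecv$ and $\psi(-\vecs)=\vecu\wedge\vecv$. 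Setting $\cD_Y^{\mathrm{bool}}=\unif(\{(-1)^k,(+1)^k\})$ and $\cD_N^{\mathrm{bool}}=\unif(\{-\vecs,\vecs\})$, both of which have uniform marginals, I get $\psi(\cD_Y^{\mathrm{bool}})=\cA_Y$ and $\psi(\cD_N^{\mathrm{bool}})=\cA_N$, so \cref{thm:kpart} yields an $\Omega(\sqrt{n})$ lower bound for $(\cD_Y^{\mathrm{bool}},\cD_N^{\mathrm{bool}})$-Advice-SD.

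The first reduction is essentially free: the marginal joint distribution of $(\vecx^*,M,\vecz)$ is the same in non-Boolean SD and non-Boolean Advice-SD, since the partial permutation matrix $M$ in Advice-SD marginalizes, over the remaining rows of $\Gamma$, to a uniform partial permutation matrix, which is exactly the SD distribution of $M$. Hence a $c$-bit protocol solving $(\cA_Y,\cA_N)$-SD with advantage $\delta$ also solves $(\cA_Y,\cA_N)$-Advice-SD with the same advantage by ignoring $\veca$.

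For the second reduction, Alice and Bob share public randomness $\vecw = (W_1,\ldots,W_n)$ with $W_j \sim \bern(2/q)$ i.i.d. Alice, holding $(\vecx^*,\veca)$, builds $\widetilde{\vecx}^* \in [q]^n$ by setting $\widetilde{x}^*_j = \psi_{a_j}(x^*_j)$ when $W_j=1$ and sampling $\widetilde{x}^*_j$ uniformly from $[q]\setminus\{u_{a_j},v_{a_j}\}$ (via private randomness) when $W_j=0$. A short calculation shows that $\widetilde{x}^*_j$ is i.i.d.\ uniform on $[q]$ given $\veca$, and that the $(\vecx^*,\veca)$-marginal is precisely the Boolean Advice-SD input. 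Bob, holding $(M,\vecz,\veca)$, takes $\widetilde{M}=M$ and defines $\widetilde{z}_\ell = \mathbf{1}[\vecz_\ell = 1^k] \cdot \prod_{i=1}^{k} W_{\vecj(\ell)_i}$. The key identity $\widetilde{z}_\ell = \mathbf{1}[M_\ell \widetilde{\vecx}^* = \psi(\vecb(\ell))]$ follows coordinatewise: whenever some $W_{\vecj(\ell)_i}=0$, $\widetilde{x}^*_{\vecj(\ell)_i}\notin\{u_i,v_i\}$ while $\psi_i(b_\ell(i))\in\{u_i,v_i\}$, forcing $\widetilde{z}_\ell=0$; when all $W_{\vecj(\ell)_i}=1$, bijectivity of $\psi_i$ reduces the equation to $\vecx^*_{\vecj(\ell)}=\vecb(\ell)$, i.e., $\vecz_\ell=1^k$. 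Since $\psi(\vecb(\ell))\sim\cA_Y$ (resp.\ $\cA_N$) when $\vecb(\ell)\sim\cD_Y^{\mathrm{bool}}$ (resp.\ $\cD_N^{\mathrm{bool}}$), the joint $(\widetilde{\vecx}^*,\veca,M,\widetilde{\vecz})$ matches the non-Boolean Advice-SD law exactly.

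Chaining the two reductions, any $c$-bit $(\cA_Y,\cA_N)$-SD protocol of advantage $\delta$ yields a $c$-bit $(\cD_Y^{\mathrm{bool}},\cD_N^{\mathrm{bool}})$-Advice-SD protocol of the same advantage, so \cref{thm:kpart} forces $c=\Omega(\sqrt{n})$ for $\alpha$ below the threshold that theorem requires; for larger $\alpha$ Bob discards hyperedges to reduce to a sub-matching of the required size. The main technical obstacle is the joint-distribution verification in the embedding: Alice's $\widetilde{\vecx}^*$ depends on her private randomness on coordinates with $W_j=0$, which Bob cannot see, but this private randomness is confined to values outside $\{u_{a_j},v_{a_j}\}$ and thus always forces $\widetilde{z}_\ell=0$, so the ``informative'' event is determined by the public bits and $\vecz_\ell$ alone, which is precisely what makes Bob's recipe for $\widetilde{z}_\ell$ consistent with Alice's implicit $\widetilde{\vecx}^*$.
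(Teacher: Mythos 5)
Your reduction is correct in its core, but it takes a genuinely different route from the paper. The paper goes in the opposite direction: it uses as shared randomness a full random $q$-ary Yes instance $I_R$ of length $n=2q\bar n$, identifies the coordinates whose values happen to land in $\{u_i,v_i\}$ and the constraints that touch only such coordinates (the sets $T_i$, $U$, $X_i$, $W_i$, $S_i$), and embeds the Boolean advice instance $\bar I$ into that substructure; this costs a blow-up of $n$ by a factor $2q$, two error events that must be shown exponentially unlikely, and a factor $2$ loss in advantage. You instead lift each coordinate locally and exactly: a public selector $W_j$ with bias $2/q$ decides whether coordinate $j$ carries $\psi_{a_j}(x^*_j)\in\{u_{a_j},v_{a_j}\}$ or a privately resampled value outside that pair, Bob zeroes out any constraint touching a resampled coordinate, and bijectivity of the $\psi_i$ gives $\widetilde z_\ell=\mathbf{1}[M_\ell\widetilde{\vecx}^*=\psi(\vecb(\ell))]$ pointwise. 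The resulting simulation is distribution-exact (no error events), preserves $n$, $\alpha$, and the advantage, and your preliminary step---that an SD protocol solves Advice-SD by ignoring $\veca$, since the $(\vecx^*,M,\vecz)$ marginals coincide---is also correct. What the paper's heavier embedding buys is a more favorable parameter translation ($\bar\alpha = 2^{k+2}\alpha/q^{k-1}$, so for large $q$ the Boolean instance has a much smaller density than the target $\alpha$); what yours buys is a shorter, cleaner argument with better constants in $n$ and in the advantage.

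Two caveats. First, your closing remark that for $\alpha$ above the threshold of \cref{thm:kpart} ``Bob discards hyperedges'' does not work as a fix: discarding goes the wrong way, since the assumed SD protocol expects $\alpha n$ constraints and neither party can manufacture extra consistent pairs $(e,z)$ (Bob does not know $\widetilde{\vecx}^*$, Alice does not know the matching). This is only a constant-tracking issue, though---the paper's own proof invokes \cref{thm:kpart} at density $\bar\alpha$ without verifying $\bar\alpha\le\alpha_0$ either, and in the one place the lemma is used (\cref{sec:final}) the density is small---so it does not undermine your argument any more than the paper's, but you should drop the ``discard'' claim and simply state the lemma for $\alpha$ below the threshold inherited from \cref{thm:kpart}. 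Second, in the paper's Boolean Advice-SD Bob receives single bits $z_\ell$, not mask vectors, so your formula should read $\widetilde z_\ell = z_\ell\cdot\prod_i W_{\vecj(\ell)_i}$ (as written, with $\mathbf{1}[\vecz_\ell=1^k]$, you are implicitly reducing from Advice-RMD, i.e.\ \cref{thm:rmd}, which is also fine); and note that for $q=2$ the resampling branch is vacuous since $W_j\equiv 1$, which your construction handles automatically.
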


We prove~\cref{lem:SD boolean to non-boolean} by a reduction. For such $\vecu,\vecv$, let $\bar{\vecu},\bar{\vecv} \in \{0,1\}^k$ be the Boolean version given by $(\bar{u}_i,\bar{v}_i) = (0,1)$ if $u_i < v_i$ and $(\bar{u}_i,\bar{v}_i) = (1,0)$ if $u_i > v_i$. Let $\bar{\cA}_Y = \unif(\{\bar{\vecu},\bar{\vecv}\})$ and $\bar{\cA}_N = \unif(\{\bar{\vecu}\vee\bar{\vecv},\bar{\vecu}\wedge\bar{\vecv}\})$. Note that both $\bar{\cA}_Y$ and $\bar{\cA}_N$ are distributions on Boolean domain with uniform marginals. Thus,~\cref{thm:kpart} shows that any protocol for $(\bar{\cA}_Y,\bar{\cA}_N)$-advice-SD requires $\Omega(\sqrt{n})$ bits of communication. In the rest of this subsection, we reduce $(\bar{\cA}_Y,\bar{\cA}_N)$-advice-SD to $(\cA_Y,\cA_N)$-SD.

For every $\bar{n},k,\bar{\alpha},q,\delta$, let $n=2q\bar{n}$ and $\alpha=q^{k-1}2^{-(k+2)}\bar{\alpha}$.
Let $\bar{I} = (\bar{\vecx},\bar{\Gamma},\bar{\vecb},\bar{M},\bar{\vecz},\bar{\veca})$ denote an instance of $(\bar{\cA}_Y,\bar{\cA}_N)$-advice-SD of length $\bar{n}$ with parameter $\bar{\alpha}$.
We show below how Alice and Bob can use their inputs and shared randomness to generate an instance 
$I = (\vecx,\Gamma,\vecb,M,\vecz,\veca)$  of $(\cA_Y,\cA_N)$-advice-SD of length $n$ with parameter $\alpha$ ``locally'' and ``nearly'' according to the correct distributions. Namely, we show that with high probability if $\bar{I}$ is a Yes (resp. No) instance of $(\bar{\cA}_Y,\bar{\cA}_N)$-advice-SD, then $I$ will be a Yes (resp. No) instance of $(\cA_Y,\cA_N)$-SD.

\paragraph{Step 1: Specify the shared randomness.}
The common randomness between Alice and Bob is an instances $I_R = (\vecx_R,\Gamma_R,\vecb_R,M_R,\vecz_R,\veca_R)$ drawn according to the Yes\footnote{The reduction also works if we used No distribution. However, the mapping between Yes and No instances would get flipped. Namely, if $\bar{I}$ is a Yes (resp. No) instance of $(\bar{\cA}_Y,\bar{\cA}_N)$-advice-SD, then $I$ will be a No (resp. Yes) instance of $(\cA_Y,\cA_N)$-SD.} distribution of $(\cA_Y,\cA_N)$-advice-SD of length $n$ with parameter $\alpha$. For $j \in [\alpha n]$, let $V_j$ denote the set of variables in the $j$-th constraint, i.e., $V_j = \{\ell \in [n]\, |\,  \Gamma_R(\ell) \in \{ k(j-1)+1,\ldots,k(j-1)+k\}\}$. For $i \in [k]$, let $T_i$ be the set of variables that are in the $i$-th partition and take on values in $\{u_i,v_i\}$, i.e., 
$T_i = \{j \in [n]\, |\, a_j = i\, \&\, (\vecx_R)_i \in \{u_i,v_i\}\}$. Let $U \subseteq [\alpha n]$ be the set of constraints that work on variables in $T_i$, i.e., $U = \{j \in [\alpha n]\, |\, V_j \subseteq \cup_i T_i \}$. See~\cref{fig:polarization step shared randomness} for an example.

\begin{figure}[ht]
    \centering
    \includegraphics[width=12cm]{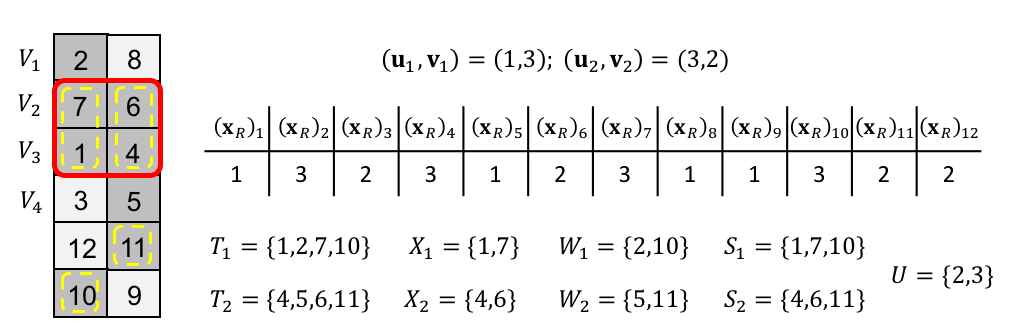}
    \caption{An example of shared randomness used in~\cref{lem:SD boolean to non-boolean}. Here $n=12$, $k=2$, $q=3$, and $\alpha=1/3$. The value of $\vecx_R\in[q]^n$ is listed in a table. Consider $(u_1,v_1)=(1,3)$ and $(u_2,v_2)=(3,2)$. The variables in sets $T_1,T_2$ are marked grey. The variables correspond to the set $U$ are circled with red lines and the variables correspond to sets $S_1,S_2$ are circled with yellow dashed lines.}
    \label{fig:polarization step shared randomness}
\end{figure}

If $|U| \geq \bar{\alpha}\bar{n}$ we say an error of type (1) has occurred. For $i \in [k]$, let $X_i\subseteq T_i$ be the set of variables that operate on constraints in $U$, i.e., $X_i = T_i \cap (\cup_{j \in U} V_j)$. Let $W_i \subseteq T_i$ be a set of variables that do not participate in any constraint, i.e., $W_i = T_i \setminus (\cup_{j \in [\alpha n]} V_j)$. Finally let $S_i$ be any set satisfying $|S_i| = \bar{n}/k$ with $X_i \subseteq S_i \subseteq X_i \cup W_i$ if such a set exists. If no such set exists we say an error of type (2) has occurred.

\paragraph{Step 2: Specify the reduction.}
If there is an error, we simply set $I=I_R$. If no errors have occurred, our reduction will embed $\bar{I}$ into $I_R$ by replacing the constraints in $U$ and the variables in $\cup_i S_i$ as described next. Note that we have to specify variables $(\vecx,\Gamma,\vecb,M,\vecz)$. In particular, we want the private inputs can be computed locally. We verify the local property of the reduction in~\cref{claim:SD boolean to non-boolean local} and prove the correctness of the reduction in~\cref{claim:SD boolean to non-boolean correctness}.

\begin{itemize}
\item $\vecx$: Let $\rho:[\bar{n}] \to \cup_i S_i$ be a bijection satisfying $\bar{a}_j = i \Rightarrow \rho(j) \in S_i$. We now define $\vecx \in [q]^n$ as follows:
\[
x_j = \left \{ \begin{array}{ll} 
        (\vecx_R)_j & \mbox{ if $j \notin \cup_{i \in [k]}S_i$ } \\
        u_i & j \in S_i \mbox{ for some } i \in [k] \mbox{ and } u_i < v_i \mbox{ and } \bar{x}_j = 0 \\
        u_i & j \in S_i \mbox{ for some } i \in [k] \mbox{ and } u_i > v_i \mbox{ and } \bar{x}_j = 1 \\
        v_i & j \in S_i \mbox{ otherwise }
     \end{array} \right .
\]

\item $\Gamma$ and $M$: Let $V = \{V(1),\ldots,V(\bar{n})\}$ with $V(j) < V(j+1)$ be such that $V = \{j \in [n] | \Gamma_R(j) \in \cup_{i \in [k]} S_i\}$. For $j \in [n]$ we let
\[
\Gamma(j) =  \left \{ \begin{array}{ll}
                        \Gamma_R(j) & \mbox{ if $j \notin V$}\\
                        \rho(\bar{\Gamma}(\bar{j})) & \mbox{ if $j = V(\bar{j})$} \\
                      \end{array} \right .
\]
It may be verified that $\Gamma$ is a permutation and furthermore the constraints in $\Gamma$ corresponding to $j \in U$ are derived from constraints of $\bar{I}$. $M$ is then defined as the partial permutation matrix capturing $\Gamma^{-1}(j)$ for $j\in[k\alpha n]$.

\item $\vecb$: Since $\vecb$ is a hidden variable and won't be given to Alice and Bob, we postpone the specification of $\vecb$ to the proof of~\cref{claim:SD boolean to non-boolean correctness}.

\item $\vecz$: Let $\vecz(j)=\bar{\vecz}(V(j))$ if $j\in U$ and $\vecz(j)=\vecz_R(j)$ otherwise.
\end{itemize}

\paragraph{Step 3: Correctness of the reduction assuming no error occurs.}
\begin{claim}[The reduction can be computed locally]\label{claim:SD boolean to non-boolean local}
Let $\bar{I} = (\bar{\vecx},\bar{\Gamma},\bar{\vecb},\bar{M},\bar{\vecz},\bar{\veca})$ be an instance of $(\bar{\cA}_Y,\bar{\cA}_N)$-advice-SD and $I_R = (\vecx_R,\Gamma_R,\vecb_R,M_R,\vecz_R,\veca_R)$ be the shared randomness of Alice and Bob. The above reduction satisfies the following local properties:
\begin{itemize}
\item Alice can compute $\vecx$ using $I_R$ and $(\bar{\vecx},\bar{\veca})$. 
\item Bob can compute $(M,\vecz)$ using $I_R$ and $(\bar{M},\bar{\vecz},\bar{\veca})$.
\end{itemize}
\end{claim}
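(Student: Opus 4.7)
My plan is to establish the claim by going through each object that Alice or Bob needs to produce and tracing back the information it depends on, showing that every ingredient is either part of the shared randomness $I_R$, the public advice $\bar{\veca}$, or the player's private input. The key observation is that all the auxiliary combinatorial objects defined in Step~1 of the reduction ($V_j,T_i,U,X_i,W_i,S_i$, the bijection $\rho$, and the indexing $V(1)<\cdots<V(\bar n)$) are deterministic functions of $I_R$ together with $\bar{\veca}$; in particular they can be computed identically by both players without any further communication. So the whole proof reduces to checking that Alice's use of $\bar{\vecx}$ and Bob's use of $(\bar M,\bar\vecz)$ in the explicit formulas for $\vecx$, $\Gamma$, $M$, and $\vecz$ stays within each player's view.

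For Alice's side, I will first argue that the sets $T_i$ (which depend only on $\vecx_R$ and $\veca_R$), the sets $V_j$ (which depend only on $\Gamma_R$), and hence $U,X_i,W_i,S_i$ are all computable from $I_R$ alone. Then the bijection $\rho\colon[\bar n]\to\cup_i S_i$ respecting $\bar\veca$ can be chosen by a canonical rule (e.g.~map the $j$-th element of $\bar\veca^{-1}(i)$ to the $j$-th element of $S_i$ in some fixed total order), which is a function of $I_R$ and $\bar\veca$. With $\rho$ in hand, Alice's formula for $\vecx$ only reads $\vecx_R$, the membership $j\in S_i$, and for $j=\rho(\bar j)$ the bit $\bar{x}_{\bar j}$; all of these she has access to.

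For Bob's side, the same argument gives him access to $I_R$, $\bar\veca$, $\rho$, and the ordered set $V=\{V(1)<\cdots<V(\bar n)\}=\{j:\Gamma_R(j)\in\cup_i S_i\}$. His formula for $\Gamma(j)$ is $\Gamma_R(j)$ outside $V$, and $\rho(\bar\Gamma(\bar j))$ when $j=V(\bar j)$. Since the reduction only repositions the first $k\bar\alpha\bar n$ image-slots of $\Gamma_R$ (those lying in $\cup_i S_i$ and corresponding to constraints in $U$), Bob only needs $\bar\Gamma$ on $[k\bar\alpha\bar n]$; these values are exactly what $\bar M$ encodes, since $\bar M_{ij}=1$ iff $j=\bar\Gamma(i)$ for $i\in[k\bar\alpha\bar n]$. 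From $\Gamma$ he then forms the partial permutation matrix $M$ as in Definition~\ref{def:advice_sd}. Finally, $\vecz(j)=\vecz_R(j)$ when $j\notin U$, and $\vecz(j)=\bar\vecz(V(j))$ (with the convention that $V(j)$ indexes the pre-image in $\bar I$ of the $j$-th constraint of $U$) otherwise, so $\vecz$ is built out of $\vecz_R$, the known set $U$ and indexing, and Bob's input $\bar\vecz$.

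The only genuine point requiring care---and what I expect to be the main (minor) obstacle---is confirming that the ``no-error'' condition (types (1) and (2) in Step~1) is detectable from $I_R$ alone, so that both players can consistently decide whether to output the substituted instance or the fallback $I=I_R$, and that the canonical choice of $S_i$ and $\rho$ is well-defined once these sets are non-empty of the right size. Both checks use only $I_R$ and $\bar\veca$, so the reduction is consistent between the two parties, completing the proof of locality. The probabilistic analysis that these errors occur with small probability, and that the induced distribution on $I$ is (statistically close to) the correct $(\cA_Y,\cA_N)$-SD distribution, is deferred to the subsequent claim (\cref{claim:SD boolean to non-boolean correctness}).
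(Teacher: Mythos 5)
Your proposal is correct and follows essentially the same route as the paper's own proof: you trace each of $\vecx$, $M$, $\vecz$ back to its ingredients and observe that $V_j,T_i,U,X_i,W_i,S_i$ are determined by $I_R$ alone, that $\rho$ is determined by $I_R$ and the public advice $\bar{\veca}$, and that the only values of $\bar{\Gamma}$ Bob needs (those for slots of constraints in $U$, hence indices at most $k\bar{\alpha}\bar{n}$ when no type-(1) error occurs) are exactly what $\bar{M}$ encodes. Your additional remarks about fixing canonical choices of $S_i$ and $\rho$ and about both players detecting the error events from $I_R$ alone are fine, slightly more careful versions of points the paper handles implicitly.
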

\begin{proof}
\mbox{}
\begin{itemize}
\item Note that from the construction, it suffices to have $\{S_i\}$, $\bar{\veca}$, $\bar{\vecx}$ to compute $\vecx$. Since $\{S_i\}$ can be obtained from $I_R$, we conclude that Alice can compute $\vecx$ using $I_R$ and $(\bar{\vecx},\bar{\veca})$. 
\item Note that from the construction, it suffices to have $\{S_i\}$, $\Gamma_R$, $\bar{\Gamma}(j)$ where $j\in[k\alpha n]$ to compute $M$. Since $\bar{\Gamma}(j)$ is encoded in $\bar{M}$ for every $j\leq k\alpha n$, and the other information can be obtained from $I_R$, we know that $M$ can be computed from $I_R$ and $\bar{M}$.
Finally, since $\vecz=\vecz'$, $\vecz$ can also be computed from $I_R$. We conclude that Bob can compute $(M,\vecz)$ using $I_R$ and $(\bar{M},\bar{\vecz},\bar{\veca})$.
\end{itemize}
\end{proof}

\begin{claim}[The distribution of $I$]\label{claim:SD boolean to non-boolean correctness}
Let $\bar{I} = (\bar{\vecx},\bar{\Gamma},\bar{\vecb},\bar{M},\bar{\vecz},\bar{\veca})$ be an instance drawn from either the Yes or No distribution of $(\bar{\cA}_Y,\bar{\cA}_N)$-advice-SD and $I_R = (\vecx_R,\Gamma_R,\vecb_R,M_R,\vecz_R,\veca_R)$ be a instance drawn from the Yes distribution of $(\cA_Y,\cA_N)$-advice-SD. Let $I = (\vecx,\Gamma,\vecb,M,\vecz,\veca)$ be the result of applying the above reduction on $\bar{I}$ and $I_R$. Then the following hold.
\begin{itemize}
\item $\vecx\sim\unif([q]^n)$.
\item $M$ is a uniformly random partial permutation matrix as required in the item 3 of~\cref{def:advice_sd}.
\item Suppose there is no error happening in the reduction.
\begin{itemize}
    \item If $\bar{I}$ is a Yes instance, then $\Pr[\vecz(j)=1]=\Pr_{\vecb(j)\sim\cA}[(M\vecx)(j)=\vecb(j)]$ for every $j\in[\alpha n]$.
    \item If $\bar{I}$ is a No instance, then $\Pr[\vecz(j)=1]=\Pr_{\vecb(j)\sim\cA'}[(M\vecx)(j)=\vecb(j)]$ for every $j\in[\alpha n]$.
\end{itemize}
\end{itemize}
Namely, if $\bar{I}$ is a Yes (resp. No) instance of $(\bar{\cA}_Y,\bar{\cA}_N)$-advice-SD, then $I$ is a Yes (resp. No) instance of $(\cA_Y,\cA_N)$-SD.
\end{claim}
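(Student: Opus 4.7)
The plan is to verify each of the three items in turn by tracing distributional identities explicitly through the construction.

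For the first item (uniformity of $\vecx$), I would fix all the ``structural'' randomness (i.e.\ $\veca_R$, $\Gamma_R$, the sets $S_1,\ldots,S_k$, and the bijection $\rho$) and then analyze the marginals of the remaining $\vecx$. The key observation is that the event $j\in S_i$ forces $(\vecx_R)_j\in\{u_i,v_i\}$, and conditioned on that event $(\vecx_R)_j$ is uniform on $\{u_i,v_i\}$; consequently, replacing $(\vecx_R)_j$ by the $\bar{\vecx}$-derived value (also uniform on $\{u_i,v_i\}$) does not change the marginal of $\vecx_j$. A short computation then shows that for every $\sigma\in[q]$ and every $j$ with $(\veca_R)_j=i_0$, splitting according to whether $\sigma\in\{u_{i_0},v_{i_0}\}$ and whether $j\in S_{i_0}$ gives $\Pr[\vecx_j=\sigma]=1/q$; joint uniformity then follows by the same conditional-independence argument applied coordinate by coordinate.

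For the second item (uniformity of $M$), $\Gamma$ is assembled from $\Gamma_R$ outside the positions $V$ and from $\rho\circ\bar{\Gamma}$ inside $V$. Since $\Gamma_R$ and $\bar{\Gamma}$ are independent uniform random permutations, and $\rho$ is a uniformly chosen bijection compatible with the partition $\bar{\veca}$, $\Gamma$ is itself a uniform random permutation, and hence $M$ is a uniform partial permutation matrix of the required type.

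For the third item, the natural bookkeeping device is the coordinate-wise bijection $\phi\colon\{0,1\}^k\to\{u_1,v_1\}\times\cdots\times\{u_k,v_k\}$ defined by $\phi(\bar{\veca})_i=u_i$ when $\bar{a}_i=\bar{u}_i$ and $\phi(\bar{\veca})_i=v_i$ when $\bar{a}_i=\bar{v}_i$. Direct calculation yields $\phi(\bar{\vecu})=\vecu$, $\phi(\bar{\vecv})=\vecv$, $\phi(\bar{\vecu}\wedge\bar{\vecv})=\vecu\wedge\vecv$ and $\phi(\bar{\vecu}\vee\bar{\vecv})=\vecu\vee\vecv$, so $\phi$ pushes $\bar{\cA}_Y$ forward to $\cA_Y$ and $\bar{\cA}_N$ forward to $\cA_N$. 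For $j\in U$ I would set $\vecb(j):=\phi(\bar{\vecb}(V(j)))$; by construction the $k$ variables of the $j$-th constraint all lie in $\cup_i S_i$ with $\vecx$-values coming from $\phi$ applied coordinate-wise to $\bar{\vecx}$, so $(M\vecx)(j)=\phi(\bar{M}_{V(j)}\bar{\vecx})$ and hence $\vecz(j)=\bar{\vecz}(V(j))=1\iff(M\vecx)(j)=\vecb(j)$, with $\vecb(j)$ following the correct $\cA_Y$ or $\cA_N$ distribution. For $j\notin U$ I would set $\vecb(j):=\vecb_R(j)$; the constraint does not touch $\cup_i S_i$, so $\vecz(j)=\vecz_R(j)=1\iff(M_R\vecx_R)(j)=\vecb_R(j)\iff(M\vecx)(j)=\vecb(j)$.

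The main obstacle lies in the first item: because the sets $\{S_i\}$ themselves depend on $\vecx_R$, one cannot na\"ively treat the reduction as swapping in an independent uniform coordinate, and the conditional argument has to exploit the $\{u_i,v_i\}$-symmetry of the defining event $j\in S_i$. A milder subtlety for the third item is that $\vecb_R(j)\sim\cA_Y$ regardless of whether $\bar{I}$ is Yes or No, so for $j\notin U$ the sampled mask always follows $\cA_Y$; one has to read the claim's ``$\cA$'' and ``$\cA'$'' as referring to this $j$-dependent distribution, and the reason the subsequent reduction still works is that the Yes/No distinction is carried entirely by the coordinates $j\in U$, which suffices to transfer distinguishing advantage from $(\cA_Y,\cA_N)$-\textsf{SD} to $(\bar{\cA}_Y,\bar{\cA}_N)$-\textsf{advice-SD}.
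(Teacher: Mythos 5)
Your treatment of the first two items and of the coordinates $j\in U$ is essentially the paper's own argument: the $\{u_i,v_i\}$-symmetry/conditioning argument for uniformity of $\vecx$, the symmetry argument for $M$, and the coordinatewise bijection $\phi$ (which is exactly how the paper defines $\vecb(j)$ for $j\in U$, pushing $\bar{\cA}_Y$ to $\cA_Y$ and $\bar{\cA}_N$ to $\cA_N$). The problem is your handling of $j\notin U$. You correctly notice that there the mask is always $\vecb_R(j)\sim\cA_Y$, but you then propose to ``read'' the claim with a $j$-dependent distribution and to fall back on the assertion that distinguishing advantage still transfers because the Yes/No distinction lives on $U$. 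That is not a proof of the statement: the claim (and what the surrounding reduction needs) is a distributional identity, namely that the joint law of $(\vecx,M,\vecz)$ produced by the reduction equals that of a genuine Yes (resp.\ No) instance of $(\cA_Y,\cA_N)$-SD. If the simulated No instance really differed from the true No distribution on the coordinates outside $U$, the protocol $\Pi$ for $(\cA_Y,\cA_N)$-SD would be run on inputs it was never guaranteed to handle, and the advantage-transfer argument would have nothing to stand on.

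The missing observation, which is the whole point of the paper's case (i), is that for $j\notin U$ both sides of the claimed equality are zero. Indeed, $j\notin U$ means some variable of the $j$-th constraint lies outside $\cup_i T_i$; since $S_i\subseteq T_i$, that variable keeps its $\vecx_R$-value and the rows of block $j$ are untouched, so $(M\vecx)(j)=(M_R\vecx_R)(j)$ has some coordinate $i$ with value outside $\{u_i,v_i\}$. Both $\cA_Y=\unif(\{\vecu,\vecv\})$ and $\cA_N=\unif(\{\vecu\vee\vecv,\vecu\wedge\vecv\})$ are supported on $\{u_1,v_1\}\times\cdots\times\{u_k,v_k\}$ (recall $u_i\neq v_i$ for all $i$ here), so $\Pr_{\vecb(j)\sim\cA_Y}[(M\vecx)(j)=\vecb(j)]=\Pr_{\vecb(j)\sim\cA_N}[(M\vecx)(j)=\vecb(j)]=0$, and likewise $\vecz(j)=\vecz_R(j)=0$ almost surely. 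Hence the conditional law of $z(j)$ given $(\vecx,M)$ agrees with the target under both $\cA_Y$ and $\cA_N$, the claim holds literally as stated (no reinterpretation of $\cA,\cA'$ is needed), and $I$ is exactly a Yes (resp.\ No) instance. You should replace the ``advantage is carried by $U$'' paragraph with this zero-probability argument.
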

\begin{proof}
\mbox{}
\begin{itemize}
\item To prove $\vecx\sim\unif([q]^n)$, observe that $\vecx$ is obtained from $\vecx_R$ by flipping some of the $u_i$ to $v_i$ (and vice versa). In particular, (i) $\vecx_R\sim\unif([q]^n)$ and (ii) the flipping is decided by $\bar{\vecx}$, which is uniformly sampled from $\{0,1\}^{\bar{n}}$ and is independent to $\vecx_R$. Note that for a fixed $\vecx_R$, $S_i$, and $j\in S_i$, the probability of $x_j$ being set to $u_i$ is the same as being set to $v_i$. As a result, by symmetry of $u_i$ and $v_i$, we conclude that $\vecx\sim\unif([q]^n)$.

\item By the symmetry of the $n$ variables, $M$ is a uniformly random partial permutation matrix as required in the item 3 of~\cref{def:advice_sd}. 

\item Suppose there is no error happening in the reduction. We consider the following two cases: (i) $j\in[\alpha n]\backslash U$ and (ii) $j\in U$.
\begin{enumerate}[label=(\roman*)]
    \item For each $j\in[\alpha n]\backslash U$, by the construction we have $\vecz(j)=\vecz_R(j)$ and hence when fixing $\vecx_R,M_R$, we have $\Pr[\vecz(j)=1]=\Pr[\vecz_R(j)=1]=\Pr_{\vecb_R(j)\sim\cA}[(M_R\vecx_R)(j)=\vecb_R(j)]$. We set $\vecb(j) = \vecb_R(j)$ and note that $\vecb(j)\sim\cA_Y$ (resp. $\vecb(j)\sim\cA_N$) if $\bar{\vecb}(j)\sim\bar{\cA}_Y$ (resp. $\bar{\vecb}(j)\sim\bar{\cA}_N$) for every $j\in U$. Finally, since $j\notin U$, there exists $i\in[k]$ such that $(M_R\vecx_R(j))_i=(M\vecx(j))_i\notin\{u_i,v_i\}$ and hence $\Pr_{\vecb_R(j)\sim\cA_Y}[(M_R\vecx_R)(j)=\vecb_R(j)]=\Pr[(M\vecx)(j)=\vecb(j)]=0$. So we have $\Pr[\vecz(j)=1]=\Pr_{\vecb(j)\sim\cA_Y}[(M\vecx)(j)=\vecb(j)]$ (resp. $\Pr[\vecz(j)=1]=\Pr_{\vecb(j)\sim\cA_N}[(M\vecx)(j)=\vecb(j)]$) if $\bar{I}$ is a Yes (resp. No) instance as desired.
    
    \item For each $j\in U$, by construction we have $\vecz(j)=\bar{\vecz}(V(j))$. We set
    \[
    \vecb(j)_i =  \left \{ \begin{array}{ll}
        u_i &  \mbox{ if $u_i < v_i$ and } \bar{\vecb}(V(j))_i = 0 \\
        u_i &  \mbox{ if $u_i > v_i$ and } \bar{\vecb}(V(j))_i = 1 \\
        v_i &  \mbox{ otherwise.}
        \end{array} \right .
    \]
    First, observe that $\vecz(j)=1$ iff $(M\vecx)(j)=\vecb(j)$. To see this, note that
    \begin{align*}
        \vecz(j)=1 &\Leftrightarrow \bar{\vecz}(V(j))=1\\
        &\Leftrightarrow (\bar{M}\bar{\vecx})(V(j))=\bar{\vecb}(V(j)) \, .
        \intertext{For each $i\in[k]$, if $u_i<v_i$ and $\bar{\vecb}(V(j))_i=(\bar{M}\bar{\vecx})(V(j))_i=0$, we have $\vecb(j)_i=(M\vecx)(j)_i=u_i$. Similarly, for all the other situations we have $\vecb(j)_i=(M\vecx)(j)$ and hence the equation becomes}
        &\Leftrightarrow (M\vecx)(j)=\vecb(j)
    \end{align*}
    as desired.
    
    Next, observe that if $\bar{I}$ is a Yes (resp. No) instance, then $\vecb(j)\sim\cA_Y$ (resp. $\vecb(j)\sim\cA_N$). We analyze the two cases as follows. 
    \begin{itemize}
        \item If $\bar{I}$ is a Yes instance, we have $\bar{\vecb}(V(j))\sim\bar{\cA}_Y=\unif(\{\bar{\vecu},\bar{\vecv}\})$. Recall that $(\bar{u}_i,\bar{v}_i)=(0,1)$ if $u_i<v_i$ and $(\bar{u}_i,\bar{v}_i)=(1,0)$ otherwise. Now observe that, by the above choice of $\vecb(j)$, we have $\bar{\vecb}(V(j))=\bar{\vecu}$ iff $\vecb(j)=\vecu$ (resp. $\bar{\vecb}(V(j))=\bar{\vecv}$ iff $\vecb(j)=\vecv$). Thus, we have $\vecb(j)\sim\cA_Y$ as desired.
        \item If $\bar{I}$ is a No instance, we have $\bar{\vecb}(V(j))\sim\bar{\cA}_N=\unif(\{\bar{\vecu}\vee\bar{\vecv},\bar{\vecu}\wedge\bar{\vecv}\})$. Recall that for each $i\in[k]$, $u_i\vee v_i=\max\{u_i,v_i\}$ and $u_i\wedge v_i=\min\{u_i,v_i\}$. Now observe that, by the above choice of $\vecb(j)$, we have $\bar{\vecb}(V(j))=\bar{\vecu}\vee\bar{\vecv}$ iff $\vecb(j)=\vecu\vee\vecv$ (resp. $\bar{\vecb}(V(j))=\bar{\vecu}\wedge\bar{\vecv}$ iff $\vecb(j)=\vecu\wedge\vecv$). Thus, we have $\vecb(j)\sim\cA_N$ as desired.
    \end{itemize}
    
    To sum up, for each $j\in U$, we have $\Pr[\vecz(j)=1]=\Pr_{\vecb(j)\sim\cA_Y}[(M\vecx)(j)=\vecb(j)]$ (resp. $\Pr[\vecz(j)=1]=\Pr_{\vecb(j)\sim\cA_N}[(M\vecx)(j)=\vecb(j)]$) if $\bar{I}$ is a Yes (resp. No) instance as desired.
\end{enumerate}
\end{itemize}
\end{proof}
\paragraph{Step 4: An error occurs with low probability.}

\begin{claim}\label{claim:SD boolean to non-boolean error}
When $n$ is sufficiently large, the probability of an error happening in the reduction is at most $2^{-\Omega((2/q)^{2k}\alpha n)}$.
\end{claim}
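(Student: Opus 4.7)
The plan is to condition on the shared-randomness permutation $\Gamma_R$ (equivalently on the advice $\veca_R$ and on the hyperedges $V_1,\ldots,V_{\alpha n}$) and then analyze the two error types separately via standard concentration inequalities. The two key facts after conditioning are: (i) the entries of $\vecx_R$ are i.i.d.\ uniform over $[q]$, and (ii) the hyperedges $V_j$ are pairwise disjoint (a matching) and, because $M_R$ is $\veca_R$-respecting, each $V_j$ contains exactly one variable from each of the $k$ partitions $\{\ell : a_\ell = i\}$. These two facts turn the random quantities we care about into sums of independent Bernoullis.

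For the type-(1) error, write $V_j = \{\ell_{j,1},\ldots,\ell_{j,k}\}$ with $a_{\ell_{j,i}} = i$. Then $\One[V_j \subseteq \bigcup_i T_i] = \prod_{i=1}^k \One[(\vecx_R)_{\ell_{j,i}} \in \{u_i, v_i\}]$ is a Bernoulli with parameter $(2/q)^k$, and the disjointness of the matching makes these indicators independent across $j$. So $|U| \sim \textsf{Binomial}(\alpha n,(2/q)^k)$ with mean $\alpha n(2/q)^k$. With the parameter choices $n = 2q\bar n$ and $\alpha = q^{k-1}2^{-(k+2)}\bar\alpha$ (taking $\bar\alpha$ to be a sufficiently small absolute constant, which is allowed since \cref{thm:kpart} only requires $\bar\alpha \le \alpha_0$), one checks that $\bar\alpha\bar n \le \tfrac12 \alpha n(2/q)^k$, so a lower-tail Chernoff bound gives $\Pr[|U| < \bar\alpha \bar n] \le 2^{-\Omega((2/q)^k \alpha n)}$. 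For the type-(2) error, note that $W_i = T_i \setminus \bigcup_j V_j$ and that at most $\alpha n$ variables in partition $i$ appear in any hyperedge (one per constraint), so $|W_i| \ge |T_i| - \alpha n$. Conditional on $\Gamma_R$, $|T_i| \sim \textsf{Binomial}(n/k, 2/q)$ with mean $2n/(kq)$. By Chernoff and a union bound over $i \in [k]$, with probability at least $1 - 2^{-\Omega(n)}$ we have $|T_i| \ge n/(kq)$ for all $i$; then for $\alpha$ small enough that $n/(kq) - \alpha n \ge \bar n/k = n/(2qk)$, the set $S_i$ of the required size $\bar n/k$ with $X_i \subseteq S_i \subseteq X_i \cup W_i$ can be chosen, so no type-(2) error occurs.

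Combining the two tail bounds by a union bound and using that $(2/q)^{2k} \le (2/q)^k$ (so $\exp(-c(2/q)^k\alpha n) \le \exp(-c(2/q)^{2k}\alpha n)$), the total error probability is at most $2^{-\Omega((2/q)^{2k}\alpha n)}$ as claimed. The main delicate point is simply the bookkeeping of constants: one must check that the chain of parameter assignments $n = 2q\bar n$ and $\alpha = q^{k-1}2^{-(k+2)}\bar\alpha$ leaves enough slack so that the threshold $\bar\alpha \bar n$ sits well below the mean $\alpha n(2/q)^k$ of $|U|$ (enabling the lower-tail Chernoff bound), and simultaneously that $\alpha$ is small enough relative to $2/(kq)$ to handle type~(2). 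Since we are only required to produce \emph{some} constants $\alpha, \bar\alpha$ satisfying the needed inequalities, and nothing in our target bound depends on them beyond being positive constants, this is a routine verification.
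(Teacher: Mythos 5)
Your type-(1) analysis bounds the wrong tail, and the parameter inequality you rely on is false. In the reduction, a type-(1) error is declared when $|U| \geq \bar{\alpha}\bar{n}$, i.e., when \emph{too many} constraints of $I_R$ fall entirely inside $\cup_i T_i$: only the $\bar{\alpha}\bar{n}$ constraints of the Boolean instance $\bar{I}$ are available to be embedded into the slots indexed by $U$ (Bob's assignments $\vecz(j)=\bar{\vecz}(\cdot)$ and the rerouting of the corresponding rows of $M$ through $\bar{M}$ are only defined for at most $\bar{\alpha}\bar{n}$ such slots). A small $|U|$ is harmless, because for $j\notin U$ some coordinate of $M_j\vecx$ lies outside $\{u_i,v_i\}$, so $z_j=0$ deterministically under both $\cA_Y$ and $\cA_N$ and the output distribution remains exact. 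You instead bound $\Pr[|U|<\bar{\alpha}\bar{n}]$, treating ``too few'' constraints in $U$ as the failure. Moreover, with $n=2q\bar{n}$ and $\alpha=q^{k-1}2^{-(k+2)}\bar{\alpha}$ one computes $\Exp[|U|]=(2/q)^k\alpha n=\bar{\alpha}\bar{n}/2$ exactly, so $\bar{\alpha}\bar{n}=2\,\Exp[|U|]$, not at most $\tfrac12\Exp[|U|]$ as you claim; and since both sides scale linearly in $\bar{\alpha}$, taking $\bar{\alpha}$ ``sufficiently small'' cannot create the slack you invoke. What is actually needed is an upper-tail bound at twice the mean, $\Pr[|U|\geq 2\Exp[|U|]]\leq \exp(-\Omega((2/q)^{2k}\alpha n))$ (Hoeffding with deviation $(2/q)^k\alpha n$ over $\alpha n$ independent indicators gives precisely the $(2/q)^{2k}$ exponent in the claim). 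Your conditioning on the matching and the independence of the indicators $\One[V_j\subseteq\cup_i T_i]$ are correct, but as written your argument addresses the complement of the error event via a false intermediate inequality, so it does not establish the claim.

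The type-(2) step has a related gap: the existence of $S_i$ with $X_i\subseteq S_i\subseteq X_i\cup W_i$ and $|S_i|=\bar{n}/k$ requires \emph{both} $|X_i|+|W_i|\geq\bar{n}/k$ (which you verify) and $|X_i|\leq\bar{n}/k$ (which you never address). Since $|X_i|=|U|$ (each constraint in $U$ contributes exactly one variable to each part), the latter is exactly where the absence of a type-(1) error is used again: $|U|<\bar{\alpha}\bar{n}<\bar{n}/k$. It is not automatic, because $\alpha n$ can exceed $\bar{n}/k$ once $q\geq 3$, and with your reversed type-(1) event you have no upper bound on $|U|$ at your disposal. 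The $|W_i|$ estimate itself is fine; the paper gets it slightly more directly by observing $|W_i|\sim\textsf{Binomial}(n/k-\alpha n,\,2/q)$, which needs only $\alpha\lesssim 1/(2k)$ rather than your $\alpha\lesssim 1/(2qk)$, but that difference is cosmetic.
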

\begin{proof}
Recall that for given $\bar{n},k,\bar{\alpha},q,\delta$, we let $n=2q\bar{n}$ and $\alpha=q^{k-1}2^{-(k+2)}\bar{\alpha}$.

Note that $U$ is a sum of $\alpha n$ i.i.d. $\textsf{Bern}((2/q)^k)$. So by concentration inequality, we have $\Pr[|U|>2(2/q)^k\alpha n]<2^{-\Omega((2/q)^{2k}\alpha n)}$. By the choice of parameters, we have $2(2/q)^k\alpha n\leq\bar{\alpha}\bar{n}$. Thus, type (1) error happens with probability at most $2^{-\Omega((2/q)^{2k}\alpha n)}$.

Note that by the choice of parameters, we have $|X_i|=|U|\leq\bar{n}/k$ and hence type (2) error happens only when $|U|+|W_i|<\bar{n}/k$ for some $i\in[k]$. For each $i\in[k]$, note that $|W_i|$ is a sum of $n/k-\alpha n$ i.i.d. $\textsf{Bern}(2/q)$. So by concentration inequality, we have $\Pr[|W_i|<(n/k-\alpha n)/q]<2^{-\Omega((1/q)^2(n/k-\alpha n))}$. By the choice of parameters, we have $(n/k-\alpha n)/q\geq\bar{n}/k$. Thus, type (2) error happens with probability at most $2^{-\Omega((1/q)^2(n/k-\alpha n))}\leq2^{-\Omega((2/q)^{2k}\alpha n)}$.
\end{proof}

\paragraph{Step 5: Proof of~\cref{lem:SD boolean to non-boolean}.}
\begin{proof}[Proof of~\cref{lem:SD boolean to non-boolean}]

For every $\bar{n},k,\bar{\alpha},q,\delta$, we let $n=2q\bar{n}$ and $\alpha=q^{k-1}2^{-(k+2)}\bar{\alpha}$.
Suppose there is a protocol for $(\cA_Y,\cA_N)$-SD using $C(n)$ bits of communication and achieving advantage $\delta$. We show that how to get a protocol $\bar{\Pi}$ for $(\bar{\cA}_Y,\bar{\cA}_N)$-advice-SD with parameters $(\bar{n},\bar{\alpha})$ using $C(n)$ bits of communication and achieving advantage $\delta/2$.

Let $\bar{I} = (\bar{\vecx},\bar{\Gamma},\bar{\vecb},\bar{M},\bar{\vecz},\bar{\veca})$ be an instance drawn from either the Yes or No distribution of $(\bar{\cA}_Y,\bar{\cA}_N)$-advice-SD where $(\bar{\vecx},\bar{\veca})$ is Alice's private input and $(\bar{M},\bar{\vecz},\bar{\veca})$ is Bob's private input. The protocol $\bar{\Pi}$ works as follows. Alice and Bob first use their private input and the shared randomness to compute $\vecx$ and $(M,\vecz)$ respectively. This can be done due to~\cref{claim:SD boolean to non-boolean local}. Next, Alice and Bob simply invoke the protocol $\Pi$ on the new instance $\vecx$ and $(M,\vecz)$ and output the result accordingly.

It is immediate to see that $\bar{\Pi}$ only uses $C(n)$ bits of communication. To show that $\bar{\Pi}$ has advantage at least $\delta/2$, we first show that the joint distribution of $(\vecx,M,\vecz)$ is the same as that from an instance of $(\cA_Y,\cA_N)$-SD if there is no error in the reduction. 
By~\cref{claim:SD boolean to non-boolean correctness}, $\vecx\sim\unif([q]^n)$ and $M$ follows the distribution as required in the item 3 of~\cref{def:advice_sd}. 

When there is no error in the reduction and $\bar{I}$ is sampled from the Yes (resp. No) distribution of $(\bar{\cA}_Y,\bar{\cA}_N)$-advice-SD,~\cref{claim:SD boolean to non-boolean correctness} implies that $\vecz$ follows the conditional distribution (conditioned on $\vecx$ and $M$) of a Yes (resp. No) instance of $(\cA_Y,\cA_N)$-SD as required in the item 5 of~\cref{def:advice_sd}. Next,~\cref{claim:SD boolean to non-boolean error} shows that the probability of an error happening in the reduction is at most $\delta/2$. Finally, by triangle inequality, we conclude that $\bar{\Pi}$ has advantage at least $\delta/2$ in solving $(\bar{\cA}_Y,\bar{\cA}_N)$-advice-SD.

To conclude, by~\cref{thm:kpart}, any protocol for $(\bar{\cA}_Y,\bar{\cA}_N)$-advice-SD with advantage $\delta/2$ requires $\bar{\tau}\sqrt{\bar{n}}$ bits of communication. Thus, we have $C(n)\geq\bar{\tau}\sqrt{\bar{n}}\geq\tau\sqrt{n}$ for some constant $\tau>0$.
\end{proof}

\subsection{Indistinguishability of shifting distributions}\label{sec:indis SD shift}
In this subsection, we prove the following lemma which was used in \cref{sec:mixed} for reducing a single-function SD to a multi-function SD, and will be used in \cref{sec:final} for reductions between various \textsf{SD} problems.

\restindis*

\begin{proof}
Given the parameters $n,\alpha$ and $\epsilon\in(0,1)$, define $n'=\epsilon n$ and $\alpha'=2\alpha$.

Let $(\vecx',M',\vecb',\vecz')$ be an instance of the $(\cF,\cD_1,\cD_2)$-\textsf{SD} problem where $\vecx'\in[q]^{n'}$, $M'\in\{0,1\}^{k\alpha'n'\times n'}$, $\vecb'\in[q]^{k\alpha'n'}$, $\vecz'\in\{0,1\}^{\alpha'n'}$. Let $R'$ be the shared randomness defined later. We specify the map
$(\vecx',M',\vecb',\vecz',R')\mapsto(\vecx,M,\vecb,\vecz)$ where $\vecx\in[q]^{n}$, $M\in\{0,1\}^{k\alpha n\times n}$, $\vecb\in[q]^{k\alpha n}$, $\vecz\in\{0,1\}^{\alpha n}$.

\begin{reduction}{A reduction from $(\cF,\cD_1,\cD_2)$-\textsf{SD} to $(\cF,\cD_Y,\cD_N)$-\text{SD}}
Let $\vecy \sim \textsf{Unif}([q]^{n-n'})$, $\vecw \sim \bern(2\eps)^{\alpha n}$. Let $\Gamma \in \{0,1\}^{n\times n}$ be a uniform permutation matrix. Let $\vecc = (\vecc(1),\ldots,\vecc((n-n')/k))$ where $\vecc(i) \sim \cD$ are chosen independently.
\begin{itemize}
\item Let $R' = (\vecy,\vecw,\Gamma,\vecc)$ be the shared randomness. 
\end{itemize}
Let $\#_w(i) = |\{j \in [i]\, |\, w_j = 1\}|$ denote the number $1$'s among the first $i$ coordinates of $\vecw$. 
If $\#_w(\alpha n) \geq \alpha' n'$ or if $\alpha n - \#_w(\alpha n) \geq (n-n')/k$ we declare an error, Note $\Exp[\#_w(n)] = \alpha'n'/2$ so the probability of error is negligible (specifically it is $\exp(-n)$).

Given $(\vecx',M',\vecb',\vecz',R')$, we now define $(\vecx,M,\vecb,\vecz)$ as follows.
\begin{itemize}
\item Let $\vecx = \Gamma (\vecx',\vecy)$ so $\vecx$ is a random permutation of the concatenation of $\vecx'$ and $\vecy$.
\item Let $M' = (M'_1,\ldots,M'_{\alpha'n'})$ where $M'_i \in \{0,1\}^{k \times n'}$. We extend $M'_i$ to $N_i \in \{0,1\}^{k \times n}$ by adding all-zero columns to the right. For $i \in \{1,\ldots,(n-n')/k\}$, let $P_i \in \{0,1\}^{k \times n}$ be given by $(P_i)_{j\ell} =1$ if and only if $\ell = n' +(i-1)k + j$. Next we define a matrix $\tilde{M} \in \{0,1\}^{k\alpha n \times n} = (\tilde{M}_1,\ldots,\tilde{M}_{\alpha n})$ where $\tilde{M}_i \in \{0,1\}^{k \times n}$ is defined as follows: If $w_i = 1$ then we let $\tilde{M}_i = N_{\#_w(i)}$ else we let $\tilde{M}_i = P_{i - \#_w(i)}$. Finally we let $M = \tilde{M} \cdot \Gamma^{-1}$.
\item Let $\vecb = (\vecb(1),\ldots,\vecb(\alpha n))$ where $\vecb(i)= \vecb'(\#_w(i))$ if $w_i = 1$, otherwise $\vecb(i)= \vecc(i-\#_w(i))$.
\item Let $z_i=1$ if and only if $M_i\vecx=\vecb(i)$ for every $i\in[\alpha n]$.
\end{itemize}
\end{reduction}

Now, we verify that the reduction satisfies the following success conditions.

\begin{reduction}{Success conditions for the reduction}
\begin{enumerate}[label=(\arabic*)]
\item \textbf{The reduction is locally well-defined.} Namely, there exist random strings $R'$ so that (i) Alice can get $\vecx$ through a map $(\vecx',R') \mapsto \vecx$ while Bob can get $(M,\vecz)$ through a map $(M',\vecz',R') \mapsto (M,\vecz)$.

\item \textbf{The reduction is sound and complete.} Namely, (i) $z_i=1$ if and only if $M_i\vecx = \vecb(i)$ for all $i\in[\alpha n]$. (ii) If $\vecb' \sim \cD_1^{\alpha'n'}$, then $\vecb \sim \cD_Y^{\alpha n}$. Similarly if $\vecb' \sim \cD_2^{\alpha'n'}$, then $\vecb \sim \cD_N^{\alpha n}$. (iii) $\vecx\sim \unif([q]^{n})$ and $M$ is a uniformly random matrix conditioned on having exactly one ``$1$'' per row and at most one ``$1$'' per column.
\end{enumerate}
\end{reduction}

\begin{claim}\label{claim:polarization indis shift}
If $\#_w(\alpha n) \leq \alpha' n'$ and  $\alpha n - \#_w(\alpha n) \leq (n-n')/k$, then the second map in the reduction is locally well-defined, sound, and complete. In particular, the error event happens with probability at most $\exp(-\Omega(n))$ over the randomness of $R'$.
\end{claim}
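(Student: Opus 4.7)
My plan is to verify the three requirements (local well-definedness, soundness/completeness, and error probability) essentially by unpacking the definitions of the reduction, using the special structure of the padding matrices $P_i$ and of the random permutation $\Gamma$.

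For local well-definedness, I would argue as follows. Alice's $\vecx = \Gamma(\vecx',\vecy)$ is computable from $\vecx'$ and $R' = (\vecy,\vecw,\Gamma,\vecc)$. For Bob, the matrix $\tilde{M}$ is built by interleaving the rows of the zero-extended $M'_j$'s with the fixed padding blocks $P_j$, where the positions are determined by $\vecw$; hence $M = \tilde{M}\Gamma^{-1}$ can be computed from $(M',\vecw,\Gamma)$. The key point for $\vecz$ is that when $w_i=1$, we can simply copy $z_i = z'_{\#_w(i)}$, and when $w_i = 0$, the row $\tilde{M}_i = P_{i-\#_w(i)}$ selects only coordinates in the $\vecy$-portion of $(\vecx',\vecy) = \Gamma^{-1}\vecx$, so Bob can directly compare this against $\vecc(i-\#_w(i))$. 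Both of these are available to him from the shared randomness, so $\vecz$ is computable from $(M',\vecz',R')$.

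For soundness and completeness: (i) By the case analysis above, one checks that the $z_i$ Bob computes equals $\One[M_i\vecx = \vecb(i)]$: when $w_i=1$, both sides equal $\One[M'_{\#_w(i)}\vecx' = \vecb'(\#_w(i))]$ since the extension $N_{\#_w(i)}$ has zero columns in the $\vecy$-range; when $w_i=0$ both sides equal $\One[P_{i-\#_w(i)}(\vecx',\vecy) = \vecc(i-\#_w(i))]$. (ii) Distribution matching for $\vecb$: by construction each $\vecb(i)$ is drawn from $\vecb'$ (with law $\cD_1$ or $\cD_2$) when $w_i=1$ and from $\cD_0$ otherwise, and the indices $\#_w(i)$, $i-\#_w(i)$ just reindex independent draws; provided the error event does not occur so that $\vecb'$ and $\vecc$ have ``enough'' entries and the conditioning is vacuous, the resulting mixture is exactly $\epsilon \cD_1 + (1-\epsilon)\cD_0 = \cD_Y$ in the YES case and $\cD_N$ in the NO case (up to matching the Bernoulli parameter to $\eps$, which pins down the choice of $\alpha' = 2\alpha$ and $n' = \eps n$). (iii) For the uniformity of $\vecx$, since $\vecx'$ and $\vecy$ are independent and uniform, $(\vecx',\vecy)$ is uniform on $[q]^n$ and remains uniform after permuting coordinates by the independent $\Gamma$. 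For the distribution of $M$: conditioned on $\vecw$, the blocks $\{N_{\#_w(i)}\}$ and the padding blocks $\{P_{i-\#_w(i)}\}$ are supported on disjoint columns (the $\vecx'$- vs $\vecy$-partition), so $\tilde{M}$ has exactly one $1$ per row and at most one per column, and multiplying by the independent uniform permutation $\Gamma^{-1}$ yields the correct uniform distribution on partial permutation matrices.

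Finally, the error probability bound is a standard Chernoff estimate: $\#_w(\alpha n)$ is a sum of $\alpha n$ independent $\bern(2\eps)$ variables with mean $2\eps\alpha n$, and the thresholds $\alpha' n'$ and $(n-n')/k$ are chosen to be a constant factor above this mean (for $\eps$ bounded away from $1/k$), so each failure event has probability $\exp(-\Omega(n))$, and a union bound gives $\exp(-\Omega(n))$ overall. The main (though minor) obstacle is being careful in step~(iii) about why $\tilde{M}$ has the correct column-disjointness structure and why $\Gamma^{-1}$ yields the uniform distribution on partial permutations rather than just on permutations of the nonzero columns --- but this follows from the fact that $\Gamma$ is chosen uniformly among \emph{all} permutations of $[n]$ independently of everything else.
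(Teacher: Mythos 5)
Your proposal is correct and follows essentially the same route as the paper's proof: check local computability of $\vecx$ by Alice and of $(M,\vecz)$ by Bob from the shared randomness, verify that the produced tuple has $\vecx$ uniform, $M$ a uniform partial permutation matrix, and each $\vecb(i)$ distributed as the mixture $\eps\cD_1+(1-\eps)\cD_0$ (resp.\ with $\cD_2$) consistently with $z_i=\One[M_i\vecx=\vecb(i)]$, and bound the error event by concentration of $\#_w(\alpha n)$. Your additional detail --- the case split on $w_i$ showing Bob can compute $\vecz$ from $(\vecz',\vecy,\vecc)$ without knowing $\vecb'$, and the column-disjointness plus uniform-$\Gamma$ argument for why $\tilde{M}\Gamma^{-1}$ is uniform --- simply makes explicit what the paper dismisses with ``one can verify from the construction,'' so it is the same argument, not a different one.
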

\begin{proof}
To see the reduction is locally well-defined, first note that Alice can compute $\vecx=\Gamma(\vecx',\vecy)$ from $\vecx'$ and the shared randomness $R'$ locally. As for Bob, note that the maximum index needed for $N$ and $\vecb'$ (resp. $P$ and $\vecc$) is at most $\#_w(\alpha n)$ (resp. $\alpha n-\#_w(i)$). Namely, if $\#_w(\alpha n) \leq \alpha' n'$ and $\alpha n - \#_w(\alpha n) \leq (n-n')/k$, then $M$ and $\vecb$ are well-defined. Note that this happens with probability at least $1-2^{-\Omega(n)}$. Also, one can verify from the construction that $M$ and $\vecb$ can be locally computed by $M'$, $\vecb'$, and the shared randomness $R'$.

To see the reduction is sound and complete, (i) $z_i=1$ if and only if $M_i\vecx=\vecb(i)$ for every $i\in[\alpha n]$ directly follows from the construction. As for (ii), if $\vecb'\sim\cD_1^{\alpha'n'}$. Now, for each $i\in[\alpha n]$, $\vecb(i)=\vecb'(\#_w(i))$ with probability $\epsilon$ and $\vecb(i)=\vecc(i-\#_w(i))$ with probability $1-\epsilon$. As $\vecb'(i')\sim\cD_1$ for every $i'\in[\alpha'n']$ and $\vecc(i')\sim\cD_0$ for every $i'\in[(n-n')/k]$, we have $\vecb(i)\sim(1-\epsilon)\cD_0+\epsilon\cD_1=\cD_Y$ as desired. Similarly, one can show that if $\vecb'\sim\cD_2^{\alpha'n'}$, then for every $i'\in[\alpha'n']$ we have $\vecb(i')\sim\cD_N$. Finally, we have $\vecx\sim\unif([q]^n)$ and $M$ is a uniformly random matrix with exactly one ``$1$'' per row and at most one ``$1$'' per column (due to the application of a random permutation $\Gamma$) by construction.

This completes the proof of the success conditions (1)-(2) for the reduction.
\end{proof}
To wrap up the proof of~\cref{lem:indis of shift}, suppose there is a protocol $\Pi$ for $(\cF,\cD_Y,\cD_N)$-\textsf{SD} with parameters $n$ and $\alpha$ using $c$ bits of communication with advantage $\delta$. We describe a protocol $\Pi'$ for $(\cF,\cD_1,\cD_2)$-\textsf{SD} with parameters $n'$ and $\alpha'$ using $c$ bits of communication with advantage at least $\delta-2^{-\Omega(n)}$.

Let $(\vecx',M',\vecb',\vecz')$ be an instance of the $(\cF,\cD_1,\cD_2)$-\textsf{SD} problem where $\vecx'\in[q]^{n'}$, $M'\in\{0,1\}^{k\alpha'n'\times n'}$, $\vecb'\in[q]^{k\alpha'n'}$, $\vecz'\in\{0,1\}^{\alpha'n'}$. Let $R'$ be the shared randomness defined above. In the new protocol $\Pi'$, Alice and Bob computes their private inputs $\vecx$ and $(M,\vecz)$ respectively. By~\cref{claim:polarization indis shift}, the computation can be done locally with their original private inputs and the shared randomness. Also, with probability at least $1-2^{-\Omega(n)}$, the Yes (resp. No) instance of $(\cF,\cD_1,\cD_2)$-\textsf{SD} is mapped to the Yes (resp. No) instance of $(\cF,\cD_Y,\cD_N)$-\textsf{SD}. Namely, by directly applying $\Pi$ on the new inputs, Alice and Bob can achieve $\delta-2^{-\Omega(n)}$ advantage on $(\cF,\cD_1,\cD_2)$ using the same amount of communication as desired.
\end{proof}

\subsection{Proof of Theorem~\ref{thm:polarization step hardness}}\label{sec:final}

Let $\vecu,\vecv$ be incomparable, let $S=\{i\in[k]\, |\, u_i\neq v_i \}$, and let $k''=|S|$.

\paragraph{Step 1: Specify the auxiliary distributions.}
\begin{itemize}
\item Let $\cA_Y=\unif(\{\vecu|_S,\vecv|_S\})$ and $\cA_N=\unif(\{(\vecu|_S)\vee(\vecv|_S),(\vecu|_S)\wedge(\vecv|_S)\})$. By~\cref{lem:SD boolean to non-boolean}, $(\cA_Y,\cA_N)$-\textsf{SD} requires $\tau\sqrt{n}$ space.

\item Let $\cD_1=\unif(\{\vecu,\vecv\})$ and $\cD_2=\unif(\{\vecu\vee\vecv,\vecu\wedge\vecv\})$.

\item Finally, there exists $\cD_0$ such that we have $\cD=(1-2\epsilon)\cD_0+2\epsilon\cD_1$ and $\cD_{\vecu,\vecv}=(1-2\epsilon)\cD_0+2\epsilon\cD_2$.
\end{itemize}

In the following, we are going to describe reduction from $(\cA_Y,\cA_N)$-\textsf{SD} with parameters $(n'',\alpha'',k'')$ to $(\cD_1,\cD_2)$-\textsf{SD} with parameters $(n',\alpha',k)$. And by~\cref{lem:indis of shift}, there exists a reduction from $(\cD_1,\cD_2)$-\textsf{SD} with parameters $(n',\alpha',k)$ to $(\cD,\cD_{\vecu,\vecv})$-\textsf{SD} with parameters $(n,\alpha,k)$.

\paragraph{Step 2: Overview of the reduction from $(\cA_Y,\cA_N)$-\textsf{SD} to $(\cD,\cD_{\vecu,\vecv})$-\textsf{SD}.}
Let $\Pi$ be a protocol for $(\cD,\cD_{\vecu,\vecv})$-\textsf{SD} with parameter $\alpha\leq1/(200k)$ using $C(n)$ communication bits to achieve advantage $\delta$ on instances of length $n$. We let $n''=(k''\epsilon/k)n$, $\alpha''=(2k/k'')\alpha$ and design a protocol $\Pi''$ for $(\cA_Y,\cA_N)$-\textsf{SD} with parameter $\alpha''$ achieving advantage at least $\delta/2$ on instances of length $n''$ using $C''(n'')=C(n)$ communication. 
Thus, by~\cref{lem:SD boolean to non-boolean}, there exists a constant $\tau''>0$ such that $C(n)=C''(n'')\geq\tau''\sqrt{n''}=\tau''\sqrt{(k''\epsilon/k)}\sqrt{n}$ as desired.

To construct such reduction, we first reduce the above instance of $(\cA_Y,\cA_N)$-\textsf{SD} to an instance of $(\cD_1,\cD_2)$-\textsf{SD} with parameters $n'=kn''/k''$ and $\alpha'=\alpha''n''/n'$. Next, we invoke~\cref{lem:indis of shift} to get a protocol $\Pi'$ (from $\Pi$) which achieves $\delta-2^{-\Omega(n)}$ advantage on $(\cD_1,\cD_2)$-\textsf{SD} using $C(n)$ communication.

Without loss of generality, we assume $\Pi'$ is deterministic and our new protocol $\Pi''$ for $(\cA_Y,\cA_N)$-\textsf{SD} uses shared randomness between Alice and Bob. The protocol $\Pi''$ is a map: $(\vecx'',M'',\vecb'',\vecz'',R'')\mapsto(\vecx',M',\vecb',\vecz')$.

Before describing the map, let us first state the desired conditions.

\begin{reduction}{Success conditions for the reduction}
\begin{enumerate}[label=(\arabic*)]
\item \textbf{The reduction is locally well-defined.} Namely, there exist a random string $R''$ so that (i) Alice can get $\vecx'$ through the maps $(\vecx'',R'') \mapsto \vecx'$ while Bob can get $(M',\vecz')$ through the map $(M'',\vecz'',R'') \mapsto (M',\vecz')$.

\item \textbf{The reduction is sound and complete.} Namely, (i) $z'_{i}=1$ if and only if $M'_{i}\vecx' = \vecb'(i)$ for all $i\in[\alpha'n']$. (ii) If $\vecb'' \sim \cA_Y^{\alpha''n''}$ then $\vecb' \sim \cD_1^{\alpha'n'}$. Similarly if $\vecb'' \sim \cA_N^{\alpha''n''}$ then $\vecb' \sim \cD_2^{\alpha'n'}$. (iii) $\vecx'\sim \unif([q]^{n'})$ and $M'$ is a uniformly random matrix conditioned on having exactly one ``$1$'' per row and at most one ``$1$'' per column.

\end{enumerate}
\end{reduction}

\paragraph{Step 3: Specify and analyze the reduction from $(\cA_Y,\cA_N)$-\textsf{SD} to $(\cD_1,\cD_2)$-\textsf{SD}.}
We now specify the first map mentioned above and prove that it satisfies conditions (1)-(2). 

\begin{reduction}{A reduction from $(\cA_Y,\cA_N)$-\textsf{SD} to $(\cD_1,\cD_2)$-\textsf{SD}}
\begin{itemize}
\item Let $R'' \sim \unif([q]^{n'-n''})$ be the shared randomness.
\end{itemize}
Given $(\vecx'',M'',\vecb'',\vecz'',R'')$ we define $(\vecx',M',\vecb',\vecz')$ as follows. To get $M'$, $\vecz'$ and $\vecb'$ we need some more notations. First, note that $\alpha''n''=\alpha'n'$ due to the choice of parameters.
\begin{itemize}
\item Let $\vecx' = (\vecx'',R'')$.
\item $M'$ can be viewed as the stacking of matrices $M''_1,\ldots,M''_{\alpha'' n''} \in \{0,1\}^{k'' \times n''}$. We first extend $M''_i$ by adding all-zero columns at the end to get $N'_i \in \{0,1\}^{k'' \times n'}$. We then stack $N'_i$ on top of $P'_i \in \{0,1\}^{(k-k'')\times n'}$ to get $M'_i$, where $(P'_i)_{j\ell} = 1$ if and only if $\ell = n'' + (i-1)k + j$. We let $M'$ be the stacking of $M'_1,\ldots,M'_{\alpha' n'}$.
\item Let $\vecb'' = (\vecb''(1),\cdots,\vecb''(\alpha' n'))$. Let $\tilde{\vecu} = (u_{k''+1},\ldots,u_{k})$ denote the common parts of $\vecu$ and $\vecv$. We let
$\vecb'(i) = (\vecb''(i),\tilde{\vecu})$ and $\vecb' = (\vecb'(1),\cdots,\vecb'(\alpha' n'))$.
\item Let $z'_i=1$ if and only if $M'_i\vecx' = \vecb'(i)$ for all $i\in[\alpha'n']$ as required.
\end{itemize}
\end{reduction}

\begin{claim}\label{claim:polarization indis map 1}
The above reduction is locally well-defined, sound, and complete.
\end{claim}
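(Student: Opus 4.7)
The plan is to verify all three success conditions directly from the construction, with one subtlety, namely the distribution of $M'$, which I will address by augmenting the shared randomness with a uniformly random permutation of $[n']$.

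For the local well-definedness (1): Alice trivially outputs $\vecx' = (\vecx'', R'')$ from her input $\vecx''$ and the shared $R''$. For Bob, the matrix $M'_i$ is obtained from $M''_i$ by padding with zero columns to form $N'_i$ and stacking with the deterministic matrix $P'_i$ (which depends only on the index $i$, not on the random inputs); so Bob computes $M'$ from $M''$ alone. To compute $z'_i$, observe that $M'_i \vecx' = (M''_i \vecx'', P'_i R'')$ because the zero-padded columns of $N'_i$ multiply the $R''$-part trivially and $P'_i$ only touches the $R''$-part. Similarly $\vecb'(i) = (\vecb''(i), \tilde{\vecu})$. Hence $z'_i = z''_i \cdot \mathbf{1}[P'_i R'' = \tilde{\vecu}]$, which Bob evaluates from $\vecz''$ and his knowledge of $R''$.

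For soundness and completeness (2): Condition (i) holds by construction of $\vecz'$. For (ii), without loss of generality assume $S = \{1,\ldots,k''\}$ so that $\vecu = (\vecu|_S, \tilde{\vecu})$ and $\vecv = (\vecv|_S, \tilde{\vecu})$; since $u_i = v_i$ outside $S$ we also have $\vecu \vee \vecv = ((\vecu|_S)\vee(\vecv|_S), \tilde{\vecu})$ and $\vecu \wedge \vecv = ((\vecu|_S)\wedge(\vecv|_S), \tilde{\vecu})$. Therefore if $\vecb''(i) \sim \cA_Y$ then $\vecb'(i) = (\vecb''(i), \tilde{\vecu}) \sim \unif\{\vecu,\vecv\} = \cD_1$, and similarly $\vecb'(i) \sim \cD_2$ when $\vecb''(i) \sim \cA_N$; independence across $i$ is inherited from $\vecb''$. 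The first half of (iii), $\vecx' \sim \unif([q]^{n'})$, follows because $\vecx'' \sim \unif([q]^{n''})$ and $R'' \sim \unif([q]^{n'-n''})$ are independent and uniform.

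The main subtle point, which I expect to be the only real obstacle, is the second half of (iii): the distribution of $M'$. As written, the bottom $k-k''$ rows of each block $M'_i$ are deterministic, so $M'$ is not uniform over matrices with one ``$1$'' per row and at most one ``$1$'' per column. The fix, mirroring the use of the random permutation $\Gamma$ in the proof of \cref{lem:indis of shift}, is to augment the shared randomness with a uniformly random permutation matrix $\Gamma \in \{0,1\}^{n' \times n'}$, and replace the outputs by $\vecx' \mapsto \Gamma \vecx'$ and $M' \mapsto M' \Gamma^{-1}$. This preserves $M'\vecx'$ (hence (i) and (ii) continue to hold), keeps $\Gamma \vecx'$ uniform on $[q]^{n'}$, and makes $M'\Gamma^{-1}$ a uniformly random matrix of the required shape, since conjugating a fixed matrix with the required structure by a uniform column permutation yields the uniform distribution over all such matrices (this uses only that the columns used by the $P'_i$'s are disjoint, which they are by construction). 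Local computability is preserved since both Alice and Bob know $\Gamma$. This completes the proof.
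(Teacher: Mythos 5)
Your verification of local well-definedness, of (i), (ii), and of the uniformity of $\vecx'$ coincides with the paper's argument (including the observation $z'_i = z''_i\cdot\mathbf{1}[P'_i R''=\tilde{\vecu}]$ and the WLOG $S=\{1,\dots,k''\}$). Where you diverge is on the second half of condition (iii): the paper's own proof simply declares (iii) to hold ``by construction'' and never discusses the distribution of $M'$, whereas you correctly point out that, as written, $M'$ is \emph{not} uniform among matrices with one $1$ per row and at most one $1$ per column (the $P'_i$ rows are deterministic). Your repair --- adding a shared uniform permutation $\Gamma$ of $[n']$ and outputting $\Gamma\vecx'$ and $M'\Gamma^{-1}$ --- is sound: right-multiplication of any fixed valid partial permutation matrix by a uniform column permutation yields the uniform distribution over such matrices (validity of $M'$ follows since $M''$ is valid and the $P'_i$ columns are fresh and pairwise disjoint), the products $M'_i\vecx'$ and hence $\vecz'$ are unchanged, and locality is preserved because $\Gamma$ is shared. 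This is exactly the randomization device the paper itself deploys elsewhere (the permutation $\Gamma$ inside the reduction of \cref{lem:indis of shift}, and $\Gamma_R$ in \cref{lem:SD boolean to non-boolean}), so your fix is in the paper's spirit even though it is absent from this particular proof. An alternative resolution, worth noting, is to weaken (iii): the downstream protocol $\Pi'$ from \cref{lem:indis of shift} applies its own uniform column permutation over all $n$ coordinates, which already uniformizes any fixed valid $M'$, so the composed argument survives with only ``$\vecx'$ uniform, $\vecb'$ correctly distributed, $M'$ valid and independent of $(\vecx',\vecb')$.'' Two small polish items: the operation is a column permutation (right multiplication), not conjugation; and for completeness you should record that $M'\Gamma^{-1}$ is independent of $\Gamma\vecx'$ and $\vecb'$, which follows because $(\vecx'',R'')$ is uniform and independent of $(\Gamma,M'',\vecb'')$.
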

\begin{proof}
To see the map is locally well-defined, note that Alice can compute $\vecx'= (\vecx'',R'')$ locally. Similarly, Bob can compute $M'$ locally by construction. As for $\vecz'$, note that for every $i\in[\alpha'n']$, $z'_i=1$ if and only if $z''_i=1$ and $P_i'\vecx'=\tilde{\vecu}$. Since Bob has $\vecz'$ and can locally compute $P_i'\vecx'$ for every $i$, her can also compute $\vecz'$ locally.

To see the map is sound and complete, (i) $z'_i=1$ if and only if $M'_i\vecx'=\vecb'(i)$ follows from the construction. As for (ii), for each $i\in[\alpha'n']=[\alpha''n'']$, if $\vecb''_i\sim\cA_Y=\unif(\{\vecu|_S,\vecv|_S\})$, then $\vecb'_i\sim\unif(\{(\vecu|_S,\tilde{\vecu}),(\vecv|_S,\tilde{\vecu})\})=\unif(\{\vecu,\vecv\})=\cD_1$ as desired. Similarly, one can show that if $\vecb''_i\sim\cA_N$, then $\vecb'_i\sim\cD_1$. Finally, we have $\vecx'\sim\unif([q]^{n'})$ by construction and hence (iii) holds.

This completes the proof of conditions (1)-(2) for the reduction.
\end{proof}

\paragraph{Step 4: Proof of~\cref{thm:polarization step hardness}.}
\begin{proof}[Proof of~\cref{thm:polarization step hardness}]

Let us start with setting up the parameters. Given $k\in(0,1/(200k)),\alpha,n,\cD$, and incomparable pair $(\vecu,\vecv)\in\supp(\cD)$ and polarization amount $\epsilon=\epsilon(\cD,\vecu,\vecv)$, let $k''=|\{i\in[k]\, |\, u_i\neq v_i\}|$, $n''=(k''\eps/k)n$, $\alpha''=(2k/k'')\alpha$, $n'=kn''/k''$, $\alpha'=\alpha''n'/n'$, and $\delta''=\delta/2$.

Now, for the sake of contradiction, we assume that there exists a protocol $\Pi=(\Pi_A,\Pi_B)$ for $(\cD,\cD_{\vecu,\vecv})$-\textsf{SD} with advantage $\delta$ and at most $\tau\sqrt{n}$ bits of communication.

First, by \cref{claim:polarization indis map 1}, if $(\vecx'',M'',\vecz'')$ is a Yes (resp. No) instance of $(\cA_Y,\cA_N)$-\textsf{SD}, then the output of the reduction, i.e., $(\vecx',M',\vecz')$, is a Yes (resp. No) instance of $(\cD_1,\cD_2)$-\textsf{SD}. Next, Alice and Bob run the protocol $\Pi'$ from~\cref{lem:indis of shift} on $(\vecx',M',\vecz')$. By the correctness of the reduction as well as the protocol $\Pi'$, we know that Alice and Bob have advantage at least $\delta-\exp(-\Omega(n))\geq\delta/2=\delta''$ in solving $(\cA_Y,\cA_N)$-\textsf{SD} with at most $\tau\sqrt{n}=\tau\sqrt{(k/(k''\epsilon))n''}$ bits of communication.

Finally, by~\cref{lem:SD boolean to non-boolean}, we know that there exists a constant $\tau_0>0$ such that any protocol for $(\cA_Y,\cA_N)$-\textsf{SD} with advantage $\delta''$ requires at least $\tau_0\sqrt{n''}$ bits of communication. This implies that $\tau\geq\tau_0\sqrt{k''\epsilon/k}$. We conclude that any protocol for $(\cD,\cD_{\vecu,\vecv})$-\textsf{SD} with advantage $\delta$ requires at least $\tau\sqrt{n}$ bits of communication.
\end{proof}

\section{Dichotomy for exact Computation}\label{sec:exact}

In this section we prove \cref{thm:exact}. For this, we will use tight bounds on the randomized communication complexity of the Disjointness (\Disj) and Gap Hamming Distance (\GHD) problems.
\begin{definition}[Disjointness (\Disj)]
In the $\Disj_{n}$ problem, Alice and Bob receive binary strings $x,y\in\{0,1\}^n$ of Hamming weight $\Delta(x)=\Delta(y)=n/4$, respectively. If the Hamming distance $\Delta(x,y)=n/2$ the players must output $1$, if $\Delta(x,y)< n/2$ they must output $0$.
\end{definition}
\begin{definition}[Gap Hamming Distance (\GHD)]
In the $\GHD_{n,t,g}$ problem, Alice and Bob receive binary strings $x,y\in\{0,1\}^n$, respectively. If the Hamming distance $\Delta(x,y)\geq t+g$ the players must output $1$, if $\Delta(x,y)\leq t-g$ they must output $0$, otherwise they may output either $0$ and $1$.
\end{definition}
The following results give tight bounds on the randomized communication complexity of $\Disj$ and $\GHD$.
\begin{theorem}[\cite{ks92,r90}]\label{thm:disj}
For all large enough $n$, any randomized protocol solving $\Disj_n$ with probability $2/3$ must use $\Omega(n)$ bits of communication.
\end{theorem}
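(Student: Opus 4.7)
The plan is to prove this classical lower bound via the information complexity framework, which gives the cleanest route and allows a direct sum decomposition. First I would define the internal information cost of a randomized protocol $\Pi$ under a distribution $\mu$ on inputs $(X,Y)$ as $\mathrm{IC}_\mu(\Pi) = I(X;\Pi \mid Y) + I(Y;\Pi \mid X)$, and record the standard fact that the communication cost of $\Pi$ is at least $\mathrm{IC}_\mu(\Pi)$ for every $\mu$, since the transcript is a deterministic function of $(X,Y,R)$ whose length upper-bounds its entropy.

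Next I would reduce $\Disj_n$ to many independent copies of the two-bit $\textsc{AND}$ function via a direct sum argument à la Bar-Yossef–Jayram–Kumar–Sivakumar. The idea is to design a ``collapsing'' distribution $\nu$ on $\{0,1\}\times\{0,1\}$ supported on $\{(0,0),(0,1),(1,0)\}$ with a hidden bit $D$ that, conditioned on $D=0$, makes the $Y$-coordinate trivial and, conditioned on $D=1$, makes the $X$-coordinate trivial. Taking $n$ independent copies $\mu = \nu^{\otimes n}$ then produces disjoint pairs with constant probability, and any protocol $\Pi$ solving $\Disj_n$ with error $1/3$ can be used coordinate-wise to solve $\textsc{AND}$ on $\nu$. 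The direct sum inequality then gives $\mathrm{IC}_\mu(\Pi) \geq n \cdot \mathrm{IC}_\nu(\textsc{AND})$.

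The crux becomes showing the single-coordinate bound $\mathrm{IC}_\nu(\textsc{AND}) = \Omega(1)$. Here I would use a Hellinger-distance argument: if the protocol is nearly correct, then the transcript distributions on inputs $(0,1)$ and $(1,0)$ must be close to each other (since both are $\textsc{AND}=0$) while each must be far from the transcript distribution on $(1,1)$, yet the rectangle property forces certain symmetries (if the transcripts on $(0,1)$ and $(1,0)$ agree on a leaf then so does the transcript on $(1,1)$, via a ``cut-and-paste'' lemma). Quantifying this via the Pythagorean inequality for Hellinger distance yields the constant lower bound on $\mathrm{IC}_\nu(\textsc{AND})$. The main obstacle in this step is setting up the distribution and the Hellinger/cut-and-paste calculation carefully so that the rectangle property can be exploited; everything else is a matter of chaining standard information-theoretic identities.

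Combining these three ingredients yields $\mathrm{CC}(\Disj_n) \geq \mathrm{IC}_\mu(\Pi) \geq n \cdot \mathrm{IC}_\nu(\textsc{AND}) = \Omega(n)$, proving the theorem. As an alternative I would keep Razborov's original corruption-bound argument in reserve, which bypasses information theory by showing that any large monochromatic-$1$ rectangle under the uniform distribution on intersection-size-$1$ inputs must be ``corrupted'' by a constant fraction of intersection-size-$0$ inputs; a protocol with $o(n)$ bits partitions the input space into too few rectangles to achieve the necessary discrepancy, giving the same $\Omega(n)$ bound.
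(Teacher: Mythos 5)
This theorem is not proved in the paper at all: it is quoted as a classical result with a citation to Kalyanasundaram--Schnitger and Razborov, and is then used as a black box in the reduction of \cref{thm:exact}. Your proposal is therefore being compared against a citation rather than an in-paper argument, and as a reconstruction of the known proof it is essentially correct: the BJKS information-complexity route (direct sum over a collapsing distribution plus a Hellinger/cut-and-paste bound for the single-coordinate $\textsc{AND}$) and Razborov's corruption bound are both standard, valid ways to obtain the $\Omega(n)$ bound, and either suffices for the way the paper uses the theorem.

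Two small points deserve attention if you were to write this out in full. First, the paper's $\Disj_n$ is the balanced promise version: $|x|=|y|=n/4$ and the task is to distinguish $\Delta(x,y)=n/2$ (disjoint sets) from $\Delta(x,y)<n/2$ (intersecting sets). Razborov's corruption argument handles this directly, since his hard distribution is supported on sets of size $\approx n/4$; for the information-complexity route you should add a sentence reducing the balanced promise version to the general one (e.g., by padding, or by noting that the hard distribution in the direct-sum argument can be conditioned on the typical weight $\approx n/4$ at negligible loss). Second, a minor slip: $\nu^{\otimes n}$ produces \emph{only} disjoint pairs (not ``disjoint pairs with constant probability''); the correctness of the protocol is invoked on the embedded inputs where one coordinate is replaced by $(1,1)$ while the information cost is measured under the collapsing distribution, which is exactly why the collapsing trick is needed. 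Neither issue affects the validity of the overall plan.
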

\begin{theorem}[\cite{cr12,v12,s12}]\label{thm:ghd}
For every $a\in(0,1/2]$ and every $g\geq 1$, and all large enough $n$ the following holds. If $t\in[an, (1-a)n]$, then any randomized protocol solving $\GHD_{n,t,g}$ with probability $2/3$ must use $\Omega(\min\{n, n^2/g^2\})$ bits of communication.
\end{theorem}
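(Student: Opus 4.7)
The plan is to prove the lower bound in two stages: (i) reduce the stated $\GHD_{n,t,g}$ problem to a canonical ``balanced'' instance with $t=n/2$ and $g=\Theta(\sqrt{n})$, and (ii) establish an $\Omega(n)$ lower bound for this canonical instance via a corruption bound driven by a rectangle anti-concentration lemma. The two stages combine to give the full $\Omega(\min\{n,n^2/g^2\})$ bound, with the canonical bound handling the $g\leq\sqrt n$ regime and a length-scaling reduction handling $g>\sqrt n$.

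For stage (i), I would exhibit two reductions from the canonical problem $\GHD_{n',n'/2,c\sqrt{n'}}$ to the general $\GHD_{n,t,g}$ problem, with $c$ a small absolute constant. When $g\le\sqrt n$ I set $n'=\Theta(n)$: starting from inputs $(x',y')$ for the canonical problem on $n'$ bits, Alice and Bob each append a deterministic pattern of $0$'s and $1$'s (whose mismatch count is $t-n'/2$) to steer the threshold from $n'/2$ to $t$, possibly XOR-ing one side with a fixed string and using the hypothesis $t\in[an,(1-a)n]$ to guarantee feasibility. This yields the $\Omega(n)$ bound. When $g>\sqrt n$ I set $n'=\Theta((n/g)^2)$ and further precede the above padding by a bit-repetition step: each coordinate of $(x',y')$ is replicated $\lfloor n/n'\rfloor$ times, which scales Hamming distances by $\lfloor n/n'\rfloor$ and thereby maps the canonical gap $c\sqrt{n'}$ to $\Theta(g)$ at length $n$. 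A protocol of cost $C$ for the target problem then solves the canonical problem at length $n'$ with cost $C$, so $C=\Omega(n')=\Omega(n^2/g^2)$.

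For stage (ii), the plan is to apply the corruption method under the uniform product distribution $\mu=\unif(\{-1,1\}^n\times\{-1,1\}^n)$. A randomized $c$-bit protocol with success $2/3$ on every input gives (by Yao minimax) a deterministic protocol partitioning inputs into at most $2^c$ rectangles with distributional error at most $1/3$ under $\mu$. The goal is to show $c=\Omega(n)$, for which it suffices to prove that any rectangle $R=A\times B$ with $\mu(R)\ge 2^{-\epsilon n}$ (for a small constant $\epsilon$) contains $\Omega(1)$ fraction of its mass on each side of the gap, i.e., on $\{(x,y):\langle x,y\rangle\ge c\sqrt n\}$ and on $\{(x,y):\langle x,y\rangle\le -c\sqrt n\}$; this contradicts small constant distributional error whenever $2^c\cdot 2^{-\epsilon n}\ll 1$. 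The main obstacle is the underlying anti-concentration / rectangle lemma: for every $A,B\subseteq\{-1,1\}^n$ with $|A|,|B|\ge 2^{n-\epsilon n}$, the random variable $\langle x,y\rangle/\sqrt n$ under $(x,y)\sim\unif(A)\times\unif(B)$ has variance $\Omega(1)$ and puts $\Omega(1)$ mass beyond $\pm c$. I would prove this via Fourier analysis on the hypercube: expanding $\mathbf{1}_A$ and $\mathbf{1}_B$ in the characters, the moments of $\langle x,y\rangle$ restricted to $R$ are controlled by the joint Fourier spectrum of $(\mathbf{1}_A,\mathbf{1}_B)$ at low levels, and hypercontractivity (Bonami--Beckner) applied to the large sets $A$ and $B$ forces enough mass at high Fourier levels to yield the desired spread. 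This spectral tightening is the heart of the difficulty and is exactly where \cite{cr12,v12,s12} diverge technically; I would follow Vidick's streamlined corruption presentation, using hypercontractivity to control the second and fourth moments of $\langle x,y\rangle$ on $R$ and Paley--Zygmund to extract the two-sided mass, which closes the corruption bound and gives $c=\Omega(n)$.
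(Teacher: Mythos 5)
You should first note a mismatch of framing: the paper does not prove \cref{thm:ghd} at all — it is quoted from \cite{cr12,v12,s12} and used as a black box in the proof of \cref{thm:exact} — so there is no internal proof to compare with; your proposal is an attempt to reprove the cited result. Your stage (i) is the standard padding/repetition reduction to the balanced case $t=n/2$, $g=\Theta(\sqrt{n})$ and is essentially fine, except that replicating each coordinate $\lfloor n/n'\rfloor$ times leaves only $n-rn'=o(n)$ padding room, which cannot shift the threshold from roughly $n/2$ to an arbitrary $t\in[an,(1-a)n]$; you should keep $rn'$ a small constant fraction of $n$ (depending on $a$), which only changes constants.

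The genuine gap is stage (ii). The key rectangle lemma you propose — every rectangle of measure at least $2^{-\epsilon n}$ under the uniform distribution carries $\Omega(1)$ of its mass on \emph{each} side of the gap — is false. Take $A=B=\{x\in\{-1,1\}^n : x_1=\cdots=x_{\delta n}=1\}$: this is a rectangle of measure $2^{-2\delta n}$ on which $\langle x,y\rangle=\delta n+O(\sqrt{n})$ with overwhelming probability, so essentially all of its conditioned mass lies on one side of the threshold; choosing $\delta$ small relative to $\epsilon$ defeats the claim for every constant $\epsilon$, so a plain two-sided corruption bound under the uniform distribution cannot work. The moment/hypercontractivity route you suggest also cannot rescue it: level-$k$ inequalities bound the off-diagonal contribution to $\Exp_{R}[\langle x,y\rangle^2]$ only by $O(\epsilon^2 n^2)$, which swamps the diagonal term $n$ unless $\epsilon=O(1/\sqrt{n})$, i.e., this style of argument only controls rectangles of measure $2^{-O(\sqrt{n})}$ and hence reproves the older $\Omega(\sqrt{n})$ bound rather than $\Omega(n)$ — this is exactly the barrier that made GHD hard. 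The published proofs each add a further idea that your sketch does not supply: Chakrabarti–Regev use a smooth-corruption argument built on a strong anti-concentration theorem for projections of large sets (and Vidick streamlines that analysis), while Sherstov reduces to Gap Orthogonality, for which the corruption statement is one-sided in $|\langle x,y\rangle|$ (and is consistent with the counterexample above), and then transfers the bound back to $\GHD$. To make stage (ii) correct you would need to follow one of these routes rather than the two-sided uniform-distribution corruption plus Paley–Zygmund as proposed.
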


Equipped with these results, we are ready to prove \cref{thm:exact}.

\exactthm*

While this theorem doesn't give tight bounds on the space complexity of $\maxF$ in terms of $n$, the dependence on $\eps$ is tight. For every family of functions $\cF$, if we sample $O(n/\eps^2)$ random constraints, then by the Chernoff bound we preserve the values of all assignments within a factor of $1\pm\eps$.
\begin{proof}
For the first item of the theorem, we note that the maximum number of simultaneously satisfiable constraints in a $\sigma$-satisfiable formula is the number of non-zero constraints $f \in \cF\setminus\{\veczero\}$ in it. This can be computed in space $O(\log{n})$.

Now we turn to the proof of the second item of the theorem in the streaming setting. 
To this end, first we prove that there exists an unsatisfiable instance $I$ of $\textsf{Max-CSP}(\cF\setminus\{\veczero\})$. Let $I$ be an instance on $kq$ variables that has every constraint from $\cF\setminus\{\veczero\}$ applied to every (unordered) $k$-tuple of distinct variables. Any assignment $\vecnu\in[q]^{kq}$ has at least $k$ equal coordinates. That is, there exists $\sigma\in[q]$ such that $\Sigma=\{i\colon \vecnu_i=\sigma\}$ has size $|\Sigma|\geq k$. Since $\cF$ is not $\sigma$-satisfiable, there exists a function $f\in\cF\setminus\{\veczero\}$ that $f(\sigma^k)\neq 1$. Thus, the corresponding constraint of $I$ is not satisfied by $\vecnu$.

Now we pick a minimal unsatisfied formula $J$ on $kq$ variables with constraints from $\cF\setminus\{\veczero\}$, that is a formula such that all proper subsets of the constraints of $J$ can be simultaneously satisfied. Since $J$ doesn't have zero-constraints, $J$ must have at least two constraints. We partition $J$ into two arbitrary non-empty subsets of constraints $J=J_A\sqcup J_B$. Note that by minimality of $J$, $J_A$ and $J_B$ are both satisfiable.

Observe that item 2(a) of the theorem follows from 2(b) by setting $\eps=\Theta(1/n)$. In order to prove the item 2(b), we reduce $\Disj_{m}$ for $m=|J|^{-1}\eps^{-1}$ to $\maxF$ on $n$ variables.  We can assume that $\eps\geq\frac{kq}{n|J|}$, as for smaller $\eps$ the optimal lower bound of $\Omega(n)$ is implied by this setting. We partition the $n$ variables of $\maxF$ into at least $m$ groups of size $kq$. Let $x,y\in\{0,1\}^{m}$ be the inputs of Alice and Bob in the $\Disj_{m}$ problem. If $x_i=1$, then Alice applies the constraints $J_A$ to the $i$th block of $kq$ variables of the formula. Similarly, if $y_i=1$, then Bob applies the constraints $J_B$ to the $i$th block of $kq$ variables. Let $C_A$ and $C_B$ be the sets of constraints produced by Alice and Bob, respectively, and let $\Psi=C_A\cup C_B$. Since $\Delta(x)=\Delta(y)=m/4$,  the total number of constraints in the formula $|\Psi|=|J|m/4$. Note that $\Psi$ is satisfiable if and only if $\Disj(x, y)=1$. Therefore, if $x$ and $y$ are disjoint, then $\val(C_A\cup C_B)=1$, otherwise
\[
\val(\Psi)\leq 1-\frac{4}{|J|m}<1-\eps
\, .
\]

Any streaming algorithm that receives constraints $C_A$ and $C_B$ and solves $(1,1-\eps)$-$\maxF$ with probability $2/3$, also solves the $\Disj_m$ problem. Therefore, by \cref{thm:disj}, such an algorithm must use space $\Omega(m)=\Omega(1/\eps)$.

In order to prove item 2(c), we reduce the $\GHD_{n,t,g}$ problem to $\maxF$ on $nkq+O(1)=O(n)$ variables, where $t=n(1-\gamma)$ and $g\geq1$ will be determined later. We will create two groups of constraints: the first group of constraints $C_A\cup C_B$ will have value $1-O(\Delta(x,y)/n)$, and the second group of constraints will have value close to $\rho_{\min}$. By taking a weighed combination of these two groups we will get a formula whose value is less than $\gamma-\eps$ for $\Delta(x,y)\geq t+g$, and value is at least $\gamma$ for $\Delta(x,y)\leq t$.

Again, we start with a minimal unsatisfiable formula on $kq$ variables. If $|J|=2d$ is even, then we arbitrarily partition $J$ into two sets of $d$ constraints $J_A$ and $J_B$. If $|J|$ is odd, then we add one constraint to $|J|$ as follows. By minimality, there is an assignment that satisfies $|J|-1$ constraints of $J$, let $c$ be one of these constraints. We add another copy of $c$ to $J$, and partition $J$ into two sets of $d$ constraints $J_A$ and $J_B$. Note that while $J_A$ and $J_B$ are satisfiable, only $2d-1$ constraints of $J_A\cup J_B$ can be satisfied simultaneously.

Let $x,y\in\{0,1\}^n$ be the inputs of Alice and Bob in the $\GHD_{n,t,g}$ problem. If $x_i=1$, then Alice applies the constraints $J_A$ to the $i$th block of $kq$ variables of the formula, otherwise Alice applies the constraint $J_B$ to these variables. Similarly, if $y_i=1$ or $y_i=0$, then Bob applies the constraints $J_A$ or $J_B$ to the $i$th block of $kq$ variables. Let $C_A$ and $C_B$ be the sets of constraints produced by Alice and Bob, respectively. Observe that $|C_A|=|C_B|=nd$. The set of constraints added by Alice and Bob when processing their $i$th coordinates is satisfiable if and only if $x_i=y_i$. When $x_i\neq y_i$, then by the construction of $J$, exactly $2d-1$ constraints are satisfiable. Therefore, 
\[
\val(C_A\cup C_B)=1-\frac{\Delta(x,y)}{2dn}
\, .
\]

Let $\gamma'=(\gamma+\rho_{\min})/2<\gamma$. By the definition of $\rho_{\min}(\cF)$, there exists $n_0$ and a formula $\Phi'$ of $\maxf$ such that $\val(\Phi')=\gamma'$. By taking several copies of $\Phi'$ on the same $n_0$ variables, we get an instance $\Phi$ with $D=|\Phi|=\frac{n(2d-1)(1-\gamma)}{\gamma-\gamma'}=\Theta(n)$ constraints and value $\val(\Phi)=\gamma'$.

Now we output an instance $\Psi$ of $\maxF$ on $nkq+n_0$ variables that is simply a union of $C_A \cup C_B$ and $\Phi$ on disjoint sets of variables. By construction.
\[
\val(\Psi)=\frac{(2dn-\Delta(x,y))+\gamma'D}{2dn+D}
\, .
\]
In the case when $\Delta(x,y)\leq t=(1-\gamma)n$, we have 
\[
\val(\Psi) \geq \frac{2dn-(1-\gamma)n+\gamma'D}{2dn+D}= \gamma 
\, .
\]
And for the case of $\Delta(x,y)\geq t+g=(1-\gamma)n+g$, we have that
\[
\val(\Psi) \leq \frac{(2dn-(1-\gamma)n-g)+\gamma'D}{2dn+D}= \gamma -\frac{g}{2dn+D}=\gamma-\eps
\]
for $g=\eps(2dn+D)=\Theta(n\eps)$.

Therefore, any streaming algorithm for $(\gamma,\gamma-\eps)$-$\maxF$ will imply a protocol for the $\GHD_{n,t,g}$ problem.
By \cref{thm:ghd}, such a streaming algorithm must use at least $\Omega(\min\{n, n^2/g^2\})=\Omega(\min\{n, \eps^{-2}\})$ bits of communication.
\end{proof}

\newcommand{\acktext}{
 We are grateful to Lijie Chen, Gillat Kol, Dmitry Paramonov, Raghuvansh Saxena, Zhao Song, and Huacheng Yu, for detecting a fatal error in an earlier version of this paper~\cite{CGSV20} and then for pinpointing the location of the error. As a result the main theorem of the current paper is significantly different than the theorem claimed in the previous version.

Thanks to Johan H\aa stad for many pointers to the work on approximation resistance and answers to many queries.
Thanks to Dmitry Gavinsky, Julia Kempe and Ronald de Wolf for prompt and detailed answers to our queries on outdated versions of their work~\cite{GKKRW}. Thanks to Prasad Raghavendra for answering our questions about the approximation resistance dichotomy from his work~\cite{raghavendra2008optimal}. Thanks to Saugata Basu for the pointers to the algorithms for quantified theory of the reals. Thanks to Jelani Nelson for pointers to $\ell_1$ norm estimation algorithms used in the earlier version of this paper. 
Thanks to Alex Andoni for pointers to $\ell_{1,\infty}$ norm estimation algorithms. Thanks to anonymous referees of many versions of this work for their valuables comments. In particular we thank the referees for clarifying the gap between linear sketching algorithms and dynamic streaming algorithms.}

\ifnum\jacm=1

 \begin{acks}
\acktext 
\end{acks}
  
\bibliographystyle{ACM-Reference-Format}
\else 
\section*{Acknowledgments}
\acktext 
\bibliographystyle{alpha}
\fi 

\bibliography{arXiv_revision2024/mybib}

\newcommand{\etalchar}[1]{$^{#1}$}
\newcommand{\noopsort}[1]{} \newcommand{\singleletter}[1]{#1}
\begin{thebibliography}{CKP{\etalchar{+}}21b}

\bibitem[AB21]{Rand21}
Sepehr Assadi and Soheil Behnezhad.
\newblock Beating two-thirds for random-order streaming matching.
\newblock In {\em {ICALP} 2021}, pages 19:1--19:13. LIPIcs, 2021.

\bibitem[ABKL23]{SPass23a}
Sepehr Assadi, Soheil Behnezhad, Sanjeev Khanna, and Huan Li.
\newblock On regularity lemma and barriers in streaming and dynamic matching.
\newblock In {\em {STOC} 2023}, pages 131--144. {ACM}, 2023.

\bibitem[ACS22]{SpAss22d}
Sepehr Assadi, Andrew Chen, and Glenn Sun.
\newblock Deterministic graph coloring in the streaming model.
\newblock In {\em STOC 2022}, pages 261--274. {ACM}, 2022.

\bibitem[AD21]{Rand21a}
Sepehr Assadi and Aditi Dudeja.
\newblock Ruling sets in random order and adversarial streams.
\newblock In {\em {DISC} 2021}, pages 6:1--6:18. LIPIcs, 2021.

\bibitem[AHLW16]{AHLW16}
Yuqing Ai, Wei Hu, Yi~Li, and David~P. Woodruff.
\newblock New characterizations in turnstile streams with applications.
\newblock In {\em CCC 2016}, pages 20:1--20:22. LIPIcs, 2016.

\bibitem[AJJ{\etalchar{+}}22]{MPass22}
Sepehr Assadi, Arun Jambulapati, Yujia Jin, Aaron Sidford, and Kevin Tian.
\newblock Semi-streaming bipartite matching in fewer passes and optimal space.
\newblock In {\em {SODA} 2022}, pages 627--669. {SIAM}, 2022.

\bibitem[AKL17]{AKL17}
Sepehr Assadi, Sanjeev Khanna, and Yang Li.
\newblock On estimating maximum matching size in graph streams.
\newblock In {\em SODA 2017}, pages 1723--1742. {SIAM}, 2017.

\bibitem[AKL21]{AKL16}
Sepehr Assadi, Sanjeev Khanna, and Yang Li.
\newblock Tight bounds for single-pass streaming complexity of the set cover
  problem.
\newblock {\em {SIAM} J. Comput.}, 50(3), 2021.

\bibitem[AKLY16]{AKLY16}
Sepehr Assadi, Sanjeev Khanna, Yang Li, and Grigory Yaroslavtsev.
\newblock Maximum matchings in dynamic graph streams and the simultaneous
  communication model.
\newblock In {\em SODA 2016}, pages 1345--1364. {SIAM}, 2016.

\bibitem[AKM22]{SPass22C}
Sepehr Assadi, Pankaj Kumar, and Parth Mittal.
\newblock Brooks' theorem in graph streams: a single-pass semi-streaming
  algorithm for coloring.
\newblock In {\em STOC 2022}, pages 234--247. {ACM}, 2022.

\bibitem[AKO11]{AKO}
Alexandr Andoni, Robert Krauthgamer, and Krzysztof Onak.
\newblock Streaming algorithms via precision sampling.
\newblock In {\em FOCS 2011}, pages 363--372. {IEEE}, 2011.

\bibitem[AKSY20]{assadi2020multi}
Sepehr Assadi, Gillat Kol, Raghuvansh~R Saxena, and Huacheng Yu.
\newblock {Multi-Pass Graph Streaming Lower Bounds for Cycle Counting, MAX-CUT,
  Matching Size, and Other Problems}.
\newblock In {\em FOCS 2020}, pages 354--364. {IEEE}, 2020.

\bibitem[AKZ22]{SPass22}
Sepehr Assadi, Gillat Kol, and Zhijun Zhang.
\newblock Rounds vs communication tradeoffs for maximal independent sets.
\newblock In {\em FOCS 2022}, pages 1193--1204. {IEEE}, 2022.

\bibitem[AM09]{AustrinMossel}
Per Austrin and Elchanan Mossel.
\newblock Approximation resistant predicates from pairwise independence.
\newblock {\em Comput. Complex.}, 18(2):249--271, 2009.

\bibitem[AN21]{AN21}
Sepehr Assadi and Vishvajeet N.
\newblock Graph streaming lower bounds for parameter estimation and property
  testing via a streaming {{XOR}} lemma.
\newblock In {\em STOC 2021}, pages 612--625. ACM, 2021.

\bibitem[And20]{Andoni:personal}
Alexandr Andoni.
\newblock Personal communication, 24 December 2020.

\bibitem[AR20]{MPass20}
Sepehr Assadi and Ran Raz.
\newblock Near-quadratic lower bounds for two-pass graph streaming algorithms.
\newblock In {\em FOCS 2020}, pages 342--353. {IEEE}, 2020.

\bibitem[AS22]{SPass22a}
Sepehr Assadi and Vihan Shah.
\newblock An asymptotically optimal algorithm for maximum matching in dynamic
  streams.
\newblock In {\em ITCS 2022}, pages 9:1--9:23. LIPIcs, 2022.

\bibitem[AS23]{Rand23}
Sepehr Assadi and Janani Sundaresan.
\newblock {(Noisy)} gap cycle counting strikes back: Random order streaming
  lower bounds for connected components and beyond.
\newblock In {\em {STOC} 2023}, pages 183--195. {ACM}, 2023.

\bibitem[Ass22]{MPass22b}
Sepehr Assadi.
\newblock A two-pass (conditional) lower bound for semi-streaming maximum
  matching.
\newblock In {\em {SODA} 2022}, pages 708--742. {SIAM}, 2022.

\bibitem[AW22]{SPass22b}
Sepehr Assadi and Chen Wang.
\newblock Sublinear time and space algorithms for correlation clustering via
  sparse-dense decompositions.
\newblock In {\em ITCS 2022}, pages 10:1--10:20. LIPIcs, 2022.

\bibitem[Beh23]{SPass23}
Soheil Behnezhad.
\newblock Dynamic algorithms for maximum matching size.
\newblock In {\em SODA 2023}, pages 129--162. {SIAM}, 2023.

\bibitem[BHP{\etalchar{+}}22]{BHPSV21}
Joanna Boyland, Michael Hwang, Tarun Prasad, Noah Singer, and Santhoshini
  Velusamy.
\newblock On sketching approximations for symmetric {Boolean CSPs}.
\newblock In {\em APPROX 2022}, pages 38:1--38:23. LIPIcs, 2022.

\bibitem[BK12]{BartoK12}
Libor Barto and Marcin Kozik.
\newblock Robust satisfiability of constraint satisfaction problems.
\newblock In {\em STOC 2012}, pages 931--940. {ACM}, 2012.

\bibitem[BPR06]{BasuPR}
Saugata Basu, Richard Pollack, and Marie-Fran\c{c}oise Roy.
\newblock {\em Algorithms in Real Algebraic Geometry}.
\newblock Springer, 2006.

\bibitem[Bul17]{Bulatov}
Andrei~A. Bulatov.
\newblock A dichotomy theorem for nonuniform {CSP}s.
\newblock In {\em FOCS 2017}, pages 319--330. {IEEE}, 2017.

\bibitem[BV04]{boyd2004convex}
Stephen~P. Boyd and Lieven Vandenberghe.
\newblock {\em Convex optimization}.
\newblock Cambridge University Press, 2004.

\bibitem[CGS{\etalchar{+}}22a]{CGSSV-mon}
Chi-Ning Chou, Alexander Golovnev, Amirbehshad Shahrasbi, Madhu Sudan, and
  Santhoshini Velusamy.
\newblock Sketching approximability of (weak) monarchy predicates.
\newblock In {\em APPROX 2022}, pages 35:1--35:17. LIPIcs, 2022.

\bibitem[CGS{\etalchar{+}}22b]{CGS+21}
Chi-Ning Chou, Alexander Golovnev, Madhu Sudan, Ameya Velingker, and
  Santhoshini Velusamy.
\newblock Linear space streaming lower bounds for approximating {CSPs}.
\newblock In {\em STOC 2022}, pages 275--288. {ACM}, 2022.

\bibitem[CGSV21a]{CGSV20}
Chi{-}Ning Chou, Alexander Golovnev, Madhu Sudan, and Santhoshini Velusamy.
\newblock Classification of the streaming approximability of {Boolean CSPs}.
\newblock {\em CoRR}, abs/2102.12351v1:1--49, 24th February {\noopsort{a}}2021.

\bibitem[CGSV21b]{CGSV21}
Chi{-}Ning Chou, Alexander Golovnev, Madhu Sudan, and Santhoshini Velusamy.
\newblock Approximability of all {Boolean CSPs} with linear sketches.
\newblock {\em CoRR}, abs/2102.12351v3:1--60, 14th April {\noopsort{b}}2021.

\bibitem[CGSV21c]{CGSV21-general-arxiv}
Chi{-}Ning Chou, Alexander Golovnev, Madhu Sudan, and Santhoshini Velusamy.
\newblock Approximability of all finite {CSPs} with linear sketches.
\newblock {\em CoRR}, abs/2105.01161:1--75, 3rd May {\noopsort{c}}2021.

\bibitem[CGSV22]{CGSV21-general-conference}
Chi{-}Ning Chou, Alexander Golovnev, Madhu Sudan, and Santhoshini Velusamy.
\newblock Approximability of all finite {CSPs} with linear sketches.
\newblock In {\em FOCS 2021}, pages 1197--1208. {IEEE}, 2022.

\bibitem[CGV20]{CGV20}
Chi-Ning Chou, Alexander Golovnev, and Santhoshini Velusamy.
\newblock Optimal streaming approximations for all {B}oolean {Max-2CSPs} and
  {Max-$k$SAT}.
\newblock In {\em FOCS 2020}, pages 330--341. IEEE, 2020.

\bibitem[Cha20]{C20}
Amit Chakrabarti.
\newblock Data stream algorithms.
\newblock {\em Lecture notes}, page~94, 2020.

\bibitem[CKKP22]{Rand22}
Ashish Chiplunkar, John Kallaugher, Michael Kapralov, and Eric Price.
\newblock Factorial lower bounds for (almost) random order streams.
\newblock In {\em FOCS 2022}, pages 486--497. {IEEE}, 2022.

\bibitem[CKP{\etalchar{+}}21a]{MPass21a}
Lijie Chen, Gillat Kol, Dmitry Paramonov, Raghuvansh~R. Saxena, Zhao Song, and
  Huacheng Yu.
\newblock Almost optimal super-constant-pass streaming lower bounds for
  reachability.
\newblock In {\em STOC 2021}, pages 570--583. {ACM}, 2021.

\bibitem[CKP{\etalchar{+}}21b]{MPass21}
Lijie Chen, Gillat Kol, Dmitry Paramonov, Raghuvansh~R. Saxena, Zhao Song, and
  Huacheng Yu.
\newblock Near-optimal two-pass streaming algorithm for sampling random walks
  over directed graphs.
\newblock In {\em ICALP 2021}, pages 52:1--52:19. LIPIcs, 2021.

\bibitem[CKP{\etalchar{+}}23]{MPass23}
Lijie Chen, Gillat Kol, Dmitry Paramonov, Raghuvansh~R. Saxena, Zhao Song, and
  Huacheng Yu.
\newblock Towards multi-pass streaming lower bounds for optimal approximation
  of {Max-Cut}.
\newblock In {\em SODA 2023}, pages 878--924. {SIAM}, 2023.

\bibitem[CKT23]{SPass22e}
Yu~Chen, Sanjeev Khanna, and Zihan Tan.
\newblock Sublinear algorithms and lower bounds for estimating {MST} and {TSP}
  cost in general metrics.
\newblock In {\em ICALP 2023}, pages 37:1--37:16. LIPIcs, 2023.

\bibitem[CR12]{cr12}
Amit Chakrabarti and Oded Regev.
\newblock An optimal lower bound on the communication complexity of
  gap-hamming-distance.
\newblock {\em SIAM J. Comput.}, 41(5):1299--1317, 2012.

\bibitem[DK13]{DalmauK13}
V{\'{\i}}ctor Dalmau and Andrei~A. Krokhin.
\newblock Robust satisfiability for {CSPs}: Hardness and algorithmic results.
\newblock {\em {ACM} Trans. Comput. Theory}, 5(4):15:1--15:25, 2013.

\bibitem[GHM{\etalchar{+}}11]{guruswami2011beating}
Venkatesan Guruswami, Johan H{\aa}stad, Rajsekar Manokaran, Prasad Raghavendra,
  and Moses Charikar.
\newblock Beating the random ordering is hard: Every ordering {CSP} is
  approximation resistant.
\newblock {\em SIAM J. Comput.}, 40(3):878--914, 2011.

\bibitem[GKK{\etalchar{+}}09]{GKKRW}
Dmitry Gavinsky, Julia Kempe, Iordanis Kerenidis, Ran Raz, and Ronald de~Wolf.
\newblock Exponential separation for one-way quantum communication complexity,
  with applications to cryptography.
\newblock {\em {SIAM} J. Comput.}, 38(5):1695--1708, 2009.

\bibitem[GKK12]{GKK12}
Ashish Goel, Michael Kapralov, and Sanjeev Khanna.
\newblock On the communication and streaming complexity of maximum bipartite
  matching.
\newblock In {\em SODA 2012}, pages 468--485. {SIAM}, 2012.

\bibitem[GT19]{GT19}
Venkatesan Guruswami and Runzhou Tao.
\newblock Streaming hardness of unique games.
\newblock In {\em APPROX 2019}, pages 5:1--5:12. LIPIcs, 2019.

\bibitem[GVV17]{GVV17}
Venkatesan Guruswami, Ameya Velingker, and Santhoshini Velusamy.
\newblock Streaming complexity of approximating {Max 2CSP} and {Max Acyclic
  Subgraph}.
\newblock In {\em APPROX 2017}, pages 8:1--8:19. LIPIcs, 2017.

\bibitem[H{\aa}s01]{Has01}
Johan H{\aa}stad.
\newblock Some optimal inapproximability results.
\newblock {\em J. ACM}, 48(4):798--859, 2001.

\bibitem[HRVZ15]{HRVZ15}
Zengfeng Huang, Bozidar Radunovic, Milan Vojnovic, and Qin Zhang.
\newblock {Communication Complexity of Approximate Matching in Distributed
  Graphs}.
\newblock In {\em STACS 2015}, pages 460--473. LIPIcs, 2015.

\bibitem[Kap13]{Kapralov13}
Michael Kapralov.
\newblock Better bounds for matchings in the streaming model.
\newblock In {\em SODA 2013}, pages 1679--1697. {SIAM}, 2013.

\bibitem[Kap21]{SPass21}
Michael Kapralov.
\newblock Space lower bounds for approximating maximum matching in the edge
  arrival model.
\newblock In {\em SODA 2021}, pages 1874--1893. {SIAM}, 2021.

\bibitem[Kho02]{Kho02}
Subhash Khot.
\newblock On the power of unique 2-prover 1-round games.
\newblock In {\em STOC 2002}, pages 767--775. ACM, 2002.

\bibitem[KK19]{KK19}
Michael Kapralov and Dmitry Krachun.
\newblock An optimal space lower bound for approximating {MAX-CUT}.
\newblock In {\em STOC 2019}, pages 277--288. ACM, 2019.

\bibitem[KKL88]{KKL88}
Jeff Kahn, Gil Kalai, and Nathan Linial.
\newblock The influence of variables on {Boolean} functions.
\newblock In {\em FOCS 1988}, pages 68--80. IEEE, 1988.

\bibitem[KKS14]{Rand14}
Michael Kapralov, Sanjeev Khanna, and Madhu Sudan.
\newblock Approximating matching size from random streams.
\newblock In {\em SODA 2014}, pages 734--751. {SIAM}, 2014.

\bibitem[KKS15]{KKS}
Michael Kapralov, Sanjeev Khanna, and Madhu Sudan.
\newblock Streaming lower bounds for approximating {MAX-CUT}.
\newblock In {\em SODA 2015}, pages 1263--1282. {SIAM}, 2015.

\bibitem[KKSV17]{KKSV17}
Michael Kapralov, Sanjeev Khanna, Madhu Sudan, and Ameya Velingker.
\newblock {$(1+ \Omega(1))$}-approximation to {MAX-CUT} requires linear space.
\newblock In {\em SODA 2017}, pages 1703--1722. SIAM, 2017.

\bibitem[Kon15]{Konrad15}
Christian Konrad.
\newblock Maximum matching in turnstile streams.
\newblock In {\em ESA 2015}, pages 840--852. Springer, 2015.

\bibitem[KOT{\etalchar{+}}12]{KunOTYZ12}
G{\'{a}}bor Kun, Ryan O'Donnell, Suguru Tamaki, Yuichi Yoshida, and Yuan Zhou.
\newblock Linear programming, width-1 {CSPs}, and robust satisfaction.
\newblock In {\em ITCS 2012}, pages 484--495. {ACM}, 2012.

\bibitem[KP20]{KP20}
John Kallaugher and Eric Price.
\newblock Separations and equivalences between turnstile streaming and linear
  sketching.
\newblock In {\em STOC 2020}, pages 1223--1236. {ACM}, 2020.

\bibitem[KPSY23]{Mpass23a}
Gillat Kol, Dmitry Paramonov, Raghuvansh~R. Saxena, and Huacheng Yu.
\newblock Characterizing the multi-pass streaming complexity for solving
  {B}oolean {CSPs} exactly.
\newblock In {\em ITCS 2023}, pages 80:1--80:15. LIPIcs, 2023.

\bibitem[KS92]{ks92}
Bala Kalyanasundaram and Georg Schintger.
\newblock The probabilistic communication complexity of set intersection.
\newblock {\em SIAM J. Discrete Math.}, 5(4):545--557, 1992.

\bibitem[KTW14]{KhotTW}
Subhash Khot, Madhur Tulsiani, and Pratik Worah.
\newblock A characterization of strong approximation resistance.
\newblock In {\em STOC 2014}, pages 634--643. {ACM}, 2014.

\bibitem[LNW14]{LNW14}
Yi~Li, Huy~L. Nguyen, and David~P. Woodruff.
\newblock Turnstile streaming algorithms might as well be linear sketches.
\newblock In {\em STOC 2014}, pages 174--183. {ACM}, 2014.

\bibitem[O'D14]{o2014analysis}
Ryan O'Donnell.
\newblock {\em Analysis of {B}oolean functions}.
\newblock Cambridge University Press, 2014.

\bibitem[Pot19]{Potechin}
Aaron Potechin.
\newblock On the approximation resistance of balanced linear threshold
  functions.
\newblock In {\em {STOC} 2019}, pages 430--441. {ACM}, 2019.

\bibitem[Rag08]{raghavendra2008optimal}
Prasad Raghavendra.
\newblock Optimal algorithms and inapproximability results for every {CSP}?
\newblock In {\em STOC 2008}, pages 245--254. {ACM}, 2008.

\bibitem[Raz90]{r90}
Alexander~A. Razborov.
\newblock On the distributional complexity of disjointness.
\newblock In {\em ICALP 1990}, pages 249--253. Springer, 1990.

\bibitem[Sch78]{Schaefer}
Thomas~J. Schaefer.
\newblock The complexity of satisfiability problems.
\newblock In {\em STOC 1978}, pages 216--226. ACM, 1978.

\bibitem[She12]{s12}
Alexander~A. Sherstov.
\newblock The communication complexity of gap hamming distance.
\newblock {\em Theory Comput.}, 8(1):197--208, 2012.

\bibitem[Sin23]{Singer23}
Noah~G. Singer.
\newblock Oblivious algorithms for the {Max-$k$AND} problem.
\newblock In {\em APPROX 2023}, 2023.
\newblock To appear.

\bibitem[SSSV23a]{SSSV23a}
Raghuvansh~R. Saxena, Noah Singer, Madhu Sudan, and Santhoshini Velusamy.
\newblock Streaming complexity of {CSPs} with randomly ordered constraints.
\newblock In {\em SODA 2023}, pages 4083--4103. {SIAM}, 2023.

\bibitem[SSSV23b]{SSSV23b}
Raghuvansh~R. Saxena, Noah~G. Singer, Madhu Sudan, and Santhoshini Velusamy.
\newblock Improved streaming algorithms for maximum directed cut via smoothed
  snapshots.
\newblock In {\em FOCS 2023}. {IEEE}, 2023.
\newblock To appear.

\bibitem[SSV21]{SSV21}
Noah Singer, Madhu Sudan, and Santhoshini Velusamy.
\newblock Streaming approximation resistance of every ordering csp.
\newblock In {\em APPROX 2021}, pages 17:1--17:19. LIPIcs, 2021.

\bibitem[Vid12]{v12}
Thomas Vidick.
\newblock A concentration inequality for the overlap of a vector on a large
  set, with application to the communication complexity of the
  gap-hamming-distance problem.
\newblock {\em Chicago J. Theor. Comput. Sci.}, 18(1):1--12, 2012.

\bibitem[Zhu17]{Zhuk}
Dmitriy Zhuk.
\newblock A proof of {CSP} dichotomy conjecture.
\newblock In {\em FOCS 2017}, pages 331--342. {IEEE}, 2017.

\end{thebibliography}

\end{document}